\begin{document}


\title{RMR-Efficient Randomized Abortable Mutual Exclusion%
\thanks{This research was supported by a Discovery Grant from the Natural Sciences and Engineering
Research Council of Canada (NSERC)}%
}


%
\author{Philipp Woelfel\\ University of Calgary\\ {\tt woelfel@ucalgary.ca}
\and Abhijeet Pareek\thanks{Supported by Alberta Innovates Technology Futures (AITF)}\\
University of Calgary\\ {\tt abhijeet.ucalgary@gmail.com}
}
%
%


%
%

\maketitle

\sloppy
%
%
\begin{abstract}
Recent research on mutual exclusion for shared-memory systems has focused on \emph{local spin} algorithms.
Performance is measured using the \emph{remote memory references} (RMRs) metric.
As common in recent literature, we consider a standard asynchronous shared memory model with $N$ processes, which allows atomic read, write and compare-and-swap (short: CAS) operations.

In such a model, the asymptotically tight upper and lower bounds on the number of RMRs per passage through the Critical Section is $\Theta(\log N)$ for the optimal deterministic algorithms \cite{YA:YALock,Attiya:MELowerBound}.
Recently, several \emph{randomized} algorithms have been devised that break the $\Omega(\log N)$ barrier and need only $o(\log N)$ RMRs per passage in expectation \cite{Philipp:RLock,PhilippHendler:AdaptiveRLock,BenderGilbert:ME}.
In this paper we present the first randomized \emph{abortable} mutual exclusion algorithm that achieves a sub-logarithmic expected RMR complexity.
More precisely, against a weak adversary (which can make scheduling decisions based on the entire past history, but not the latest coin-flips of each process) every process needs an expected number of $O(\log N/\log\log N)$ RMRs to enter end exit the critical section.
If a process receives an abort-signal, it can abort an attempt to enter the critical section within a finite number of its own steps and by incurring $O(\log N/\log\log N)$ RMRs.
\end{abstract}

\section{Introduction}
\emph{Mutual exclusion}, introduced by Dijkstra \cite{Dijkstra:MEProblem}, is a fundamental and well studied problem.
A mutual exclusion object (or lock) allows processes
to synchronize access to a shared resource.
Each process obtains a lock through a \emph{capture protocol} 
but at any time, at most one process can own the lock.
A process is said to own a lock if it participates in a ``capture'' protocol designed for the object, and completes it.
The owner of the lock can access the shared resource, while all other processes wait in their capture protocol for the owner to ``release'' the lock.
The owner of a lock can execute a release protocol which frees up the lock. 
The capture protocol and release protocol are often denoted \emph{entry} and \emph{exit section}, and a process that owns the lock is in the \emph{critical section}.

In this paper, we consider the standard \emph{cache-coherent} (CC) shared model with $N$ processes that supports atomic read, write, and compare-and-swap (short: \CAS) operations. 
In this model, all shared registers are stored in globally accessible shared memory.
In addition, each process has a local cache and a cache protocol ensures coherency.
A \emph{Remote Memory Reference} (short: RMR) is a shared memory access of a register that cannot be resolved locally (i.e., a cache miss).
Mutual exclusion algorithms require processes to busy-wait, so the traditional step complexity measure, which counts the number of shared memory accesses, is not useful.

Early mutual exclusion locks were designed for uniprocessor systems that supported multitasking and time-sharing.
A comprehensive survey of these locking algorithms is presented in \cite{Raynal:Survey}.
One of the biggest shortcomings of these early locking algorithms is that they did not take into account an important hardware technology trend -- the steadily growing gap between high processor speeds and the low speed/bandwidth of the processor-memory interconnect \cite{Culler:GapInProcessorAndInterconnectSpeeds}.
A memory access that traverses the processor-to-memory interconnect, called a \emph{remote memory reference}, takes much more time than a local memory access.

Recent research \cite{Anderson:SpinLockPerformance,Mellor-Crummey:QueueLock,AK:MEConjecture,AYK:Adaptive,AYK:ImprovedLowerBound,Attiya:MELowerBound,DanekGolab:FCFS,Jayanti:Abortable,KA:AdaptiveLowerBound} on mutual exclusion algorithms therefore focusses on minimizing the number of remote memory references (RMR).
The maximum number of RMRs that any process requires (in any execution) to capture and release a lock is called the RMR complexity of the mutual exclusion algorithm.
RMR complexity is the metric used to analyze the efficiency of mutual exclusion algorithms, as opposed to the traditional metric of counting steps taken by a process (step complexity).
Step complexity is problematic, since for mutual exclusion algorithms, a process may perform an unbounded number of memory accesses (each considered a step) while busy-waiting for another process to release the lock \cite{Taubenfeld:FastMutualExclusion}.

Algorithms that perform all busy-waiting by repeatedly reading \emph{locally accessible} shared variables, achieve bounded RMR complexity and have practical performance benefits \cite{Anderson:SpinLockPerformance}.
Such algorithms are termed  \emph{local spin} algorithms.
A comprehensive survey of these algorithms is presented in \cite{Anderson:Survey}.
Yang and Anderson presented the first $\Order{\log N}$ RMRs mutual exclusion algorithm \cite{YA:YALock} using only reads and writes.
Anderson and Kim \cite{AK:MEConjecture} conjectured that this was optimal, and the conjecture was proved by Attiya, Hendler, and Woelfel \cite{Attiya:MELowerBound}.

Local spin mutual exclusion locks do not meet a critical demand of many systems \cite{Scott:non-blocking}.
Specifically, the locks employed in database systems and in real time systems must support a ``timeout'' capability which allows
a process that waits ``too long'' to abort its attempt to acquire the lock.
The ability of a thread to abort its lock attempt is crucial in data base systems; for instance in Oracle's Parallel Server and IBM's DB2, this ability serves the dual purpose of recovering from transaction deadlock and tolerating preemption of the thread that holds the lock \cite{Scott:non-blocking}.
In real time systems, the abort capability can be used to avoid overshooting a deadline.
Locks that allow a process to abort its attempt to acquire the lock are called abortable locks.
Jayanti presented an efficient deterministic abortable lock \cite{Jayanti:Abortable} with worst-case \Order{\log N} RMR complexity, which is optimal for deterministic algorithms. 

In this paper we present the first randomized abortable mutual exclusion algorithm that achieves a sub-logarithmic RMR complexity.
Due to the inherent asynchrony in the system, the RMRs incurred by a process  during a lock capture and release depend on how the steps of all the processes in the system were scheduled one after the other.
Therefore, the maximum RMRs incurred by any process during any lock attempt are  determined by the ``worst'' schedule that makes some process incur a large number of RMRs.
To analyze the RMR complexity of lock algorithms, an adversarial scheduler called the \emph{adversary} is defined.
The lower bound of $\Omega(\log N)$ in \cite{Attiya:MELowerBound} for mutual exclusion algorithms that use only reads and writes holds for deterministic algorithms where the adversary knows all processes' future steps.
The lower bound does not hold for randomized algorithms where processes flip coins to determine their next steps.
Randomized algorithms limit the power of an adversary since the adversary cannot know the result of future coin flips.
Adversaries of varying powers have been defined.
The most common ones are the \emph{oblivious}, the \emph{weak}, and the \emph{adaptive} adversary \cite{Aspnes:AdversaryModels}.
An \emph{oblivious} adversary makes all scheduling decisions in advance, before any process flips a coin.
This model corresponds to a system, where the coin flips made by processes have no influence on the scheduling.
A more realistic model is the \emph{weak} adversary, who sees the coin flip of a process not before that process has taken a step following that coin flip.
The \emph{adaptive} adversary models the strongest adversary with reasonable powers, and it can see every coin flip as it appears, and can use that knowledge for any future scheduling decisions.
Hendler and Woelfel \cite{Philipp:RLock} and later Giakkoupis and Woelfel \cite{PhilippGeorge:LowerBoundRLockAdaptiveAdv} established a tight bound of $\Theta(\log N / \log \log N)$ expected RMR complexity for randomized mutual exclusion against the adaptive adversary. 
Recently Bender and Gilbert \cite{BenderGilbert:ME} presented a randomized lock that has amortized $\Order{\log^2 \log N}$ expected RMR complexity against the oblivious adversary.
Unfortunately, this algorithm is not strictly deadlock-free (processes may deadlock with small probability, so deadlock has to be expected in a long execution).
Our randomized abortable mutual exclusion algorithm is deadlock-free, works against the weak adversary and achieves the same epected RMR complexity as the algorithm by Hendler and Woelfel, namely $\Order{\log N / \log \log N}$ expected RMR complexity against the weak adversary.

The randomized algorithm we present uses \CAS\ objects and read-write registers.
Golab, Hadzilacos, Hendler, and Woelfel \cite{Golab:CASImplementation} (see also \cite{Golab:Thesis}) presented an \Order{1}-RMRs implementation of a \CAS\ object using only read-write registers.
Moreover, they proved that one can simulate any deterministic shared memory algorithm that uses reads, writes, and conditional operations (such as \CAS operations), with a deterministic algorithm that uses only reads and writes, with only a constant increase in the RMR complexity.
Recently in \cite{Philipp:LinearizabilityNeqAtomicity}, Golab, Higham and Woelfel  demonstrated that using linearizable implemented objects in place of atomic objects in randomized algorithms allows the adversary to change the probability distribution of results.
Therefore, in order to safely use implemented objects in place of atomic ones in randomized algorithms, it is not enough to simply show that the implemented objects are linearizable.
Also in \cite{Philipp:LinearizabilityNeqAtomicity}, it is proved that there exists no general correctness condition for the weak adversary, and that the weak adversary can gain additional power depending on the linearizable  implementation of the object.
Therefore, in this paper we assume that \CAS operations are atomic.




\bparagraph{Abortable Mutual Exclusion}\label{sec:Problem}
We formalize the notion of an abortable lock by specifying two methods, \lock{} and \release{}{}, that processes can use to capture and release the lock, respectively.
The model assumes that a process may receive a signal to abort at any time during its \lock{} call.
If that happens, and only then, the process may fail to capture the lock, in which case method \lock{} returns value $\bot$.
Otherwise the process captures the lock, and method \lock{} returns a non-$\bot$ value, and the \lock{} call is deemed \emph{successful}.
Note that a \lock{} call may succeed even if the process receives a signal to abort during a \lock{} call.


Code executed by a process after a successful \lock{} method call and before a subsequent \release{}{} invocation is defined to be its Critical Section.
If a process executes a successful \lock{} call, then the process's \emph{passage} is defined to be the \lock{} call, and the subsequent Critical Section and \release{}{} call, in that order.
If a process executes an unsuccessful \lock{} call, then it does not execute the Critical Section or a \release{}{} call, and the process's passage is just the \lock{} call.
Code executed by a process outside of any passage is defined to be its Remainder Section.


The \emph{abort-way} is defined to be the steps taken by a process during a passage that begins when the process receives a signal to abort and ends when the process returns to its Remainder Section.
Since it makes little sense to have an abort capability where processes have to wait for other processes, the abort-way is required to be bounded wait-free (i.e., processes execute the abort-way in a bounded number of their own steps).
This property is known as \emph{bounded abort}.
Other properties are defined as follows.
\emph{Mutual Exclusion}: At any time there is at most one process in the Critical Section;
\emph{Deadlock Freedom}: If all processes in the system take enough steps, then at least one of them will return from its \lock{} call;
\emph{Starvation Freedom}: If all processes in the system take enough steps, then every process will return from its \lock{} call.
The abortable mutual exclusion problem is to implement an object that provides methods \lock{} and \release{}{} such that it that satisfies mutual exclusion, deadlock freedom, and bounded abort.

\subsection{Model}\label{sec:Model}
Our model of computation, the asynchronous shared-memory model \cite{Herlihy:Linearizability} with
$N$ processes which communicate by executing \emph{operations} on
\emph{shared objects}.
Every process executes its program by taking \emph{steps}, and does not fail.
A step is defined to be the execution of all local computations followed by an operation on a shared object.
We consider a system that supports atomic read-write registers and \CAS{} objects.

A read-write register $R$ stores a value from some set and supports two atomic operations $R$.\Read{} and $R$.\Write{}.
Operation $R$.\Read{} returns the value of the register and leaves its content unchanged, and operation $R$.\Write{v} writes the value $v$ into the register and returns nothing.
A \CAS object $O$ stores a value from some set and supports two atomic operations $O$.\CAS{} and $O$.\Read{}.
Operation $O$.\Read{} returns the value stored in $O$.
Operation $O$.\CAS{$exp,new$} takes two arguments $exp$ and $new$ and attempts to change the value of $O$ from $exp$ to $new$.
If the value of $O$ equals $exp$ then the operation $O$.\CAS{$exp,new$} \emph{succeeds}, and the value of $O$ is changed from $exp$ to $new$, and \True\ is returned.
Otherwise, the operation fails, and the value of $O$ remains unchanged and \False\ is returned.

In addition, a process can execute local coin flip operations that returns an  integer value distributed uniformly at random from an arbitrary finite set of integers.
The scheduling, generated by the \emph{adversary}, can depend on the random values generated by the processes.
We assume the weak adversary model (see for example \cite{Aspnes:AdversaryModels}) that decides at each point in time the process that takes the next step.
In order to make this decision, it can take all preceding events into account, except the results of the most recent coin flips by processes that are yet to execute a shared memory operation after the coin flip.

As mentioned earlier, we consider the \emph{cache-coherent} (CC) model where each processor has a private cache in which it maintains local copies of shared objects that it accesses.
The private cache is logically situated ``closer'' to the processor than the shared memory, and therefore it can be accessed for free.
The shared memory is an external memory accessible to all processors, and is considered remote to all processors.
We assume that a  hardware protocol ensures cache consistency (i.e., that all copies of the same object in different caches are valid and consistent).
A memory access to a shared object that requires access to remote memory is called a \emph{remote memory reference} (RMR).
The \emph{RMR complexity} of a algorithm is the maximum number of RMRs that a process can incur during any execution of the algorithm.

\subsection{Results}

We present several building blocks for our algorithm 
in Section~\ref{sec:RCASCounter}.
In Sections~\ref{sec:ARLockTree} and \ref{sec:ARLockArray} we give an overview of the randomized mutual exclusion algorithm.
Our results are summarized by the following theorem.
\begin{theorem}\label{thm:main}
There exists a starvation-free randomized abortable $N$ process lock against the weak adversary, where a process incurs $\Order{\log N / \log \log N}$ RMRs in expectation per passage.
The lock requires \Order{N} \CAS\ objects and read-write registers
\end{theorem}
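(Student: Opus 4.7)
The plan is to build the lock as an arbitration tree of depth $d = \Theta(\log N / \log\log N)$ with branching factor $k = \Theta(\log N / \log\log N)$, where each internal node hosts the constant-RMR randomized abortable ``array lock'' constructed in Section~\ref{sec:ARLockArray}, which in turn relies on the \CAS-based counter and helper primitives of Section~\ref{sec:RCASCounter}. Each process is assigned to a leaf and climbs toward the root, capturing the array lock at every internal node on its path; only the winner at a node proceeds upward, and the process that reaches the root enters the critical section. If each array-lock capture costs $\Order{1}$ RMRs in expectation against the weak adversary, then summing over the $d$ levels, together with the symmetric work performed in the release phase, yields the claimed $\Order{\log N / \log\log N}$ expected RMRs per passage.

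The heart of the argument is showing that each individual array-lock capture incurs $\Order{1}$ expected RMRs. A contender privately flips a coin to pick one of $k$ slots; because the weak adversary cannot observe the outcome until that process performs its next shared-memory step, it cannot concentrate many contenders into a single slot, and a short randomized splitter backed by \CAS{} resolves each slot in $\Order{1}$ expected RMRs, with all busy-waiting done by spinning on a private register. For the abort capability, every array lock exposes a bounded wait-free withdrawal primitive that cancels a pending slot entry or, if ownership has already been granted, forwards it to a designated successor via a \CAS. A process that receives an abort signal invokes this primitive on every array lock it currently holds or is waiting in, retreating down its path in $\Order{d}$ of its own steps, which yields bounded abort. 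Starvation freedom is obtained by augmenting each array lock with a deterministic fall-back rule that guarantees progress when randomization alone repeatedly fails.

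The principal obstacle will be simultaneously proving the per-node $\Order{1}$ expected RMR bound against the weak adversary and preserving mutual exclusion under aborts: a withdrawing process must not leave a stale slot entry that could block a genuine winner, yet the implementation must remain insensitive to the hidden coin flip so that the adversary cannot steer collisions by observing side effects of concurrent aborts. Once these per-node properties are discharged by the lemmas of Section~\ref{sec:ARLockArray}, and Section~\ref{sec:ARLockTree} establishes that abort signals propagate through the tree without violating safety, the main theorem follows by summing the per-level cost over the $\Order{\log N / \log\log N}$ levels. The $\Order{N}$ space bound is immediate, since the arbitration tree has $\Theta(N)$ nodes in total, each internal node stores $\Theta(k)$ slot registers, and every process owns a single private spin register reused at each level.
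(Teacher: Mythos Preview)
Your high-level plan---a $\Delta$-ary arbitration tree of height $\Delta=\Theta(\log N/\log\log N)$ with an $O(1)$-expected-RMR abortable lock at each node---matches the paper. But the per-node mechanism you sketch is not what the paper builds, and as sketched it would not deliver the required properties.

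The paper's node lock does \emph{not} have contenders flip a coin to choose one of $k$ slots and resolve via a randomized splitter. Instead, each contender registers in a fixed slot (its pseudo-ID, determined by which child subtree it came from, not by a coin flip) and then tries to increment a shared bounded counter $\ctr\in\{0,1,2\}$; the randomization is in the \RCASCounter{2} object, where a process \emph{guesses} the current counter value and does a single \CAS. The first two processes to increment become \emph{king} and \emph{queen}; everyone else is a \emph{pawn} that spins on its own \Apply\ slot. The weak-adversary guarantee is that each \inc{} succeeds with constant probability, so a process settles into a role in $O(1)$ expected RMRs. Your random-slot splitter story does not by itself give mutual exclusion at the node (someone still has to win among all $k$ children), and it is unclear how the abort primitive you describe interacts with it.

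More importantly, you are missing the mechanism that makes the \emph{release} side come out to $O(\Delta)$ RMRs total rather than $O(\Delta^2)$. The paper's \release{}{} at a node is \emph{not} symmetric to \lock{}: it costs $O(1)$ RMRs when no contention is detected (the king's \ctr.\CAS{1,0} succeeds), but $O(\Delta)$ RMRs when contention is detected---and in that case the releaser performs a \emph{collect} of all waiting pawns and then \emph{hands over all remaining node locks it owns} (up to the root) in one shot to the queen or a promoted pawn, via the integer argument $j$ to \release{}{j}. A releasing process therefore climbs \emph{up} the tree, paying $O(1)$ per node until at most one node where it pays $O(\Delta)$ and stops. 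The same hand-over mechanism is what makes an aborting process's retreat cost $O(\Delta)$ rather than $O(\Delta)$ per level. Without this hand-over idea, a process that aborts midway and must release $\Theta(\Delta)$ node locks, each possibly contended, would pay $\Theta(\Delta^2)$ RMRs. Your proposal's ``forwards it to a designated successor via a \CAS'' and ``retreating down its path in $O(d)$ steps'' gloss over exactly this point; the per-node abort-way in the paper is $O(\Delta)$ RMRs, not $O(1)$, and the bound works only because that cost is incurred at a single node per passage.
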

%

\section{Building Blocks}
\label{sec:BuildingBlocks}

\paragraph{A Randomized CAS Counter.}
\label{sec:RCASCounter}

A \emph{CAS counter} object with parameter $k \in \mathds{Z}^+$ complements a \CAS\ object by supporting an additional \inc{} operation (apart from \CAS{} and \Read{} operations) that increments the object's value.
The object takes values in \Set{0,\ldots,k}, and initially the object's value is $0$.
Operation \inc{} takes no arguments, and if the value of the object is in \Set{0,\ldots,k-1}, then the operation increments the value and returns the previous value.
Otherwise, the value of the object is unchanged and the integer $k$ is returned.
We will use such an object for $k=2$ to assign three distinct roles to processes.

Our implementation of the \inc{} operation needs only $O(1)$ RMRs in expectation.
A deterministic implementation of a \CAS counter for $k=2$ and constant worst-case RMR complexity does not exist:
Replacing our randomized \CAS counter with a deterministic one that has worst-case RMR complexity $T$ yields a deterministic abortable mutual exclusion algorithm with worst-case RMR complexity $\Order{T\cdot\log N/\log\log N}$.
From the lower bound for deterministic mutual exclusion by Attiya etal.~\cite{Attiya:MELowerBound}, such an algorithm does not exist, unless $T=\Omega(\log\log N)$.%
\footnote{For the DSM model, this also follows from a result by Golab, Hadzilacos, Hendler, and Woelfel~\cite{Golab:CASImplementation}. They established a super-constant lower bound on the RMR complexity of a deterministic bounded counter that can count up to two, and also supports a reset operation.}

In Appendix~\ref{sec:appendix:RCASCounter}, we describe a randomized CAS counter, called \RCASCounter{k}, where the \inc{} method is allowed to fail.
The idea is, that to increase the value of the object, a process randomly guesses its current value, $v$, and then executes a \CAS{$v$,$v+1$} operation.
An adaptive adversary could intervene between the steps involving the random guess and the subsequent \CAS operation, thereby affecting the failure probability of an \inc{} method call, but a weak adversary cannot do so.
\begin{lemma}
\label{thm:ctr:linearizable}
Object \RCASCounter{k} is a randomized wait-free linearizable CAS Counter, where the probability that an \inc{} method call fails is $\frac{k}{k+1}$ against the weak adversary.
Each of the methods of $\RCASCounter{k}$ has \Order{1} step complexity.
\end{lemma}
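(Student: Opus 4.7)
The plan is to establish three properties separately: constant step complexity, linearizability, and the $k/(k+1)$ failure probability bound against the weak adversary.

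For step complexity, I would inspect each method individually. The \Read{} and \CAS{} methods pass their operations straight through to the underlying \CAS{} object, and the \inc{} method (according to the description preceding the statement) consists of a single local coin flip followed by a single \CAS{} attempt. So all three methods execute a constant number of steps irrespective of contention.

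For linearizability, I would assign a linearization point to each invocation. \Read{} and \CAS{} linearize at their respective atomic operations on the underlying object, matching the sequential specification directly. A successful \inc{} call (in which the guess $v$ matched the stored value and the \CAS{} returned \True) linearizes at that successful \CAS{}: it atomically bumps the counter from $v$ to $v+1$ and returns $v$, as the sequential specification requires. A failed \inc{} call does not modify the shared object at all, so it can be linearized at any point inside its invocation interval; this is consistent because ``failure'' is an explicitly permitted outcome of the randomized specification. The boundary case, when the stored value is already $k$, is handled by the underlying \CAS{} silently failing, which is then reported as an \inc{} failure.

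The main obstacle, and the part I would spend most care on, is the failure probability analysis, because this is where the weak-adversary assumption is essential. The key observation is that \inc{} flips a fair coin $v$ over a set of size $k+1$ and then immediately executes a \CAS{} as its very next shared-memory step. Under the weak adversary, the scheduler is forced to commit to every event up to and including the coin-flip step, and to every event scheduled between that coin flip and this process's subsequent \CAS{}, \emph{without} being able to condition on the outcome of $v$. Consequently, the value $X$ stored by the underlying \CAS{} object at the moment the \CAS{} is executed is a random variable whose distribution is independent of $v$. Since success requires $v=X$ (and $X<k$), and $v$ is uniform over $k+1$ values independently of $X$, we obtain
\[
\Pr[\text{success}] \;=\; \mathbb{E}\!\left[\Pr[v=X \mid X]\right] \;=\; \frac{1}{k+1},
\]
so $\Pr[\text{fail}] = k/(k+1)$, as claimed.

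Finally, I would check that this independence argument composes: because \Read{} and \CAS{} calls by other processes involve no randomness tied to $v$, arbitrary interleavings of concurrent operations do not leak information about $v$ to the adversary before the \CAS{}, so the $k/(k+1)$ bound holds on a per-invocation basis rather than only on average over some distribution of adversarial strategies.
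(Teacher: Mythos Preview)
Your overall decomposition---step complexity by inspection, linearization points at the underlying atomic operation, and an independence argument for the failure probability---is exactly the paper's approach. There is, however, a genuine gap in your handling of the boundary case.

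You claim that when the stored value is already $k$, ``the underlying \CAS{} silently failing \ldots\ is then reported as an \inc{} failure.'' This is wrong in two ways. First, the sequential specification of the CAS Counter says that \inc{} applied when the value is $k$ must \emph{succeed} and return $k$, not fail. Second, the implementation does not execute \CAS{$k$,$k+1$} when the guess equals $k$; instead it performs a \Read{} of \Count, and if that \Read{} returns $k$ the method returns $k$ as a successful \inc{}. Your linearization argument is therefore missing a linearization point for this case; the paper linearizes such a call at the \Read{}.

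This omission also surfaces in your probability calculation. You state that success requires ``$v=X$ (and $X<k$)'' but then compute $\Pr[v=X]$ as if the second conjunct were absent. Under the condition you actually wrote, the success probability would be strictly below $1/(k+1)$ whenever the adversary can arrange $X=k$ with positive probability, and the claimed bound would not follow. The reason the bound does hold is precisely that the \Read{} branch makes $v=X$ sufficient for success even when $X=k$; once you account for that, your weak-adversary independence argument goes through as in the paper.
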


\paragraph{A Single-Fast-Multi-Slow Universal Construction.}
\label{sec:SFMSUniversalConstruction}

A \emph{universal construction} object provides a linearizable concurrent implementation of any object with a sequential specification that can be given by deterministic code.
In Appendix~\ref{appendix:sec:AbortableProArray} we devise a universal construction object \UC{\T} for $N$ processes \footnote{We use the universal construction object for smaller sets of processes, specifically for sets of size \Order{\log N / \log \log N}.} which provides two methods, \performFast{$op$} and \performSlow{$op$}, to perform any operation $op$ on an object of type \T.
The idea is that \performFast{} methods cannot be called concurrently, but are executed very fast, i.e., they have \Order{1} step complexity.
On the other hand, \performSlow{} methods need \Order{N} steps.
The algorithm is based on a helping mechanism in which \performSlow{} methods help a process that wants to execute a \performFast{} method.
%
\begin{lemma}\samepage
\label{theorem:UC}
Object \UC{\T} is a wait-free universal construction that implements an object \Obj\ of type \T, for $N$ processes,
 and an operation $op$ on object \Obj\ is performed by executing either method \performFast{op} or \performSlow{op}, and no two processes execute method \performFast{} concurrently.
Methods \performFast{} and \performSlow{} have \Order{1} and \Order{N} step complexity respectively.
\end{lemma}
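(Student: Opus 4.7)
The plan is to build \UC{\T} as a Herlihy-style universal construction augmented with a dedicated fast slot that slow helpers inspect first. The shared state comprises (i) a \CAS\ object $H$ holding a pair $(\textit{state}, \textit{seq})$ where $\textit{seq}$ is a sequence number that prevents ABA issues, (ii) an announce array $A[1..N]$ in which slow callers deposit their operations together with status flags and response fields, and (iii) a single register $F$ used by the at-most-one concurrent \performFast\ caller as an analogous announce slot.

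I would implement \performSlow{$op$} by having process $p$ first write its request into $A[p]$ and then execute an apply-loop: read $H$, then scan $F$ followed by $A[1..N]$ (so that the scan follows the read of $H$), simulate all still-pending operations on a local copy of the state in a canonical order that always places $F$ first, write each response into the corresponding slot, and attempt a \CAS\ on $H$ with the updated pair, marking the applied slots as done on success. The caller returns as soon as it observes its own slot marked done. After $p$ has announced in $A[p]$, at most one successful \CAS\ by another process can miss $p$'s announcement, because any later successful \CAS\ must be based on reading $H$ from a value installed after its predecessor's \CAS, which itself committed after $p$ announced; hence the later helper's subsequent scan of $A$ must see $A[p]$. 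Consequently $p$ performs at most $O(1)$ rounds, each dominated by an $O(N)$ scan, yielding $O(N)$ total step complexity.

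For \performFast{$op$} the process writes its request into $F$, reads $H$, locally applies $op$, and attempts a \CAS\ on $H$; this is repeated at most twice, after which the process reads the response from $F$ and returns. The central obstacle I anticipate is bounding the fast caller's step count by a constant under worst-case scheduling, which I would argue as follows. If the first \CAS\ fails, some slow process $Q$'s \CAS\ must have committed in between. If $Q$'s scan of $F$ happened after the fast caller wrote $F$, then $Q$ has already applied $op$ and deposited a response in $F$, so the fast caller is done. Otherwise $Q$'s entire scan preceded the fast caller's write of $F$; after the first failure the fast caller retries, but any subsequent successful \CAS\ on $H$ must be based on reading $H$ at a time later than $Q$'s update, hence later than the fast caller's write of $F$, and therefore its own scan of $F$ also sees $op$. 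Thus after at most two attempts the operation is either applied by the fast caller itself or has already been applied by a helper with the response waiting in $F$, giving $O(1)$ own steps.

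Linearizability then follows by placing each operation at the unique successful \CAS\ on $H$ that installs its effect, breaking ties among operations applied together by the canonical simulation order. Wait-freedom is immediate from the step bounds above, and the precondition that \performFast\ is never called concurrently with itself ensures that $F$ is owned unambiguously at every moment, so its contents correctly identify the single pending fast operation whenever one exists.
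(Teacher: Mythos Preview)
Your high-level plan—a combining construction with a dedicated fast slot $F$ that slow helpers scan first—is close in spirit to the paper's, but the design as written has a correctness gap that also undermines the step bounds. The issue is that an operation's status and response live in $A[\cdot]$ or $F$, while the state lives in $H$, and you update these at separate moments: responses are written \emph{before} the \CAS\ on $H$ and the ``done'' mark is written \emph{after}. Between a helper $Q$'s successful \CAS\ and $Q$'s subsequent marking of $F$ as done, another process can read the new $H$, scan $F$, still see it pending, and apply the fast operation a second time; the same race affects slow slots. Dually, because several helpers may each write a response into the same slot before their respective \CAS\ attempts, the slot can end up holding a value computed against a state that was never installed, so the owner may read a stale answer even though the flag says done. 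Your two-attempt bound for \performFast\ and your $O(1)$-round bound for \performSlow\ both rely on helpers' marks being observed in time, which is precisely what this race violates.

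The paper closes this gap by making the ``already applied'' test and the fast result atomic with the state: the tuple stored in the \CAS\ object is $\mReg=(\textit{state},\textit{fastresult},\textit{fastcount},\textit{slowcount})$, the fast caller announces $(\textit{op},\textit{fastcount}+1)$ in a separate register $\sReg$, and every helper compares the current $\textit{fastcount}$ in $\mReg$ against the announced sequence number before deciding to apply. The fast caller then reads its response out of $\mReg$ itself, not from a separate slot; this is what makes the two-\help{} argument sound. The paper also decomposes differently for \performSlow: it first proves the \Order{1} bound for \performFast\ in a merely lock-free object \UCWeakI{\T}, and only then applies Herlihy's standard operation-combining transformation to obtain wait-freedom and the \Order{N} bound. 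Your more direct route could be salvaged by enlarging $H$ to carry per-slot applied-sequence numbers and responses (so that one \CAS\ updates everything atomically), but as stated the separation of $H$ from the announce slots is the missing idea.
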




%
%

\paragraph{The Abortable Promotion Array.}
\label{sec:AbortableProArray}
An object $O$ of type \APArray{k} stores a vector of $k$ integer pairs.
It provides some specialized operations on the vector, such as conditionally adding/removing elements, and earmarking a process (associated with an element of the vector) for some future activity.
Initially the value of $O=(O[0],O[1],\ldots,O[k-1])$ is $(\pair{0}{\bot},\ldots,\pair{0}{\bot})$.
The object supports operations \collect{}, \abort{}, \promote{}, \remove{} and \reset{} (see Figure~\ref{fig:SArray:seq} in the appendix).
Operation \collect{$X$} takes as argument an array $X[0 \ldots k-1]$ of integers, and is used to ``register'' processes into the array.
The operation changes $O[i]$, for all $i$ in \Set{0,\ldots,k-1}, to value \pair{\cReg}{X[i]} except if $O[i]$ is $\pair{\cAbort}{s}$, for some $s \in \mathbb{Z}$.
In the latter case the value of $O[i]$ is unchanged.
Process $i$ is said to be \emph{registered} in the array if a \collect{} operation changes $O[i]$ to value \pair{\cReg}{s}, for some $s \in \mathbb{Z}$.
The object also allows processes to ``abort'' themselves from the array using the operation \abort{}.
Operation \cUpdate{$i,s$} takes as argument the integers $i$ and $s$, where $i \in \Set{0,\ldots,k-1}$ and $s \in \mathbb{Z}$.
The operation changes $O[i]$ to value \pair{\cAbort}{s} and returns \True, only if $O[i]$ is not equal to \pair{\cPro}{s'}, for some $s' \in \mathbb{Z}$.
Otherwise the operation returns \False.
Process $i$ \emph{aborts} from the array if it executes an \cUpdate{$i,s$} operation that returns \True.
A registered process in the array that has not aborted can be ``promoted'' using the \promote{} operation.
Operation \promote{} takes no arguments, and
changes the value of the element in $O$ with the smallest index and that has value \pair{\cReg}{s}, for some $s \in \mathbb{Z}$, to value \pair{\cPro}{s}, and returns \pair{i}{s}, where $i$ is the index of that element.
If there exists no element in $O$ with value \pair{\cReg}{s}, for some $s \in \mathbb{Z}$, then $O$ is unchanged and the value $\pair{\bot}{\bot}$ is returned.
Process $i$ is \emph{promoted} if a \promote{} operation returns \pair{i}{s}, for some $s \in \mathbb{Z}$.
Operation \reset{} resets the entire array to its initial state.

Note that an aborted process in the array, cannot be registered into the array or be promoted, until the array is reset.
If a process tries to abort itself from the array but finds that it has already been promoted, then the abort fails.
This ensures that a promoted process takes responsibility for some activity that other processes expect of it.


In the context of our abortable lock, the $i$-th element of the array stores the current state of process with ID $i$, and a sequence number associated with the state.
Operation \collect{} is used to register a set of participating processes into the array.
Operation \cUpdate{$i,s$} is executed only by process $i$, to abort from the array.
Operation \promote{} is used to promote an unaborted registered process from the array, so that the promoted process can fulfill some future obligation.

In our abortable lock of Section~\ref{sec:ARLockTree}, we need a wait-free linearizable implementation of type \APArray{\Delta}, where $\Delta$ is the maximum number of processes that can access the object concurrently, and we achieve this by using object \UC{\APArray{\Delta}}.
We ensure that no two processes execute operations \collect{}, \promote{}, \reset{} or \remove{} concurrently, and therefore by we get \Order{1} step complexity for these operations by using method \performFast{}.
Operation \abort{} has \Order{\Delta} step complexity since it is performed using method \performSlow{}, which allows processes to call \abort{} concurrently.


%
\section{The Tree Based Abortable Lock}
\label{sec:ARMLockStateless}
\label{sec:ARLockTree}

Our abortable lock algorithm is based on an arbitration tree with branching factor approximately $\Theta(\log N/\log\log N)$.
For convenience we assume (w.l.o.g.) that $N = \Delta^{\Delta-1}$ for some positive integer $\Delta$, where $N$ is the maximum number of processes in the system.
Then it follows that $\Delta = \Theta (\log N $/ $\log \log N)$.

As in the algorithm by Hendler and Woelfel \cite{Philipp:RLock}, we consider a tree with $N$ leafs and where each non-leaf node has $\Delta$ children.
Every non-leaf node is associated with a lock.
Each process is assigned a unique leaf in the tree and climbs up the tree by capturing the locks on nodes on its path until it has captured the lock at the root.
Once a process locks the root, it can enter the Critical Section.

The main difficulty is that of designing the locks associated with the nodes of the tree.
A simple \CAS\ object together with an ``announce array'' as used in \cite{Philipp:RLock} does not work.
Suppose a process $p$ captures locks of several nodes on its path up to the root and aborts before capturing the root lock.
Then it must release all captured node locks and therefore these lock releases cause other processes, which are  busy-waiting on these nodes, to incur RMRs.
So we need a mechanism to guarantee some progress to these processes, while we also need a mechanism that allows busy-waiting processes to abort their attempts to capture node locks.
In \cite{Philipp:RLock} progress is achieved as follows: A process $p$, before releasing a lock on its path, searches(with a random procedure) for other processes that are busy-waiting for the node lock to become free.
If $p$ finds such a process, it promotes it into the critical section.
This is possible, because at the time of the promotion $p$ owns the root lock and can hand it over to a promoted process.
Unfortunately, this promotion mechanism fails for abortable mutual exclusion: When $p$ aborts its own attempt to enter the Critical Section, it may have to release node locks at a time when it doesn't own the root lock.
Another problem is that if $p$ finds a process $q$ that is waiting for $p$ to release a node-lock, then $q$ may have already decided to abort.
We use a carefully designed synchronization mechanism to deal with such cases.	

To ensure that waiting processes make some progress, we desire that $p$ ``collect'' busy-waiting processes (if any) at a node into an instance of an object of type \APArray{\Delta}, \PawnSet,  using the operation \collect{}.
Once busy-waiting processes are collected into \PawnSet, $p$ can identify a busy-waiting process, if present, using the \PawnSet.\promote{} operation,
while busy-waiting processes themselves can abort using the \PawnSet.\abort{} operation.
Note that $p$ may have to read $\Order{\Delta}$ registers just to find a single busy-waiting process at a node, where $\Delta$ is the branching factor of the arbitration tree.
This is problematic since our goal is to bound the number of steps during a passage to \Order{\Delta} steps, and thus a process cannot collect at more than one node.
For this reason we desire that $p$ transfer all unreleased node locks that it owns to the first busy-waiting process it can find, and then it would be done.
And if there are no busy-waiting processes at a node, then $p$ should somehow be able to release the node lock in \Order{1} steps.
Since there are at most $\Delta$ nodes on a path to the root node, $p$ can continue to release captured node locks where there are no busy-waiting processes, and thus not incur more than \Order{\Delta} overall.
We use an instance of \RCASCounter{2}, \ctr, to help decide if there are any busy-waiting processes at a node lock.
Initially, \ctr\ is $0$, and processes attempt to increase \ctr\ using the \ctr.\inc{} operation after having registered at the node.
Process $p$ attempts to release a node lock by first executing a \ctr.\CAS{$1,0$} operation.
If the operation fails then some process $q$ must have further increased \ctr\ from $1$ to $2$, and thus $p$ can transfer all unreleased locks to $q$, if $q$ has not aborted itself.
If $q$ has aborted, then $q$ can perform the collect at the node lock for $p$, since $q$ can afford to incur an additional one-time expense of \Order{\Delta} RMRs.
If $q$ has not aborted then $p$ can transfer its captured locks to $q$ in \Order{1} steps, and thus making sure some process makes progress towards capturing the root lock.
We encapsulate these mechanisms in a randomized abortable lock object, \ARMLockArray{\Delta}.

More generally, we specify an object \ARMLockArray{n} for an arbitrary parameter $n< N$.
Object \ARMLockArray{n} provides methods \lock{} and \release{}{} that can be accessed by at most $n+1$ processes concurrently. 
The object is an abortable lock, but with an RMR complexity of $O(n)$ for the abort-way, and constant RMR complexity for \lock{}.
The \release{}{} method is special.
If it detects contention (i.e., other processes are busy-waiting), then it takes $O(n)$ RMRs, but helps those other processes to make progress.
Otherwise, it takes only $O(1)$ RMRs.
Each non-leaf node $u$ in our abritration tree will be associated with a lock \ARMLockArray{\Delta} and can only be accessed concurrently by the processes owning locks associated with the children of $u$ and one other process.

Method \lock{} takes a single argument, which we will call pseudo-ID, with value in \Set{0,\ldots,n-1}.
We denote a \lock{} method call with argument $i$ as \lock{i}, but refer to \lock{i} as \lock{} whenever the context of the discussion is not concerned with the value of $i$.
Method \lock{} returns a non-$\bot$ value if a process captures the lock, otherwise it returns a $\bot$ value to indicate a failed \lock{} call.
A \lock{} by process $p$ can fail only if $p$ aborts during the method call.
Method \release{}{} takes two arguments, a pseudo-ID $i \in \Set{0,\ldots,n-1}$ and an integer $j$.
Method \release{i}{j} returns \True\ if and only if there exists a concurrent call to \lock{} that eventually returns $j$.
Otherwise method \release{i}{j} returns \False.
The information contained in argument $j$ determines the transfered node locks.
Process pseudo-IDs are passed as arguments to the methods to allow the ability for a process to ``transfer'' the responsibility of releasing the lock to another process.
Specifically, we desire that if a process $p$ executes a successful \lock{i} call and becomes the owner of the lock, then $p$ does not have to release the lock itself, if it can find some process $q$ to call \release{i}{} on its behalf.
In Section \ref{sec:ARLockArray} we implement object \ARMLockArray{n}, and prove its properties in Appendix~\ref{sec:Analysis}, and thus we get the following lemma.

\begin{lemma}
\label{theorem:ARMLockArray}
Object \ARMLockArray{n} can be implemented against the weak adversary for the CC model with the following properties using only \Order{n} \CAS\ objects and read-write registers.
\begin{enumerate}[(a)]
 \item Mutual exclusion, starvation freedom, bounded exit, and bounded abort.
 \item The abort-way has \Order{n} RMR complexity.
 \item If a process does not abort during a \lock{} call, then it incurs \Order{1} RMRs in expectation during the call, otherwise it incurs \Order{n} RMRs in expectation during the call.
 \item If a process' call to \release{}{j} returns \False, then it incurs \Order{1} RMRs during the call, otherwise it incurs \Order{n} RMRs during the call.
\end{enumerate}
\end{lemma}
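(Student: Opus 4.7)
The plan is to prove the four properties of \ARMLockArray{n} in turn, starting from an invariant-based analysis of the counter \ctr\ and of the promotion array \PawnSet\ at every configuration of the algorithm. A central observation is that the value of \ctr\ acts as a phase indicator: value $0$ means that no process currently owns or is busy-waiting on the lock, $1$ means that exactly one process is the presumptive owner without any observed contention, and $2$ means that contention has been detected and hence the releasing process is obliged to help. I would first formulate invariants linking this phase to (i) which process holds the current ``lock token'', (ii) the set of registered slots of \PawnSet, and (iii) the pseudo-ID argument $i$ that is passed through the interface.

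For property (a), mutual exclusion follows by showing that at any configuration there is a unique token, whose holder is determined either by the first winning \ctr.\inc{} after a fresh phase begins or by a \PawnSet.\promote{} call executed by the previous owner inside \release{}{}. A case analysis on the possible interleavings of \lock{}, abort, and release then shows that no two processes simultaneously consider themselves owners. Bounded abort follows directly from Lemma \ref{theorem:UC}: the abort-way executes \PawnSet.\abort{} via \performSlow, which is wait-free with \Order{n} step complexity, plus \Order{1} local bookkeeping. Starvation freedom is argued by showing that every registered, non-aborted process is eventually either returned by some \PawnSet.\promote{} performed by a predecessor or becomes the presumptive owner itself; the \APArray\ semantics guarantee that \promote{} always returns the smallest registered index, which yields the required FIFO-like fairness.

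For (b)--(d), the RMR analysis decomposes along the two phases of the protocol. In \lock{}, a process performs \Order{1} bookkeeping RMRs and then invokes \ctr.\inc{}; by Lemma \ref{thm:ctr:linearizable} each \inc{} succeeds with probability $1/3$ against the weak adversary, so only \Order{1} attempts are needed in expectation, after which the process busy-waits on a locally-cached variable, incurring \Order{1} further RMRs until the owner or a predecessor signals a transfer---this yields the non-aborting case of (c). If an abort signal arrives, the process diverts to the abort-way, whose cost is dominated by the single \PawnSet.\abort{} and is therefore \Order{n}, giving (b) and the aborting case of (c). For (d), the fast return-\False\ path consists of a constant number of reads and one \CAS\ on \ctr; the slow return-\True\ path additionally runs \PawnSet.\collect{}, \PawnSet.\promote{} and \PawnSet.\reset{}, all via \performFast\ and hence \Order{1} each by Lemma \ref{theorem:UC}, together with \Order{n} RMRs charged against the successful transfer to the promoted successor.

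The main obstacle will be the releasing-while-contending branch, in particular the race between a process that is simultaneously trying to abort and a releasing process that has just promoted it via \PawnSet.\promote{}. Handling this requires combining the \APArray\ property that once a slot has been promoted its corresponding \abort{} returns \False\ with a careful argument that the value $j$ is propagated correctly through \release{i}{j}, so that the promoted process cannot return $\bot$ from its concurrent \lock{} call and so that the token invariant is preserved. A secondary subtlety is to show that after an aborting process leaves, \ctr\ and \PawnSet\ are in a state from which a new phase can be opened without violating mutual exclusion. Once these race cases are settled, the remaining arguments reduce routinely to Lemma \ref{thm:ctr:linearizable} and Lemma \ref{theorem:UC}.
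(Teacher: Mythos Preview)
Your high-level decomposition is close to the paper's, but there is a genuine gap in the mutual-exclusion argument that would make the proof fail as stated. You describe the token as being held ``either by the first winning \ctr.\inc{} after a fresh phase begins or by a \PawnSet.\promote{} call executed by the previous owner.'' This misses the third transfer mechanism in the algorithm: the process that increments \ctr\ from $1$ to $2$ (the \emph{queen}) obtains ownership neither by winning the counter race nor by being promoted in \PawnSet, but by reading a non-$\bot$ value from the separate \CAS\ object \X\ that the king writes during \release{}{j}. Consequently there \emph{are} configurations in which two processes (king and queen) are simultaneously ``releasers'' in the paper's sense, and your unique-token invariant is simply false at those points. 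The paper's proof of mutual exclusion (Lemma~\ref{cor:ARMLock:mutualExclusion}) hinges precisely on this: it shows that during the interval $[\Ib{2},\lambda)$ where both \K\ and \Q\ are releasers, \Q\ cannot yet be poised to execute \release{}{} because \Q\ must first read $\X\neq\bot$, and only \K\ sets \X, and \K\ does so only inside its own \release{}{}. Without the \X/\LSync\ synchronization analysis (Claims~\ref{cl:proofProperties2} and~\ref{cl:proofProperties3}) your case analysis cannot close.

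There is a second, smaller gap in your argument for part~(c). You say a non-aborting process ``busy-waits on a locally-cached variable, incurring $O(1)$ further RMRs,'' but a pawn spins on both $\Apply[i]$ and \ctr\ (line~\ref{getLock:awaitAckOrCtrDecrease}), and may be \emph{missed} by the collect of the current \ctr-cycle if it registered after its slot was read. The paper's $O(1)$ bound (Claim~\ref{cl:conditional:getLock}) depends on showing that a pawn is missed in at most one \ctr-cycle, after which it is guaranteed to be collected and promoted in the next (Claim~\ref{cl:DoggedPGetsPromoted}). Your FIFO-from-smallest-index argument does not supply this; smallest-index promotion by itself gives no bound on how many cycles elapse before a given process is first placed into \PawnSet. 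Finally, for part~(d) note that the $O(n)$ cost of the \True\ branch comes from the explicit $n$-iteration loop over \Apply\ inside \doCollect{}, not from any \PawnSet\ operation; all \PawnSet\ calls on that path go through \performFast{} and are $O(1)$, which only works because the paper separately proves those calls are never concurrent (Claim~\ref{scl:opsNotConcurrent}).
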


\paragraph{High Level Description of the Abortable Lock.}
We use a complete $\Delta$-ary tree \tree\ of height $\Delta$ with $N$ leaves, called the arbitration tree.
The root has height $\Delta$ and the leaves of the tree have height 0.
The $N$ processes in the system line up as $N$ unique leaf nodes, such that each process $p$ is associated with a unique leaf $\myleaf{p}$ in the tree.
Let \mypath{p} denote the path from \myleaf{p} up to \root, and \h{u} denote the height of node $u$.

Each node of our arbitration tree \tree\ is a structure of type \Node\ that contains a single instance \L\ of the abortable randomized lock object $\ARMLockArray{\Delta}$.
This allows processes the ability to abort their attempt at any point in time during their ascent to the root node.

\bparagraph{Lock capture protocol - \lock{p}}
During \lock{p} a process $p$ attempts to \textit{capture} every node on its path \mypath{p} that it does not own, as long as $p$ has not received a signal to abort.
Process $p$ attempts to capture a node $u$ by executing a call to $u$.\L.\lock{}.
If $p$'s $u$.\L.\lock{} call returns $\infty$ then $p$ is said to have captured $u$, and if the call returns an integer $j$, then $p$ is said to have been \emph{handed over} all nodes from $u$ to $v$ on \mypath{p}, where $\h{v} = j$.
We ensure that $j \geq \h{u}$.
Process $p$ starts to \textit{own} node $u$ when $p$ captures $u$.\L\ or when $p$ is handed over node $u$ from the previous owner of node $u$.
Process $p$ can enter its Critical Section when it owns the root node of \tree.
Process $p$ may receive a signal to abort during a call to $u$.\L.\lock{} as a result of which $p$'s call to $u$.\L.\lock{} returns either $\bot$ or a non-$\bot$ value.
In either case, $p$ then calls \release{p}{} to release all locks of nodes that $p$ has captured in its passage, and then returns from its \lock{p} call with value $\bot$.

\bparagraph{Lock release protocol - \release{p}{}}
An exiting process $p$ \emph{releases} all nodes that it owns during \release{p}{}.
Process $p$ is said to \textit{release} node $u$ if $p$ releases $u$.\L\ (by executing $u$.\L.\release{}{} call), or if $p$ hands over node $u$ to some other process.
Recall that $p$ hands over node $u$ if $p$ executes a $v$.\L.\release{}{j} call that returns \True\ where $\h{v} \leq \h{u} \leq j$.
Let $s$ be the height of the highest node $p$ owns.
During \release{p}{}, $p$ climbs up \tree\ and calls $u$.\L.\release{p}{s} at every node $u$ that it owns, until a call returns \True.
If a $u$.\L.\release{p}{s} call returns \False (process $p$ incurs \Order{1} steps), then $p$ is said to have released lock $u$.\L\ (and therefore released node $u$), and thus $p$ continues on its path.
If a $u$.\L.\release{p}{s} call returns \True (process $p$ incurs \Order{\Delta} steps), then $p$ has handed over all remaining nodes that it owns to some process that is executing a concurrent $u$.\L.\lock{} call at node $u$, and thus $p$ does not release any more nodes.

Notice that our strategy to release node locks is to climb up the tree until all node locks are released or a hand over of remaining locks is made.
Climbing up the tree is necessary (as opposed to climbing down) in order to hand over node locks to a process, say $q$, such that the handed over nodes lie on $\mypath{q}$.


%
\section{The Array Based Abortable Lock}
\label{sec:ARLockArray}


We specified object \ARMLockArray{n} in Section \ref{sec:ARLockTree} and now we describe and implement it (see Figures~\ref{fig:ARMEAlgorithm1} and \ref{fig:ARMEAlgorithm2}).
Let $\L$ be an instance of object \ARMLockArray{n}.
{\renewcommand{\baselinestretch}{0.75}
\begin{classfigure}[!htbp]
\begin{object}{\ARMLockArray{n}} \label{algo:ARMLockStructure}
  \shared \;
    \qquad $\ctr$: \RCASCounter{2} \Init $0$;\;
    \qquad $\PawnSet$: Object of type $\APArray{n}$ \Init $\varnothing$;\;
    \qquad $\Apply$: \Array $[0 \dots n-1]$ \Of \Int\ pairs \Init all $\pair{\bot}{\bot}$;\;
     \qquad $\Role$: \Array $[0 \dots n-1]$ \Of \Int \Init $\bot$;\;
    \qquad $\X,\LSync$: \Int \Init $\bot$;\;
    \qquad \cKing, \cQueen, \cPawn, \cPPawn, \cWant, \cOk: \const \Int\ $0,1,2,3,4,5$ respectively;\;
    \qquad \getSequenceNo{}: returns integer $k$ on being called for the $k$-th time from a call to \;
    \qquad \lock{i}. (Since calls to \lock{i} are executed sequentially, a sequential shared counter \;
    \qquad suffices to implement method \getSequenceNo{}.)\;
  \local \;
    \qquad $s,val,seq,dummy$: \Int \Init $\bot$;\;
    \qquad $flag,r$: \Bool \Init $\False$;\;
    \qquad $A$: \Array $[0 \dots n-1]$ \Of \Int \Init $\bot$ \;
\tcp{If process $i$ satisfies the loop condition in line~\ref{getLock:ApplyBotWant},~\ref{getLock:awaitAckOrCtrDecrease}, or~\ref{getLock:awaitX}, and $i$ has received a signal to abort, then $i$ calls \abort{i}}
\end{object}
\begin{minipage}{0.56\textwidth}
\LinesNumbered
\begin{algorithm}[H]\Method{lock$_{i}$( )} \label{function:getLock}
\setcounter{AlgoLine}{0}
  $s$ \la\ \getSequenceNo{} \; \label{getLock:getSequenceNo}
  \Await {\Apply$[i]$.\CAS{$\pair{\bot}{\bot},\pair{\cWant}{s}$}} \;  \label{getLock:ApplyBotWant}
 $flag$ \la\ \True \;  \label{getLock:setFlag}
      \Repeat {$(\Role[i] \in \Set{\cKing, \cQueen, \cPPawn})$ } {                  \label{getLock:BeginInnerLoop}
          $\Role[i]$ \la\  $\ctr$.\inc{}       \;              \label{getLock:IncCounter}
          \If {$(\Role[i] = \cPawn)$}  {                          \label{getLock:ifSoldier}
              \Await {$\Apply[i] = \pair{\cOk}{s} \vee \ctr.\Read{} \neq 2$} \;  \label{getLock:awaitAckOrCtrDecrease}
              \If {$(\Apply[i] = \pair{\cOk}{s})$} {                  \label{getLock:ifBackpacked}
                  $\Role[i]$ \la\ \cPPawn \;                   \label{getLock:RolePPawn}
              }
          }
      }                                                     \label{getLock:EndInnerLoop}
  \If {$(\Role[i] = \cQueen)$}  {                            \label{getLock:ifQueen}
        \Await{$\X \neq \bot$}  \;                            \label{getLock:awaitX}
  }
  \Apply[i].\CAS{$\pair{\cWant}{s},\pair{\cOk}{s}$} \;                          \label{getLock:ApplyWantOk}
   \lIf {$\Role[i]=\cQueen$} {                              \label{getLock:ifRolePQueen}
       \return \X                                           \label{getLock:returnX}
   }
   \lElse {
       \return $\infty$ \;                                  \label{getLock:returninfty}
   }                                                        \label{getLock:end}
\end{algorithm}
\end{minipage}
\begin{minipage}{0.44\textwidth}
\LinesNumbered
\begin{algorithm}[H]\Method{abort$_{i}$( )} \label{function:abort}
  \lIf{$\neg flag$} {                     \label{abort:ifNotRegistered} \label{abort:ifFlag}
      \return $\bot$ \;                            \label{abort:returnbot}
  }
  \Apply[i].\CAS{$\pair{\cWant}{s},\pair{\cOk}{s}$} \;              \label{abort:ApplyWantOk}
  \eIf {$\Role[i] = \cPawn$} {                 \label{abort:ifPawn}
      \If {$\neg \PawnSet$.\cUpdate{$i,s$}} {  \label{abort:ifHead}
        $\Role[i]$ \la\ \cPPawn \;              \label{abort:RolePPawn}
        \return $\infty$ \;              \label{abort:returninfty}
      }
  } {
  \If {$\neg$\X.\CAS{$\bot,\infty$}} {                     \label{abort:setX}
        \return \X \;                     \label{abort:returnX}
  }
      \doCollect{i} \;                               \label{abort:doCollect}
      \helpRelease{i} \;                            \label{abort:callhRelease}
  }
  \Apply$[i]$.\CAS{$\pair{\cOk}{s},\pair{\bot}{\bot}$} \;    \label{abort:ApplyOkBot}
  \return $\bot$ \;                                \label{abort:returnr}
\end{algorithm}
\end{minipage}
\begin{minipage}{\textwidth}
\LinesNumbered
\begin{algorithm}[H]\Method{doCollect$_{i}$()} \label{function:doCollect}
\setcounter{AlgoLine}{50}
      \For {$k \la 0$ \KwTo\ $n-1$} {           \label{collect:collectLoop}
            $\pair{val}{seq}$ \la\ \Apply$[k]$ \; \label{collect:ApplyRead}
            \lIf {$val = \cWant$} {          \label{collect:IfRegistered}
                $A[k]$ \la\ $seq$
            }
            \lElse {
                $A[k]$ \la\ $\bot$ \;
            }
      }
    \PawnSet.\collect{$A$} \;      \label{collect:updateAll}
\end{algorithm}
\end{minipage}
\caption{Implementation of Object \ARMLockArray{n}}\label{fig:ARMEAlgorithm1}
\end{classfigure}

\begin{classfigure}[!htbp]

\begin{minipage}{0.5\textwidth}
\LinesNumbered
\setcounter{AlgoLine}{33}
\begin{algorithm}[H]\Method{release$_{i}$(int $j$)}  \label{function:release}
  $r$ \la\ \False \;                        \label{release:setrNotCollected} \label{release:safetyCheck}
      \If {$\Role[i] = \cKing$} {                       \label{release:ifKing}
          \If {$\neg$\ctr.\CAS{$1,0$}} {               \label{release:ctr10}
            $r$ \la\ \X.\CAS{$\bot,j$} \;              \label{release:setX}
            \lIf {$r$} {
                \doCollect{i} \;                         \label{release:doCollect}
            }
              \helpRelease{i} \;                            \label{release:callhRelease:King}
          }                                          \label{release:endIfFlag}
      }
      \If {$\Role[i] = \cQueen$}  {            \label{release:ifQueen}
        \helpRelease{i} \;                                  \label{release:callhRelease:Queen}
      }
      \If {$\Role[i] = \cPPawn$} {      \label{release:ifPPawn}
             \doPromote{i} \;           \label{release:callPromote}
      }
  \pair{dummy}{s} \la\ \Apply$[i]$ \;   \label{release:ApplyRead}
  \Apply$[i]$.\CAS{$\pair{\cOk}{s},\pair{\bot}{\bot}$} \;    \label{release:ApplyOkBot}
  \return $r$ \;                        \label{release:return}
\end{algorithm}
\end{minipage}
\begin{minipage}{0.5\textwidth}
\LinesNumbered
\begin{algorithm}[H]\Method{helpRelease$_{i}$()} \label{function:hRelease}
\setcounter{AlgoLine}{55}
   \DontPrintSemicolon
   \If {$\neg$\LSync.\CAS{$\bot,i$}} {                    \label{hRelease:setT}
       $j$ \la\ \X.\Read{} \;                            \label{hRelease:readX}
       \X.\CAS{$j,\bot$} \;          \label{hRelease:resetX}
       $j$ \la\ \LSync.\Read{} \;                    \label{hRelease:readT}
       \LSync.\CAS{$j,\bot$} \;                            \label{hRelease:resetT}
       \PawnSet.\remove{$j$} \;                \label{hRelease:collectFixOther}
       \doPromote{i}                       \label{hRelease:callPromote}
   }                                     \label{hRelease:end}
\end{algorithm}
\LinesNumbered
\begin{algorithm}[H]\Method{doPromote$_{i}$()}  \label{function:promote}
  \DontPrintSemicolon
    \PawnSet.\remove{$i$} \;                \label{promote:collectFixSelf}
    \pair{j}{seq} \la\ \PawnSet.\promote{} \;  \label{promote:FR12}
    \eIf {$j = \bot$} {                         \label{promote:ifNotPromoted}
        \PawnSet.\reset{} \;              \label{promote:resetBackpack}
        \ctr.\CAS{$2,0$} \;                   \label{promote:ctr20}
    } {
        \Apply[$j$].\CAS{$\pair{\cWant}{seq},\pair{\cOk}{seq}$}\;  \label{promote:ApplyWantOk}
    }                                           \label{promote:end}
\end{algorithm}
\end{minipage}

\caption{Implementation of Object \ARMLockArray{n} (continued)}\label{fig:ARMEAlgorithm2}
\end{classfigure}
\renewcommand{\baselinestretch}{1.0}}

\bparagraph{Registering and Roles at lock \L}
At the beginning of a \lock{} call processes \emph{register} themselves in the \Apply\ array by swapping the value $\cWant$ atomically into their designated slots (\Apply[i] for process with pseudo-ID $i$) using a \CAS\ operation.
The array {\Apply} of $n$ \CAS\ objects is used by processes to register and ``deregister'' themselves from lock \L, and to notify each other of certain events at lock \L.
On registering in the \Apply\ array, processes attempt to increase \ctr, an instance of \RCASCounter{2}, using operation \ctr.\inc{}.
Recall that $\RCASCounter{2}$ is a bounded counter, initially 0, and returns values in \Set{0,1,2} (see Section~\ref{sec:RCASCounter}).
Each of these values corresponds to a role at lock \L.
There are four \emph{roles} that a process can assume during its passage of lock \L, namely \emph{king}, \emph{queen}, \emph{pawn} and \emph{promoted pawn}, and a role defines the protocol a process follows during a passage.
During an execution, \ctr\ cycles from its initial value $0$ to non-$0$ values and then back to $0$, multiple times, and we refer to each such cycle as a \ctr-cycle.
The process that increases \ctr\ from $0$ to $1$ becomes the king.
The process that increases \ctr\ from $1$ to $2$ becomes the queen.
All processes that attempt to increase \ctr\ any further, are returned value $2$ (by specification of object \RCASCounter{2}), and they assume the role of a pawn process.
A pawn process busy-waits until it gets ``promoted'' at lock \L\ (a process is said to be \emph{promoted} at lock \L\ if it is promoted in \PawnSet), or until it sees the \ctr\ value decrease, so that it can attempt to increase \ctr\ again.
We ensure that a pawn process repeats an attempt to increase \ctr\ at most once, before getting promoted.
We ensure that at any point in time during the execution, the number of processes that have assumed the role of a king, queen and promoted pawn at lock \L, respectively, is at most one, and thus we refer to them as \king{\L}, \queen{\L} and \ppawn{\L}, respectively.
We describe the protocol associated with each of the roles in more detail shortly.
An array {\Role} of $n$ read-write registers is used by processes to record their role at lock \L.

\bparagraph{Busy-waiting in lock \L}
The king process, \king{\L}, becomes the first owner of lock \L\ during the current \ctr-cycle, and can proceed to enter its Critical Section, and thus it does not busy-wait during \lock{}.
The queen process, \queen{\L}, must wait for \king{\L}  for a notification of its turn to own lock \L. 
Then \queen{\L} spins on \CAS\ object \X, waiting for \king{\L} to \CAS\ some integer value into \X.
Process \king{\L} attempts to \CAS\ an integer $j$ into \X\ only during its call to \release{}{j}, after it has executed its Critical Section.
The pawn processes wait on their individual slots of the \Apply\ array for a notification of their promotion.

\bparagraph{A \emph{collect} action at lock \L}
A collect action is conducted by either \king{\L} during a call to \release{}{}, or by \queen{\L} during a call to \abort{}.
A collect action is defined as the sequence of steps executed by a process during a call to \doCollect{}.
During a call to \doCollect{}, the collecting process (say $q$) iterates over the array \Apply\ reading every slot, and then creates a local array $A$ from the values read and stores the contents of $A$ in the \PawnSet\ object in using the operation \PawnSet.\collect{A}.
A key point to note is that operation \PawnSet.\collect{A} does not overwrite an aborted process's value in \PawnSet\ (a process aborts itself in \PawnSet\ by executing a successful \PawnSet.\abort{} operation).

\bparagraph{A \emph{promote} action at lock \L}
Operation \PawnSet.\promote{} during a call to method \doPromote{} is defined as a promote action.
The operation returns the pseudo-ID of a process that was collected during a collect action, and has not yet aborted from \PawnSet.
A promote action is conducted at lock \L\ either by \king{\L}, \queen{\L} or \ppawn{\L}.

\bparagraph{Lock \emph{handover} from \king{\L} to \queen{\L}}
As mentioned, process \queen{\L} waits for \king{\L} to finish its Critical Section and then call \release{}{j}.
During \king{\L}'s \release{}{j} call, \king{\L} attempts to swap integer $j$ into \CAS object \X, that only \king{\L} and \queen{\L} access.
If \queen{\L} has not ``aborted'', then \king{\L} successfully swaps $j$ into \X, and this serves as a notification to \queen{\L} that \king{\L} has completed its Critical Section, and that \queen{\L} may now proceed to enter its Critical Section.

\bparagraph{\emph{Aborting} an attempt at lock \L\ by \queen{\L}}
On receiving  a signal to abort, \queen{\L} abandons its \lock{} call and executes a call to \abort{} instead.
\queen{\L} first changes the value of its slot in the \Apply\ array from \cWant\ to \cOk, to prevent itself from getting collected in future collects.
Since \king{\L} and \queen{\L} are the first two processes at \L, \king{\L} will eventually try to handover \L\ to \queen{\L}.
To prevent \king{\L} from handing over lock \L\ to \queen{\L}, \queen{\L} attempts to swap a special value $\infty$ into \X\ in one atomic step.
If \queen{\L} fails then this implies that \king{\L} has already handed over \L\ to \queen{\L}, and thus \queen{\L} returns from its call to \abort{} with the value written to \X\ by \king{\L}, and becomes the owner of \L.
If \queen{\L} succeeds then \queen{\L} is said to have successfully aborted, and thus \king{\L} will eventually fail to hand over lock \L.
Since \queen{\L} has aborted, \queen{\L} now takes on the responsibility of collecting all registered processes in lock \L, and storing them into the \PawnSet\ object.
After performing a collect, \queen{\L} then synchronizes with \king{\L} again, to perform a promote, where one of the collected processes is promoted.
After that, \queen{\L} deregisters from the \Apply\ array by resetting its slot to the initial value \pair{\bot}{\bot}.

\bparagraph{\emph{Aborting} an attempt at lock \L\ by a pawn process}
On receiving a signal to abort a pawn process (say $p$) busy-waiting in lock \L, abandons its \lock{} call and executes a call to \abort{} instead.
Process $p$ first changes the value of its slot in the \Apply\ array from \cWant\ to \cOk, to prevent itself from getting collected in future collects.
It then attempts to abort itself in \PawnSet\ by executing the operation \PawnSet.\cUpdate{$p$}).
If $p$'s attempt is unsuccessful then it implies that $p$ has already been promoted in \PawnSet, and thus $p$ can assume the role of a promoted pawn, and become the owner of \L.
In this case, $p$ returns from its \abort{} call with value $\infty$ and becomes the owner of \L.
If $p$'s attempt is successful then $p$ cannot be collected or promoted in future collects and promotion events.
In this case, $p$ deregisters from the \Apply\ array by resetting its slot to the initial value \pair{\bot}{\bot}, and returns $\bot$ from its call to \abort{}.

\bparagraph{Releasing lock \L}
Releasing lock \L\ can be thought of as a group effort between the \king{\L}, \queen{\L} (if present at all), and the promoted pawns (if present at all).
To completely release lock \L, the owner of \L\ needs to reset \ctr\ back to $0$ for the next \ctr-cycle to begin.
However, the owner also has an obligation to hand over lock \L\ to the next process waiting in line for lock \L.
We now discuss the individual strategies of releasing lock \L, by \king{\L}, \queen{\L} and the promoted processes.
To release lock \L, the owner of \L\ executes a call to \release{}{j}, for some integer $j$.

\bparagraph{Synchronizing the release of lock \L\ by \king{\L} and \queen{\L}}
Process \king{\L} first attempts to decrease \ctr\ from $1$ to $0$ using a \CAS\ operation.
If it is successful, then \king{\L} was able to end the \ctr-cycle before any process could increase \ctr\ from $1$ to $2$.
Thus, there was no \queen{\L} process or pawn processes waiting for their turn to own lock \L, during that \ctr-cycle.
Then \king{\L} is said to have released lock \L.

If \king{\L}'s attempt to decrease \ctr\ from $1$ to $0$ fails, then \king{\L} knows that there exists a \queen{\L} process that increased \ctr\ from $1$ to $2$.
Since \queen{\L} is allowed to abort, releasing lock \L\ is not as straight forward as raising a flag to be read by \queen{\L}.
Therefore, \king{\L} attempts to synchronize with \queen{\L} by swapping the integer $j$ into the object \X\ using a \X.\CAS{$\bot,j$} operation.
Recall that \queen{\L} also attempts to swap a special value $\infty$ into object \X\ using a \X.\CAS{$\bot,j$} operation, in order to abort its attempt.
Clearly only one of them can succeed.
If \king{\L} succeeds, then \king{\L} is said to have successfully handed over lock \L\ to \queen{\L}.
If \king{\L} fails, then \king{\L} knows that \queen{\L} has aborted and thus \king{\L} then tries to hand over its lock to one of the waiting pawn processes.
The procedure to hand over lock \L\ to one of the waiting pawn processes is to execute a collect action followed by a promote action.

The collect action needs to be executed only once during a \ctr-cycle, and thus we let the process (among \king{\L} or \queen{\L}) that successfully swaps a value into \X, execute the collect action.

If \king{\L} successfully handed over \L\ to \queen{\L}, it collects the waiting pawn processes, so that eventually when \queen{\L} is ready to release lock \L, \queen{\L}  can simply execute a promote action.
Since there is no guarantee that \king{\L} will finish collecting before \queen{\L} desires to execute a promote action, the processes synchronize among themselves again, to execute the first promote action of the current \ctr-cycle.
They both attempt to swap their pseudo-IDs into an empty \CAS object \LSync, and therefore only one can succeed.
The process that is unsuccessful, is the second among them, and therefore by that point the collection  of the waiting pawn process must be complete.
Then the process that is unsuccessful, resets \X\ and \LSync\ to their initial value $\bot$, and then executes the promote action, where a waiting pawn process is promoted and handed over lock \L.
If no process were collected during the \ctr-cycle, or all collected pawn processes have successfully aborted before the promote action, then the promote action fails, and thus the owner process resets the \PawnSet\ object, and then resets \ctr\ from $2$ to $0$ in one atomic step, thus releasing lock \L, and resetting the \ctr-cycle.

\bparagraph{The release of lock \L\ by \ppawn{\L}}
If a process was promoted by \king{\L} or \queen{\L} as described above, then the promoted process is said to be handed over the ownership of \L, and becomes the first promoted pawn of the \ctr-cycle.
Since a collect for this \ctr-cycle has already been executed, process \ppawn{\L} does not execute any more collects, but simply attempts to hand over lock \L\ to the next collected process by executing a promote action.
This sort of promotion and handing over of lock \L\ continues until there are no more collected processes to promote, at which point the last promoted pawn resets the \PawnSet\ object, and then resets \ctr\ from $2$ to $0$ in one atomic step, thus releasing lock \L, and resetting the \ctr-cycle.

All owner processes also \emph{deregister} themselves from lock \L, by resetting their slot in the \Apply\ array to the initial value \pair{\bot}{\bot}.
This step is the last step of their \release{}{j} calls, and processes return a boolean to indicate whether they successfully wrote integer $j$ into \X\ during their \release{}{j} call.
Note that only \king{\L} could possibly return \True\ since it is the only process that attempts to do so, during its \release{}{j} call.

%
%

%
\section{Conclusion}
We presented the first randomized abortable lock that achieves sub-logartihmic expected RMR complexity.
While the speed-up is only a modest $O(\log\log n)$ factor over the most efficient deterministic abortable mutual exclusion algorithm, our result shows that randomization can help in principle, to improve the efficiency of abortable locks.
Unfortunately, our algorithm is quite complicated; it would be nice to find a simpler one.
It would also be interesting to find an algorithm with sub-logarithmic RMR complexity that works against the stronger adversary.
In the weak adversary model, no non-trivial lower bounds for mutual exclusion are known, but it seems hard to improve upon $O(\log n/\log\log n)$ RMR complexity, even without the abortability property.

As shown by Bender and Gilbert, \cite{BenderGilbert:ME}, the picture looks different in the oblivious adversary model.
However, their algorithm is only lock-free with high probability.
It would be interesting to find a mutual exclusion algorithm with $o(\log n/\log\log n)$ RMR complexity against the oblivious adversary that is lock-free with probability one.
It would also be interesting to know whether such an algorithm can be made abortable.

\subsubsection*{Acknowledgement.}
We are indebted to Lisa Higham and Bill Sands for their careful reading of an earlier version of the paper and their valuable comments.
We also thank the anonymous referees of DISC 2012 for their helpful comments.

\bibliographystyle{plain}
\bibliography{thesis-bibitems}

\pagebreak
\appendix

\section*{Appendix}
\section{Implementation of Object \RCASCounter{k}}
\label{sec:appendix:RCASCounter}

The sequential specification of the CAS Counter object is presented in Figure~\ref{fig:counterSpecfication} in the form of type \CASCounter{k}.
The implementation of our randomized CAS counter object,  \RCASCounter{k} of type \CASCounter{k} is presented in Figure~\ref{fig:counterImpleemntation}.
A shared \CAS\ object \Count\ is used to store the value of the counter object, and is initialized to $0$.
The object provides methods \inc{}, \CAS{} and \Read{}, where the \inc{} method is allowed to \emph{fail}, in which case the operation does not change the object state, and returns $\bot$ to indicate the failure.

\begin{classfigure}[!htbp]
\setcounter{AlgoLine}{0}
  \begin{type}{\CASCounter{k}}
 $x$:  \Int\ \Init $0$ \;
  \end{type}
\begin{minipage}{0.29\textwidth}
\begin{operation}{inc()}
  \setcounter{AlgoLine}{0}
  \lIf {$x = k$} \label{atomic:inc:ifXequalsK} {
      \return $x$\;
  }
  $x$ \la\ $x + 1$ \label{atomic:inc:increment} \;
  \return $x-1$ \label{atomic:inc:return} \;
\end{operation}
\end{minipage}
\begin{minipage}{0.53\textwidth}
\begin{operation}{CAS($old,new$)}
  \lIf {$x \neq old \vee new \notin \Set{0,\ldots,k}$}  { \label{atomic:CAS:ifCondition}
    \return \False \label{atomic:CAS:returnFalse}\;
  }
  $x$ \la\ $new$ \;
  \return \True \;
\end{operation}
\end{minipage}
\begin{minipage}{0.16\textwidth}
\begin{operation}{Read()}
  \return $x$ \;
\end{operation}
\end{minipage}
\caption{Sequential Specification of Type \CASCounter{k}} \label{fig:counterSpecfication}
\end{classfigure}

\begin{classfigure}[!htbp]
  \begin{object}{\RCASCounter{k}} \label{Algo:CASCounter}
  \shared
    \qquad $\Count$: \Int\ \Init $0$ \;
  \local
    \qquad $\beta$: \Int\ \Init $0$ \;
  \end{object}
\begin{minipage}{0.525\textwidth}
\LinesNumbered
\begin{algorithm}[H]\Method{inc( )}
    $\beta$ \la\ \random{$0,1,\ldots,k$}    \label{RCAS:inc:randomChoice}\;
    \eIf {$(\beta = k)$} {\label{RCAS:inc:ifBetaEqualsK}
      \lIf {$(\Count.\Read{} = k)$} \label{RCAS:inc:Read}{
            \return $k$ \label{RCAS:inc:returnRead} \;
      }
    } {
      \lIf {\Count.\CAS{$\beta,\beta+1$}} \label{RCAS:inc:CAS} {
          \return $\beta$ \; \label{RCAS:inc:returnCAS}
      }
    }
     \return $\bot$ \; \label{RCAS:inc:returnFail}
\end{algorithm}
\end{minipage}
\begin{minipage}{0.47\textwidth}
\LinesNumbered
\begin{algorithm}[H]\Method{CAS($old,new$)}
    \lIf {$new \notin \Set{0,\ldots,k}$} {
      \return \False \label{RCAS:CAS:safety} \;
    }
    \return $\Count.\CAS{$old,new$}$  \label{RCAS:CAS:CAS} \;
\end{algorithm}
\LinesNumbered
\begin{algorithm}[H]\Method{Read( )}
    \return $\Count.\Read{}$  \label{RCAS:Read:Read} \;
\end{algorithm}
\end{minipage}
\caption{Implementation of Object \RCASCounter{k} }\label{fig:counterImpleemntation}
\end{classfigure}

During the \inc{} method, a process $p$ first makes a guess at the counter's current value by rolling a $(k+1)$-sided dice (in line~\ref{RCAS:inc:randomChoice}) that returns a value in \Set{0,\ldots,k} uniformly at random, and stores the value in local variable $\beta$.
If $\beta =k$, then $p$ performs a \Read{} on \Count (in line~\ref{RCAS:inc:Read}) to verify the correctness of its guess.
If $p$'s guess is correct, then it returns $k$, otherwise it returns $\bot$ (in line~\ref{RCAS:inc:returnFail}) to indicate a failed \inc{} method call.
If $\beta \in \Set{0,\ldots,k-1}$, then $p$ performs a \Count.\CAS{$\beta,\beta+1$} operation (in line~\ref{RCAS:inc:Read}) in order to verify the correctness of its guess and to increment \Count\ in one atomic step.
If $p$'s guess is correct, then the \CAS\ operation succeeds and the \inc{} method returns the previous value.
Otherwise the \inc{} method returns $\bot$ (in line~\ref{RCAS:inc:returnFail}) to indicate a failed \inc{} method call.

Method \Read{} simply reads the current value of \Count\ using a \Count.\Read{} operation (line~\ref{RCAS:Read:Read}) and returns the result of the operation.
Method \CAS{} takes two integer parameters $old,new$, and in \Line{RCAS:CAS:safety} performs a safety check, where it checks whether the value of $new$ is in \Set{0,\ldots,k}.
If the safety check fails, then the method simply returns \False.
Otherwise, it attempts to change the value of \Count\ from $old$ to $new$ using the \Count.\CAS{$old,new$} operation  (in \Line{RCAS:CAS:CAS}) and returns the result of the operation.

\subsection{Analysis and Properties of Object \RCASCounter{k}\ }
Consider an instance of the \RCASCounter{k} object.
Let $H$ be an arbitrary history that consists of all method calls on the instance, except failed \inc{} calls and pending calls that are yet to execute \Line{RCAS:inc:Read} (\Read operation), \Line{RCAS:inc:CAS} (\CAS operation), \Line{RCAS:CAS:CAS} (\CAS operation) or \Line{RCAS:Read:Read} (\Read operation).
If a failed \inc{} is in the history, it can be linearized at an arbitrary point between its invocation and response, as it does not affect the validity of any other operations.
Therefore, it suffices to prove that the history without failed \inc{} operations is linearizable, and then linearizability of the original history follows.
The same argument applies to omitting the selected pending method calls.
Since the selected pending method calls do not change any shared object, they cannot affect the validity of any other operations.

We define a point $pt(u)$ for every method $u$ in $H$.
Let $I(u)$ be the interval between $u$'s invocation and response.
Let $S$ be the sequential history obtained by ordering the method calls in $H$ according to the points $pt(u)$.
To show that \RCASCounter{k} is a randomized linearizable implementation of the  type \CASCounter{k}, we need to show that the sequential history $S$ is valid, i.e., $S$ lies in the specification of type \CASCounter{k} object, and that $pt(u)$ lies in $I(u)$.
Let \sC\ be an object of type \CASCounter{k}, and let $S_v$ be the sequential history obtained when the operations of $S$ are executed sequentially on object \sC\ in the order as given in $S$.
Clearly, $S_v$ is a valid sequential history in the specification of type \CASCounter{k} by construction.
Then to show that $S$ is valid, we show that $S = S_v$.

\begin{lemma} \label{claim:ctr:linearizable}
Object \RCASCounter{k} is a randomized linearizable implementation of type \CASCounter{k}.
\end{lemma}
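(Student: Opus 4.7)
My plan is to exhibit an explicit linearization point for each method call in $H$, and then argue by induction on the position in $S$ that the shared object \Count{} has, at each linearization point, the same value that the sequential counter \sC{} would have after the preceding operations of $S_v$. The chosen points are: for a \Read{}, the step at \Line{RCAS:Read:Read}; for a \CAS{$old,new$} with $new\in\{0,\ldots,k\}$, the step at \Line{RCAS:CAS:CAS}; for a \CAS{$old,new$} with $new\notin\{0,\ldots,k\}$, an arbitrary point in the interval (the method is externally a no-op returning \False); for an \inc{} call that returns $\beta<k$, the successful \CAS{} at \Line{RCAS:inc:CAS}; and for an \inc{} call that returns $k$, the \Read{} at \Line{RCAS:inc:Read} which observed \Count$=k$. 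By construction $pt(u)\in I(u)$ holds for each of these, since the chosen primitive step occurs between the invocation and response of $u$.

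The main verification is the invariant $I$: immediately before the $i$-th linearization point in $S$, the value of \Count{} equals the value of $x$ in \sC{} after the first $i{-}1$ operations of $S_v$ have been applied. The base case is immediate since both are initialized to $0$. For the inductive step, I would case-split on the operation type at position $i$ and show, using $I$, that (a) the return value of $u$ in $H$ matches the return value $u$ would produce when applied to \sC{} in state $x$, and (b) the effect on \Count{} matches the effect on $x$. The cases for \Read{} and \CAS{} follow directly from the semantics of the underlying atomic \CAS{} object. For \inc{} with $\beta<k$ returning $\beta$, a successful \Count.\CAS{$\beta,\beta{+}1$} forces \Count{}=$\beta$, hence by $I$, $x=\beta<k$, and the sequential spec also increments $x$ to $\beta+1$ and returns $\beta$. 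For \inc{} returning $k$, \Count.\Read{} observed $k$, hence $x=k$, and the sequential spec leaves $x$ unchanged and returns $k$.

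The treatment of calls omitted from $H$ (failed \inc{}s and pending calls that never executed their critical step) is straightforward: none of them ever modify \Count{}, so they commute with every other operation at the \Count{} level and can be inserted into $S$ anywhere in their respective intervals without violating either validity or real-time order; in particular, a failed \inc{} is a no-op externally returning $\bot$, which is not part of the sequential specification of \CASCounter{k} at all but is permitted by the statement of the lemma, since we only need linearizability of the successful operations. I expect the mildly delicate point to be the $\beta=k$ branch of \inc{}, because unlike the other \inc{} branch it relies on a plain \Read{} rather than a \CAS{} to justify its linearization point; the invariant $I$ is exactly what makes this branch's observation ($\Count=k$) translate into the correct sequential precondition ($x=k$), which is why $I$ is stated in terms of the value of \Count{} rather than merely its transitions. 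Once $I$ is established, $S=S_v$ is immediate from clause (a) of the inductive step applied to every position, and linearizability follows.
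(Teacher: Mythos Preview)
Your proposal is correct and follows essentially the same approach as the paper: the linearization points you choose (the underlying \Read{} or \CAS{} on \Count{}, or an arbitrary point for the trivial safety-check branch of \CAS{}), the handling of failed \inc{} calls as commuting no-ops, and the induction establishing that the value of \Count{} tracks the sequential counter's state all mirror the paper's argument exactly. The paper phrases its invariant as ``value of \Count{} immediately after $pt(u_i)$ equals the sequential value after $v_i$'' rather than your ``immediately before,'' but these are equivalent by an index shift.
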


\begin{proof}
Let $A$ be an instance of the \RCASCounter{k} object.
Consider an arbitrary history $H$ that consists of all completed method calls on $A$, except failed \inc{} calls, and all pending method calls on $A$ that have executed a successful \CAS\ operation.
We now define point $pt(u)$ for every method $u$ in $H$.

If $u$ is a \Read{} method call then define $pt(u)$ to be the point in time when the \Read\ operation in line~\ref{RCAS:Read:Read} is executed.

If $u$ is an \inc{} method call that returns from \Line{RCAS:inc:returnRead} then $pt(u)$ is the point in time of the \Read operation in \Line{RCAS:inc:Read}, and if $u$'s \CAS\ operation in \Line{RCAS:inc:CAS} succeeds then $pt(u)$ is the point in time of the \CAS operation in \Line{RCAS:inc:CAS}.
By construction, a \Read\ or \CAS\ operation has been executed during every \inc{} call in $H$, and no failed \inc{} calls are in $H$.
Then it follows that we have defined $pt(u)$ for every \inc{} call $u$ in $H$.

If $u$ is a \CAS{} method call that returns from \Line{RCAS:CAS:safety} then $pt(u)$ is any arbitrary point during $I(u)$, and if $u$ returns from \Line{RCAS:CAS:CAS} then $pt(u)$ is the point in time of the \CAS operation in \Line{RCAS:CAS:CAS}.

Clearly $pt(u) \in I(u)$ for every method $u$ in $H$.

Let $u_i$ be the $i$-th operation in $S$ and $v_i$ be the $i$-th operation in $S_v$.
Let \valcRegA{u_i} denote the value of object \Count\ immediately after $pt(u_i)$, and let \valCtrA{v_i} denote the value of object \sC\ after operation $v_i$ in $S_v$.
We assume that $u_0$ is a method call that does not change the state of any shared object of instance $A$ (such as a \Read{} method) and returns the initial value of the object.
This assumption can be made without loss of generality, because the removal of a method call that does not change the state of the object from a linearizable history always leaves a history that is also linearizable.
The purpose of the assumption is to simplify the base case of our induction hypothesis.

We now prove by induction on integer $i$, that $\valcRegA{u_i} = \valCtrA{v_i}$, and that the return value of $u_i$ matches the value returned by $v_i$, thereby proving $S = S_v$.

\textbf{Basis $(i=0)$} Since initially the value of object \Count\ and the value of the atomic \CASCounter{k} object is $0$, it follows from the definition of the method call $u_0$, that $\valcRegA{u_0} = \valCtrA{v_0} = 0$, and the return value of $u_0$ matches that of $v_0$.

\textbf{Induction Step $(i>0)$}
From the induction hypothesis, $\valcRegA{u_{i-1}} = \valCtrA{v_{i-1}}$.

\textbf{Case a - } $u_i$ is an \inc{} method call that executes a successful \CAS{} operation in \Line{RCAS:inc:CAS}.
Then $pt(u_i)$ is when object \Count\ is incremented from $\beta$ to $\beta+1$ by a successful \Count.\CAS{$\beta,\beta+1$} operation in \Line{RCAS:inc:CAS}, and thus $\valcRegA{u_{i-1}} = \beta$ holds.
Also, $u_i$ returns $\beta = \valcRegA{u_{i-1}}$.
Since $u_i$ fails the if-condition of line~\ref{RCAS:inc:ifBetaEqualsK}, $\beta \neq k$ and therefore $\valcRegA{u_{i-1}} = \beta \neq k$ holds.
Now consider operation $v_i$ in $S_v$.
Since $\valCtrA{v_{i-1}} = \valcRegA{u_{i-1}} \neq k$, the if-condition of line~\ref{atomic:inc:ifXequalsK} fails, and the value of the atomic $\CASCounter{k}$ is incremented in line~\ref{atomic:inc:increment} and \valCtrA{v_{i-1}} returned in line~\ref{atomic:inc:return}.
Hence $\valcRegA{u_i} = \valCtrA{v_i}$ and the return values match.

\textbf{Case b - }
$u_i$ is an \inc{} method call that returns from \Line{RCAS:inc:returnRead}.
Then $pt(u_i)$ is when the  \Read{} operation on the object \Count\ is executed in \Line{RCAS:inc:Read}.
Clearly, the value returned by the \Read{} operation on the object \Count\ at $pt(u_i)$ is \valcRegA{u_{i-1}}.
Since the if-condition of line~\ref{RCAS:inc:Read} is satisfied, $\valcRegA{u_{i-1}} =k$ and $u_i$ returns integer $k$ without changing object \Count.
Now consider operation $v_i$ in $S_v$.
Since $\valCtrA{v_{i-1}} = \valcRegA{u_{i-1}}$ and $\valcRegA{u_{i-1}} =k$, the if-condition of line~\ref{atomic:inc:ifXequalsK} is satisfied and integer $k$ is returned without changing the atomic \CASCounter{k} object.
Hence $\valcRegA{u_i} = \valCtrA{v_i}$ and the return values match.

\textbf{Case c - }
$u_i$ is a \CAS{} method call that returns from \Line{RCAS:CAS:safety}.
Then the if-condition of line~\ref{RCAS:CAS:safety} is satisfied and thus $new \notin \Set{0,1,\ldots,k}$ and $u_i$ returns \False\ without changing \Count.
Now consider operation $v_i$ in $S_v$.
Since $new \notin \Set{0,1,\ldots,k}$, the if-condition of line~\ref{atomic:CAS:ifCondition} will be satisfied and the Boolean value \False is returned without changing the value of object \sC.
Hence $\valcRegA{u_i} = \valCtrA{v_i}$ and the return values match.

\textbf{Case d - }
$u_i$ is a \CAS{} method call that returns from \Line{RCAS:CAS:CAS}.
Then $pt(u_i)$ is when the  \CAS operation on the object \Count\ is executed in \Line{RCAS:CAS:CAS}, and $u_i$ returns the result of this \CAS operation.
The \CAS operation attempts to change the value of \Count\ from $old$ to $new$, therefore if $\valcRegA{u_{i-1}} = old$ then $\valcRegA{u_i} = new$ and $u_i$ returns \True, or else \Count\ remains unchanged and $u_i$ returns \False.
Now consider operation $v_i$ in $S_v$.
From the code structure, if $\valCtrA{v_{i-1}} = old$ then $\valCtrA{v_i} = new$ and the Boolean value \True is returned.
And if $\valCtrA{v_{i-1}} \neq old$ then the value of object \sC\ remains unchanged and the Boolean value \False is returned.
Hence $\valcRegA{u_i} = \valCtrA{v_i}$ and the return values match.
\end{proof}


\begin{lemma} \label{claim:CASCounterFailureProbability}
The probability that an \inc{} method call returns $\bot$ is $ k/(k+1)$ against the weak adversary.
\end{lemma}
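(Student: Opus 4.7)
The plan is to reduce the failure event to a single coincidence between the local coin flip and the shared counter, and then to exploit the defining feature of the weak adversary: it cannot schedule on the basis of coin flips whose results have not yet been ``used'' by a shared memory operation. Consider an arbitrary \inc{} call $u$ and let $\beta$ be the value drawn in line~\ref{RCAS:inc:randomChoice}. By reading off the code, $u$ returns a non-$\bot$ value in exactly two mutually exclusive cases: (a) $\beta = k$ and the $\Count.\Read{}$ of line~\ref{RCAS:inc:Read} observes $\Count = k$; (b) $\beta \in \{0,\ldots,k-1\}$ and the $\Count.\CAS{\beta,\beta+1}$ of line~\ref{RCAS:inc:CAS} succeeds, which by the semantics of \CAS{} happens iff $\Count = \beta$ at the moment that \CAS{} is linearized. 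Both cases can be summarized by a single event $E$: ``the value of $\Count$ at the moment the next shared memory operation performed by $u$ after the coin flip executes equals $\beta$''. Thus $\Pr[u \text{ fails}] = 1 - \Pr[E]$, and it suffices to show $\Pr[E] = 1/(k+1)$.

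Next I would establish the independence that makes $\Pr[E]$ easy to compute. The coin flip in line~\ref{RCAS:inc:randomChoice} is a purely local step; it changes no shared state and produces no event visible to other processes. By the definition of the weak adversary, when choosing the schedule between that local coin flip and the subsequent Read or CAS of $u$, the adversary does not know $\beta$. Consequently, the identity of every step taken by every process in the interval between $u$'s coin flip and $u$'s first shared memory operation, and the outcomes of those steps (including any further coin flips by other processes, which the adversary can only reveal once they are ``used''), are determined by a distribution that does not depend on $\beta$. In particular, letting $V$ denote the value of $\Count$ immediately before $u$'s Read or CAS operation is linearized, the random variable $V$ is independent of $\beta$.

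The final step is a one-line computation. Since $\beta$ is uniform on the $(k+1)$-element set $\{0,1,\ldots,k\}$ and independent of $V$,
\[
\Pr[E] \;=\; \Pr[\beta = V] \;=\; \sum_{v=0}^{k} \Pr[V = v]\cdot \Pr[\beta = v] \;=\; \frac{1}{k+1},
\]
and hence $\Pr[u \text{ returns } \bot] = 1 - 1/(k+1) = k/(k+1)$, as claimed.

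The main obstacle is the second step: one must be careful in stating precisely why $V$ is independent of $\beta$ in the weak adversary model. Unlike the oblivious case, the schedule is adaptive to past events, and $V$ is in general a complicated functional of the execution so far; the argument needs the formal property that the adversary's choices between a coin flip and its first ``use'' (here, the Read or CAS on $\Count$) cannot incorporate information about that coin flip. Everything else---the case analysis on the code and the conditional probability calculation---is routine.
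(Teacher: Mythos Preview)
Your proposal is correct and follows essentially the same approach as the paper: both identify that the \inc{} call fails precisely when $\beta$ disagrees with the value of $\Count$ seen by the subsequent \Read{}/\CAS{}, and both use the weak adversary assumption to conclude that this value is independent of the uniformly chosen $\beta$, giving failure probability $k/(k+1)$. The only cosmetic difference is that the paper leverages its step definition (local computation together with the next shared memory operation form one atomic step) to assert directly that no other process intervenes between the coin flip and the \Read{}/\CAS{}, so the value $z$ of $\Count$ is fixed before $\beta$ is drawn; your argument is phrased more generally as an independence claim on the random variable $V$, which is a slightly more robust formulation but amounts to the same thing here.
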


\begin{proof}
Let the process calling the \inc{} method call (say $u$) be $p$ and let the value of the object \Count\ immediately before $p$ executes line~\ref{RCAS:inc:randomChoice} be $z$.
Since the adversary is weak, no other process executes a shared memory operation after $p$ chooses $\beta$ in line~\ref{RCAS:inc:randomChoice} and before $p$ finishes executing its next shared memory operation.
From the code structure, $p$ returns $\bot$ during $u$ (in \Line{RCAS:inc:returnFail}) if and only if $z \neq \beta$.
Since
\begin{align*}
\Prob{z \neq \beta}  = 1 - \Prob{z = \beta} =  1 - \frac{1}{k+1} = \frac{k}{k+1},
\end{align*}
the claim follows.
%
%
\end{proof}

The following claim follows immediately from an inspection of the code.

\begin{lemma} \label{clm:ctr:complexity}
Each of the methods of $\RCASCounter{k}$ has step complexity \Order{1}, and is wait-free.
\end{lemma}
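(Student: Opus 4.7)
The plan is to prove the lemma by direct inspection of the three method bodies in Figure~\ref{fig:counterImpleemntation}, since each is straight-line code with no loops and no busy-waits.

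First, I would handle method \Read{}: it consists solely of a single \Count.\Read{} operation followed by a return, which is clearly $\Order{1}$ steps and contains no blocking construct. Next, for method \CAS{$old,new$}, I would observe that the body consists of a constant-time check on $new$ (\Line{RCAS:CAS:safety}) and at most one \Count.\CAS{$old,new$} operation (\Line{RCAS:CAS:CAS}) before returning, again yielding $\Order{1}$ steps with no loops.

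For method \inc{}, I would walk through the code: it performs one local coin flip (\Line{RCAS:inc:randomChoice}), a single conditional branch on the outcome, and then executes either one \Count.\Read{} operation (\Line{RCAS:inc:Read}) or one \Count.\CAS{$\beta,\beta+1$} operation (\Line{RCAS:inc:CAS}), before returning. There is no loop and no await, so the total number of steps is bounded by a constant independent of $N$ and of the schedule.

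Since in each case every execution of the method terminates within a constant number of its own steps, regardless of the steps taken by other processes, each method is wait-free with step complexity $\Order{1}$. The main observation — and the only one required — is that the code contains no looping or waiting construct; there is no real obstacle to overcome, which is why the author simply notes that the claim follows by inspection.
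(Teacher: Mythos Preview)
Your proposal is correct and takes essentially the same approach as the paper, which simply states that the claim follows immediately from an inspection of the code. You have merely spelled out that inspection in detail, which is fine and entirely consistent with the paper's one-line justification.
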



Lemma~\ref{thm:ctr:linearizable} follows from Lemmas~\ref{claim:ctr:linearizable}, \ref{claim:CASCounterFailureProbability} and \ref{clm:ctr:complexity}.
%

\section{Specification of Type \APArray{k}}
\label{appendix:sec:AbortableProArray}
Type \APArray{k} is presented in Figure~\ref{fig:SArray:seq}.
\begin{classfigure}[!htbp]
\setcounter{AlgoLine}{0}
  \begin{type}{\APArray{k}}
     $A$:  \Array$[0 \dots k-1]$ \Of \Int\ pairs \Init all \pair{\bot}{\bot};
     \qquad \cReg, \cPro, \cAbort: \Int\ \Init $1,2,3$\;
  \end{type}
\begin{minipage}{0.62\textwidth}
\LinesNumbered
\begin{algorithm}[H]\Operation{\collect{int[] $X$}}
  \For {$i \la\ 0$ \KwTo $k-1$} {
      $\pair{v}{s} \la\ A[i]$ \;
      \lIf {$v \neq \cAbort \wedge X[i] \neq \bot$} {
          $A[i]$ \la\ \pair{\cReg}{X[i]}\;
      }
  }
\end{algorithm}
\LinesNumbered
\begin{algorithm}[H]\Operation{\cUpdate{int $i$, int $seq$}}
  $\pair{v}{s} \la\ A[i]$ \;
  \lIf {$v = \cPro$} {             \label{update:ifCondition}
      \return \False \;                 \label{update:returnFalse}
  }
  $A[i]$ \la\ \pair{\cAbort}{seq} \;                    \label{update:setAi}
  \return \True \;                      \label{update:returnTrue}
\end{algorithm}
%
\LinesNumbered
\begin{algorithm}[H]\Operation{\reset{}}
  \lFor {$i \la\ 0$ \KwTo $k-1$} {
      $A[i]$ \la\ \pair{0}{\bot} \;
  }
\end{algorithm}
\end{minipage}
\begin{minipage}{0.36\textwidth}
\LinesNumbered
\begin{algorithm}[H]\Operation{\promote{}}
  \For {$i \la\ 0$ \KwTo $k-1$} {
      $\pair{v}{s} \la\ A[i]$ \;
      \If {$v = \cReg$} {
          $A[i]$ \la\ \pair{\cPro}{s} \;
          \return \pair{i}{s} \;
      }
  }
  \return \pair{\bot}{\bot} \;
\end{algorithm}
%
\LinesNumbered
\begin{algorithm}[H]\Operation{remove(int $i$)}
  $\pair{v}{s} \la\ A[i]$ \;
  $A[i]$ \la\ $(\cAbort,s)$ \;                    \label{remove:setAi}
\end{algorithm}
\end{minipage}
\caption{Sequential Specification of Type \APArray{k}} \label{fig:SArray:seq}
\end{classfigure}

\section{The Single-Fast-Multi-Slow Universal Construction}

In this section, rather than implementing object \UC{\T}, we implement a \emph{lock-free} universal construction object \UCWeak{\T}, with slightly weaker properties than \UC{\T}.
An object implementation is lock-free, if in any infinite history $H$ where processes continue to take steps, and $H$ contains only operations on that object, some operation finishes.
Object \UCWeak{\T} has the same properties as object \UC{\T} except method \performFast{} is lock-free with unbounded step-complexity.

There is a standard technique called \emph{operation combining} \cite{Herlihy:UCMethodology} that can be applied to transform our lock-free object \UCWeak{\T} to the wait-free object \UC{\T} with \Order{N} step complexity for method \performFast{}.


By applying the technique of operation combining we can transform our lock-free universal construction \UCWeak{\T} into our wait-free object \UC{\T}.
We however do not provide a proof of its properties.
Doing so would be repeating the same ``standard'' proof ideas from \cite{Herlihy:UCMethodology}, and would result in increasing the size of the paper without contributing to the main ideas of this paper.
We do provide proofs for our lock-free universal construction \UCWeak{\T}, and the proofs illustrate the main idea from this section, i.e., how to achieve a linearizable concurrent implementation with support for a \performFast{} method of \Order{1} step complexity.
We now present the implementation of object \UCWeakI{\T} (see Figure~\ref{fig:UCWeakI:Shared}).

\begin{classfigure}[!htbp]
\setcounter{AlgoLine}{0}
  \begin{object}{\UCWeakI{\T}} \label{Algo:UCWeakI}
  \shared
    \qquad $\mReg$: \Int\ \Init $(s_0,\bot,0,0)$;
    \qquad $\fastOp$: \Int\ \Init $(\bot,0)$;\;
  \local
    \qquad $state,res,fc,sc,s1,s1,r1,r2,seq$: \Int\ \Init $0$\;
  \end{object}
\begin{minipage}{0.42\textwidth}
  \begin{method}{\performFast{op}}
    $(state,res,fc,sc)$ \la\ \mReg.\Read{} \;                   \label{performFast:readmReg1}
    \sReg\ \la\ $(op,fc+1)$ \;          \label{performFast:sRegWrite}
    \lIf {$\neg \help{}$} {                   \label{performFast:callToHelp1}
        \help{} \;                                \label{performFast:callToHelp2}
    }
    $(state,res,fc,sc)$ \la\ \mReg.\Read{} \;                   \label{performFast:readmReg2}
    \return $res$ \;                           \label{performFast:return}
  \end{method}
\end{minipage}
\setcounter{AlgoLine}{8}
\begin{minipage}{0.565\textwidth}
  \begin{method}{\help{}}
      $(s1,r1,fc,sc)$ \la\ \mReg.\Read{} \;                 \label{help:readmReg}
      $(op,seq)$ \la\ \sReg.\Read{} \;        \label{help:readsReg}
      \lIf {$fc \geq seq$} {                \label{help:ifCondition}
          \return $\True$ \;
      }
      $(s2,r2)$ \la\ \f{$s1,op$} \;          \label{help:callTof}
      \return \mReg.\CAS{$(s1,r1,fc,sc),(s2,r2,seq,sc)$}\;            \label{help:CAS}
  \end{method}
\end{minipage}
\setcounter{AlgoLine}{5}
\begin{minipage}{0.42\textwidth}
  \begin{method}{\f{$state_1,op$}}
    $state_2$ \la\ state generated when $op$ is applied to object \SD\ with state $state_1$ \;
    $res$ \la\ result when $op$ is applied to object \SD\ with state $state_1$ \;
    \return $(state_2,res)$ \;  \label{f:returnResult}
  \end{method}
\end{minipage}
\setcounter{AlgoLine}{13}
\begin{minipage}{0.565\textwidth}
  \begin{method}{performSlow(op)}
    \Repeat {\mReg.\CAS{$(s1,r1,fc,sc),(s2,r1,fc,sc+1)$}} { \label{performSlow:beginLoop}
        $(s1,r1,fc,sc)$ \la\ \mReg.\Read{} \;               \label{performSlow:readmReg}
        $(s2,r2)$ \la\ \f{$s1,op$} \;    \label{performSlow:callTof}
        \lIf {$s2 = s1$} {                \label{performSlow:ifCondition}
            \return $r2$ \;                      \label{performSlow:returnWithoutChange}
        }
            \help{} \;                            \label{performSlow:callToHelp}
    }                                             \label{performSlow:CAS}
    \return $r2$ \;                              \label{performSlow:returnWithChange}
  \end{method}
\end{minipage}
\caption{Implementation of Object $\UCWeakI{\T}$.}\label{fig:UCWeakI:Shared}
\end{classfigure}


\bparagraph{Shared Data}
A shared register \mReg\ stores a $4$-tuple $(m_0, m_1, m_2, m_3)$.
We use the notation \mReg[$i$] to refer to the $(i+1)$-th tuple element, $m_i$, stored in register  \mReg.
Element \mReg[0] stores the state of object \SD.
Element \mReg[1] stores the result of the most recent fast operation performed.
Elements \mReg[2] and \mReg[3] store counts of the number of fast and slow operations performed
respectively.
Initially \mReg[$0$] stores the initial state of \SD, \mReg[$1$] has value $\bot$ and (\mReg[$2$],\mReg[$3$]) is ($0,0$).

A shared register \sReg\ is used to \emph{announce} a fast operation to be performed in a pair $(s_0,s_1)$.
Element \sReg[0] stores the complete description of a fast operation to be performed.
Element \sReg[1] stores a sequence number indicating the number of fast operations that have been announced in the past.
This sequence number is used by processes to determine whether an announced fast operation is pending execution.
Initially \sReg\ is ($\bot,0$).
The methods \performFast{} and \performSlow{} make use of two private methods \help{} and \f{} (see Figure~\ref{fig:UCWeakI:Shared}).

\bparagraph{Description of the \f{} method}
Method \f{} is implemented using the specification provided by type \T.
The method takes two arguments $state_1$ and $op$, where $state_1$ is a state of object \SD\ and $op$ is the complete description of an operation to be applied on object \SD.
The method computes the new state $state_2$ and the result $result$, when operation $op$ is applied on object \SD\ with state $state_1$.
The method then returns the pair ($state_2, result$).
Since no shared memory operations are executed during the method, the method has $0$ step complexity.

\bparagraph{Description of the \performFast{} method}
Let $p$ be a process that executes \performFast{$op$}.
In \Line{performFast:readmReg1}, process $p$ first copies the $4$-tuple read from register \mReg\ to its local variables $state, res, fc$ and $sc$.
Then $p$ announces the operation $op$ by writing the pair $(op,fc+1)$ to register \sReg\ in \Line{performFast:sRegWrite}.
After announcing the operation, process $p$ \emph{helps} perform the announced operation by calling the private method \help{} in \Line{performFast:callToHelp1}.
If the call to \help{} returns \false, then $p$ concludes that the announced operation may not have been performed yet.
In this case $p$ makes another call to \help{} in \Line{performFast:callToHelp2} to be sure that the announced operation is performed
(we prove later that at most two calls to \help{} are required to perform an announced operation).
Process $p$ then reads and returns the result of the performed operation stored in \mReg[1] in \Line{performFast:readmReg2} and~\ref{performFast:return}, respectively.
Since method \performFast{} is not executed concurrently (by assumption), the result of $p$'s operation stored in register \mReg\ is not overwritten before the end of $p$'s \performFast{op} call.

\bparagraph{Description of the \help{} method}
Let $q$ be a process that calls and executes \help{}.
In \Line{help:readmReg}, process $q$ first copies the $4$-tuple read from register \mReg\ into its local variables $s1, r1, fc$ and $sc$.
The value read from \mReg[0] constitutes the state of object \SD, to which $q$ will attempt to apply the announced operation if required.
The value read from \mReg[1] is the result of the last fast operation performed on object \SD.
The value read from \mReg[2] and \mReg[3] is the count of the number of fast and slow operations performed respectively.
Process $q$ then reads \sReg\ in \Line{help:readsReg} to find out the announced operation $op$ and the announced sequence number $seq$.
Process $q$ then determines whether the announced operation has already been performed, by checking whether $seq$ is less than or equal to $fc$ in \Line{help:ifCondition}.
If so, $q$ concludes that operation $op$ has been performed and returns \True, otherwise it attempts to perform $op$ in lines~\ref{help:callTof} and~\ref{help:CAS}.
In \Line{help:callTof} process $q$ calls the private method \f{} to compute the new state $s2$ and the result $r2$ when operation $op$ is applied to object \SD\ with state $s1$.
In \Line{help:CAS}, process $p$ attempts to perform $op$ by swapping the $4$-tuple $(s1,r1,fc,sc)$ with $(s2,r2,fc+1,sc)$ using a \CAS operation on \mReg.
If the \CAS is unsuccessful then no changes are made to \mReg.
This can happen only if some other process performs an announced fast operation in \Line{help:CAS} or a slow operation in \Line{performSlow:CAS}.
The result of the \CAS operation of \Line{help:CAS} is returned in either case.

\bparagraph{Description of the \performSlow{} method}
Let $p$ be a process that calls and executes \performSlow{$op$}.
During the method, $p$ repeats the while-loop of lines~\ref{performSlow:readmReg}-\ref{performSlow:CAS} until $p$ is able to successfully apply its operation $op$.
In \Line{performSlow:readmReg}, process $p$ first copies the $4$-tuple read from register \mReg\ to its local variables $s1, r1, fc$ and $sc$.
In \Line{performSlow:callTof} process $q$ calls the private method \f{} to compute the new state $s2$ and the result $r2$ when operation $op$ is applied to object \SD\ with state $s1$.
In the case that operation $op$ does not cause a state change in object \SD, i.e., $s1=s2$, then $p$ returns result $r2$ in \Line{performSlow:returnWithoutChange}.
Otherwise $p$ attempts to apply operation $op$ in \Line{performSlow:CAS} by swapping the $4$-tuple $(s1,r1,fc,sc)$ with $(s2,r2,fc,sc+1)$ using a \CAS operation on register \mReg.
Before attempting to apply its own operation in \Line{performSlow:CAS} $p$ makes a call to \help{} in \Line{performSlow:callToHelp} to help perform an announced fast operation (if any).
On completing the while-loop, $p$ would have successfully applied its operation $op$, and thus $p$ returns the result of the applied operation in \Line{performSlow:returnWithChange}.

The following lemma (proven in Section~\ref{sec:appendix:analysis:UCWeakI}) summarizes the properties of object \UCWeakI{\T}.
\begin{lemma}
\label{theorem:UCWeakI}
Object \UCWeakI{\T} is a lock-free universal construction object that implements an object \Obj\ of type \T, for $n$ processes, where $n$ is the maximum number of processes that can access object \UCWeakI{\T} concurrently and operations on object \Obj\ are performed using either method \performFast{} or \performSlow{}, and no two processes execute method \performFast{} concurrently.
Method \performFast{} has \Order{1} step complexity.
\end{lemma}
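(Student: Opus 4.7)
I will prove three things in turn: $(i)$ \performFast{} has $\Order{1}$ step complexity, $(ii)$ the implementation is lock-free, and $(iii)$ the implementation is linearizable with respect to the sequential specification given by type \T. Step complexity is immediate by inspection, since \f{} performs no shared memory operations, each \help{} call executes at most one read each of \mReg\ and \sReg\ and one \CAS\ on \mReg, and the body of \performFast{} contains only a constant number of shared operations plus two \help{} calls. Lock-freedom follows from the observation that \mReg\ changes only via a successful \CAS\ at line~\ref{help:CAS} (which completes a fast operation) or at line~\ref{performSlow:CAS} (which completes a slow operation), and that a failed slow \CAS\ is always caused by another such successful \CAS; hence any infinite execution exhibits infinitely many successful \CAS\ operations and therefore infinitely many completions.

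\textbf{Linearization points.} I will linearize a \performSlow{$op$} that returns at line~\ref{performSlow:returnWithoutChange} (i.e., $op$ does not change the state) at its final read of \mReg\ at line~\ref{performSlow:readmReg}, a \performSlow{$op$} returning at line~\ref{performSlow:returnWithChange} at its successful \CAS\ at line~\ref{performSlow:CAS}, and a \performFast{$op$} call by process $p$ with announcement sequence $s = fc_0 + 1$ at the unique successful helper \CAS\ at line~\ref{help:CAS} whose new $fc$-value equals $s$. Uniqueness of that \CAS\ follows from monotonicity of the $fc$-field together with the assumption that \performFast{} calls do not overlap: once any helper \CAS\ sets $fc = s$, every subsequent \help{} invocation passes the test $fc \geq seq$ at line~\ref{help:ifCondition} and returns \True\ without executing its \CAS. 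A straightforward induction on linearization order then shows that the $(state, res)$ pair stored in \mReg\ always agrees with the result of applying the linearized prefix to an initial object of type \T, because every successful \CAS\ on \mReg\ is computed via \f{} from the contents of \mReg\ at that point.

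\textbf{The main obstacle} is establishing existence of the linearization point for \performFast{}: I must show that by the time $p$ reads \mReg\ at line~\ref{performFast:readmReg2} a helper \CAS\ with new $fc = s$ has already executed, so that $p$ reads its own operation's result---equivalently, that $p$'s two \help{} calls always suffice. I plan to argue by contradiction. Assume no helper \CAS\ has set $fc = s$ by the end of $p$'s second \help{} call. Then in that second call $p$ reads $fc = fc_0$ and its helper \CAS\ is defeated by some process $q$'s successful intervening \CAS\ on \mReg. By monotonicity of $fc$ (together with the assumption that the op is not yet done), $q$'s successful \CAS\ cannot be a helper \CAS; so $q$ is a slow process whose \CAS\ at line~\ref{performSlow:CAS} succeeded. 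But $q$'s prior call to \help{} at line~\ref{performSlow:callToHelp} must have read $fc = fc_0$ and $p$'s standing announcement $(op, s)$ from \sReg, so $q$'s \help{} attempted its own helper \CAS, which must also have been defeated by yet another successful \CAS\ on \mReg\ between $q$'s \help{}-read and \help{}-\CAS; the same reasoning forces that \CAS\ to be another slow \CAS\ incrementing $sc$. But then the $sc$-value $q$ read at line~\ref{performSlow:readmReg} in its current iteration no longer matches \mReg\ when $q$ performs its own slow \CAS, so $q$'s \CAS\ fails---contradicting the assumption that it succeeded. This closes the case analysis and shows that two \help{} calls suffice, completing the proof.
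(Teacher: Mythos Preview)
Your plan matches the paper's approach: the same linearization points, the same lock-freedom sketch, and the same induction showing \mReg[0] tracks the abstract state. The one step that needs more care is in the ``two helps suffice'' argument. You assert that $q$'s \help{} call preceding its slow \CAS\ ``must have read \ldots\ $p$'s standing announcement $(op,s)$ from \sReg'', but you give no reason for this, and it is exactly here that $p$'s \emph{first} \help{} call earns its keep. As written, your chain $p \to q \to r$ never invokes the first call, so the reasoning would go through verbatim if $p$ issued only one \help{}---yet one call does not suffice (a slow process can read \mReg\ and \sReg\ before $p$ announces, pass the test at line~\ref{help:ifCondition}, and later defeat $p$'s single helper \CAS\ with its own slow \CAS).

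The missing step is short. Since $q$'s slow \CAS\ succeeds and every successful \CAS\ on \mReg\ strictly increases $(fc,sc)$, \mReg\ is unchanged from $q$'s read at line~\ref{performSlow:readmReg} until $q$'s slow \CAS. If $q$'s \help{}-read of \sReg\ preceded $p$'s write at line~\ref{performFast:sRegWrite}, then so did $q$'s line-\ref{performSlow:readmReg} read, and this frozen interval would contain $p$'s entire first \help{} call; but then $p$'s first helper \CAS\ would succeed and set $fc=s$, contradicting the hypothesis. Hence $q$'s \help{} does see $p$'s announcement, and your argument resumes. The paper (Claim~\ref{claim:UCWeakI:thereExistsASuccessfulHelp}) organizes this slightly differently---it first names the slow defeater of the \emph{first} \help{} and uses that \CAS\ to lower-bound the second defeater's line-\ref{performSlow:readmReg} read---but once you fill this gap the two arguments coincide.
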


\subsection{Operation Combining Technique}
In principle the technique works as follows:
Processes maintain an $N$-element array, say \announce, where process $i$ ``owns'' slot $i$, and processes store in their respective slots the operation that they want to apply.
When a process $p$ wants to apply an operation it first ``announces'' its operation by writing the operation to the $p$-th element of the array.
Then $p$ attempts to \emph{help} the ``next'' operation in the \announce\ array by attempting to apply that operation if it has not been applied, yet.
An index to the ``next'' operation to be applied is maintained in the same register that stores the state of the concurrent object.
Every time an announced operation is applied, the index is also incremented modulo $N$ in one atomic step.
The response of applied operations is stored in another $N$-element array, say \response, which can sometimes be combined with the \announce\ array.
Sequence numbers are used to ensure that an announced operation is not applied more than once.
Since the index of the ``next'' operation cycles the \announce\ array, a process needs to help announced operations $\Order{N}$ times before its own announced operation is applied, at which point it can stop.

Herlihy \cite{Herlihy:UCMethodology} introduced this technique as a general methodology to transform lock-free universal constructions to wait-free ones.
Herlihy presents another example \cite{Herlihy:BookUCExample} that employs the technique of operation combining to transform a lock-free universal construction to a wait-free one, where the step complexity of the method that performs the operation is bounded to \Order{N}.

On applying the standard technique of operation combining \cite{Herlihy:UCMethodology} to object \UCWeakI{\T} we obtain object \UC{\T} and Lemma~\ref{theorem:UC}


\subsection{Analysis and Proofs of Correctness of Object \UCWeakI{\T} }
\label{sec:appendix:analysis:UCWeakI}
Let a \help{} method call that returns \true\ in \Line{help:CAS} (on executing a successful \CAS operation) be called a \emph{successful} \help{}.

\begin{claim}
\label{claim:UCWeakI:register}
 \begin{enumerate}[(a)]
 \item The value of $\sReg[1]$ changes only in \Line{performFast:sRegWrite}.\label{sReg[1]}
 \item The value of $\mReg[3]$ increases by one with every successful \CAS operation in \Line{performSlow:CAS} and no other operation changes $\mReg[3]$.\label{mReg[3]}
 \item The value of $\mReg[2]$ increases with every successful \CAS operation in \Line{help:CAS} (during a successful \help{}), and no other operation changes $\mReg[2]$.\label{mReg[2]}
\end{enumerate}
\end{claim}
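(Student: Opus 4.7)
The plan is to prove all three items by a direct code-inspection of Figure~\ref{fig:UCWeakI:Shared}, enumerating every line that can possibly mutate the relevant field. There is no real mathematical content here: the only tool I would invoke beyond listing pseudocode lines is the semantics of \CAS, which guarantees that a successful \mReg.\CAS{$v_{\mathit{old}},v_{\mathit{new}}$} occurs only when the current value of \mReg\ equals $v_{\mathit{old}}$, so the local quadruple recorded at the preceding read accurately describes \mReg\ at the instant of the swap.

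For part~(a), I would list every reference to \sReg\ across methods \performFast{}, \performSlow{}, \help{}, and \f{}. The only write is the direct assignment in \Line{performFast:sRegWrite}; the reference in \Line{help:readsReg} is a \Read{} and leaves the register unchanged. Since \sReg[1] is a coordinate of \sReg, the claim is immediate.

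For parts~(b) and~(c), I would enumerate every write to \mReg. The only such writes are the two \CAS\ operations at \Line{help:CAS} and \Line{performSlow:CAS}. For~(b), the \CAS\ at \Line{help:CAS} swaps $(s1,r1,fc,sc)$ for $(s2,r2,seq,sc)$, where $sc$ was copied from \mReg[3] at \Line{help:readmReg}; by \CAS\ semantics a successful swap preserves the value $sc$ in the fourth coordinate, so \mReg[3] is unchanged. The \CAS\ at \Line{performSlow:CAS} swaps $(s1,r1,fc,sc)$ for $(s2,r1,fc,sc+1)$, so a successful one increments \mReg[3] by exactly one, and since it is the only operation that changes \mReg[3], the claim follows.

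For~(c), I would make the dual observation: the successful \CAS\ at \Line{performSlow:CAS} leaves the third coordinate fixed at $fc$, so it cannot change \mReg[2]; a successful \CAS\ at \Line{help:CAS} overwrites \mReg[2] with $seq$. The guard at \Line{help:ifCondition} ensures that control reaches \Line{help:CAS} only when $fc<seq$, and by \CAS\ semantics the pre-CAS value of \mReg[2] equals the locally held $fc$. Hence the post-CAS value $seq$ is strictly greater, proving the strict increase. The only step that requires any care is this appeal to \CAS\ semantics to tie the local variables to the pre-CAS state of \mReg; everything else is routine bookkeeping, so I do not anticipate any real obstacle.
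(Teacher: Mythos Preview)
Your proposal is correct and follows essentially the same code-inspection approach as the paper: enumerate the writes to \sReg\ and \mReg, observe that \Line{help:CAS} fixes the fourth coordinate while \Line{performSlow:CAS} fixes the third, and use the failed guard at \Line{help:ifCondition} to conclude $seq>fc$ for part~(c). If anything, your version is more explicit than the paper's in invoking \CAS\ semantics to justify that the locally held $fc$ and $sc$ equal the pre-swap values of \mReg[2] and \mReg[3].
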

\begin{proof}
Part~\refC{sReg[1]} follows immediately from an inspection of the code.
Register \mReg\ is changed only when a process executes a successful \CAS operation in lines~\ref{help:CAS} or~\ref{performSlow:CAS}.
Furthermore, in \Line{help:CAS} \mReg[3] is not changed and in \Line{performSlow:CAS} \mReg[2] is not changed.
Since, in \Line{performSlow:CAS} \mReg[3] is incremented Part~\refC{mReg[3]} follows immediately.
Now, for a process to execute \Line{help:CAS}, the if-condition of \Line{help:ifCondition} must fail, hence \mReg[2] is increased from its previous value and Part~\refC{mReg[2]} follows.
\end{proof}

Consider an arbitrary history $H$ where processes access an \UCWeakI{\T}\ object but no two \performFast{} method calls are executed concurrently.
Since the fast operations are executed sequentially the happens before order on all \performFast{} method calls in $H$ is a total order.

\begin{claim}
\label{claim:UCWeakI:thereExistsASuccessfulHelp}
Let $u_t$ be the $t$-th \performFast{} method call in history $H$ being executed by process $p_t$.
For $t \geq 1$ let $\Alpha{t}$ be the point in time when $p_t$ executes \Line{performFast:sRegWrite} during $u_t$ and $\Gemma{t}$ be the point when $p_t$ is poised to execute \Line{performFast:readmReg2}.
Let $u_t$'s \emph{helpers} be the processes that call \help{} such that the value read by the processes in \Line{help:readsReg} is the value written to register \sReg\ at \Alpha{t}.
Let $\Beta{t}$ be the first point in time when a helper's call to \help{} succeeds after $\Alpha{t}$.
Let $\Alpha{0},\Beta{0},\Gemma{0}$ be the start of execution $H$.
Then the following claims hold for all $t \geq 0$:
\begin{enumerate}[(S$_1$)]
  \item \Beta{t} exists and \Beta{t} is in  $(\Alpha{t},\Gemma{t})$
  \item Throughout $(\Alpha{t},\Beta{t})$ : $\sReg[1] = \mReg[2]+1 = t$
  \item Throughout $(\Beta{t},\Alpha{t+1})$ : $\sReg[1] = \mReg[2] = t$

\end{enumerate}

\end{claim}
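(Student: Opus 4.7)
The plan is to prove statements S$_1$, S$_2$, and S$_3$ simultaneously by strong induction on $t$. The base case $t=0$ is immediate: the intervals $(\Alpha{0},\Beta{0})$ and $(\Alpha{0},\Gemma{0})$ are degenerate, while the initial values $\sReg[1] = 0 = \mReg[2]$ together with Claim~\ref{claim:UCWeakI:register} imply that no help call before $\Alpha{1}$ can change \mReg (any such call reads $seq \leq 0 \leq fc$ and returns at line~\ref{help:ifCondition}), so S$_3$ holds on $(\Beta{0},\Alpha{1})$.

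For the inductive step, I assume S$_1$--S$_3$ for $0,\ldots,t-1$ and prove them for $t$. Since \performFast{} calls are sequential by hypothesis, $u_t$ begins only after $u_{t-1}$ returns, hence after $\Gemma{t-1}$. Thus $p_t$'s read of \mReg at line~\ref{performFast:readmReg1} falls in $(\Gemma{t-1},\Alpha{t}) \subseteq (\Beta{t-1},\Alpha{t})$, where $\mReg[2] = t-1$ by S$_3$ for $t-1$. So $p_t$ writes $(op,t)$ to \sReg at $\Alpha{t}$, and by Claim~\ref{claim:UCWeakI:register}(a), $\sReg[1]$ remains $t$ throughout $(\Alpha{t},\Alpha{t+1})$. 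Next I would use Claim~\ref{claim:UCWeakI:register}(c) to track $\mReg[2]$: any successful help \CAS after $\Alpha{t}$ must be executed by a helper of $u_t$, because a non-helper reads $seq \leq t-1$ (its \sReg read precedes $\Alpha{t}$) yet would need $\mReg[2] = fc < seq \leq t-1$ at \CAS time, conflicting with $\mReg[2] \geq t-1$ after $\Alpha{t}$. Each such helper's \CAS sets $\mReg[2]$ to exactly $seq = t$, after which no further help \CAS can succeed in $(\Alpha{t},\Alpha{t+1})$ since the $fc$ read no longer matches $\mReg[2] = t$. Hence, granted S$_1$, $\mReg[2]$ stays at $t-1$ on $(\Alpha{t},\Beta{t})$ and equals $t$ on $(\Beta{t},\Alpha{t+1})$, yielding S$_2$ and S$_3$.

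The main obstacle is S$_1$: proving that $\Beta{t}$ occurs and lies in $(\Alpha{t},\Gemma{t})$. I plan to argue by contradiction, assuming no helper's \CAS succeeds in this interval. Both of $p_t$'s \help{} calls at lines~\ref{performFast:callToHelp1} and~\ref{performFast:callToHelp2} must then return False, for otherwise either line~\ref{help:ifCondition} would detect an earlier successful helper \CAS or line~\ref{help:CAS} would make $p_t$ itself the successful helper. Each False return means some other successful \mReg \CAS $c_i$ intervened between the help's reads and its \CAS; by the preceding analysis, $c_i$ cannot be a successful help \CAS (that would be a helper, again contradicting the assumption), so $c_1$ and $c_2$ must both be successful \performSlow{} \CASes. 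A crucial observation is that a \performSlow{} \CAS at line~\ref{performSlow:CAS} can succeed only if \mReg is unchanged from its line~\ref{performSlow:readmReg} read until that \CAS; in particular the inner \help{} call within that iteration must return True via line~\ref{help:ifCondition} (a failed line~\ref{help:CAS} \CAS would indicate that \mReg changed, and a successful one would itself alter \mReg). If the performing process is a helper, then $seq = t$ in its inner \help{} forces $fc \geq t$ at line~\ref{help:ifCondition}, contradicting $\mReg[2] = t-1$. So $c_1$ and $c_2$ come from non-helpers whose line~\ref{performSlow:readmReg} reads occurred before $\Alpha{t}$. This yields the contradiction: for $c_2$'s \CAS to succeed, \mReg must remain unchanged from a time before $\Alpha{t}$ through $c_2$, yet $c_1$ lies in this window and already altered \mReg.

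This contradiction forces $\Beta{t}$ to exist, and the argument naturally places $\Beta{t}$ within one of $p_t$'s two \help{} calls, both of which complete before $\Gemma{t}$, so $\Beta{t} \in (\Alpha{t},\Gemma{t})$. With S$_1$ established, S$_2$ and S$_3$ follow from the second paragraph, closing the induction.
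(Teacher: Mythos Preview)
Your proof is correct and follows the same inductive scheme as the paper, with essentially the same arguments for $(S_2)$ and $(S_3)$. Your $(S_1)$ argument is organized as a contradiction (showing that two interfering \performSlow{} CASes $c_1,c_2$ would each need their \Line{performSlow:readmReg} before $\Alpha{t}$, making $c_2$ impossible once $c_1$ has fired), whereas the paper traces forward through the second interferer's inner \help{} call to locate a successful \help{} CAS; both routes rest on the monotonicity of $\mReg[2]$ and $\mReg[3]$ from Claim~\ref{claim:UCWeakI:register} and are equivalent in substance.
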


\begin{proof}
We prove claims $(S_1),(S_2)$ and $(S_3)$ by induction over $t$.

\textbf{Basis:}
For $t=0$, $(S_1)$ and  $(S_2)$ are trivially true.
By assumption the initial value of \sReg[1] and \mReg[2] is $0$.
Consider the interval $(\Beta{0},\Alpha{1})$.
From Claim~\ref{claim:UCWeakI:register}(\ref{sReg[1]}) it follows that \sReg\ is written for the first time at $\Alpha{1}$.
The first point when one of the invariants $(S_3)$ is destroyed is if a process (say $p$) executes a successful \CAS operation in \Line{help:CAS} during $(\Beta{0},\Alpha{1})$.
Then $p$ read the value $0$ from register $\sReg[1]$ in \Line{help:readsReg}, since the initial value of \sReg[1] is $0$ and \sReg[1] is written to for the first time at $\Alpha{1}$.
Since \mReg[2] is never decremented (from Claim~\ref{claim:UCWeakI:register}(\ref{mReg[2]})) and \mReg[2] initially has value $0$, $p$ satisfies the if-condition of \Line{help:ifCondition} and $p$'s \help{} call returns \true\ in \Line{help:ifCondition}.
Therefore, $p$ does not execute \Line{help:CAS}, which is a contradiction.

\textbf{Induction Step:}
For $t \geq 1$:\

\textbf{Proof of $(S_1)$: }
Consider the interval $(\Alpha{t},\Gemma{t})$.
To show that $(S_1)$ holds for $t$, we need to show that \mReg[2] is changed during $(\Alpha{t},\Gemma{t})$.
Consider $p_t$'s first call to \help{} in \Line{performFast:callToHelp1} during $u_t$.
From induction hypothesis $(S_3)$ for $t-1$, it follows that $\sReg[1] = \mReg[2] = t-1$ during $(\Beta{t-1},\Alpha{t})$.
Then $p_t$ reads value $t-1$ from \mReg[2] in \Line{performFast:readmReg1} and writes value $t$ to \sReg[1] in \Line{performFast:sRegWrite}.
Since \sReg[1] is changed only at \Alpha{t+1} after \Alpha{t}, it follows that $p_t$ reads $t$ from register $\sReg[1]$ in line~\ref{help:readsReg}.

\textbf{Case a - } $p_t$ returns from \Line{help:ifCondition}:
Then $p_t$ read a value from \mReg[2] in \Line{help:readmReg} that is at least $t$.
Since $\mReg[2]= t-1$ holds immediately before \Alpha{t} some process changed \mReg[2] in \Line{help:CAS} during $(\Alpha{t},\Gemma{t})$.
Hence, $(S_1)$ for $t$ holds.

\textbf{Case b - } $p_t$ returns \true\ from \Line{help:CAS}:
Then $p_t$ has changed \mReg[2] and hence $(S_1)$ holds for $t$.

\textbf{Case c - } $p_t$ returns \false\ from \Line{help:CAS}:
Then some process $q$ changed register \mReg\ after $p_t$ read \mReg\ in \Line{help:readmReg}.
Now, register \mReg\ is written to only in \Line{help:CAS} or \Line{performSlow:CAS} (from an inspection of the code).

\textbf{Subcase c1 - } $q$ changed \mReg\ by executing \Line{help:CAS}:
Then $q$ has changed \mReg[2] and hence $(S_1)$ holds for $t$.

\textbf{Subcase c2 - } $q$ changed \mReg\ by executing \Line{performSlow:CAS}:
Then $p_t$ executes a second call to \help{} in \Line{performFast:callToHelp2}.
Let $m$ be the value of \mReg[3] read by $p_t$ in \Line{help:readmReg}.
If $p_t$'s second \help{} call satisfies case (a) or (b) then we get that $(S_1)$ holds for $t$.

If $p_t$'s second \help{} call returns \false\ from \Line{help:CAS}, then some process changed \mReg\ after $p_t$ read \mReg\ in \Line{help:readmReg}.
If some process changed \mReg\ by executing \Line{help:CAS} then we get that $(S_1)$ holds for $t$.
Then some process changed \mReg\ by executing \Line{performSlow:CAS} after $p_t$ read \mReg\ in \Line{help:readmReg} and let $r$ be the first process to do so.
Therefore, $r$ changes the value of \mReg[3] from $m$ to $m+1$ in \Line{performSlow:CAS}.
Then $r$ executed \Line{performSlow:readmReg} after $q$ executed a successful \CAS operation in \Line{performSlow:CAS}.
Then $r$ completed a call to \help{} in \Line{performSlow:callToHelp} after \Alpha{t}.
Since $r$ reads \mReg\ after \Alpha{t}, $r$ satisfied the if-condition of \Line{help:ifCondition} and executed \Line{help:CAS}.
If $r$ successfully executes the \CAS operation in \Line{help:CAS} then we get that $(S_1)$ holds for $t$.
Then some process $s$ must have changed \mReg\ after $r$ read \mReg\ in \Line{help:readmReg}.
Since the value of \mReg[3] is only incremented (by \Claim{claim:UCWeakI:register}(b)) and $r$ changes the value of \mReg[3] from $m$ to $m+1$, it follows that $s$ changed \mReg\ in \Line{help:CAS} and hence $(S_1)$ holds for $t$.

\textbf{Proof of $(S_2)$ and $(S_3)$: }
From $(S_1)$ for $t$ it follows that \Beta{t} exists and $\Alpha{t} < \Beta{t} < \Gemma{t} < \Alpha{t+1}$.
From the induction hypothesis invariants $(S_2)$ and $(S_3)$ are true until $\Alpha{t}$.
Now, one of the invariants $(S_2)$ or $(S_3)$ can be destroyed only if some process executes a successful \CAS operation in \Line{help:CAS} and changes \mReg[2].
By definition of \Beta{t}, \mReg[2] is unchanged during $(\Alpha{t},\Beta{t})$.
Then invariants $(S_2)$ and $(S_3)$ continue to hold until $\Beta{t}$.
Therefore, claim $(S_2)$ holds for $t$.
It still remains to be shown that claim $(S_3)$ holds for $t$.

Let $p$ be the process that executes a successful \CAS operation in \Line{help:CAS} and changes \mReg[2] at \Beta{t}.
Since $\mReg[2] = t-1$ immediately before \Beta{t} and $p$ executes a successful \CAS operation in \Line{help:CAS} at $\Beta{t}$, $p.fc = t-1$.
Then $p$ executed lines~\ref{help:readmReg} and~\ref{help:readsReg} during $(\Alpha{t},\Beta{t})$ and $p.seq = t$.
Therefore, invariant $(S_3)$ is true immediately after $\Beta{t}$.

Now, assume another process (say $q$) destroys one of the invariants $(S_3)$ or $S_4$ by executing a successful \CAS operation in \Line{help:CAS} during $(\Beta{t},\Alpha{t+1})$.
Then $q$ must have read register \mReg[2] and \sReg[1] after \Beta{t}, and therefore $q$ must read the value $t$ from both of them.
Then $q$ must have satisfied the if-condition of \Line{help:ifCondition} and returned \true.
Hence, $q$ does not execute \Line{help:CAS}, which is a contradiction.
Therefore, invariant $(S_3)$ is true up to $\Alpha{t+1}$, and thus claim $(S_3)$ holds for $t$.
\end{proof}


Let $H'$ be a history that consists of all completed method calls in $H$ and all pending method calls that executed \Line{performFast:sRegWrite} (\Write operation on register \sReg), or which executed a successful \CAS operation in \Line{help:CAS} or \Line{performSlow:CAS}.
We omit all other pending method calls, since during those method calls no operations are executed that changes the state of any shared object, and hence those pending method calls cannot affect the validity of any other operation.
Therefore, to prove that history $H$ is linearizable it suffices to prove that history $H'$ is linearizable.

For each method call $u$ in $H'$, we define a point $pt(u)$ and an interval $I(u)$.
Let $I(u)$ denote the interval between $u$'s invocation and response.
If $u$ is a \performSlow{} method call that returns from \Line{performSlow:returnWithoutChange} then $pt(u)$ is the point in time of the \Read operation in \Line{performSlow:readmReg}, otherwise, $pt(u)$ is the point in time of the \CAS operation in \Line{performSlow:CAS}.
If $u$ is a \performFast{} method call, let $v$ be a successful \help{} method call such that $v$'s \Line{help:CAS} is executed after $u$'s \Line{performFast:sRegWrite} and before $u$ returns.
We define $pt(u)$ to be the point of the successful \CAS operation in $v$'s \Line{help:CAS}.

\begin{claim}
\label{claim:UCWeakI:thereExistsPointU}
For every method call $u$ in $H$, $pt(u)$ exists and lies in $I(u)$.
\end{claim}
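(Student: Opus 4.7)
The plan is to prove the claim by a case analysis on the type of $u$, leaning on the preceding Claim~\ref{claim:UCWeakI:thereExistsASuccessfulHelp}. Before starting the cases, I would note that $pt(u)$ has only been defined for calls in $H'$, while the pending calls of $H \setminus H'$ have made no shared-memory change and thus play no role in linearizability; so it suffices to prove the statement for every $u \in H'$.

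\textbf{Slow calls.} If $u$ is a \performSlow{} call that returns from \Line{performSlow:returnWithoutChange}, then $pt(u)$ is the instant of the \mReg.\Read{} in \Line{performSlow:readmReg}, and this read is performed during $I(u)$ by inspection of the code. Otherwise, $u$ either completed (and therefore exited the repeat-until loop by executing a successful \CAS\ in \Line{performSlow:CAS}) or is a pending call that, by the definition of $H'$, executed a successful \CAS\ in \Line{performSlow:CAS}. In both subcases $pt(u)$ is this successful \CAS\ event, which by construction lies in $I(u)$.

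\textbf{Fast calls.} This is the substantive case. Index the \performFast{} calls in $H$ by the sequence induced by the no-concurrency assumption, so that $u = u_t$ for some $t \geq 1$. Membership of $u$ in $H'$ guarantees that $u$'s \sReg.\Write{} in \Line{performFast:sRegWrite} has been executed, so $\Alpha{t}$ is defined and belongs to $I(u)$. Invariant $(S_1)$ of Claim~\ref{claim:UCWeakI:thereExistsASuccessfulHelp} furnishes a helper call $v$ whose successful \CAS\ in \Line{help:CAS} occurs at a point $\Beta{t} \in (\Alpha{t},\Gemma{t})$. Since $\Gemma{t}$ is the moment $p_t$ is poised to execute \Line{performFast:readmReg2}, it lies strictly before $u$'s return, and hence $(\Alpha{t},\Gemma{t}) \subseteq I(u)$. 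Taking $pt(u) := \Beta{t}$ as prescribed by the definition, we obtain $pt(u) \in I(u)$.

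\textbf{Main obstacle.} The delicate point is ensuring that the successful helper \CAS\ at $\Beta{t}$ really corresponds to the announcement made by $u$ at $\Alpha{t}$, rather than to some earlier or later announcement. This is exactly what invariants $(S_2)$ and $(S_3)$ of Claim~\ref{claim:UCWeakI:thereExistsASuccessfulHelp} are designed to deliver: throughout $(\Alpha{t},\Beta{t})$ we have $\sReg[1] = t$ and $\mReg[2] = t-1$, so the first successful help after $\Alpha{t}$ must consume precisely the announcement made by $u_t$. A minor side remark concerns a pending \performFast{} $u_t$ in $H'$ whose $\Gemma{t}$ has not occurred; in that situation one either excludes such a call from $H'$ (it has changed only $\sReg$, exactly as with the omitted pending calls) or re-reads the case analysis in the proof of $(S_1)$, which already shows that $\Beta{t}$ has occurred by the end of $p_t$'s first \help{} invocation, so whenever $p_t$ has completed its first \help{} call the required successful help is in place inside $I(u)$.
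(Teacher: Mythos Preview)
Your proposal is correct and follows essentially the same route as the paper: a case split on \performSlow{} versus \performFast{}, with the latter handled by invoking $(S_1)$ of Claim~\ref{claim:UCWeakI:thereExistsASuccessfulHelp} to obtain $\Beta_t\in(\Alpha_t,\Gemma_t)\subseteq I(u)$. The paper's own proof is much terser (two sentences per case), but the substance is identical.

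Two minor remarks. First, your ``Main obstacle'' paragraph about the helper consuming the right announcement is not needed for the present claim: $pt(u)$ is \emph{defined} to be any successful helper \CAS\ between $\Alpha_t$ and $u$'s return, so existence and location in $I(u)$ follow directly from $(S_1)$; whether it applies the correct operation is the business of Claim~\ref{claim:UCWeakI:invariants}. Second, your side remark on pending \performFast{} calls is a genuine subtlety the paper glosses over; your first fix (drop such a call from $H'$ since it has only touched $\sReg$) is the clean way to handle it.
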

\begin{proof}
There are two types of method calls in $H$, \performFast{} and \performSlow{}.

\textbf{Case a - } $u$ is a \performFast{} method call.\

From \Claim{claim:UCWeakI:thereExistsASuccessfulHelp} it follows that exactly one of $u$'s helpers (see Claim~\ref{claim:UCWeakI:thereExistsASuccessfulHelp} for definition) succeeds and the helper performs a successful \CAS operation in \Line{help:CAS} at some point in $I(u)$.
Therefore, point $pt(u)$ exists and lies in $I(u)$.

\textbf{Case b - } $u$ is a \performSlow{} method call.
By definition $pt(u)$ is assigned to a line of $u$'s code, therefore $pt(u)$ exists and lies in $I(u)$.
\end{proof}

Let $S$ be the sequential history obtained by ordering all method calls $u$ in $H'$ according to the points $pt(u)$.
To show that \UCWeak{\T} is a linearizable implementation of an object \SD\ of type \T, we need to show that the sequential history $S$ is valid, i.e., $S$ lies in the specification of type \T, and that $pt(u)$ lies in $I(u)$ (already shown in \Claim{claim:UCWeakI:thereExistsPointU}).
Let $S_v$ be the sequential history obtained when the operations of $S$ are executed sequentially on object \SD, as per their order in $S$.
Clearly, $S_v$ is a valid sequential history in the specification of type \T\ by construction.
Then to show that $S$ is valid, we show that $S = S_v$.


Let $v_t$ be the $t$-th operation in $S_v$ and let $u_t$ be the $t$-th method call in $S$.
Let \UCb{t} and \UCa{t} denote the value of \mReg[0] immediately before and after $pt(u_t)$, respectively.
Let \SDb{t} and \SDa{t} denote the state of object \SD\ immediately before and after operation $v_t$, respectively.
Let $\alpha_t$ and $\beta_t$ denote the value returned by $u_t$ and $v_t$, respectively.
Define $\UCa{0} = \UCb{1}$ and $\SDa{0} = \SDb{1}$.
Define $\alpha_{0} = \beta_{0} = \bot$.

\begin{claim} \label{claim:UCWeakI:afterPtuConsistency}
Suppose a process calls method \f{$x_1,op$} and the method returns the value pair $(x_2,y)$.
If $x_1 = \SDb{t}$ then $x_2 = \SDa{t}$ and $y=\beta_t$.
\end{claim}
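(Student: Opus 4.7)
The plan is to derive the claim directly from the specification of method \f{} together with the construction of the sequential history $S_v$. First I observe that \f{} performs no shared-memory operations: by its description, it simply simulates the effect of applying $op$ to an object of type \T initialized to state $x_1$ and returns the resulting (state, response) pair. Hence \f{} is a pure deterministic function of its two arguments, and its output coincides with the sequential semantics of \T.

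Next I would identify the $op$ argument of the \f{} call with the operation labeling $v_t$. In the code, \f{} is invoked only inside \help{} (\Line{help:callTof}) and inside \performSlow{} (\Line{performSlow:callTof}). At both call sites the argument $op$ is the operation description belonging to the method call of the enclosing process in $H'$ (for helpers, it is the fast operation announced in \sReg, whose invocation is also a method call in $H'$). By the definition of $S$ (which orders the method calls of $H'$ by the points $pt(u)$) and by the construction of $S_v$ (which inherits the operation labels from $S$), this $op$ is exactly the operation performed by $v_t$.

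Finally, by the definition of $S_v$, executing $v_t$ sequentially on object \SD starting from state $\SDb{t}$ yields state $\SDa{t}$ and response $\beta_t$. Since \f{$\SDb{t}, op$} computes exactly this transition by its specification, we conclude $x_2 = \SDa{t}$ and $y = \beta_t$.

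The only subtle point — and really the only ``hard'' part — is pinning down that the $op$ passed to \f{} is the operation of $v_t$; this is a matter of inspecting the two call sites of \f{} and tracking how $op$ is propagated into them from the enclosing method call (or, in the case of a helper, from the \sReg\ slot filled by the corresponding \performFast{} invocation). No probabilistic or combinatorial argument is needed: the claim is a bookkeeping lemma that lifts the local correctness of the simulator \f{} to the linearization $S_v$, and will later serve as the inductive step in showing $\UCa{t}=\SDa{t}$ and $\alpha_t = \beta_t$.
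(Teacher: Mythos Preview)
Your proposal is correct and follows the same core idea as the paper: the claim is a direct consequence of the specification of \f{}, which deterministically simulates the sequential semantics of type \T. The paper's proof is in fact just two sentences invoking that definition.

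The one difference is that you spend most of your argument establishing that the $op$ passed to \f{} coincides with the operation of $v_t$, tracing this through the two call sites and the \sReg\ announcement mechanism. The paper's proof does not do this; it treats the identification of $op$ with the $t$-th operation as implicit (the claim is only ever invoked inside Claim~\ref{claim:UCWeakI:invariants} at points where $op$ is already known to be the operation of $u_t$, and hence of $v_t$ by construction of $S_v$). So your extra paragraph is not wrong, but it is doing work that the paper defers to the call sites of the claim rather than to the claim's own proof. If you want to match the paper's granularity, you can drop that tracing here and simply note that the claim is stated for the $op$ corresponding to $v_t$; the identification is verified where the claim is applied.
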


\begin{proof}
By definition, a call to method \f{$x_1,op$} returns the value pair $(x_2,y)$ such that $x_2$ is the state of \SD\ when operation $op$ is applied to \SD\ while at state $x_1$ and $y$ is the result of the operation.
Then if $x_1 = \SDb{t}$ then $x_2 = \SDa{t}$ and $y=\beta_t$.
\end{proof}

\begin{claim}
\label{claim:UCWeakI:invariants}
For all $t \geq 1$.
\begin{enumerate}[$(S_1)$]
 \item \SDa{t} = \SDb{t+1} and \UCa{t} = \UCb{t+1}
 \item \UCb{t} = \SDb{t}
 \item \UCa{t-1} = \SDa{t-1} and $\alpha_{t-1} = \beta_{t-1}$
\end{enumerate}
\end{claim}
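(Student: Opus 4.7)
The plan is to prove $(S_1)_t$, $(S_2)_t$, and $(S_3)_t$ simultaneously by induction on $t$. The driving observation is that $\mReg$ is written only at successful \CAS operations in \Line{help:CAS} or \Line{performSlow:CAS}, and by the construction of $H'$ and the ordering of $S$ every such write serves as the point $pt(u_j)$ of some $u_j \in H'$. Hence $\mReg[0]$ is constant on every open interval $(pt(u_t), pt(u_{t+1}))$, which, together with the sequentiality of $S_v$, yields $(S_1)_t$ immediately. The real content of the induction therefore lies in $(S_2)_t$ and $(S_3)_t$.

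The base case $t = 1$ is straightforward: $(S_3)_1$ asserts $\UCa{0} = \SDa{0}$ and $\alpha_0 = \beta_0$, which hold by the explicit conventions set in the setup; $(S_2)_1$ asserts that the common initial value of $\mReg[0]$ agrees with the initial state of $\SD$; and $(S_1)_1$ follows from the observation above. For the inductive step I assume the invariants hold at all indices less than $t$, so in particular $(S_2)_{t-1}$ provides $\UCb{t-1} = \SDb{t-1}$. I then case-split on the kind of call $u_{t-1}$ to establish $(S_3)_t$, namely $\UCa{t-1} = \SDa{t-1}$ and $\alpha_{t-1} = \beta_{t-1}$. If $u_{t-1}$ is a \performSlow{} call returning from \Line{performSlow:returnWithoutChange}, then $pt(u_{t-1})$ is the \Read at \Line{performSlow:readmReg} so $\mReg[0]$ is unchanged; the local call to \f{} uses $s_1 = \UCb{t-1} = \SDb{t-1}$, and by Claim~\ref{claim:UCWeakI:afterPtuConsistency} the returned pair $(s_2, r_2)$ satisfies $s_2 = \SDa{t-1}$ and $r_2 = \beta_{t-1}$; the branch condition $s_1 = s_2$ then forces $\UCa{t-1} = s_1 = \SDa{t-1}$ and $\alpha_{t-1} = r_2 = \beta_{t-1}$. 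If $u_{t-1}$ is a \performSlow{} call whose successful \CAS at \Line{performSlow:CAS} is $pt(u_{t-1})$, the same appeal to Claim~\ref{claim:UCWeakI:afterPtuConsistency} shows that the \CAS moves $\mReg[0]$ from $\SDb{t-1}$ to $\SDa{t-1}$, and the value returned at \Line{performSlow:returnWithChange} is $r_2 = \beta_{t-1}$. If $u_{t-1}$ is a \performFast{} call, Claim~\ref{claim:UCWeakI:thereExistsASuccessfulHelp} guarantees a unique successful helper whose \CAS at \Line{help:CAS} is $pt(u_{t-1})$, and applying Claim~\ref{claim:UCWeakI:afterPtuConsistency} to that helper's call to \f{} again yields $\UCa{t-1} = \SDa{t-1}$. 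With $(S_3)_t$ in hand, $(S_2)_t$ follows mechanically: $\UCb{t} = \UCa{t-1}$ by $(S_1)_{t-1}$, $\SDb{t} = \SDa{t-1}$ by sequentiality, and $\UCa{t-1} = \SDa{t-1}$ by $(S_3)_t$.

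The main obstacle I anticipate is the return-value part of $(S_3)_t$ in the \performFast{} case: it must be shown that the value $p_{t-1}$ reads from $\mReg[1]$ at \Line{performFast:readmReg2} is exactly the $r_2$ written by the successful helper at $pt(u_{t-1})$. Here the uniqueness clause of Claim~\ref{claim:UCWeakI:thereExistsASuccessfulHelp} is indispensable, as it precludes a further helper of $u_{t-1}$ from subsequently overwriting $\mReg[1]$, and it must be combined with the code-level observation that every successful \CAS at \Line{performSlow:CAS} preserves the $\mReg[1]$ field and with the assumption that no other \performFast{} runs concurrently with $u_{t-1}$. Together these force $\alpha_{t-1} = r_2 = \beta_{t-1}$ and close the induction.
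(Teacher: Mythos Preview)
Your proposal is correct and follows essentially the same route as the paper: a direct argument for $(S_1)$ (no intermediate linearization point can fall strictly between $pt(u_t)$ and $pt(u_{t+1})$), induction on $(S_2)$ and $(S_3)$, the same three-way case split on $u_{t-1}$, and the same appeals to Claim~\ref{claim:UCWeakI:afterPtuConsistency} for the state/result match and Claim~\ref{claim:UCWeakI:thereExistsASuccessfulHelp} for the \performFast{} return value. Your ordering---establishing $(S_3)_t$ from $(S_2)_{t-1}$ and then deriving $(S_2)_t$---is in fact slightly cleaner than the paper's presentation, which derives $(S_2)_{t+1}$ first while citing ``$(S_3)$ for $t$'' in a way that really needs the not-yet-proved $(S_3)_{t+1}$; the underlying argument is identical.
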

\begin{proof}

\textbf{Proof of $(S_1)$:}
Since operations in $S_v$ are executed sequentially, it follows that \SDa{t} = \SDb{{t+1}}.
We now show that \UCa{t} = \UCb{{t+1}}.
Assume $\UCa{t} \neq \UCb{{t+1}}$.
Then some process $p$ changed the value of \mReg[0] by executing a successful \CAS operation in \Line{help:CAS} or \Line{performSlow:CAS} at some point during the interval $(pt(u_t),pt(u_{t+1}))$.
By definition, $p$'s successful \CAS operation in \Line{help:CAS} or \Line{performSlow:CAS} is $pt(u_\ell)$ for some method call $u_\ell$ where $u_\ell$ is the $\ell$-th method call in $H'$.
Thus, $\ell$ is an integer and $t < \ell < t + 1$ holds, which is a contradiction.

\textbf{Proof of $(S_2)$ and $(S_3)$:}
We prove $(S_2)$ and $(S_3)$ by induction over $t$.

\textbf{Basis $(t=1)$ -}
By assumption, initially, \mReg[0] is the initial state of \SD, hence, $\UCb{1} = \SDb{1}$.
Hence, $(S_2)$ is true.
$(S_3)$ is true trivially.

\textbf{Induction Step -}
We assume $(S_2)$ and $(S_3)$ for $t$ are true and prove that $(S_2)$ and $(S_3)$ for $t+1$ are true.
From $(S_1)$ we have, \SDa{t} = \SDb{{t+1}} and \UCa{t} = \UCb{{t+1}}.
From $(S_3)$ for $t$ we have \UCa{t} = \SDa{t}.
Therefore, it follows that \UCb{{t+1}} = \SDb{{t+1}} and thus $(S_2)$ for $t+1$ is true.

To show $(S_3)$ for $t+1$ is true, we need to show \UCa{t} = \SDa{t} and $\alpha_{t} = \beta_{t}$.
By Claim $(S_2)$ for $t$, \UCb{t} = \SDb{t} holds.
Let $p_t$ be the process executing $u_t$.

\textbf{Case a - } $u_t$ is a \performSlow{$op$} method call:
Let $x_1$ be the most recent value read by $p_t$ from \mReg[0] in \Line{performSlow:readmReg} and let $(x_2,y)$ be the value returned when $p_t$ executes \Line{performSlow:callTof}.
From the code structure, $\alpha_t = y$.

\textbf{Subcase (a1) - } $p_t$ returns from \Line{performSlow:returnWithoutChange}:
Then $pt(u_t)$ is the point when $p_t$ executes a successful \Read operation on register \mReg\ in \Line{performSlow:readmReg}.
Since $p$ satisfies the if-condition of \Line{performSlow:ifCondition}, $x_2 = x_1$.
Thus, $\UCb{t} = \UCa{t} = x_1$.

\textbf{Subcase (a2) - } $p_t$ returns from \Line{performSlow:returnWithChange}:
Then $pt(u_t)$ is the point when $p_t$ executes a successful \CAS operation on register \mReg\ in \Line{performSlow:CAS}.
From the definition of a \CAS operation, it follows that \UCb{t} = $x_1$ and \UCa{t} = $x_2$.

For both subcases \textbf{(a1)} and \textbf{(a2)}, $x_1 = \UCb{t} = \SDb{t}$ holds.
Then from \Claim{claim:UCWeakI:afterPtuConsistency} it follows that $x_2 = \SDa{t}$ and $y = \beta_{t}$.
Since $x_2 = \UCa{t}$ and $y = \alpha_t$, $\SDa{t} = \UCa{t}$ and $\alpha_t = \beta_t$.

\textbf{Case b - } $u_t$ is a \performFast{$op$} method call:
Then $pt(u_t)$ is the point when a successful \CAS operation on register \mReg\ is executed in \Line{help:CAS} of method call $w$ where $w$ is the first successful \help{} method call that begins after $u_t$'s \Line{performFast:sRegWrite} is executed.
Let $q$ be the process executing $w$.
Let $x_1$ be the value read by $q$ from \mReg[0] in \Line{help:readmReg} and let $(x_2,y)$ be the value returned when $q$ executes \Line{help:callTof}.
From the definition of a \CAS operation, it follows that \UCb{t} = $x_1$ and \UCa{t} = $x_2$.
Since $x_1 = \UCb{t} = \SDb{t}$, from \Claim{claim:UCWeakI:afterPtuConsistency} it follows that $x_2 = \SDa{t}$ and $y = \beta_{u_t}$.
Since $x_2 = \UCa{t}$, it follows that $\SDa{t} = \UCa{t}$.

From \Claim{claim:UCWeakI:thereExistsASuccessfulHelp} if follows that \mReg[1] is changed exactly once during $u_t$, specifically at $pt(u_t)$, where $q$ writes the value $y$ to it.
Thus, $p$ reads the value $y$ from \mReg[1] in \Line{performFast:readmReg2} since $p$ executes \Line{performFast:readmReg2} after $pt(u_t)$ (\Claim{claim:UCWeakI:thereExistsASuccessfulHelp}).
Therefore, it follows that $\alpha_{u_t} = y = \beta_{u_t}$.
\end{proof}

\begin{lemma}
\label{claim:UCWeakI:linearizability}
History $H'$ has a linearization in the specification of \T.
\end{lemma}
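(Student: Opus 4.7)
The plan is to assemble the pieces that the preceding development has prepared. The sequential history $S$ is already defined by ordering the method calls of $H'$ according to their linearization points $pt(u)$, and Claim~\ref{claim:UCWeakI:thereExistsPointU} has established that $pt(u)\in I(u)$ for every $u$ in $H'$. This immediately implies that $S$ respects the real-time (happens-before) ordering of method calls in $H'$: if $u$ finishes before $u'$ starts in $H'$, then $pt(u)<pt(u')$, so $u$ precedes $u'$ in $S$. Hence, to prove the lemma, it suffices to show that $S$ lies in the sequential specification of \T.

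To show this, I would invoke the comparison history $S_v$, obtained by executing the operations of $S$ in order on a fresh sequential object \SD\ of type \T. By construction $S_v$ is valid for \T, so the lemma reduces to proving $S = S_v$, i.e., that the return value $\alpha_t$ of the $t$-th call in $S$ equals the return value $\beta_t$ of the $t$-th operation in $S_v$ for every $t\geq 1$. But invariant $(S_3)$ of Claim~\ref{claim:UCWeakI:invariants} states exactly that $\alpha_{t-1}=\beta_{t-1}$ for every $t\geq 1$, and the accompanying invariants $(S_1)$ and $(S_2)$ guarantee that the state stored in $\mReg[0]$ at the linearization point of each operation coincides with the abstract state of \SD\ just before the corresponding sequential operation. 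Chaining these equalities across $t$ yields $S = S_v$, and therefore $S$ is a valid linearization of $H'$ in the specification of \T.

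The main obstacle in this chain has already been handled inside Claim~\ref{claim:UCWeakI:invariants}, where the delicate part is the \performFast{} case: one has to argue that the unique helper whose successful \CAS\ operation in line~\ref{help:CAS} lies inside the interval $I(u_t)$ (given by Claim~\ref{claim:UCWeakI:thereExistsASuccessfulHelp}) indeed writes into $\mReg[1]$ the value subsequently read by the calling process in line~\ref{performFast:readmReg2}, and that its \CAS\ updates $\mReg[0]$ from $\SDb{t}$ to $\SDa{t}$ so that Claim~\ref{claim:UCWeakI:afterPtuConsistency} applies. At the level of the present lemma no further analysis is required: the task is purely to package the already-proved state-and-return-value identities of Claim~\ref{claim:UCWeakI:invariants} together with the interval-containment statement of Claim~\ref{claim:UCWeakI:thereExistsPointU} and observe that their conjunction is exactly the definition of a linearization of $H'$ in \T.
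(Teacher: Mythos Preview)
Your proposal is correct and follows essentially the same route as the paper: both invoke Claim~\ref{claim:UCWeakI:thereExistsPointU} for $pt(u)\in I(u)$ and then use invariant~$(S_3)$ of Claim~\ref{claim:UCWeakI:invariants} to conclude $\alpha_t=\beta_t$ for all $t$, hence $S=S_v$. Your additional remarks about real-time order and about $(S_1)$, $(S_2)$ are accurate but not needed for the argument, since those invariants are consumed internally within the proof of Claim~\ref{claim:UCWeakI:invariants}.
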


\begin{proof}
By \Claim{claim:UCWeakI:thereExistsPointU}, for each method call $u$ in $H'$, $pt(u)$ exists and lies in $I(u)$.
Thus, to show that $H'$ is linearizable we only need to show that
$S$ lies in the specification of type \T.
Thus, we need to show that for all $t \geq 1$, the value returned by $v_t$ matches that value returned by $u_t$.
From \Claim{claim:UCWeakI:invariants} $(S_3)$ it follows that for all $t \geq 1$, $\Alpha{t} = \beta_{t}$.
\end{proof}


\begin{lemma}
\label{claim:UCWeakI:lockfree}
Object \UCWeak{\T} is lock-free.
\end{lemma}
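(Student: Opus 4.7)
My plan is to show that in any infinite history $H$ of operations on \UCWeakI{\T}, some method call finishes, by a case analysis on method type and by tracing any non-termination to a successful CAS by another process that itself implies progress.

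First, I would observe by direct inspection of the code that \performFast{} has bounded (in fact, \Order{1}) step complexity: it executes a constant number of lines of its own code, invokes \help{} at most twice, and each \help{} call performs a constant number of shared memory operations before returning. Consequently, any in-progress \performFast{} call in $H$ completes within a bounded number of the executing process's own steps, so if \performFast{} is called in $H$ we are done.

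The main case is therefore when only \performSlow{} calls are making infinitely many steps. Suppose for contradiction that some process $p$ executes an infinite \performSlow{$op$} call. Then $p$ never returns from \Line{performSlow:returnWithoutChange} or \Line{performSlow:returnWithChange}, so $p$ executes the loop body of lines~\ref{performSlow:readmReg}--\ref{performSlow:CAS} infinitely often, and in every iteration the if-condition of \Line{performSlow:ifCondition} fails and the \CAS{} in \Line{performSlow:CAS} fails. Each failed \CAS{} at \Line{performSlow:CAS} must be witnessed by some other process changing \mReg{} between $p$'s \Read{} in \Line{performSlow:readmReg} and $p$'s \CAS{} in \Line{performSlow:CAS}. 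By an inspection of the code, the only operations that change \mReg{} are successful \CAS{} operations at \Line{performSlow:CAS} or \Line{help:CAS}.

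I would then argue that each such successful \CAS{} implies progress. A successful \CAS{} at \Line{performSlow:CAS} by a process $q$ terminates $q$'s enclosing loop and $q$ returns at \Line{performSlow:returnWithChange}, so some operation finishes. A successful \CAS{} at \Line{help:CAS} is, by Claim~\ref{claim:UCWeakI:thereExistsASuccessfulHelp}, the point $\Beta{t}$ associated with some \performFast{} call $u_t$ and lies in $(\Alpha{t},\Gemma{t})$; since the remainder of $u_t$ after $\Gemma{t}$ consists of only \Line{performFast:readmReg2} and \Line{performFast:return}, the process executing $u_t$ finishes within a constant number of its own subsequent steps, so again some operation finishes. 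Both subcases contradict the assumption that no operation finishes, completing the proof.

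The only subtle point, which I would handle explicitly, is ensuring that a successful \CAS{} at \Line{help:CAS} really does bring some operation to completion rather than merely re-triggering further helping: this is exactly what Claim~\ref{claim:UCWeakI:thereExistsASuccessfulHelp} gives us, since it identifies the first such successful \CAS{} after $\Alpha{t}$ with $\Beta{t}$ and places it before $\Gemma{t}$, after which only bounded-length wrap-up remains. Beyond that, the argument is mechanical and follows directly from code inspection together with the previously established claim.
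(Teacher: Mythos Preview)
Your proof is correct and follows the same overall structure as the paper's: observe that \performFast{} is wait-free, reduce to an infinite \performSlow{} loop whose CAS at \Line{performSlow:CAS} fails in every iteration, and note that each failure must be witnessed by a successful CAS at either \Line{performSlow:CAS} (which terminates the witness's loop) or \Line{help:CAS}. The one place you diverge is the \Line{help:CAS} sub-case. The paper does not invoke Claim~\ref{claim:UCWeakI:thereExistsASuccessfulHelp}; instead it looks at two consecutive failed iterations of $p$, both witnessed by successful CAS operations at \Line{help:CAS} (by processes $q$ and then $r$), and argues directly from the code that $r$ must have read $\sReg[1]$ strictly larger than the value $q$ wrote to $\mReg[2]$, which forces a write to $\sReg$ at \Line{performFast:sRegWrite} and hence a \performFast{} call in $H$---contradiction. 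Your route is more modular, leaning on the earlier claim to associate any successful \help{} CAS with an in-progress \performFast{} call; just make explicit that you are using the converse direction (every successful CAS at \Line{help:CAS} is some $\Beta{t}$), which follows from invariants $(S_2)$ and $(S_3)$ of that claim rather than from $(S_1)$ alone.
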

\begin{proof}
Suppose not. I.e., there exists an infinite history $H$ during which processes take steps but no method call finishes.
It is clear from an inspection of method \performFast{} and private method \helpFast{}, that both methods are wait-free.
Then if $H$ contains steps executed by a process that executes a call to \performFast{} then the \performFast{} method call finishes since processes continue to take steps in history $H$ -- a contradiction.
Now consider the only other case, where history $H$ contains steps executed by processes only on \performFast{} method calls.
Consider a process $p$ that takes steps in history $H$ and fails to complete its \performSlow{} method call.
Then during $p$'s execution $p$ reads register \mReg\ in \Line{performSlow:readmReg} and fails its \CAS operation in \Line{performSlow:CAS} during an iteration of the loop of lines~\ref{performSlow:beginLoop}-\ref{performSlow:CAS}.
Now $p$'s \CAS\ operation can fail only if some process executes a successful \CAS operation in \Line{help:CAS} or \Line{performSlow:CAS} between $p$'s \Read{} and \CAS\ operation.

\textbf{Case a -} Some process $q$ executes a successful \CAS operation in \Line{performSlow:CAS}.
Then $q$ breaks out of the loop of lines~\ref{performSlow:beginLoop}-\ref{performSlow:CAS}.
Since processes continue to take steps in our infinite history $H$, $q$ eventually returns from its \performSlow{} method call -- a contradiction.

\textbf{Case b -} Some process $q$ executes a successful \CAS operation in \Line{help:CAS}.
Then $q$ has performed a successful \helpFast{} method call and incremented \mReg[2].
Let the value of \mReg[2] after the increment be $z$.
Now consider the next iteration of the loop by process $p$, where $p$'s \CAS operation  in \Line{performSlow:CAS} fails again.
Since \textbf{Case a} leads to a contradiction, some process $r$ executed a successful \CAS operation in \Line{help:CAS}.
Then $r$ read incremented \mReg[2] to some value greater than $z$ in \Line{help:CAS}.
From the code structure of the \helpFast{} method, $r$ failed the if-condition of \Line{help:ifCondition}, and therefore $r$ read $seq = \sReg[1] > z$ in \Line{help:readsReg}.
Since \sReg[1] is incremented only in \Line{performFast:sRegWrite} during a \performFast{} method call, it follows that a \performFast{} method was called after $q$ incremented \mReg[2] to $z$ in \Line{help:CAS}.
This is a contradiction to the assumption that processes take steps executing only method \performSlow{} during our history $H$.
\end{proof}

Lemma~\ref{theorem:UCWeakI} follows from Lemma~\ref{claim:UCWeakI:linearizability} and \ref{claim:UCWeakI:lockfree}.

\section{The Array Based Randomized Abortable Lock}
\label{sec:appendix:ARLockArray}

\subsection{Implementation / Low Level Description} \label{sec:ARLockArray:Implementation}
We now describe the implementation of our algorithm in detail. (See Figure~\ref{fig:ARMEAlgorithm1} and \ref{fig:ARMEAlgorithm2}).
We now describe the method calls in detail and illustrate the use of each of the internal objects as and when we require them.

\bparagraph{The \lock{} method}
Suppose $p$ executes a call to \lock{i}.
Process $p$ first receives a sequence number using a call to \getSequenceNo{} in \Line{getLock:getSequenceNo} and stores it in its local variable $s$.
Method \getSequenceNo{} returns integer $k$ on being called for the $k$-th time from a call to \lock{i}.
Since calls to \lock{i} are executed sequentially, a sequential shared counter suffices to implement method \getSequenceNo{}.
Method \getSequenceNo{} is used to return unique sequence number which helps solve the classic ABA problem.
The ABA problem is as follows:
If a process reads an object twice and reads the value of the object to be 'A' both times, then it is unable to differentiate this scenario from a scenario where the object was changed to value 'B' in between the two reads of the object.
Process $p$ then spins on \Apply[i] in \Line{getLock:ApplyBotWant} until $p$ \emph{registers} itself by swapping the value \pair{\cWant}{s} into \Apply[i] using a \CAS\ operation.
Processes write the value $\cWant$ in the \Apply\ array to announce their presence at lock \L.

Process $p$ then executes the \emph{role-loop}, lines~\ref{getLock:BeginInnerLoop}-\ref{getLock:EndInnerLoop}, until $p$ either increases the value of \ctr\ to $1$ or $2$, or until $p$ is notified of its promotion.
Process $p$ begins an iteration of the role-loop by calling the \ctr.\inc{} operation in \Line{getLock:IncCounter} and stores the returned value into $\Role[i]$.
The returned value determines $p$'s current role at lock \L.
The shared array \Role\ is used by process $p$ to store its role in slot \Role[i], which can later be read to determine the actions to perform at lock \L.
This is important because we want to allow the behavior of transferring locks.
Specifically, to enable a process $q$ to call \release{i}{} on behalf of $p$, $q$ needs to determine $p$'s role at lock \L, which is possible by reading \Role[i].


If the \ctr.\inc{} operation in \Line{getLock:IncCounter} fails, i.e., it returns $\bot$, then $p$ repeats the role-loop.
Such repeats can happen only a constant number of times in expectation (by Claim~\ref{claim:CASCounterFailureProbability}).
If the value returned in \Line{getLock:IncCounter} is $0$ or $1$, then $p$ has incremented the value of \ctr\ (from the semantics of a \RCASCounter{2} object), and it becomes \king{\L} or \queen{\L},
respectively, and breaks out of the role-loop in \Line{getLock:EndInnerLoop}.

If $p$ becomes \king{\L} in \Line{getLock:IncCounter}, then $p$ fails the if-condition of \Line{getLock:ifQueen} and proceeds to execute lines~\ref{getLock:ApplyWantOk}-\ref{getLock:end}.
In \Line{getLock:ApplyWantOk}, $p$ changes \Apply[i] to the value \pair{\cOk}{s}, to prevent itself from getting promoted in future promote actions.
In \Line{getLock:returnX}, $p$ returns from its \lock{} call by returning the special value $\infty$ (a non-$\bot$ value indicating a successful \lock{} call), since $p$ is \king{\L}.

If $p$ becomes \queen{\L} in \Line{getLock:IncCounter}, then $p$ knows that there exists a king process at lock \L, and thus \queen{\L} proceeds to spin on \X\ in \Line{getLock:awaitX} awaiting a notification from \king{\L}.
Recall that \king{\L} notifies \queen{\L} of \queen{\L}'s turn to own lock \L\ by writing the integer $j$ into \X\ during a \release{}{j} call.
Once $p$ receives \king{\L}'s notification (by reading a non-$\bot$ value in \X\ in \Line{getLock:awaitX}), $p$ breaks out of the spin loop of \Line{getLock:awaitX}, and  proceeds to execute lines~\ref{getLock:ApplyWantOk}-\ref{getLock:end}.
In \Line{getLock:ApplyWantOk}, $p$ changes \Apply[i] to the value \pair{\cOk}{s}, to prevent itself from getting promoted in future promote actions.
In \Line{getLock:returnX}, $p$ returns from its \lock{} call by returning the integer value stored in \X\ (a non-$\bot$ value indicating a successful \lock{} call).

If the value returned in \Line{getLock:IncCounter} is $2$, then $p$ does not become \king{\L} or \queen{\L}, and thus $p$ assumes the role of a pawn.
Process $p$ then waits for a notification of its own promotion, or, for the \ctr\ value to decrease from $2$, by spinning on \Apply[i] and \ctr\ in \Line{getLock:awaitAckOrCtrDecrease}.
When $p$ breaks out of this spin lock, it determines in \Line{getLock:ifBackpacked} whether it was promoted by checking whether the value of \Apply[i] was changed to \pair{\cOk}{s}.
A process is promoted only by a \king{\L}, \queen{\L} or a \ppawn{\L} during their \release{}{} call.
If $p$ finds that it was not promoted, then $p$ is said to have been \emph{missed} during a \ctr-cycle,
and thus $p$ repeats the role-loop.
We later show that a process gets missed during at most one \ctr-cycle.

If $p$ was promoted, then it  writes a constant value $\cPPawn = 3$ into $\Role[i]$ in \Line{getLock:RolePPawn} and becomes \ppawn{\L}.
Since $p$ has been promoted, $p$ knows that both \king{\L} and \queen{\L} are no longer executing their entry or Critical Section, and thus $p$ owns lock \L\ now.
Then $p$ goes on to break out of the role-loop in \Line{getLock:EndInnerLoop}, and proceeds to return from its \lock{} call by returning the special value $\infty$ (a non-$\bot$ value indicating a successful \lock{} call), since $p$ is \ppawn{\L}.

\bparagraph{The \release{}{} method}
Suppose $p$ executes a call to \release{i}{j} with an integer argument $j$.
We restrict the execution such that a process calls a \release{i}{j} method only after a call to a successful \lock{i} has been completed.

In \Line{release:setrNotCollected}, $p$ initializes the local variable $r$ to the boolean value $\False$.
Local variable $r$ is returned later in \Line{release:return} to indicate whether the integer $j$ was successfully written to \X\ during the release method call.
In lines~\ref{release:ifKing},~\ref{release:ifQueen} and~\ref{release:ifPPawn} process $p$ determines its role at the node and the action to perform.
In \Line{release:ApplyOkBot}, process $p$ deregisters itself from lock \L\ by swapping \pair{\bot}{\bot} into \Apply[i].
At the end of the method call a boolean is returned in \Line{release:return}, indicating whether the integer $j$ was written to \X.

If $p$ determines that it is \king{\L}, then it attempts to decrease \ctr\ from $1$ to $0$ in \Line{release:ctr10}.
This decrement operation will only fail if there exists a queen process at lock \L\ which increased the \ctr\ to $2$ during its \lock{} call.
If the decrement operation fails then $p$ has determined that there exists a queen process at lock \L\ and it now synchronizes with \queen{\L} to perform the collect action.
Recall that \CAS\ object \X\ is used by \king{\L} and \queen{\L} to determine which process performs a collect.
In \Line{release:setX}, $p$ attempts to swap integer $j$ into \X\ by executing a \X.\CAS{$\bot,j$} operation and stores the result of the operation in local variable $r$.
If $p$ is successful then it performs the collect action by executing a call to \doCollect{i} in line~\ref{release:doCollect}.
If $p$ is unsuccessful then it knows that \queen{\L} will perform a collect.
In \Line{release:callhRelease:King} $p$ calls the \helpRelease{i} method to synchronize the release of lock \L\ with \queen{\L}.
We describe the method \helpRelease{} shortly.

If $p$ determines that it is \queen{\L}, then it calls the \helpRelease{i} method call in \Line{release:callhRelease:Queen} to synchronize the release of lock \L\ with \king{\L}.

If $p$ determines that it is a promoted pawn, then it attempts to promote a waiting pawn by making a call to \doPromote{} in \Line{release:callPromote}.

\bparagraph{The \doCollect{} method}
Suppose a process $p$ executing a \doCollect{i} method call.
The collect action consists of reading the \Apply\ array (left to right), and creating a vector $A$ of $n$ values, where the $k$-th element is either $\bot$ (to indicate that the process with pseudo-ID $k$ is not a candidate for promotion) or an integer sequence number (to indicate that the process with pseudo-ID $k$ is a candidate for promotion).
The vector $A$ is stored in the \APArray{n} instance \PawnSet\ in \Line{collect:updateAll} using a \PawnSet.\collect{$A$} operation.
The \PawnSet.\collect{$A$} operation ensures that if the $k$-th element of \PawnSet\ has value $3 = \cAbort$ (written during a \PawnSet.\cUpdate{$k,\cdot$} operation), then the $k$-th element is not overwritten during the \PawnSet.\collect{$A$} operation.
This is required to ensure that processes that have expressed a desire to abort are not collected and subsequently promoted.

\bparagraph{The \helpRelease{} method}
Suppose \king{\L} calls \helpRelease{i} and \queen{\L} calls \helpRelease{k}.
During the course of these method calls, \king{\L} and \queen{\L} synchronize with each other in order to reset \CAS objects \X\ and \LSync, remove themselves from \PawnSet, promote a collected process and notify the promoted process.
If no process is found in \PawnSet\ that can be promoted, then the \PawnSet\ object is reset to its initial state and \ctr\ reset to $0$.
Recall that \CAS object \LSync\ is used as a synchronization primitive by \king{\L} and \queen{\L} to determine which process exits last among them, and thus performs all pending release work.
In \Line{hRelease:setT}, the process which swaps value $i$ or $k$ into \LSync\ by executing a successful \CAS operation, exits, and the other process performs the pending release work in lines~\ref{hRelease:readX} -~\ref{hRelease:end}.
Let us now refer to this other process as the releasing process. 
In lines~\ref{hRelease:readX} -~\ref{hRelease:resetX}, the releasing process resets \X\ to its initial value $\bot$.
In line~\ref{hRelease:readT}, the releasing process reads the pseudo-ID written to \LSync\ by the exited process (process that executed a successful \CAS operation on \X).
The pseudo-ID written to \LSync\ is required to remove the exited process from getting promoted in a future promote in case it was collected in \PawnSet.
In line~\ref{hRelease:collectFixOther}, the releasing process removes the exited process from \PawnSet.
\CAS object \LSync\ is reset to its initial value $\bot$ in \Line{hRelease:resetT}.
In \Line{hRelease:callPromote}, the releasing process calls \doPromote{} to promote a collected process.


\bparagraph{The \doPromote{} method}
Suppose $p$ executes a call to \doPromote{i}.
In line~\ref{promote:collectFixSelf}, $p$ removes itself from \PawnSet\ by executing a \PawnSet.\remove{$i$} operation.
It does so to prevent itself from getting promoted in case it was collected earlier.
In \Line{promote:FR12}, $p$ performs a promote action by executing a \PawnSet.\promote{} operation.
If a process was collected and the process has not aborted then its corresponding element ($k$-th element for a process with pseudo-ID $k$) in \PawnSet\ will have the value \pair{\cReg}{\cdot}.
If a process has aborted then its corresponding element in \PawnSet\ will have the value \pair{\cPro}{\cdot}.

If a successful \promote{} operation is executed then an element in \PawnSet\ is changed from \pair{\cReg}{s} to \pair{\cPro}{s}, where $s \in \N$, and the pair \pair{k}{s} is returned, where $k$ is the index of that element in \PawnSet.
In this case we say that process with pseudo-ID $k$ was \emph{promoted}.
If an unsuccessful \promote{} operation is executed, then no element in \PawnSet\ has the value \pair{\cReg}{s}, where $s \in \N$, and thus the special value \pair{\bot}{\bot } is returned.
We then say that no process was promoted.
The returned pair is stored in local variables \pair{j}{seq} in \Line{promote:FR12}.

If no process was promoted, then $p$ resets \PawnSet\ to its initial value in \Line{promote:resetBackpack} using the \reset{} operation, and decreases \ctr\ from $2$ to $0$ in \Line{promote:ctr20}.
If a process was found and promoted in \PawnSet, then that process is notified of its promotion, by swapping its corresponding \Apply\ array element's value from $\cWant$ to $\cOk$ using a \CAS operation in \Line{promote:ApplyWantOk}.

Recall that, while executing a \lock{} method call a process may receive a signal to abort.
Suppose a process $p$ receives a signal to abort while executing a \lock{i} method call.
If process $p$ is busy-waiting in lines~\ref{getLock:ApplyBotWant},~\ref{getLock:awaitAckOrCtrDecrease} or~\ref{getLock:awaitX}, then $p$ stops executing \lock{i}, and instead executes a call to \abort{i}.
If $p$ is poised to execute any line~\ref{getLock:ApplyWantOk} or~\ref{getLock:end} then it completes its call to \lock{i}.
If $p$ is poised to execute any other line then it continues executing \lock{i} until it begins to busy-wait in lines~\ref{getLock:ApplyBotWant},~\ref{getLock:awaitAckOrCtrDecrease} or~\ref{getLock:awaitX}, at which point it stops and calls \abort{i}.
If $p$ does not begin to busy-wait in lines~\ref{getLock:ApplyBotWant},~\ref{getLock:awaitAckOrCtrDecrease} or~\ref{getLock:awaitX} then it completes its \lock{i} call.

\bparagraph{The \abort{} method}
Suppose $p$ executes a call to \abort{i}.
Process $p$ first determines whether it quit \lock{i} while busy-waiting on \Apply[i] in \Line{getLock:ApplyBotWant}, and if so, $p$ returns $\bot$ in \Line{abort:returnbot}.
If not, then $p$ changes \Apply[i] to the value $\cOk$ in \Line{abort:ApplyWantOk}, to prevent itself from getting collected in future collect actions.
In \Line{abort:ifPawn}, process $p$ determines whether it quit \lock{i} while busy-waiting on \Apply[i] in line~\ref{getLock:awaitAckOrCtrDecrease} or~\ref{getLock:awaitX}, or  while busy-waiting on \X\ in \Line{getLock:awaitX}.
If $p$ quit while busy-waiting on \Apply[i] then clearly it is a pawn process, and if it quit while busy-waiting on \X\ then it is a queen process.

If process $p$ determines that it is a pawn then it attempts to remove itself from \PawnSet\ by executing a \PawnSet.\cUpdate{$i,s$} operation in \Line{abort:ifHead}, where $s$ was the sequence number returned in \Line{getLock:getSequenceNo}.
If $p$ has not been promoted yet, then the operation succeeds and $p$'s corresponding element in \PawnSet\ is changed to a value \pair{\cAbort}{s}, thus making sure that $p$ can not be collected or promoted anymore.
If $p$ has already been promoted then the operation fails and $p$ now knows that it is has been promoted, and assumes the role of a promoted pawn, and in \Line{abort:RolePPawn}, $p$ writes $\cPPawn$ into \Role$[i]$ and returns the special value $\infty$ in \Line{abort:returninfty}.

If process $p$ determines that it is \queen{\L} then it first attempts to swap a special value $\infty$ into \X\ in \Line{abort:setX} by executing a \X.\CAS{$\bot,\infty$} operation to indicate its desire to abort.
If $p$ is successful then $p$ has determined that it is the first (among \king{\L} and itself) to exit, and therefore $p$ performs the collect action by calling \doCollect{i} in \Line{abort:doCollect}.
Process $p$ then makes a call to \helpRelease{i} in \Line{abort:callhRelease} to help release lock \L\ by synchronizing with \king{\L}.

If $p$ was unsuccessful at swapping value $\infty$ into \X\ then it knows the \king{\L} is executing \release{}{}, and \king{\L} will eventually perform the collect action.
Then $p$ has determined that it is the current owner of lock \L, and returns the integer value stored in \X\ in \Line{abort:returnX}.

Process $p$ executes \Line{abort:ApplyOkBot} only if $p$ successfully aborted earlier in its \abort{} call, and thus it deregisters itself from lock \L\ by swapping \pair{\bot}{\bot} into \Apply[i].
Finally, in \Line{abort:returnr}, $p$ returns $\bot$ to indicate a successful abort (i.e., a failed \lock{} call).

\subsection{Analysis and Proofs of Correctness}\label{sec:Analysis}

Let $H$ be an arbitrary history of an algorithm that accesses an instance, \L, of object \ARMLockArray{n}, where the following safety conditions hold.

\begin{condition}
\label{cond:safety:ARLockArray}
\begin{enumerate}[(a)]
  \item No two \lock{i} calls are executed concurrently for the same $i$, where $i \in \Set{0,\ldots,n-1}$.
   \item If a process $p$ executes a successful \lock{i} call, then some process $q$ eventually executes a \release{i}{} call where the invocation of \release{i}{} happens after the response of \lock{i} (assuming the scheduler is such that $q$ continues to make progress until its \release{i}{} call happens).\label{condition:ifLockThenRelease}
  \item For every \release{i}{} call, there must exist a unique successful \lock{i} call that completed before the invocation of the \release{i}{} call.\label{condition:ifReleaseThenExistsLock}

\end{enumerate}
\end{condition}

Then the following claims hold for history $H$.

%
%



\begin{lemma} \label{cl:MethodsWaitfree}
Methods \release{i}{j}, \abort{i}, \helpRelease{i}, \doCollect{i}, \doPromote{i} are wait-free.
\end{lemma}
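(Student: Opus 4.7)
The plan is to establish wait-freedom bottom-up, following the call-dependency graph among the five methods. Inspecting the pseudocode reveals that \doCollect{} and \doPromote{} are leaves (they call no other method in the list), \helpRelease{} calls only \doPromote{}, and \release{} and \abort{} call combinations of \doCollect{}, \helpRelease{}, and \doPromote{}. In particular, there is no mutual recursion, so wait-freedom can be proved in this order without circularity.

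First I would handle \doCollect{i}: its body is a single \For loop of exactly $n$ iterations, each consisting of a constant number of reads of shared registers, followed by one \PawnSet.\collect{} invocation. Since \PawnSet\ is an instance of \UC{\APArray{n}}, and operations on it are wait-free with \Order{n} step complexity by Lemma~\ref{theorem:UC}, \doCollect{i} terminates in \Order{n} steps. Next, \doPromote{i} executes a constant number of shared operations: one \PawnSet.\remove, one \PawnSet.\promote, and then either a \PawnSet.\reset\ together with a \ctr.\CAS, or a single \CAS on an \Apply\ entry. Each of the underlying calls is wait-free by Lemma~\ref{theorem:UC} (for \PawnSet) and Lemma~\ref{thm:ctr:linearizable} (for \ctr), so \doPromote{} is wait-free.

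Now \helpRelease{i} is wait-free: if the initial \LSync.\CAS\ succeeds the method returns immediately, and otherwise it performs a bounded number of \Read/\CAS\ operations on \X\ and \LSync, one \PawnSet.\remove, and a single call to \doPromote{}, which is wait-free by the previous step. Finally, \release{i}{j} and \abort{i} are each straight-line code (no loops at all in these bodies) consisting of a bounded number of read, write, and \CAS\ operations, plus at most one call apiece to \doCollect{}, \helpRelease{}, and \doPromote{}. Crucially, every spin loop of the algorithm (the \Await\ statements of lines~\ref{getLock:ApplyBotWant},~\ref{getLock:awaitAckOrCtrDecrease},~\ref{getLock:awaitX}) lives inside \lock{}, not inside these methods, so no busy-waiting is present.

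The main obstacle is not technical difficulty but simply verifying this carefully: one must confirm by inspection of Figures~\ref{fig:ARMEAlgorithm1} and~\ref{fig:ARMEAlgorithm2} that none of the five methods contains an unbounded loop or an \Await, and that every invoked sub-operation (on \ctr, \PawnSet, \X, \LSync, or an \Apply\ entry) is itself wait-free with the required step complexity. Once this is done, the conclusion follows directly.
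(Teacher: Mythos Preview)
Your proposal is correct and takes essentially the same approach as the paper: both argue by direct inspection of the code, noting the absence of unbounded loops and the wait-freedom of the underlying primitives. The paper's own proof is simply the one-liner ``Follows from an inspection of these methods,'' so your bottom-up call-graph analysis is a fully spelled-out version of what the paper leaves implicit.
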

\begin{proof}
Follows from an inspection of these methods.
\end{proof}

\begin{claim} 
No two \release{i}{} calls where a shared memory step is pending, are executed concurrently for the same $i$, where $i \in \Set{0,\ldots,n-1}$.
\label{claim:releaseNotConcurrent}
\end{claim}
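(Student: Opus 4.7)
I would argue by contradiction, tracking the evolving value of $\Apply[i]$. Suppose two \release{i}{} calls $R$ and $R'$ both still have a shared-memory step pending at the same moment. By Condition~\ref{cond:safety:ARLockArray}(c), each is paired with a unique successful \lock{i} call, say $L$ and $L'$ with $L\neq L'$; by Condition~\ref{cond:safety:ARLockArray}(a) these two lock calls are not concurrent, so without loss of generality $L$ responds before $L'$ is invoked.

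First I would pin down the state of $\Apply[i]$ right after $L$ returns successfully. On every successful-\lock{i} path the register ends up holding $\pair{\cOk}{s_L}$, where $s_L$ is the unique sequence number that $L$ obtained from \getSequenceNo{} (in the ``promotion'' and ``abort-became-success'' paths this value is installed by \Line{promote:ApplyWantOk} or \Line{abort:ApplyWantOk} rather than \Line{getLock:ApplyWantOk}, but the ending state $\pair{\cOk}{s_L}$ is the same). Inspecting every line that writes $\Apply[i]$, the only operations that can flip it from some $\pair{\cOk}{s}$ to $\pair{\bot}{\bot}$ are \Line{release:ApplyOkBot} of \release{}{} and \Line{abort:ApplyOkBot} of \abort{}, and each succeeds only when its local $s$ matches the second coordinate currently in the register.

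The crux is to show that, starting from $\Apply[i]=\pair{\cOk}{s_L}$, the very next successful write to $\Apply[i]$ must be $R$'s own \Line{release:ApplyOkBot}. For the \abort{} candidate, the aborter's local $s$ would have to equal $s_L$, which only $L$'s own process ever holds; but since $L$ succeeded it either returns inside \lock{} or returns from \abort{} at \Line{abort:returninfty} or \Line{abort:returnX}, never reaching \Line{abort:ApplyOkBot}. For any \release{i}{} call other than $R$, the local $s$ is read at \Line{release:ApplyRead} and hence equals the second coordinate of $\Apply[i]$ at that read; a clean way to rule this out is an induction over the successful \lock{i} calls ordered by Condition~\ref{cond:safety:ARLockArray}(a), showing that $R_k$'s \Line{release:ApplyOkBot} is the unique CAS that resets $\Apply[i]$ to $\pair{\bot}{\bot}$ between $L_k$'s response and $L_{k+1}$'s \Line{getLock:ApplyBotWant}.

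To finish, $L'$'s \Line{getLock:ApplyBotWant} requires $\Apply[i]=\pair{\bot}{\bot}$, so by the previous step it can succeed only after $R$'s \Line{release:ApplyOkBot}; hence $L'$ responds, and by Condition~\ref{cond:safety:ARLockArray}(c) $R'$ is invoked, strictly after $R$'s \Line{release:ApplyOkBot}. But \Line{release:ApplyOkBot} is $R$'s last shared-memory step (\Line{release:return} merely returns a local variable), so $R$ has no pending shared-memory step at the instant $R'$ begins---the desired contradiction. The main obstacle I expect is the ``first reset'' claim in the third paragraph; once it is isolated as a lemma and proved by induction over successful \lock{i} calls (using the uniqueness of sequence numbers from \getSequenceNo{} together with the \CAS semantics on $\Apply[i]$), the remainder of the argument is bookkeeping.
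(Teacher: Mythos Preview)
Your proposal is correct and follows essentially the same strategy as the paper: both arguments track $\Apply[i]$ to show that the later successful \lock{i} call cannot pass its registration CAS on $\Apply[i]$ until the earlier \release{i}{} has executed its final shared-memory step (line~\ref{release:ApplyOkBot}). The only structural difference is that the paper takes as its contradiction hypothesis that concurrency occurs ``for the first time at time $t$'', which lets it avoid the explicit induction you outline over successive \lock{i} calls; your inductive lemma is a slightly more careful packaging of the same idea and would also go through.
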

\begin{proof}
Assume for the purpose of a contradiction that two processes are executing a call to \release{i}{} concurrently for the first time at time $t$.
Then from Condition~\ref{cond:safety:ARLockArray}(b)-(c), it follows that two successful calls to \lock{i}{} were executed before $t$.
From condition~\ref{cond:safety:ARLockArray}(a) it follows that the two successful \lock{i} calls did not overlap.
Consider the first successful \lock{i} call executed by some process $p$.
Since the \lock{i} call returned a non-$\bot$ value, the method did not return from line~\ref{abort:returnbot}.
Then $p$ did not abort while busy-waiting in \Line{getLock:ApplyBotWant}, and thus \apply$[i]$ was set to a non-$\pair{\bot}{\bot}$ value in \Line{getLock:ApplyBotWant} during the first \lock{i} call.
Let $t'$ be the point in time when \apply$[i]$ was set to a non-$\pair{\bot}{\bot}$ value in \Line{getLock:ApplyBotWant}.
We now show that the $\apply[i] \neq \pair{\bot}{\bot}$ in the duration between $[t',t]$.
Suppose not, i.e., some process resets \apply$[i]$ to \pair{\bot}{\bot} during $[t',t]$.
Now, \apply$[i]$ is reset to a \pair{\bot}{\bot} value only in \Line{abort:ApplyOkBot} during \abort{i} or in \Line{release:ApplyOkBot} during \release{i}{}.

\textbf{Case a - } \apply$[i]$ reset to \pair{\bot}{\bot} in \Line{release:ApplyOkBot} during \release{i}{}.
Then the last shared memory step of the \release{i}{} has been executed, and the call has ended for the purposes of the claim.
Then the two \release{i}{} calls are not concurrent at $t$, a contradiction.

\textbf{Case b - } \apply$[i]$ reset to \pair{\bot}{\bot} in \Line{abort:ApplyOkBot} during \abort{i}.
Since the two \lock{i} calls are not concurrent it follows that $\apply[i] \neq \pair{\bot}{\bot}$ at the end of the first \lock{i} call, and thus \apply$[i]$ is reset to \pair{\bot}{\bot} in \Line{abort:ApplyOkBot} during the second successful \lock{i} call.
Now consider the second successful \lock{i} call executed by some process $q$.
Then $q$ would repeatedly fail the \apply$[i]$.\CAS{$\pair{\bot}{\bot},\cdot$} operation of \Line{getLock:ApplyBotWant}, and the only way $q$'s \lock{i} call could finish, is if $q$ aborts the busy-wait loop of \Line{getLock:ApplyBotWant}.
In which case $q$ executes \abort{i}, and satisfies the if-condition of \Line{abort:ifFlag} and return $\bot$ in \Line{abort:returnbot}.
Then the second \lock{i} does not reset \apply$[i]$ in \Line{abort:ApplyOkBot} during \abort{i} -- a contradiction.

Since $\apply[i] \neq \pair{\bot}{\bot}$ throughout $[t',t]$, it then follows from the same argument of \textbf{Case b}, that the second \lock{i} call is unsuccessful, and thus a contradiction.
\end{proof}

From Claim~\ref{claim:releaseNotConcurrent} and Condition~\ref{cond:safety:ARLockArray}(a) it follows that no two calls to \lock{p} or \release{p}{} are executed concurrently for the same $p$, where $p \in \Set{0,\ldots,n-1}$.
Then we can label the process executing a \lock{p} or \release{p}{} call, simply $p$, without loss of generality.
We do so to make the rest of the proofs easier to follow.


\bparagraph{Helpful claims based on variable usage}

\begin{claim} \label{cl:basic:Role}
  \begin{enumerate}[(a)]
   \item   $\Role[p]$ is changed by process $q$, only if $q=p$. \label{scl:RoleOnlyp}
  \item   $\Role[p]$ is unchanged during \release{p}{}. \label{scl:RoleUnchanged}
 \item $\Role[p]$ can be set to value \cKing, \cQueen\ or \cPawn\ only when $p$ executes \Line{getLock:IncCounter} during \lock{p}. \label{scl:RoleIncCounter}
  \item $\Role[p]$ is set to value \cPPawn\ only when $p$ executes \Line{getLock:RolePPawn} during \lock{p} or when $p$ executes \Line{abort:RolePPawn} during \abort{p}.    \label{scl:RolePPawn}
  \end{enumerate}
\end{claim}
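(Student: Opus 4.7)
The plan is to establish all four parts by direct inspection of the pseudocode in Figures~\ref{fig:ARMEAlgorithm1} and \ref{fig:ARMEAlgorithm2}, by enumerating every line at which the shared array \Role\ is written. An exhaustive scan of the five methods \lock{}, \abort{}, \release{}{}, \doCollect{}, \doPromote{}, and \helpRelease{} shows that \Role[i] is assigned only in three lines: \Line{getLock:IncCounter} and \Line{getLock:RolePPawn} (inside \lock{i}), and \Line{abort:RolePPawn} (inside \abort{i}). I would first state this as a preliminary observation, since all four parts of the claim reduce to it.

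From that observation part~\refC{scl:RoleOnlyp} is immediate: each of the three writing lines occurs inside \lock{i} or \abort{i}, and by our labelling convention (justified right before the claim, using Condition~\ref{cond:safety:ARLockArray}(a) and Claim~\ref{claim:releaseNotConcurrent}) the only process that executes \lock{i} or \abort{i} is process $i$ itself; hence any write to $\Role[p]$ is performed by process $p$. Part~\refC{scl:RoleUnchanged} is also immediate, since none of the three writing lines occurs in \release{}{}, nor in the sub-methods \doCollect{}, \doPromote{}, or \helpRelease{} that \release{}{} invokes.

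For parts~\refC{scl:RoleIncCounter} and~\refC{scl:RolePPawn} I would classify the three writing lines by the value they can assign. The write in \Line{getLock:IncCounter} stores the result of $\ctr.\inc{}$, which by Lemma~\ref{thm:ctr:linearizable} (together with the failure branch setting $\bot$) produces only values in $\{0,1,2,\bot\}=\{\cKing,\cQueen,\cPawn,\bot\}$, while the writes in \Line{getLock:RolePPawn} and \Line{abort:RolePPawn} store only the constant $\cPPawn$. Since these constants are pairwise distinct, a write assigning \cKing, \cQueen\ or \cPawn\ must come from \Line{getLock:IncCounter}, proving part~\refC{scl:RoleIncCounter}, and a write assigning \cPPawn\ must come from \Line{getLock:RolePPawn} or \Line{abort:RolePPawn}, proving part~\refC{scl:RolePPawn}.

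This claim is entirely syntactic, so no adversary or interleaving argument is required; the only mild subtlety is making sure that the sub-methods invoked by \release{}{} (namely \doCollect{}, \helpRelease{}, and \doPromote{}) really do not write to \Role, which I would verify explicitly when giving the enumeration. That verification is the one spot where a careless reader could slip, but it is routine.
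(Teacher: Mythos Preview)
Your proposal is correct and takes essentially the same approach as the paper, which simply states ``All claims follow from an inspection of the code.'' Your version is just a more detailed, explicit write-up of that inspection, and the enumeration of the three writing lines together with the value analysis is exactly the right way to make the one-line proof precise.
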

\begin{proof}
 All claims follow from an inspection of the code. 
\end{proof}

\begin{claim} \label{cl:basic:Collect}
  \begin{enumerate}[(a)]
  \item The only operations on \PawnSet\ are \collect{$A$}, \promote{}, \remove{$i$}, \remove{$j$}, \cUpdate{$k,s$} and \reset{} (in lines~\ref{collect:updateAll},~\ref{promote:FR12},~\ref{promote:collectFixSelf},~\ref{hRelease:collectFixOther},~\ref{abort:ifHead} and~\ref{promote:resetBackpack}, respectively) where $A$ is a vector with values in $\Set{\bot} \cup \N$, and $i,j,k \in \Set{0,1,\ldots,n-1}$, and $s \in \N$.\label{scl:CollectOps}
  \item  The $i$-th entry of \PawnSet\ can be changed to $\pair{\cReg}{s} = \pair{1}{s}$,  where $s \in \N$, only when a process executes a \PawnSet.\collect{$A$} operation in \Line{collect:updateAll} where $A[i] = s$.\label{scl:CollectRegister}
  \item  The $i$-th entry of \PawnSet\ can be changed to $\pair{\cPro}{s} = \pair{2}{s}$, where $s \in \N$, only when a process executes a \PawnSet.\promote{} operation in \Line{promote:FR12}.  \label{scl:CollectPromote}
  \item  The $i$-th entry of \PawnSet\ can be changed to $\pair{\cAbort}{s} = \pair{3}{s}$, where $s \in \N$, only when a process executes a \PawnSet.\remove{$i$}, \PawnSet.\remove{$j$} or \PawnSet.\cUpdate{$k,s$} operation in lines ~\ref{promote:collectFixSelf},~\ref{hRelease:collectFixOther} or~\ref{abort:ifHead}, respectively. \label{scl:CollectDeregister}
  \end{enumerate}
\end{claim}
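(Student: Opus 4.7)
The plan is to argue all four parts by straightforward code inspection combined with the sequential specification of the underlying type.

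For part~(a), I would systematically scan each method in Figures~\ref{fig:ARMEAlgorithm1} and~\ref{fig:ARMEAlgorithm2} (namely \lock{i}, \abort{i}, \doCollect{i}, \release{i}{j}, \helpRelease{i}, and \doPromote{i}) and enumerate every line that references the shared variable \PawnSet. A line-by-line check confirms that \PawnSet\ is touched exactly at the six lines listed in the claim: \PawnSet.\collect{$A$} in \Line{collect:updateAll}, \PawnSet.\promote{} in \Line{promote:FR12}, \PawnSet.\remove{$i$} in \Line{promote:collectFixSelf}, \PawnSet.\remove{$j$} in \Line{hRelease:collectFixOther}, \PawnSet.\cUpdate{$k,s$} in \Line{abort:ifHead}, and \PawnSet.\reset{} in \Line{promote:resetBackpack}. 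It also needs to be verified that the arguments satisfy the stated type constraints: the array $A$ is built in \doCollect{} in lines~\ref{collect:collectLoop}--\ref{collect:updateAll}, where each entry is assigned either a sequence number $seq$ or $\bot$, hence $A$ takes values in $\{\bot\}\cup\N$; the indices $i,j,k$ are pseudo-IDs in $\{0,\dots,n-1\}$ passed as method arguments; and the sequence number $s$ supplied to \cUpdate{} is the value returned by \getSequenceNo{}, hence in $\N$.

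For parts~(b)--(d), I would invoke the sequential specification of type \APArray{n} given in Figure~\ref{fig:SArray:seq}. Since \PawnSet\ is implemented by \UC{\APArray{n}}, which by Lemma~\ref{theorem:UC} is a linearizable universal construction, every state transition of \PawnSet\ corresponds to applying one of the operations in that specification. Reading off the sequential code of each operation, the only operation that writes the tag \cReg\ into the $i$-th entry is \collect{$X$}, and it does so precisely with second component $X[i]$ when $X[i]\neq\bot$ and the current tag is not \cAbort; the only operation that writes the tag \cPro\ into the $i$-th entry is \promote{}; and the only operations that write the tag \cAbort\ into the $i$-th entry are \cUpdate{$i,s$} (with the supplied $s$) and \remove{$i$} (with the previously stored $s$, which by part~(a) and induction on steps is in $\N$). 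Combining these facts with the enumeration from part~(a) yields exactly the lines listed in~(b), (c) and~(d); in particular the operation \reset{} only writes the pair $\pair{0}{\bot}$, which has none of the three tags \cReg, \cPro, \cAbort, so it never interferes.

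The only step that requires any care is making the enumeration in part~(a) exhaustive, since an overlooked reference to \PawnSet\ would invalidate parts~(b)--(d). However, the code is short enough that a single systematic pass through the six methods suffices, and I do not expect any genuine mathematical obstacle.
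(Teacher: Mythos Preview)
Your proposal is correct and follows essentially the same approach as the paper: part~(a) by code inspection, and parts~(b)--(d) from part~(a) combined with the semantics of type \APArray{n}. Your write-up is considerably more detailed than the paper's two-line proof (in particular your explicit appeal to linearizability via Lemma~\ref{theorem:UC} and your observation about \reset{}), but the underlying argument is identical.
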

\begin{proof}
Part~\refC{scl:CollectOps} follows from an inspection of the code.
Parts~\refC{scl:CollectRegister},~\refC{scl:CollectPromote} and~\refC{scl:CollectDeregister} follow from Part~\refC{scl:CollectOps} and the semantics of type \APArray{n}.
\end{proof}

\begin{claim} \label{cl:basic:Apply}
Let $s \in \N$.
\begin{enumerate}[(a)]
  \item \Apply$[p]$ is changed from $\pair{\bot}{\bot}$ to a non-$\pair{\bot}{s}$ value only when process $p$ executes a successful \Apply$[p]$.\CAS{$\pair{\bot}{\bot},\pair{\cWant}{s}$} operation in \Line{getLock:ApplyBotWant}. \label{scl:ApplyRegister}
  \item \Apply$[p]$ is changed to value $\pair{\cWant}{s}$ only when process $p$ executes a successful \Apply$[p]$.\CAS{$\pair{\bot}{\bot},\pair{\cWant}{s}$} operation in \Line{getLock:ApplyBotWant}. \label{scl:ApplyRegister2}
  \item \Apply$[p]$ is changed to a $\pair{\bot}{\bot}$ value only when $p$ executes a successful \Apply$[p]$.\CAS{$\pair{\cOk}{s},\pair{\bot}{\bot}$} operation either in \Line{abort:ApplyOkBot} or \Line{release:ApplyOkBot}. \label{scl:ApplyDeregister}
\end{enumerate}
\end{claim}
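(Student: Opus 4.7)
The plan is a straightforward code inspection. First I would enumerate every line at which \Apply\ is ever written. From Figures~\ref{fig:ARMEAlgorithm1} and \ref{fig:ARMEAlgorithm2}, the only such lines are \Line{getLock:ApplyBotWant}, \Line{getLock:ApplyWantOk}, \Line{abort:ApplyWantOk}, \Line{abort:ApplyOkBot}, \Line{release:ApplyOkBot}, and \Line{promote:ApplyWantOk}; every one of these is a CAS with a concrete expected/new value pair. Because of Condition~\ref{cond:safety:ARLockArray}(a) together with Claim~\ref{claim:releaseNotConcurrent}, we may label the invoker of \lock{p}, \release{p}{}, and \abort{p} simply $p$, so that among the six lines above, the first five always operate on \Apply$[p]$ with invoker $p$, and only \Line{promote:ApplyWantOk} writes to someone else's slot.

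For part (a), I would observe that each of the other five CAS operations has an expected value from the set $\{\pair{\cWant}{\cdot},\pair{\cOk}{\cdot}\}$, which is disjoint from $\{\pair{\bot}{\bot}\}$, so none of them can succeed when \Apply$[p]$ equals $\pair{\bot}{\bot}$. The only operation that can change \Apply$[p]$ away from $\pair{\bot}{\bot}$ is thus the CAS in \Line{getLock:ApplyBotWant}, and its new value is $\pair{\cWant}{s}$. Since this line is executed only in \lock{p} by $p$, the claim follows. For part (b), the same enumeration shows that the new-value field is $\pair{\cWant}{s}$ only in \Line{getLock:ApplyBotWant}; every other listed CAS writes $\pair{\cOk}{\cdot}$ or $\pair{\bot}{\bot}$, so only \Line{getLock:ApplyBotWant} can install the value $\pair{\cWant}{s}$, again executed only by $p$ on \Apply$[p]$. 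For part (c), the only two CAS operations writing $\pair{\bot}{\bot}$ are those in \Line{abort:ApplyOkBot} and \Line{release:ApplyOkBot}; their expected value is $\pair{\cOk}{s}$, and both are invoked by $p$ on \Apply$[p]$.

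The one mild subtlety — and the only candidate for an obstacle — is \Line{promote:ApplyWantOk}, the sole line at which a process writes to a slot of \Apply\ belonging to another process. I would handle this by noting that its expected value is $\pair{\cWant}{seq}$ and its new value is $\pair{\cOk}{seq}$, so it can never produce a transition from $\pair{\bot}{\bot}$ (ruling it out for (a)), never install a $\pair{\cWant}{\cdot}$ value (ruling it out for (b)), and never install $\pair{\bot}{\bot}$ (ruling it out for (c)). Having dispatched this case, all three parts follow immediately from the enumeration above.
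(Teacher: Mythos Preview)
Your proposal is correct and is essentially the same approach as the paper's own proof, which consists of the single sentence ``Parts~\refC{scl:ApplyRegister},~\refC{scl:ApplyRegister2} and~\refC{scl:ApplyDeregister} follow from an inspection of the code.'' Your write-up simply makes that inspection explicit by enumerating the six lines that touch \Apply\ and checking the expected/new values of each CAS.
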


\begin{proof}
Parts~\refC{scl:ApplyRegister},~\refC{scl:ApplyRegister2} and~\refC{scl:ApplyDeregister} follow from an inspection of the code.
\end{proof}



\bparagraph{Helpful Notations and Definitions}
We now establish a notion of time for our history $H$.
Let the $i$-th step in $H$ occur at time $i$.
Then every point in time during $H$ is in $\mathbb{N}$.

Let $t_p^{i}$ denote the point in time immediately after process $p$ has finished executing line $i$, and no process has taken a step since $p$ has executed the last operation of line $i$ (This operation can be the response of a method call made in line $i$).
Since some private methods are invoked from more than one place in the code, the point in time $t_p^{i}$, where $i$ is a line in the method, does not refer to a unique point in time in history $H$.
In those cases we make sure that it is clear from the context of the discussion, which point $t_p^{i}$ refers to.
Let $t_p^{i-}$ denote the point in time when $p$ is poised to execute line $i$, and no other process takes steps before $p$ executes line $i$.

Let $p$ be an arbitrary process and $s$ be an arbitrary integer.
We say process $p$ \emph{registers}, when it executes a successful \Apply$[p]$.\CAS{$\pair{\bot}{\bot},\pair{\cWant}{s}$} operation in \Line{getLock:ApplyWantOk}.
Process $p$ \emph{captures} and \emph{wins} lock \L\ when it returns from \lock{p} with a non-$\bot$ value.
Process $p$ is said to \emph{promote} another process $q$ if $p$ executes a \PawnSet.\promote{} operation in \Line{promote:FR12} that returns a value $\pair{q}{s}$, where $s \in \N$.
A process $p$ is said to be \emph{promoted} at lock $\L$, if some process $q$ executes a \PawnSet.\promote{} operation that returns value \pair{p}{s}, where $s \in \N$.

Process $p$ is said to \emph{hand over} lock \L\ to process $q$ if it executes a successful \CAS operation \L.\X.\CAS{$\bot,j$} in \Line{release:setX}, where $q$ is the process that last increased $\ctr$ from $1$ to $2$.
Process $p$ is said to have \emph{released} lock \L\ by executing a successful \ctr.\CAS{$1,0$} operation in \Line{release:ctr10}, or by executing a successful \ctr.\CAS{$2,0$} operation in \Line{promote:ctr20}.
Process $p$ either hands over, promotes a process, or releases lock \L\ during a call to \L.\release{p}{j} where $j$ is an arbitrary integer.
A process \emph{ceases to own} a lock either by releasing lock \L\ or by promoting another process, or by handing over lock \L\ to some other process.
Process $p$ is \emph{deregistered} when $p$ executes a successful \Apply$[p]$.\CAS{$\pair{\cOk}{s},\pair{\bot}{\bot}$} operation in \Line{abort:ApplyOkBot} or~\ref{release:ApplyOkBot}.
A process $p$ is said to be \emph{not registered in \PawnSet} if the $p$-th entry of \PawnSet\ is not value $\pair{\cReg}{s}$, where $s \in \N$.
The repeat-until loop starting at \lref{getLock:BeginInnerLoop} and ending at \lref{getLock:EndInnerLoop} is called \emph{role-loop}.

In some of the proofs we use represent an execution using diagrams, and the legend for the symbols used in the diagrams is given in Figure~\ref{fig:Legend}.

\begin{figure*}[!htbp]
\centering
\includegraphics[scale=0.8]{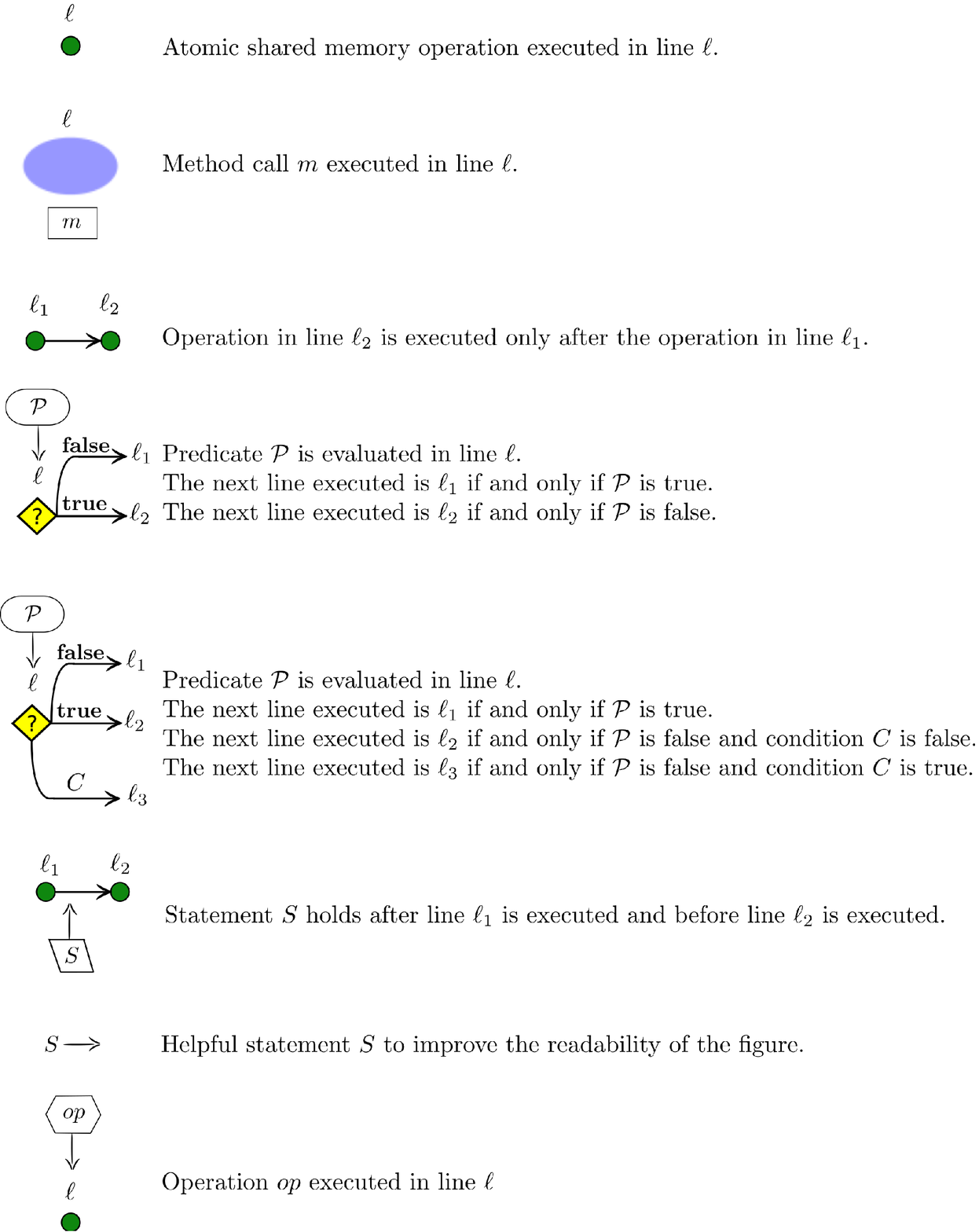}
\caption{Legend for Figures~\ref{fig:KsGetLock} to \ref{fig:PisRelease}}
\label{fig:Legend}
\end{figure*}




\bparagraph{Releasers of lock and Cease-release events}



A process $p$ becomes a \emph{releaser} of lock \L\ at time $t$ when
\begin{enumerate}[(R1)]
  \item $p$ increases \ctr\ to $1$ (i.e., \ctr.\inc{} returns $0 = \cKing$) or $2$ (i.e., \ctr.\inc{} returns $1 = \cQueen$), or when
  \item $p$ is promoted at lock \L\ by some process $q$ .
\end{enumerate}

\begin{claim}
\begin{enumerate}[(a)]
  \item $p$ executes a \ctr.\CAS{$1,0$} operation only in \Line{release:ctr10} during \release{p}{j}.\label{scl:def:phi}
  \item $p$ executes a \LSync.\CAS{$\bot,p$} operation only in \Line{hRelease:setT} during $p$'s call to \helpRelease{p}.\label{scl:def:tau}
  \item $p$ executes a \PawnSet.\promote{} operation only in \Line{promote:FR12} during $p$'s call to \doPromote{p}.\label{scl:def:pi}
  \item $p$ executes a  \ctr.\CAS{$2,0$} operation only in \Line{promote:ctr20} during $p$'s call to \doPromote{p}.\label{scl:def:theta}
\end{enumerate}
\label{cl:basic:ReleaseEvents}
\end{claim}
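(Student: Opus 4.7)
The plan is to prove each of the four parts by a straightforward case analysis over the lines of the code in Figures~\ref{fig:ARMEAlgorithm1} and~\ref{fig:ARMEAlgorithm2}: for each part, enumerate every line in the implementation of \ARMLockArray{n} at which an operation on the relevant shared object (\ctr, \LSync, or \PawnSet) is invoked, and check which of them match the specified operation signature and argument pattern.

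For part~(a), I would scan all lines in which \ctr\ is accessed. The only \ctr.\CAS{} invocations in the implementation are \Line{release:ctr10} with arguments $(1,0)$ inside method \release{}{} and \Line{promote:ctr20} with arguments $(2,0)$ inside \doPromote{}; operations \ctr.\inc{} and \ctr.\Read{} are not \CAS{} operations. Hence a \ctr.\CAS{$1,0$} can only occur in \Line{release:ctr10}, which by the code structure lies inside method \release{p}{j}. For part~(d) the argument is symmetric: the only \ctr.\CAS{$2,0$} invocation is in \Line{promote:ctr20} inside \doPromote{p}.

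For part~(b), I would list all \CAS{} operations on \LSync. There are exactly two: \LSync.\CAS{$\bot,i$} in \Line{hRelease:setT} and \LSync.\CAS{$j,\bot$} in \Line{hRelease:resetT}. Only the first has first argument $\bot$ and second argument equal to the calling process's pseudo-ID, so the only operation of the form \LSync.\CAS{$\bot,p$} performed by process $p$ is the one in \Line{hRelease:setT}, and that line lies inside \helpRelease{p}. For part~(c), I would search for \PawnSet.\promote{} invocations; the only one in the whole algorithm appears in \Line{promote:FR12}, which lies inside method \doPromote{p}.

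No obstacle is expected: each part reduces to a textual scan of the pseudocode and the observation that a given line belongs to a single method body. The only minor care needed is to rule out that the operation could be executed on a different shared object with the same method name or argument; since \ctr, \LSync, and \PawnSet\ are distinct named fields of the \ARMLockArray{n} object, this is immediate from inspection. Thus the lemma follows entirely by inspection of the code, as the authors indicate in the analogous Claim~\ref{cl:basic:Role}.
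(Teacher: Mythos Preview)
Your proposal is correct and takes essentially the same approach as the paper, which simply states that all parts follow from an inspection of the code. Your version is more explicit in enumerating the relevant lines, but the underlying argument is identical.
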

\begin{proof}
 All claims follows from an inspection of the code.
\end{proof}

We now define the following \emph{cease-release} events with respect to $p$ :
\begin{enumerate}
 \item[$\phi_p$:] $p$ executes a successful \ctr.\CAS{$1,0$} (at \pt{p}{release:ctr10} during \release{p}{j}).
 \item[$\tau_p$:] $p$ executes a successful \LSync.\CAS{$\bot,p$} (at \pt{p}{hRelease:setT} during \helpRelease{p}).
 \item[$\pi_p$:] $p$ promotes some process $q$ (at \pt{p}{promote:FR12}
during \doPromote{p}).
 \item[$\theta_p$:] $p$ executes an operation \ctr.\CAS{$2,0$} (at \pt{p}{promote:ctr20} during \doPromote{p}).
\end{enumerate}

Process $p$ \emph{ceases} to be a releaser of lock \L\ when one of $p$'s cease-release events occurs.
We say process $p$ is a releaser of lock \L\ at any point after it becomes a releaser and before it ceases to be a releaser.

\begin{claim}
\begin{enumerate}[(a)]
 \item Method \doCollect{p} is called only by process $p$ in lines~\ref{abort:doCollect} and~\ref{release:doCollect}. \label{scl:doCollect}
 \item Method \helpRelease{p} is called only by process $p$ in lines~\ref{release:callhRelease:King},~\ref{release:callhRelease:Queen} and ~\ref{abort:callhRelease}. \label{scl:hRelease}
 \item Method \doPromote{p} is called only by process $p$ in \Line{release:callPromote} and in \Line{hRelease:callPromote} (during \helpRelease{p}). \label{scl:promote}
 \item If cease-release event $\phi_p$ occurs then $p$ is executing \release{p}{j}. \label{scl:phi}
 \item If cease-release event $\tau_p$ occurs then $p$ is executing \helpRelease{p}. \label{scl:tau}
 \item If cease-release event $\pi_p$ or $\theta_p$ occurs then $p$ is executing \helpRelease{p} or \doPromote{p}. \label{scl:piOrTheta}
\end{enumerate}

\label{cl:basic}
\end{claim}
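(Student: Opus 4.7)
My plan is to verify each of the six sub-claims by a direct inspection of the pseudocode in Figures~\ref{fig:ARMEAlgorithm1} and~\ref{fig:ARMEAlgorithm2}, together with Claim~\ref{cl:basic:ReleaseEvents}. The claim is labelled ``basic'' precisely because nothing beyond tracing call-sites and parameter passing is required, so the proof will be a short enumeration rather than an inductive or probabilistic argument.

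For parts (a)--(c) I will scan the full pseudocode for every syntactic occurrence of the calls \doCollect{\cdot}, \helpRelease{\cdot} and \doPromote{\cdot}. For \doCollect, the only call sites are \lref{abort:doCollect} (inside \abort{i}) and \lref{release:doCollect} (inside \release{i}{j}); in both, the argument passed is the caller's own pseudo-ID $i$, so only process $p$ can invoke \doCollect{p}. For \helpRelease, the only call sites are \lref{release:callhRelease:King}, \lref{release:callhRelease:Queen} and \lref{abort:callhRelease}, again all invoked with the caller's own pseudo-ID. For \doPromote, the only call sites are \lref{release:callPromote} in \release{p}{j} and \lref{hRelease:callPromote} inside \helpRelease{p}; in both cases the argument is $p$, the pseudo-ID of the process currently executing the enclosing method.

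For parts (d) and (e), I will combine the fact that each cease-release event is defined as the execution of a specific \CAS at a specific line with Claim~\ref{cl:basic:ReleaseEvents}. By part~(a) of that claim, the only \ctr.\CAS{$1,0$} that $p$ can execute lives in \lref{release:ctr10}, which resides in the body of \release{p}{j}; thus when $\phi_p$ occurs, $p$ is executing \release{p}{j}. Similarly, by part~(b) of Claim~\ref{cl:basic:ReleaseEvents}, the only \LSync.\CAS{$\bot,p$} executed by $p$ is at \lref{hRelease:setT} inside \helpRelease{p}, so when $\tau_p$ occurs, $p$ is executing \helpRelease{p}.

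For part (f), events $\pi_p$ and $\theta_p$ are, by definition and by parts~(c)--(d) of Claim~\ref{cl:basic:ReleaseEvents}, executions of operations at \lref{promote:FR12} and \lref{promote:ctr20}, both of which are contained in the body of \doPromote{p}. Hence when either event occurs, $p$ is executing \doPromote{p}. Using part~(c) of the present lemma, \doPromote{p} is invoked either directly from \release{p}{j} at \lref{release:callPromote} or from within \helpRelease{p} at \lref{hRelease:callPromote}; in either case the innermost method whose body $p$ is currently executing is \doPromote{p}, which satisfies the statement ``$p$ is executing \helpRelease{p} or \doPromote{p}''. There is no genuine obstacle in this proof; the only point deserving care is avoiding a circular dependence between parts (c) and (f), which I sidestep by establishing (c) first purely from call-site enumeration before invoking it in (f).
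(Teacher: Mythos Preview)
Your proposal is correct and matches the paper's own proof essentially line for line: parts (a)--(c) are handled by code inspection, parts (d)--(e) by the definitions of the cease-release events (the paper phrases this as ``by definition'' rather than citing Claim~\ref{cl:basic:ReleaseEvents}, since the definitions already pin down the line numbers), and part (f) by observing that $\pi_p,\theta_p$ occur inside \doPromote{p} and then invoking part~(c). Your remark about avoiding circularity between (c) and (f) is a nice bit of care, though the paper simply establishes (c) first and uses it without comment.
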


\begin{proof}
Parts~\refC{scl:doCollect},~\refC{scl:hRelease} and~\refC{scl:promote} follow from an inspection of the code.
By definition, cease-release event $\phi_p$ occurs when $p$ executes a successful \ctr.\CAS{$1,0$} operation in \Line{release:ctr10} during \release{p}{j}, and thus~\refC{scl:phi} follows immediately.
By definition, cease-release event $\tau_p$ occurs when $p$ executes a successful \LSync.\CAS{$\bot,p$} in \Line{hRelease:setT} during \helpRelease{p}, and thus~\refC{scl:tau} follows immediately.
By definition, cease-release event $\pi_p$ occurs only when $p$ executes a \PawnSet.\promote{} operation that returns a non-\pair{\bot}{\bot} value in \Line{promote:FR12}, and cease-release event $\theta_p$ occurs only when $p$ executes a \ctr.\CAS{$2,0$} operation in \Line{promote:ctr20}.
Then if cease-release event $\pi_p$ or $\theta_p$ occurs then $p$ is executing \doPromote{p}.
From~\refC{scl:promote}, $p$ could also call \doPromote{p} from \Line{hRelease:callPromote} during \helpRelease{p}.
Then if cease-release event $\pi_p$ or $\theta_p$ occurs then $p$ is executing \doPromote{p} or \helpRelease{p}.
Thus,~\refC{scl:piOrTheta} holds.
\end{proof}

\begin{claim} \label{cl:helpful:promoted}
Consider $p$'s $k$-th passage, where $k \in \N$. Note that $s=k$.
If $\Role[p] = \cPPawn$ at some point in time $t$ during $p$'s call to \lock{p}, then some process $q$ promoted $p$ at $\pt{q}{promote:FR12}$ and $p$ became releaser of \L\ by condition (R2) at $\pt{q}{promote:FR12} < t$.
\end{claim}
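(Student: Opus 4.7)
The plan is to trace the state $\Role[p]=\cPPawn$ back through the code to the promotion event that caused it. First, by Claim~\ref{cl:basic:Role}(d), the constant $\cPPawn$ can be written into $\Role[p]$ only at line~\ref{getLock:RolePPawn} (inside $\lock{p}$) or at line~\ref{abort:RolePPawn} (inside $\abort{p}$). Since $t$ lies during $\lock{p}$, and since $\lock{p}$, $\release{p}{}$, $\abort{p}$ for the same $p$ are pairwise non-concurrent (Condition~\ref{cond:safety:ARLockArray}(a) together with Claim~\ref{claim:releaseNotConcurrent}), no $\abort{p}$ can overlap the current $\lock{p}$. A stale $\cPPawn$ surviving from an earlier passage is also ruled out: the first iteration of the role-loop writes $\Role[p]\la\ctr.\inc{}$ at line~\ref{getLock:IncCounter}, whose return value lies in $\{0,1,2,\bot\}$, so any earlier $\cPPawn$ is overwritten before the until-check at line~\ref{getLock:EndInnerLoop} can succeed with $\cPPawn$. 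Thus the write responsible for $\Role[p]=\cPPawn$ at time $t$ occurs at line~\ref{getLock:RolePPawn} inside the current, $k$-th, $\lock{p}$ call, at some time $t'\le t$.

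For $p$ to reach line~\ref{getLock:RolePPawn}, the if-condition of line~\ref{getLock:ifBackpacked} must have held, so $p$ read $\Apply[p]=\pair{\cOk}{s}$ at line~\ref{getLock:awaitAckOrCtrDecrease}, where $s$ is the sequence number obtained at line~\ref{getLock:getSequenceNo} of the current passage. The next step is to enumerate the ways $\Apply[p]$ can take the value $\pair{\cOk}{s}$. From Claim~\ref{cl:basic:Apply} together with an inspection of the code, the only candidate writers are: (i) $p$ itself at line~\ref{getLock:ApplyWantOk}; (ii) $p$ itself at line~\ref{abort:ApplyWantOk}; or (iii) some $q\ne p$ at line~\ref{promote:ApplyWantOk} during $q$'s $\doPromote{q}$. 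Case (i) is excluded because line~\ref{getLock:ApplyWantOk} lies strictly after line~\ref{getLock:ifBackpacked} in the control flow of $\lock{p}$, and case (ii) is excluded by the non-concurrency of $\lock{p}$ and $\abort{p}$. Hence case (iii) holds, and $q$'s successful $\Apply[p].\CAS{\pair{\cWant}{s},\pair{\cOk}{s}}$ at line~\ref{promote:ApplyWantOk} happens before $p$'s read.

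For that CAS to use precisely the sequence number $s$, $q$'s local $seq$ must equal $s$, so $q$'s immediately preceding $\PawnSet.\promote{}$ call at line~\ref{promote:FR12} must have returned $\pair{p}{s}$; by Claim~\ref{cl:basic:Collect}(c), line~\ref{promote:FR12} is the only operation that writes a $\cPro$-tag into $\PawnSet$, so this is the event that promotes $p$. Therefore at $\pt{q}{promote:FR12}$, $q$ promotes $p$, and by condition (R2), $p$ becomes a releaser of $\L$ at that instant. The ordering $\pt{q}{promote:FR12}<t$ follows from the happens-before chain: the successful promote at $\pt{q}{promote:FR12}$ precedes $q$'s CAS at line~\ref{promote:ApplyWantOk}, which precedes $p$'s observing read of $\pair{\cOk}{s}$ at line~\ref{getLock:awaitAckOrCtrDecrease}, which precedes $p$'s write $\Role[p]\la\cPPawn$ at line~\ref{getLock:RolePPawn} at time $t'\le t$.

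The main subtlety, and where I expect to need the most care, is matching the sequence number $s$ across $q$ and $p$: one must ensure that the $\pair{\cOk}{s}$ observed by $p$ carries the current passage's $s$ rather than a stale value, and that the only producer of $\pair{\cOk}{s}$ between $p$'s registration at line~\ref{getLock:ApplyBotWant} and its read at line~\ref{getLock:awaitAckOrCtrDecrease} is the matching promote/CAS pair by $q$. Both rely on the uniqueness of $s$ guaranteed by $\getSequenceNo{}$ and on Condition~\ref{cond:safety:ARLockArray}(a); no heavier machinery should be needed once this bookkeeping is in place.
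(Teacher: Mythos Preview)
Your proof has a genuine gap: you dismiss the possibility that $\Role[p]=\cPPawn$ was written at line~\ref{abort:RolePPawn} by arguing that ``no $\abort{p}$ can overlap the current $\lock{p}$.'' But in this algorithm $\abort{p}$ is \emph{invoked from inside} $\lock{p}$ whenever $p$ receives an abort signal while busy-waiting at line~\ref{getLock:ApplyBotWant}, \ref{getLock:awaitAckOrCtrDecrease}, or \ref{getLock:awaitX} (see the comment at the end of the object declaration). Condition~\ref{cond:safety:ARLockArray}(a) and Claim~\ref{claim:releaseNotConcurrent} say nothing about $\abort{p}$; they concern only concurrent $\lock{i}$ (resp.\ $\release{i}{}$) calls. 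So a time $t$ with $\Role[p]=\cPPawn$ can perfectly well arise because $p$, while executing $\lock{p}$, called $\abort{p}$, found $\PawnSet.\cUpdate{p,s}$ returning \False\ at line~\ref{abort:ifHead}, and wrote $\cPPawn$ at line~\ref{abort:RolePPawn}. The paper handles this as a separate case: the failed $\cUpdate$ implies the $p$-th entry of $\PawnSet$ is $\pair{\cPro}{s}$, and by Claim~\ref{cl:basic:Collect}\refC{scl:CollectPromote} that value can only come from a $\PawnSet.\promote{}$ at line~\ref{promote:FR12} returning $\pair{p}{s}$---which is exactly the promotion event you need.

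A related but smaller issue: your dismissal of possibility~(ii) (that $p$ itself wrote $\pair{\cOk}{s}$ at line~\ref{abort:ApplyWantOk}) rests on the same false premise. The conclusion there happens to be correct, but for a different reason: if $p$ is at line~\ref{getLock:ifBackpacked} it has not yet called $\abort{p}$ in this passage, because once $\abort{p}$ returns, $\lock{p}$ returns immediately and never resumes the role-loop. You should replace the ``non-concurrency'' justification with this control-flow argument. With the missing $\abort$ case added and this justification fixed, your treatment of the $\Apply[p]=\pair{\cOk}{s}$ branch is otherwise fine and matches the paper's Case~b.
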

\begin{proof}
From \Claim{cl:basic:Role}\refC{scl:RolePPawn}, $p$ changes $\Role[p]$ to \cPPawn\ only in \Line{getLock:RolePPawn} or \Line{abort:RolePPawn}.

\textbf{Case a - } $p$ changed $\Role[p]$ to \cPPawn\ in \Line{abort:RolePPawn}:
Then $p$'s call to \PawnSet.\cUpdate{$p,s$} returned \false\ in \Line{abort:ifHead}.
From the semantics of the \APArray{n} object, it follows that the $p$-th entry of \PawnSet\ was set to value \pair{\cPro}{s} = \pair{2}{s}.
From \Claim{cl:basic:Collect}\refC{scl:CollectPromote}, the $p$-th entry of \PawnSet\ is set to value \pair{\cPro}{s} only when a \PawnSet.\promote{} operation returns $\pair{p}{s}$ in \Line{promote:FR12}.
Then some process $q$ promoted $p$ at \pt{q}{promote:FR12} and $p$ became a releaser of \L\ by condition (R2) at $\pt{q}{promote:FR12} < t$.

\textbf{Case b - }  $p$ changed $\Role[p]$ to \cPPawn\ in \Line{getLock:RolePPawn}:
Then $p$ broke out of the spin loop of \Line{getLock:ApplyBotWant}, and thus $\Apply[p] = \pair{\cWant}{s} \neq \pair{\cOk}{s}$ at \pt{p}{getLock:ApplyBotWant}.
Since $p$ satisfied the if-condition of \Line{getLock:ifBackpacked}, it follows that $\Apply[p] = \pair{\cOk}{s}$ at \pt{p}{getLock:ifBackpacked}.
Since $p$ does not change \Apply$[p]$ to value \pair{\cOk}{s} during $[\pt{p}{getLock:ApplyBotWant},\pt{p}{getLock:ifBackpacked}]$ it follows that some other process changed \Apply$[p]$ to value \pair{\cOk}{s}.
Now, \Apply$[p]$ is changed to value \pair{\cOk}{s} by some other process (say $q$) only in \Line{promote:ApplyWantOk} and thus, from the code structure, $q$ also executed a \PawnSet.\promote{} operation that returned \pair{p}{s} in \Line{promote:FR12}.
Then  $q$ promoted $p$ at \pt{q}{promote:FR12} and $p$ became a releaser of \L\ by condition (R2) at $\pt{q}{promote:FR12} < t$.
\end{proof}

\begin{claim}
\label{cl:eventE1HasNotOccured}
Consider $p$'s $k$-th passage, where $k \in \N$.
If $t \in \{$ $[\ptB{p}{abort:ifFlag},\pt{p}{abort:returnr}]$, $[\ptB{p}{release:setX},\pt{p}{release:callhRelease:King}]$, $[\ptB{p}{release:callhRelease:Queen},\pt{p}{release:callhRelease:Queen}]$, $[\ptB{p}{release:callPromote},\pt{p}{release:callPromote}]$ $\}$, then cease-release  event $\phi_p$ does not occur before time $t$.
\end{claim}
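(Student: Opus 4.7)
The plan is to show, for each of the four intervals $I$ listed, that if $t\in I$ then the cease-release event $\phi_p$ of $p$'s $k$-th passage has not occurred in $[0,t]$. By \Claim{cl:basic}\refC{scl:phi}, $\phi_p$ can fire only while $p$ is executing \release{p}{j}, specifically at a successful \ctr.\CAS{$1,0$} on Line~\ref{release:ctr10}; and Line~\ref{release:ctr10} sits inside the $\Role[p]=\cKing$ branch guarded by Line~\ref{release:ifKing}. So the whole proof reduces to showing, for each interval, that the unique occurrence of Line~\ref{release:ctr10} in the passage either (i) has not yet happened, (ii) happened and returned \false, or (iii) cannot happen because $\Role[p]\neq\cKing$.

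First I would dispatch the abort interval $[\ptB{p}{abort:ifFlag},\pt{p}{abort:returnr}]$. At such $t$, $p$ is running \abort{p}, which is invoked from inside \lock{p} of the current passage. By Condition~\ref{cond:safety:ARLockArray}(c) together with \Claim{claim:releaseNotConcurrent}, the passage's \release{p}{j} call can only be invoked after \lock{p} has returned; therefore no \release{p}{j} invocation, and a fortiori no successful $\ctr.\CAS{1,0}$ by $p$, has happened in $[0,t]$, and $\phi_p$ has not occurred.

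For the remaining three intervals, $p$ is inside its unique \release{p}{j} call of the passage. In the king interval $[\ptB{p}{release:setX},\pt{p}{release:callhRelease:King}]$, reaching Line~\ref{release:setX} requires that the guard $\neg\,\ctr.\CAS{1,0}$ on Line~\ref{release:ctr10} evaluated to \true, i.e., that \CAS\ returned \false; since Line~\ref{release:ctr10} is executed at most once per \release{p}{j} call, the unique candidate for $\phi_p$ in this passage has already failed. In the queen interval $[\ptB{p}{release:callhRelease:Queen},\pt{p}{release:callhRelease:Queen}]$ and the promoted-pawn interval $[\ptB{p}{release:callPromote},\pt{p}{release:callPromote}]$, $p$ has passed the test at Line~\ref{release:ifQueen} or Line~\ref{release:ifPPawn}, so $\Role[p]\in\{\cQueen,\cPPawn\}$ there; by \Claim{cl:basic:Role}\refC{scl:RoleOnlyp} only $p$ modifies $\Role[p]$, and by \refC{scl:RoleUnchanged} $p$ does not modify it during \release{p}{}. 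Hence $\Role[p]\neq\cKing$ throughout this \release{p}{j} call, the if-test on Line~\ref{release:ifKing} fails, and Line~\ref{release:ctr10} is never executed.

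The only real subtlety, and the step I would flag as the main obstacle, is ruling out an \emph{earlier} occurrence of $\phi_p$ within the same passage in the last three cases. This is resolved by combining three facts: (a) by \Claim{cl:basic:ReleaseEvents}\refC{scl:def:phi}, $\phi_p$ is gated by a single line of code, Line~\ref{release:ctr10}; (b) by Condition~\ref{cond:safety:ARLockArray} and \Claim{claim:releaseNotConcurrent} the passage contains exactly one \release{p}{} call; and (c) within that one call, Line~\ref{release:ctr10} is either reached and proven to have failed (king case) or never reached at all (queen/promoted-pawn cases), as argued above.
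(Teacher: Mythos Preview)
Your proof is correct and follows essentially the same case split and reasoning as the paper's own proof: the abort interval is handled by observing that \release{p}{} has not yet been invoked, the king interval by the failed \ctr.\CAS{$1,0$} guard on \Line{release:ctr10}, and the queen/promoted-pawn intervals by $\Role[p]\neq\cKing$ together with \Claim{cl:basic:Role}\refC{scl:RoleUnchanged}. Your final paragraph making explicit that there is at most one \release{p}{} call per passage (and hence at most one execution of \Line{release:ctr10}) is a point the paper leaves implicit, and the paper cites \Claim{cl:TableOfRoles} rather than reading $\Role[p]$ off the passed if-tests, but these are cosmetic differences.
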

\begin{proof}
By definition, cease-release event $\phi_p$ occurs when $p$ executes a successful \ctr.\CAS{$1,0$} operation in \Line{release:ctr10}.
From \Claim{cl:basic}\refC{scl:phi} cease-release event $\phi_p$ occurs only during \release{p}{j}.

\textbf{Case a - } $t \in [\ptB{p}{abort:ifFlag},\pt{p}{abort:returnr}]$:
Then $p$ is executing \abort{p} and has not yet executed a call to \release{p}{}.
Since cease-release event $\phi_p$ can occur only during \release{p}{}, cease-release event $\phi_p$ did not occur before time $t$.

\textbf{Case b - } $t \in [\ptB{p}{release:setX},\pt{p}{release:callhRelease:King}]$:
Then $p$ must have failed the if-condition of \Line{release:ctr10}, and thus $p$ executed an unsuccessful \ctr.\CAS{$1,0$} operation in \Line{release:ctr10}, and cease-release event $\phi_p$ did not occur before time $t$.

\textbf{Case c - } $t \in \{$ $[\ptB{p}{release:callhRelease:Queen},\pt{p}{release:callhRelease:Queen}]$, $[\ptB{p}{release:callPromote},\pt{p}{release:callPromote}]$ $\}$:
From \Claim{cl:TableOfRoles}, $\Role[p] \in \Set{\cQueen,\cPPawn}$ at $t$.
Since $\Role[p]$ is unchanged during \release{p}{} (\Claim{cl:basic:Role}\refC{scl:RoleUnchanged}), it follows that $\Role[p] \neq \cKing$ at \ptB{p}{release:ifKing}.
Then $p$ fails the if-condition of \Line{release:ifKing}, and does not execute \Line{release:ctr10} and thus cease-release event $\phi_p$ did not occur before time $t$.
\end{proof}

The proof of the following claim has been moved to Appendix~\ref{sec:Appendix:remainingProofs} since the proof is long and straight forward.
\begin{claim}
The value of \Role$[p]$ at various points in time during $p$'s $k$-th passage, where $k \in \N$, is as follows.\

\label{cl:TableOfRoles}
\begin{tabular}{ l | l}
    \hline
    Time & Value of $\Role[p]$  \\ \hline
    $\pt{p}{getLock:IncCounter}$ & \Set{\bot,\cKing,\cQueen,\cPawn} \\ \hline
    $[\pt{p}{getLock:awaitAckOrCtrDecrease},\pt{p}{getLock:ifBackpacked}]$ &  \cPawn \\ \hline
    $\pt{p}{getLock:RolePPawn}$ &  \cPPawn\ \\ \hline
    $\ptB{p}{getLock:ifQueen}$ & \Set{\cKing,\cQueen,\cPPawn}  \\ \hline
    $\pt{p}{getLock:awaitX}$ & \cQueen \\ \hline
    $[\pt{p}{getLock:ApplyWantOk},\pt{p}{getLock:ifRolePQueen}]$ & \Set{\cKing,\cQueen,\cPPawn} \\ \hline
    $[\pt{p}{abort:ApplyWantOk},\ptB{p}{abort:ifPawn}]$ & \Set{\cQueen,\cPawn}   \\ \hline
    $\pt{p}{abort:ifHead}$ & \cPawn  \\ \hline
    $[\pt{p}{abort:RolePPawn},\pt{p}{abort:returninfty}]$ & \cPPawn\ \\ \hline
    $[\ptB{p}{abort:setX},\pt{p}{abort:callhRelease}]$ &  \cQueen \\ \hline
\end{tabular}
\begin{tabular}{ l | l}
    \hline
    Time & Value of $\Role[p]$  \\ \hline
    $[\ptB{p}{release:safetyCheck},\ptB{p}{release:ifKing}]$ & \Set{\cKing,\cQueen,\cPPawn} \\ \hline
    $[\ptB{p}{release:ctr10},\pt{p}{release:callhRelease:King}]$ & \cKing\ \\ \hline
    $\ptB{p}{release:callhRelease:Queen}$ & \cQueen\ \\ \hline
    $\ptB{p}{release:callPromote}$ & \cPPawn\ \\ \hline
    $[\ptB{p}{release:ApplyOkBot},\pt{p}{release:return}]$ &  $\Set{\cKing,\cQueen,\cPPawn}$ \\ \hline
    $[\ptB{p}{collect:collectLoop},\pt{p}{collect:updateAll}]$ &  \Set{\cKing,\cQueen} \\ \hline
    $[\ptB{p}{hRelease:setT},\pt{p}{hRelease:end}]$ &  \Set{\cKing,\cQueen} \\ \hline
    $[\ptB{p}{promote:FR12},\pt{p}{promote:end}]$ & \Set{\cKing,\cQueen,\cPPawn}  \\ \hline
\end{tabular}
\end{claim}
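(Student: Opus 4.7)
The plan is to treat this as a mostly mechanical case analysis that exploits the tight restrictions on when and how $\Role[p]$ can be modified. The backbone is provided by Claim~\ref{cl:basic:Role}: only $p$ itself ever writes to $\Role[p]$, writes producing values in $\{\cKing,\cQueen,\cPawn\}$ occur exclusively at line~\ref{getLock:IncCounter}, writes producing \cPPawn\ occur exclusively at lines~\ref{getLock:RolePPawn} and~\ref{abort:RolePPawn}, and $\Role[p]$ is never modified during \release{p}{}. Combined with the observation that $p$ executes its methods sequentially within a single passage, this means that once we know the value of $\Role[p]$ produced by the most recent write on $p$'s control path, we know its value for every subsequent line up until the next write.

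I would process the rows of the table in the order given, walking along $p$'s control flow. For the \lock{p} block, line~\ref{getLock:IncCounter} writes the return value of \ctr.\inc{}, which by Lemma~\ref{thm:ctr:linearizable} lies in $\{\bot,0,1,2\} = \{\bot,\cKing,\cQueen,\cPawn\}$. Inside the if-branch at line~\ref{getLock:ifSoldier} the guard forces $\Role[p]=\cPawn$, so this value persists through \pt{p}{getLock:awaitAckOrCtrDecrease}-\pt{p}{getLock:ifBackpacked}; line~\ref{getLock:RolePPawn} then overwrites with \cPPawn. Exiting the role-loop requires $\Role[p]\in\{\cKing,\cQueen,\cPPawn\}$ by the repeat-until condition at line~\ref{getLock:EndInnerLoop}, giving the entry at \ptB{p}{getLock:ifQueen}; the inner \Await{} of line~\ref{getLock:awaitX} is guarded by $\Role[p]=\cQueen$, and no further writes occur on this path, yielding the values through line~\ref{getLock:ifRolePQueen}.

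For \abort{p} and \release{p}{j} I would use the remark in the paragraph before the claim that no two \lock{p} or \release{p}{} calls are concurrent, so the value of $\Role[p]$ at entry to \abort{p} (invoked while $p$ is busy-waiting in lines~\ref{getLock:ApplyBotWant}, \ref{getLock:awaitAckOrCtrDecrease}, or~\ref{getLock:awaitX}) must have been produced by $p$'s most recent execution of line~\ref{getLock:IncCounter}; the busy-wait locations then exclude \cKing\ and \cPPawn\ (neither role busy-waits), leaving $\{\cQueen,\cPawn\}$ for the entry at \ptB{p}{abort:ifPawn}. The then-branch at line~\ref{abort:ifPawn} pins \cPawn; line~\ref{abort:RolePPawn} overwrites to \cPPawn; and the else-branch forces \cQueen\ since \cPawn\ is excluded and $\{\cKing,\cPPawn\}$ were excluded at entry. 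For \release{p}{j}, Condition~\ref{cond:safety:ARLockArray}(c) guarantees a preceding successful \lock{p} call, whose exit value (from the already-proved \lock{p} rows) is in $\{\cKing,\cQueen,\cPPawn\}$; since $\Role[p]$ is not written during \release{} (Claim~\ref{cl:basic:Role}\refC{scl:RoleUnchanged}), this set is preserved throughout, and the guards at lines~\ref{release:ifKing}, \ref{release:ifQueen}, \ref{release:ifPPawn} pin the value within each respective branch.

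The sub-method rows follow by looking at all call sites. By Claim~\ref{cl:basic}\refC{scl:doCollect}, \doCollect{p} is invoked only at lines~\ref{abort:doCollect} and~\ref{release:doCollect}, where the rows for \abort{p} and \release{p}{j} just established give $\Role[p]=\cQueen$ and $\Role[p]=\cKing$ respectively, yielding $\{\cKing,\cQueen\}$ throughout; the same reasoning using Claim~\ref{cl:basic}\refC{scl:hRelease} and~\refC{scl:promote} handles \helpRelease{p} and \doPromote{p}. The main obstacle, modest as it is, lies in the \abort{p} rows: one must be careful that the busy-wait line from which the abort was entered is consistent with the role inferred from the guard at line~\ref{abort:ifPawn}, so I would briefly verify that a \cQueen\ aborts only from line~\ref{getLock:awaitX} and a \cPawn\ only from line~\ref{getLock:awaitAckOrCtrDecrease}, excluding the spurious combinations. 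Everything else is straightforward control-flow bookkeeping.
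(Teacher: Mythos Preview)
Your proposal is correct and follows essentially the same control-flow case analysis as the paper, which splits the argument into four sub-claims (one each for \lock{p}, \abort{p}, \release{p}{j}, and the helper methods) in Appendix~\ref{sec:Appendix:remainingProofs}. The only minor difference is that the paper explicitly invokes the local $flag$ variable (line~\ref{getLock:setFlag} and the test at line~\ref{abort:ifFlag}) to rule out the case where \abort{p} is entered from the busy-wait at line~\ref{getLock:ApplyBotWant} before $\Role[p]$ has been written, whereas you leave this implicit in ``the entry at \ptB{p}{abort:ifPawn}''; making that step explicit would tighten your argument but does not change its substance.
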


\begin{claim}   \label{cl:releaseMethodCalledAtMostOnce}
Consider $p$'s $k$-th passage, where $k \in \N$.
\begin{enumerate}[(a)]
 \item If process $p$ calls \helpRelease{p} or \doPromote{p} during \abort{p} then it does not call \release{p}{j}.
\label{scl:abortOrRelease}
 \item Process $p$ calls \helpRelease{p} at most once. \label{scl:hReleaseOnce}
 \item Process $p$ calls \doPromote{p} at most once. \label{scl:promoteOnce}
\end{enumerate}
\end{claim}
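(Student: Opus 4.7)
\emph{Part (a).} The plan is to read off the control flow of \abort{p}. Method \helpRelease{} is invoked from within \abort{p} only at \Line{abort:callhRelease}, which is reached only through the \textbf{else}-branch of the test $\Role[p]=\cPawn$ at \Line{abort:ifPawn} and only after a successful $\X.\CAS{\bot,\infty}$ at \Line{abort:setX}. Method \doPromote{} has no direct call site inside \abort{}, so any invocation of \doPromote{p} during the \abort{p} call must occur transitively through \helpRelease{p} (at \Line{hRelease:callPromote}). In either case, once the call from \Line{abort:callhRelease} returns, execution falls through to \Line{abort:ApplyOkBot} and then to \Line{abort:returnr}, so \abort{p} returns $\bot$. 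Since the enclosing \lock{p} call returns the value returned by \abort{p} (by the semantics of the abort signal described in the algorithm), \lock{p} returns $\bot$ and is unsuccessful. Condition~\ref{cond:safety:ARLockArray}(c) then ensures that no \release{p}{j} call is made in the $k$-th passage.

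\emph{Part (b).} There are three textual call sites for \helpRelease{p}: \Line{release:callhRelease:King} and \Line{release:callhRelease:Queen} inside \release{p}{j}, and \Line{abort:callhRelease} inside \abort{p}. By \Claim{cl:TableOfRoles}, $\Role[p]$ holds a single value in $\Set{\cKing,\cQueen,\cPPawn}$ throughout \release{p}{j}, so at most one of the three role-based if-branches at Lines~\ref{release:ifKing},~\ref{release:ifQueen},~\ref{release:ifPPawn} is entered; in particular, \release{p}{j} invokes \helpRelease{p} at most once. Likewise \abort{p} contains only a single textual call to \helpRelease{p}. Finally, by part~(a), if \abort{p} invokes \helpRelease{p}, then no \release{p}{j} call is ever made in this passage. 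Combining these facts, \helpRelease{p} is called at most once per passage.

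\emph{Part (c).} The only call sites for \doPromote{p} are \Line{release:callPromote} (inside the \cPPawn-branch of \release{p}{j}) and \Line{hRelease:callPromote} (inside \helpRelease{p}). A direct invocation from \release{p}{j} requires $\Role[p]=\cPPawn$, which by \Claim{cl:TableOfRoles} excludes the \cKing- and \cQueen-branches where \release{} calls \helpRelease{}. Moreover, if \release{} directly calls \doPromote{}, then \abort{p} cannot have invoked \helpRelease{p} either: otherwise part~(a) would forbid the \release{p}{j} call entirely. Hence a direct call to \doPromote{p} from \release{} precludes any invocation of \helpRelease{p} in the passage. Conversely, if no direct call to \doPromote{p} is made from \release{}, then every call to \doPromote{p} arises inside \helpRelease{p}, which by part~(b) is called at most once, and which itself contains a single syntactic call to \doPromote{p}. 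In all cases, \doPromote{p} is called at most once per passage.

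\emph{Main obstacle.} The argument is essentially a case analysis over the control flow of \abort{}, \release{}, and \helpRelease{}. The only subtle points are (i) verifying that the three role-branches of \release{p}{j} are genuinely mutually exclusive and (ii) that the $\bot$ returned by \abort{p} propagates to the enclosing \lock{p}. Point~(i) is handled by \Claim{cl:TableOfRoles} together with \Claim{cl:basic:Role}(\ref{scl:RoleUnchanged}), which jointly pin $\Role[p]$ to a single value throughout \release{p}{j}; point~(ii) rests on the informal convention, made explicit in the algorithm description in Section~\ref{sec:ARLockArray:Implementation}, that when a process responds to an abort signal by calling \abort{} during \lock{}, the return value of \abort{} is the return value of \lock{}.
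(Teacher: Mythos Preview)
Your proof is correct and follows essentially the same approach as the paper's own argument: a control-flow case analysis that uses \Claim{cl:TableOfRoles} together with \Claim{cl:basic:Role}(\ref{scl:RoleUnchanged}) to pin $\Role[p]$ to a single value throughout \release{p}{j}, and uses part~(a) to separate the \abort{} and \release{} cases. Your organisation of part~(c) (splitting on whether \release{} directly invokes \doPromote{}) is a mild repackaging of the paper's split on which call site is exercised, but the underlying reasoning is the same.
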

\begin{proof}

\textbf{Proof of~\refC{scl:abortOrRelease}: }
The following observations follow from an inspection of the code.
If $p$ executes \doPromote{p} during \abort{p}, then it does so during a call to \helpRelease{p} in \Line{hRelease:callPromote}.
If $p$ executes \helpRelease{p} during \abort{p}, then it does so by executing \Line{abort:callhRelease}.
Then $p$ calls \helpRelease{p} or \doPromote{p} during \abort{p} in \Line{abort:callhRelease} and goes on to return value $\bot$ in \Line{abort:returnr}.
Then $p$'s call to \lock{p} returns value $\bot$ and $p$ does not call \release{p}{} (follows from conditions~\ref{condition:ifLockThenRelease} and~\ref{condition:ifReleaseThenExistsLock}).

\textbf{Proof of~\refC{scl:hReleaseOnce}: }
From Part~\refC{scl:abortOrRelease}, if \helpRelease{p} is executed during \abort{p} then \release{p}{j} is not executed.
Then to prove our claim we need to show that \helpRelease{p} is called at most once during \abort{p} and \release{p}{j}, respectively.
From \Claim{cl:basic}\refC{scl:hRelease}, method \helpRelease{p} is called by $p$ only in lines~\ref{release:callhRelease:King},~\ref{release:callhRelease:Queen} and~\ref{abort:callhRelease}.
Since \helpRelease{p} is called only once during \abort{p} (specifically in \Line{abort:callhRelease}), it follows immediately that $p$ executes \helpRelease{p} at most once during \abort{p}.
From \Claim{cl:TableOfRoles}, $\Role[p] \in \Set{\cKing,\cQueen,\cPPawn}$ at \ptB{p}{release:safetyCheck}.
Since $\Role[p]$ is unchanged during \release{p}{} (\Claim{cl:basic:Role}\refC{scl:RoleUnchanged}), it follows that $p$ satisfies exactly one of the if-conditions of lines~\ref{release:ifKing},~\ref{release:ifQueen} and~\ref{release:ifPPawn}, and thus $p$ does not execute both lines~\ref{release:callhRelease:King} and~\ref{release:callhRelease:Queen}.
Then $p$ executes \helpRelease{p} at most once during \release{p}{j}.

\textbf{Proof of~\refC{scl:promoteOnce}: }
From Part~\refC{scl:abortOrRelease}, if \doPromote{p} is executed during \abort{p} then \release{p}{j} is not executed.
Then to prove our claim we need to show that \doPromote{p} is called at most once during \abort{p} and \release{p}{j}, respectively.
From \Claim{cl:basic}\refC{scl:promote}, method \doPromote{p} is called by $p$ only in \Line{release:callPromote} and in \Line{hRelease:callPromote} (during \helpRelease{p}).

\textbf{Case a - } $p$ called \doPromote{p} in \Line{hRelease:callPromote} (during \helpRelease{p}).
Then $p$ is executing \helpRelease{p}.
From \Claim{cl:basic}\refC{scl:hRelease}, method \helpRelease{p} is called by $p$ only in lines~\ref{release:callhRelease:King},~\ref{release:callhRelease:Queen} and~\ref{abort:callhRelease}.
Then $p$ called \helpRelease{p} either in line~\ref{release:callhRelease:King},~\ref{release:callhRelease:Queen} or~\ref{abort:callhRelease}

\textbf{Case a(i) - } $p$ called \helpRelease{p} in line~\ref{release:callhRelease:King} or~\ref{release:callhRelease:Queen} (during \release{p}{j}).
Then $p$ is executing \release{p}{}, and since $p$ called \helpRelease{p} in lines~\ref{release:callhRelease:King} or~\ref{release:callhRelease:Queen}, $p$ satisfied the if-conditions of lines~\ref{release:ifKing} or~\ref{release:ifQueen}, and thus $\Role[p] = \cKing$ at \ptB{p}{release:ifKing} or $\Role[p] = \cQueen$ at \ptB{p}{release:ifQueen}, respectively.
Since $\Role[p]$ is unchanged during \release{p}{} (\Claim{cl:basic:Role}\refC{scl:RoleUnchanged}), it follows that $\Role[p] \in \Set{\cKing,\cQueen}$ during \release{p}{}.
Then $p$ fails the if-condition of \Line{release:ifPPawn} and does not execute \doPromote{p} in \Line{release:callPromote}.
Hence, $p$ executes \doPromote{p} at most once during \release{p}{}.

\textbf{Case a(ii) - } $p$ called \helpRelease{p} in line~\ref{abort:callhRelease}.
Then $p$ is executing \abort{p} and it goes on to return value $\bot$ in \Line{abort:returnr}.
Then $p$'s call to \lock{p} returns value $\bot$ and $p$ does not call \release{p}{j} (follows from conditions~\ref{condition:ifLockThenRelease} and~\ref{condition:ifReleaseThenExistsLock}).
Hence, $p$ executes \doPromote{p} at most once during \abort{p}.

\textbf{Case b - } $p$ called \doPromote{p} in \Line{release:callPromote}.
Then $p$ is executing \helpRelease{p} and $p$ satisfied the if-condition of lines~\ref{release:ifPPawn}, and thus $\Role[p] = \cPPawn$ at \ptB{p}{release:ifPPawn}.
Since $\Role[p]$ is unchanged during \release{p}{} (\Claim{cl:basic:Role}\refC{scl:RoleUnchanged}), it follows that $\Role[p] = \cPPawn$ during \release{p}{}.
Then $p$ failed the if-condition of lines~\ref{release:ifKing} and~\ref{release:ifQueen} and $p$ did not execute \helpRelease{p} in lines~\ref{release:callhRelease:King} and~\ref{release:callhRelease:Queen}.
Hence, $p$ executes \doPromote{p} at most once during \release{p}{}.
\end{proof}

\begin{claim}
\label{cl:ifReleasinglockThenReleaser}
Consider $p$'s $k$-th passage, where $k \in \N$.
Let $t$ be a point in time at which either $p$ is poised to execute \release{p}{j},
or $t \in \{$ $[\ptB{p}{abort:setX},\pt{p}{abort:doCollect}]$, $[\ptB{p}{release:setX},\pt{p}{release:doCollect}]$, $[\ptB{p}{collect:collectLoop},\pt{p}{collect:updateAll}]$ , \ptB{p}{hRelease:setT},$ [\ptB{p}{hRelease:readX},\ptB{p}{hRelease:callPromote}]$, \ptB{p}{promote:FR12}, $[\ptB{p}{promote:resetBackpack},\ptB{p}{promote:ctr20}]$ $\}$.
Then 
\begin{enumerate}[(a)]
 \item none of $p$'s cease-release events have occurred before time $t$, and  \label{scl:releaseEventsDidNotOccur}
 \item $p$ is a releaser of lock \L\ at time $t$ \label{scl:isReleaser}
\end{enumerate}
\end{claim}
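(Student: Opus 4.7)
The plan is to prove the claim by case analysis on which of the listed intervals contains $t$. In every case the argument decomposes into two pieces: showing (b) that $p$ became a releaser of $\L$ at some strictly earlier time, and showing (a) that none of the four cease-release events $\phi_p, \tau_p, \pi_p, \theta_p$ has fired in the interval before $t$. Together these yield that $p$ is still a releaser at $t$.

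For Part~(b) the argument is uniform: every listed time point lies in code executed during \release{p}{j} or during one of the helper routines \doCollect{p}, \helpRelease{p}, \doPromote{p}. Claim~\ref{cl:TableOfRoles} then forces $\Role[p] \in \Set{\cKing,\cQueen,\cPPawn}$ at $t$. By Claim~\ref{cl:basic:Role}\refC{scl:RoleIncCounter}--\refC{scl:RolePPawn}, if $\Role[p] \in \Set{\cKing,\cQueen}$ then $p$ previously executed \Line{getLock:IncCounter} and the \ctr.\inc{} call returned $0$ or $1$, so $p$ became a releaser via condition (R1). If $\Role[p] = \cPPawn$, Claim~\ref{cl:helpful:promoted} supplies a process $q$ that promoted $p$ in \Line{promote:FR12}, making $p$ a releaser via (R2). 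Once (a) is established $p$ has not ceased to be a releaser, so (b) follows.

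For Part~(a) I would dispatch the four events separately. Event $\phi_p$ is already excluded by Claim~\ref{cl:eventE1HasNotOccured} for the four intervals covered there; for the remaining intervals, which all lie inside helper methods, observe that if $\Role[p] \neq \cKing$ then the guard at \Line{release:ifKing} fails and so \Line{release:ctr10} is never executed (and $\Role[p]$ is unchanged during \release{p}{j} by Claim~\ref{cl:basic:Role}\refC{scl:RoleUnchanged}), whereas if $\Role[p] = \cKing$ then the helper calls in lines~\ref{release:doCollect} and~\ref{release:callhRelease:King} are reached only through the if-branch of \Line{release:ctr10}, which requires the CAS there to have failed. Event $\tau_p$ fires only at a successful \LSync.\CAS{$\bot,p$} in \Line{hRelease:setT}, and Claim~\ref{cl:releaseMethodCalledAtMostOnce}\refC{scl:hReleaseOnce} guarantees that \helpRelease{p} is called at most once. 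At \ptB{p}{hRelease:setT} the operation has not yet been performed, and on $[\ptB{p}{hRelease:readX},\ptB{p}{hRelease:callPromote}]$ the if-body of \Line{hRelease:setT} was entered, so the CAS there failed. For all other listed $t$, $p$ has not yet entered \helpRelease{p} (using that in both \abort{p} and \release{p}{j} the helper invocations sit strictly after \doCollect{p}, and that the $\cPPawn$ branch at \Line{release:callPromote} never invokes \helpRelease{p}).

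Events $\pi_p$ and $\theta_p$ occur only at \Line{promote:FR12} (with a non-$\pair{\bot}{\bot}$ result) and \Line{promote:ctr20} respectively. By Claim~\ref{cl:releaseMethodCalledAtMostOnce}\refC{scl:promoteOnce}, \doPromote{p} is invoked at most once, so there is no earlier invocation to rule out. At $\ptB{p}{promote:FR12}$ neither line has been executed. On $[\ptB{p}{promote:resetBackpack},\ptB{p}{promote:ctr20}]$, $p$ entered the then-branch of \Line{promote:ifNotPromoted}, so the promote at \Line{promote:FR12} returned $\pair{\bot}{\bot}$ (ruling out $\pi_p$) and \Line{promote:ctr20} is pending (ruling out $\theta_p$). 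The main obstacle is not conceptual but bookkeeping: there are more than a dozen listed time points and four events, and in each case one has to trace the possible control-flow paths that leave $p$ at $t$, using Claim~\ref{cl:TableOfRoles} to restrict roles and Claim~\ref{cl:releaseMethodCalledAtMostOnce} to bound the number of invocations of each helper, and then verify by inspection that each guard that would let $p$ execute the line triggering the event fails along every such path.
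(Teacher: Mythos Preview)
Your proposal is correct and uses the same ingredients as the paper: Claim~\ref{cl:TableOfRoles} to pin down $\Role[p]$, Claim~\ref{cl:helpful:promoted} for the \cPPawn\ case, Claim~\ref{cl:eventE1HasNotOccured} for $\phi_p$, and Claim~\ref{cl:releaseMethodCalledAtMostOnce} to bound helper invocations. The only structural difference is that the paper organizes part~(a) by \emph{interval} (three cases depending on where $t$ lies) while you organize it by \emph{event} ($\phi_p,\tau_p,\pi_p,\theta_p$ in turn); both slicings work.

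One small gap to patch: in your $\tau_p$ paragraph you write ``For all other listed $t$, $p$ has not yet entered \helpRelease{p},'' but this is false when $t$ lies in a \doPromote{p} interval and \doPromote{p} was invoked from \Line{hRelease:callPromote} rather than from \Line{release:callPromote}. In that situation $p$ \emph{has} entered \helpRelease{p}; what saves you is that reaching \Line{hRelease:callPromote} means the if-body at \Line{hRelease:setT} was taken, so the \LSync.\CAS{$\bot,p$} failed and $\tau_p$ did not fire. You already give exactly this argument for the interval $[\ptB{p}{hRelease:readX},\ptB{p}{hRelease:callPromote}]$, so you just need to invoke it again here (the paper handles this explicitly as Subcase~c(i)).
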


\begin{proof}
\textbf{Proof of~\refC{scl:releaseEventsDidNotOccur}: }
First note that if $t \in [\ptB{p}{collect:collectLoop},\pt{p}{collect:updateAll}]$ then $p$ is executing \doCollect{}.
From \Claim{cl:basic}\refC{scl:doCollect}, $p$ calls \doCollect{} only in lines~\ref{abort:doCollect} and~\ref{release:doCollect}.
Then if $t \in [\ptB{p}{collect:collectLoop},\pt{p}{collect:updateAll}]$ then $t \in [\ptB{p}{abort:doCollect},\pt{p}{abort:doCollect}]$ or $t \in [\ptB{p}{release:doCollect},\pt{p}{release:doCollect}]$.
Therefore, assume now $t \in [\ptB{p}{abort:setX},\pt{p}{abort:doCollect}]$ or $t \in [\ptB{p}{release:setX},\pt{p}{release:doCollect}]$.

\textbf{Case a - } $t \in \Set{[\ptB{p}{abort:setX},\pt{p}{abort:doCollect}],[\ptB{p}{release:setX},\pt{p}{release:doCollect}]}$:
If $t \in [\ptB{p}{abort:setX},\pt{p}{abort:doCollect}]$ then from a code inspection, $p$ is executing \abort{p} and $p$ did not execute a call to \doPromote{p} or \helpRelease{p} before time $t$.
If $t \in [\ptB{p}{release:setX},\pt{p}{release:doCollect}]$ then $p$ is executing \release{p}{j} and then from a code inspection and \Claim{cl:releaseMethodCalledAtMostOnce}\refC{scl:abortOrRelease} it follows that $p$ did not execute a call to \doPromote{p} or \helpRelease{p} before time $t$.
Then from Claims~\ref{cl:basic}\refC{scl:tau} and~\ref{cl:basic}\refC{scl:piOrTheta} it follows that events $\tau_p$, $\pi_p$ and $\theta_p$ did not occur before time $t$.
Since $t \in [\ptB{p}{abort:setX},\pt{p}{abort:doCollect}]$ or $t \in [\ptB{p}{release:setX},\pt{p}{release:doCollect}]$, it follows from \Claim{cl:eventE1HasNotOccured} that cease-release event $\phi_p$ did not occur before time $t$.

\textbf{Case b - } $t \in \{$ \ptB{p}{hRelease:setT}, $[\ptB{p}{hRelease:readX},\ptB{p}{hRelease:callPromote}]$ $\}$:
Then $p$ is executing \helpRelease{p}.
Then from Claim~\ref{cl:basic}\refC{scl:hRelease} it follows that $p$ is executing a call to \helpRelease{p} in line~\ref{release:callhRelease:King},~\ref{release:callhRelease:Queen} or~\ref{abort:callhRelease}.
Then from \Claim{cl:eventE1HasNotOccured} it follows that cease-release event $\phi_p$ did not occur before time $t$.
From \Claim{cl:releaseMethodCalledAtMostOnce}\refC{scl:hReleaseOnce}, it follows that this is $p$'s only call to \helpRelease{p}.
From Claim~\ref{cl:basic}\refC{scl:promote}, $p$ calls \doPromote{p} only in \Line{release:callPromote} and in \Line{hRelease:callPromote} (during \helpRelease{p}).
Since $p$ has not yet executed \Line{release:callPromote} and this is the only call to \helpRelease{p}, $p$ has not called \doPromote{p} before time $t$.
Then from Claim~\ref{cl:basic}\refC{scl:piOrTheta} it follows that events $\pi_p$ and $\theta_p$ did not occur before time $t$.
By definition, cease-release event $\tau_p$ occurs when $p$ executes a successful \LSync.\CAS{$\bot,p$} in \Line{hRelease:setT}.
If $t = \ptB{p}{hRelease:setT}$, then clearly cease-release event $\tau_p$ did not occur before time $t$.
If $t \in [\ptB{p}{hRelease:readX},\ptB{p}{hRelease:callPromote}]$, then $p$ satisfied the if-condition of \Line{hRelease:setT}, and thus $p$ executed an unsuccessful \LSync.\CAS{$\bot,p$} operation in \Line{hRelease:setT}, and thus cease-release event $\tau_p$ did not occur before time $t$.

\textbf{Case c - } $t \in \{$ \ptB{p}{promote:FR12}, $[\ptB{p}{promote:resetBackpack},\ptB{p}{promote:ctr20}]$ $\}$:
Then $p$ is executing \doPromote{p}.
From \Claim{cl:releaseMethodCalledAtMostOnce}\refC{scl:promoteOnce}, it follows that this is the only call to \doPromote{p}.
By definition, cease-release event $\theta_p$ occurs only when $p$ executes a \ctr.\CAS{$2,0$} operation in \Line{promote:ctr20} of \doPromote{p}.
Event $\theta_p$ did not occur before time $t$ since $t < \pt{p}{promote:ctr20}$ and this is $p$'s only call to \doPromote{p}.
By definition, cease-release event $\pi_p$ occurs only when $p$ executes a \PawnSet.\promote{} operation that returns a non-\pair{\bot}{\bot} value in \Line{promote:FR12} of \doPromote{p}.
If $t = \ptB{p}{promote:FR12}$, then cease-release event $\pi_p$ did not occur before time $t$ since $\ptB{p}{promote:FR12} < \pt{p}{promote:FR12}$ (and since this is $p$'s only call to \doPromote{p}).
If $t \in [\ptB{p}{promote:resetBackpack},\ptB{p}{promote:ctr20}]$, then $p$ satisfied the if-condition of \Line{promote:FR12}, and thus $p$'s \PawnSet.\promote{} operation returned value \pair{\bot}{\bot}, and thus cease-release event $\pi_p$ did not occur before time $t$.
Since $p$ calls \doPromote{p} only in line~\ref{release:callPromote} and \Line{hRelease:callPromote} (during \helpRelease{p}), $p$ is executing line~\ref{release:callPromote},~\ref{release:callhRelease:King},~\ref{release:callhRelease:Queen} or~\ref{abort:callhRelease}.
Then from \Claim{cl:eventE1HasNotOccured} it follows that cease-release event $\phi_p$ did not occur before time $t$.

We now show that cease-release event $\tau_p$ did not occur before time $t$ thus completing the proof.

\textbf{Subcase c(i) - } $p$ called \doPromote{p} during \helpRelease{p}:
Then $p$ satisfied the if-condition of \Line{hRelease:setT}, and thus $p$ executed an unsuccessful \LSync.\CAS{$\bot,p$} operation in \Line{hRelease:setT}, and cease-release event $\tau_p$ did not occur before time $t$.

\textbf{Subcase c(ii) - } $p$ called \doPromote{p} in line~\ref{release:callPromote}:
From \Claim{cl:TableOfRoles}, $\Role[p] \in \cPPawn$ at $\ptB{p}{release:callPromote}$.
Since $\Role[p]$ is unchanged during \release{p}{} (\Claim{cl:basic:Role}\refC{scl:RoleUnchanged}), it follows that $\Role[p] = \cPPawn$ at \ptB{p}{release:ifKing} and \ptB{p}{release:ifQueen}.
Then $p$ fails the if-conditions of lines~\ref{release:ifKing} and~\ref{release:ifQueen}, and does not execute a call to \helpRelease{p} before time $t$.
Then from \Claim{cl:basic}\refC{scl:tau} it follows that cease-release event $\tau_p$ did not occur before time $t$.

\textbf{Proof of~\refC{scl:isReleaser}: }
From Part~\refC{scl:releaseEventsDidNotOccur}, $p$ does not cease to be releaser of \L\ before $t$.
Therefore, to prove our claim we need to show that $p$ becomes a releaser of \L\ at some point $t' < t$.
We first show that $\Role[p] \in \{$ \cKing, \cQueen, \cPPawn\ $\}$ at time $t$.
Let $t'$ be the point when $p$ is poised to execute \release{p}{j}.
From the inspection of the various points in time chosen for $t$ (including $\ptB{p}{release:safetyCheck}$, but excluding $t'$) and the table in \Claim{cl:TableOfRoles}, it follows that $\Role[p] \in \{$ \cKing, \cQueen, \cPPawn\ $\}$ at time $t$ (including $\ptB{p}{release:safetyCheck}$, but excluding $t'$).
Clearly $\Role[p]$ is unchanged during $[t',\ptB{p}{release:safetyCheck}]$.
Then the value of $\Role[p]$ at $t'$ is the same as that at $\ptB{p}{release:safetyCheck}$, i.e., $\Role[p] \in \{$ \cKing, \cQueen, \cPPawn\ $\}$.

\textbf{Case a - } $\Role[p] \in \Set{\cKing,\cQueen}$ at time $t$:
From \Claim{cl:basic:Role}\refC{scl:RoleIncCounter}, $\Role[p]$ is set to \cKing\ or \cQueen\ only when $p$ executes \Line{getLock:IncCounter}.
Then  $p$ changed $\Role[p]$ to $\cKing$ or $\cQueen$ at $\pt{p}{getLock:IncCounter}$, and thus $p$ became a releaser of lock \L\ by condition (R1) at $\pt{p}{getLock:IncCounter} = t' < t$.

\textbf{Case b - } $\Role[p] = \cPPawn$ at time $t$:
From \Claim{cl:helpful:promoted}, it follows that some process $q$ promoted $p$ at \pt{q}{promote:FR12} and $p$ became a releaser of \L\ by condition (R2) at $\pt{q}{promote:FR12} = t' < t$.
\end{proof}

%

\begin{claim}
\label{cl:releaserAtCeaseReleaseEvents}
Consider $p$'s $k$-th passage, where $k \in \N$.
If any of process $p$'s cease-release events occurs at time $t$ then $p$ ceases to be the releaser of lock \L\ at time $t$.
\end{claim}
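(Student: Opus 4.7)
The plan is to do a case analysis on which of the four cease-release events $\phi_p$, $\tau_p$, $\pi_p$, $\theta_p$ occurs at time $t$. In each case, I will pick a point $t' \leq t$ that is explicitly covered by Claim~\ref{cl:ifReleasinglockThenReleaser}, invoke part~\refC{scl:isReleaser} of that claim to deduce that $p$ is a releaser at $t'$, and then verify that no cease-release event of $p$ fires strictly between $t'$ and $t$. This shows $p$ is a releaser immediately before $t$, and so by the very definition of a cease-release event, $p$ ceases to be the releaser of \L{} at $t$.

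For events $\tau_p$, $\pi_p$, and $\theta_p$, the natural choices of $t'$ are $\ptB{p}{hRelease:setT}$, $\ptB{p}{promote:FR12}$, and $\ptB{p}{promote:ctr20}$ respectively; all three points appear in the list of time instants for which Claim~\ref{cl:ifReleasinglockThenReleaser} applies. Part~\refC{scl:releaseEventsDidNotOccur} gives that no prior cease-release event of $p$ has occurred, part~\refC{scl:isReleaser} gives that $p$ is a releaser at $t'$, and since $t'$ is taken to be immediately before the event (no intermediate steps by $p$), $p$ is still a releaser the instant before $t$, and ceases to be one at $t$.

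For $\phi_p$, which occurs at $\pt{p}{release:ctr10}$, the point $\ptB{p}{release:ctr10}$ is not directly on the list of Claim~\ref{cl:ifReleasinglockThenReleaser}. Instead, I take $t' = \ptB{p}{release:safetyCheck}$, the point at which $p$ is poised to begin \release{p}{j}; this case is explicitly covered by Claim~\ref{cl:ifReleasinglockThenReleaser}. In the interval $(t', \ptB{p}{release:ctr10})$, process $p$ executes only the local assignment of \Line{release:setrNotCollected} and evaluates the if-condition of \Line{release:ifKing}. Neither of these steps is a successful \ctr.\CAS{$1,0$}, a successful \LSync.\CAS{$\bot,p$}, a \PawnSet.\promote{} returning a non-$\pair{\bot}{\bot}$ value, nor a \ctr.\CAS{$2,0$}, so by Claim~\ref{cl:basic:ReleaseEvents} none of $p$'s cease-release events can fire in this sub-interval. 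Hence $p$ is still a releaser at $\ptB{p}{release:ctr10}$, and the successful CAS at $t$ causes $p$ to cease being a releaser.

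The only mild subtlety is the $\phi_p$ case, where one must step back to the start of \release{p}{j} because the CAS point itself is not in the list; once one verifies that the two intervening steps are benign, the argument is a direct invocation of Claim~\ref{cl:ifReleasinglockThenReleaser}. I anticipate no real obstacle beyond this bookkeeping.
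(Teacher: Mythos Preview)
Your proposal is correct and follows essentially the same approach as the paper: reduce to showing $p$ is a releaser immediately before $t$, then invoke Claim~\ref{cl:ifReleasinglockThenReleaser}\refC{scl:isReleaser} at the appropriate ``poised'' point for each of the four events. In fact you are slightly more careful than the paper on the $\phi_p$ case---the paper simply cites Claim~\ref{cl:ifReleasinglockThenReleaser} at $\ptB{p}{release:ctr10}$ without noting that this point is not literally on the list, whereas you correctly step back to the start of \release{p}{j} and check that the intervening steps are benign.
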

\begin{proof}
To prove our claim we need to show that $p$ is a releaser of \L\ immediately before time $t$, since by definition $p$ ceases to be a releaser of \L\ when any of $p$'s cease-release events occurs.
By definition, cease-release event $\phi_p$ occurs when $p$ executes a successful \ctr.\CAS{$1,0$} operation in \Line{release:ctr10}, cease-release event $\tau_p$ occurs when $p$ executes a successful \LSync.\CAS{$\bot,p$} in \Line{hRelease:setT}, cease-release event $\pi_p$ occurs only when $p$ executes a \PawnSet.\promote{} operation that returns a non-\pair{\bot}{\bot} value in \Line{promote:FR12}, cease-release event $\theta_p$ occurs only when $p$ executes a \ctr.\CAS{$2,0$} operation in \Line{promote:ctr20}.
From \Claim{cl:ifReleasinglockThenReleaser}\refC{scl:isReleaser}, $p$ is a releaser of \L\ at \ptB{p}{release:ctr10}, \ptB{p}{hRelease:setT}, \ptB{p}{promote:FR12} and \ptB{p}{promote:ctr20}.
Hence, the claim follows.
\end{proof}

%


We say a process has \emph{write-access} to objects \X\ and \LSync, respectively, if the process can write a value to \X\ and \LSync, respectively.
We say a process has \emph{registration-access} to object \PawnSet, if the process can execute an operation on \PawnSet\ that can write values in \Set{\pair{a}{b} | a \in \Set{0,1,2} = \Set{0,\cReg,\cPro}, b \in \N} to some entry of \PawnSet.
We say a process has \emph{deregistration-access} to object \PawnSet, if the process can execute an operation on \PawnSet\ that can write value \pair{\cAbort}{s} = \pair{3}{s}, where $s \in \N$, to some entry of \PawnSet.
Object \PawnSet\ is said to be \emph{candidate-empty} if no entry of \PawnSet\ has value \pair{\cReg}{\cdot} or \pair{\cPro}{\cdot}.

\begin{claim}
\label{cl:variablesChangedByReleaserOnly}
Only releasers of \L\ have write-access to  \X, \LSync\ and registration-access to \PawnSet.
\end{claim}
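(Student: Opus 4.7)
The plan is to show that every shared-memory write that can affect \X, \LSync, or the ``registration'' part of \PawnSet\ happens from a code location that is already listed among the time points/intervals in Claim~\ref{cl:ifReleasinglockThenReleaser}. Since that claim's Part~\refC{scl:isReleaser} guarantees the executing process is a releaser of \L\ at every such point, the conclusion follows immediately from code inspection and bookkeeping.

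First I would catalogue all the write sites. By direct inspection of the code, \X\ is modified only by the (successful) \CAS\ operations in \Line{abort:setX}, \Line{release:setX}, and \Line{hRelease:resetX}; and \LSync\ is modified only by the (successful) \CAS\ operations in \Line{hRelease:setT} and \Line{hRelease:resetT}. For \PawnSet, the operations that have registration-access are those that write a pair with first component in $\Set{0,\cReg,\cPro}$. By Claim~\ref{cl:basic:Collect}\refC{scl:CollectOps} and the sequential specification of \APArray{n}, these are precisely the \PawnSet.\collect{$A$} in \Line{collect:updateAll} (which writes \pair{\cReg}{\cdot}), the \PawnSet.\promote{} in \Line{promote:FR12} (which writes \pair{\cPro}{\cdot}), and the \PawnSet.\reset{} in \Line{promote:resetBackpack} (which writes \pair{0}{\bot}). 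The operations \PawnSet.\cUpdate{$\cdot,\cdot$} in \Line{abort:ifHead} and \PawnSet.\remove{$\cdot$} in \Line{promote:collectFixSelf} and \Line{hRelease:collectFixOther} only write entries with first component $\cAbort$, so they are deregistration-access and do not concern us.

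Next, I would match each identified write site to a point or interval that appears in the hypothesis of Claim~\ref{cl:ifReleasinglockThenReleaser}. Specifically, \Line{abort:setX} lies in $[\ptB{p}{abort:setX},\pt{p}{abort:doCollect}]$; \Line{release:setX} lies in $[\ptB{p}{release:setX},\pt{p}{release:doCollect}]$; \Line{hRelease:setT} lies at \ptB{p}{hRelease:setT}; \Line{hRelease:resetX} and \Line{hRelease:resetT} lie in $[\ptB{p}{hRelease:readX},\ptB{p}{hRelease:callPromote}]$; \Line{collect:updateAll} lies in $[\ptB{p}{collect:collectLoop},\pt{p}{collect:updateAll}]$; \Line{promote:FR12} lies at \ptB{p}{promote:FR12}; and \Line{promote:resetBackpack} lies in $[\ptB{p}{promote:resetBackpack},\ptB{p}{promote:ctr20}]$. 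Applying Claim~\ref{cl:ifReleasinglockThenReleaser}\refC{scl:isReleaser} to each of these points/intervals yields that the executing process is a releaser of \L\ at the moment of the write, which is exactly the required conclusion.

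The main obstacle is simply exhaustive bookkeeping: one must be sure not to miss any write location to \X\ or \LSync, and must correctly classify each \PawnSet\ operation by which first-component value it can install, so that only the registration-access ones (\collect{}, \promote{}, \reset{}) are carried over and the deregistration-access ones (\cUpdate{}, \remove{}) are set aside. Once this inventory is complete, no new invariant is needed; the claim reduces directly to a look-up in Claim~\ref{cl:ifReleasinglockThenReleaser}.
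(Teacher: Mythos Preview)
Your proposal is correct and follows essentially the same approach as the paper: enumerate all code lines that can write to \X, \LSync, or perform a registration-access operation on \PawnSet\ (namely \collect{}, \promote{}, \reset{}), and then invoke Claim~\ref{cl:ifReleasinglockThenReleaser}\refC{scl:isReleaser} at each such line to conclude the executing process is a releaser. Your write-up is slightly more explicit than the paper's in matching each line to the specific interval in the hypothesis of Claim~\ref{cl:ifReleasinglockThenReleaser}, but the argument is the same.
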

\begin{proof}
The following observations follow from an inspection of the code.
A value can be written to \X\ only in lines~\ref{abort:setX},~\ref{release:setX} and~\ref{hRelease:resetX}.
A value can be written to \LSync\ only in lines~\ref{hRelease:setT} and~\ref{hRelease:resetT}.
From the semantics of the \APArray{n} object, only operations  \collect{}, \promote{}, and \reset{} can write values in \Set{\pair{a}{b} | a \in \Set{0,\cReg,\cPro} = \Set{0,1,2}, b \in \N}  to \PawnSet.
From \Claim{cl:basic:Collect}\refC{scl:CollectOps}, the operations \collect{}, \promote{},  and \reset{} are executed on \PawnSet\ only in lines~\ref{collect:updateAll},~\ref{promote:FR12}, and~\ref{promote:resetBackpack}, respectively.

Suppose an arbitrary process $p$ writes a value to \X\ or \LSync, or a value in \Set{\pair{a}{b} | a \in \Set{0,1,2}, b \in \N} to an entry of \PawnSet.
From \Claim{cl:ifReleasinglockThenReleaser}\refC{scl:isReleaser}, $p$ is a releaser of \L\ at \ptB{p}{abort:setX}, \ptB{p}{release:setX}, \ptB{p}{hRelease:resetX}, \ptB{p}{hRelease:setT}, \ptB{p}{hRelease:resetT}, \ptB{p}{promote:FR12}, \ptB{p}{promote:resetBackpack} and \ptB{p}{collect:updateAll}.
Hence, the claim follows.
\end{proof}

\begin{claim}
\label{cl:iThCollectChangedByPAndReleaser}
The $i$-entry of \PawnSet\ can be changed only by process $i$ or a releaser of \L.
\end{claim}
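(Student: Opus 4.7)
The plan is to case-split on which operation modifies the $i$-th entry of \PawnSet. By Claim~\ref{cl:basic:Collect}\refC{scl:CollectOps}, the only operations on \PawnSet\ are \collect{$A$} (Line~\ref{collect:updateAll}), \promote{} (Line~\ref{promote:FR12}), \remove{$i'$} inside \doPromote{$i'$} (Line~\ref{promote:collectFixSelf}), \remove{$j$} inside \helpRelease{} (Line~\ref{hRelease:collectFixOther}), \cUpdate{$k,s$} inside \abort{$k$} (Line~\ref{abort:ifHead}), and \reset{} (Line~\ref{promote:resetBackpack}). I will dispatch these in three groups: registration-type writes, self-targeted deregistrations, and the one deregistration that uses an index read from \LSync.

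First I would handle \collect{$A$}, \promote{}, and \reset{}. Each of these exercises registration-access to \PawnSet, so by Claim~\ref{cl:variablesChangedByReleaserOnly} only a releaser of \L\ can perform them; hence any entry they alter is changed by a releaser. Second, I would handle the self-targeted deregistrations. The operation \remove{$i'$} in Line~\ref{promote:collectFixSelf} is executed inside \doPromote{$i'$}, and Claim~\ref{cl:basic}\refC{scl:promote} asserts that \doPromote{$i'$} is called only by process $i'$; consequently this operation can change only the $i'$-th entry and does so by action of process $i'$ itself. The same reasoning applies to \cUpdate{$k,s$} in Line~\ref{abort:ifHead} inside \abort{$k$}, which by the naming convention adopted just after Claim~\ref{claim:releaseNotConcurrent} is executed only by process $k$.

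The main obstacle is the remaining case, \remove{$j$} in Line~\ref{hRelease:collectFixOther} of \helpRelease{}, because the index $j$ is read from \LSync\ and may coincide with $i$ while the executing process $p$ is neither process $i$ nor, a priori, a releaser of \L. To close this gap I would invoke Claim~\ref{cl:ifReleasinglockThenReleaser}\refC{scl:isReleaser}: the point $\ptB{p}{hRelease:collectFixOther}$ lies inside the interval $[\ptB{p}{hRelease:readX},\ptB{p}{hRelease:callPromote}]$ covered by that claim, so $p$ is a releaser of \L\ at the moment of this operation. Therefore in this case too the $i$-th entry is changed by a releaser. Combining the three groups, every change to the $i$-th entry of \PawnSet\ is performed either by process $i$ or by a releaser of \L, which is exactly the statement of the claim.
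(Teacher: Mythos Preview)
Your proof is correct and follows essentially the same strategy as the paper: enumerate the operations that can touch \PawnSet\ and show each is performed either by a releaser or by the process whose own slot is affected. The paper organizes the case split by the \emph{value} written (registration values $\pair{a}{\cdot}$ with $a\in\{0,1,2\}$ versus the abort value $\pair{3}{\cdot}$), whereas you organize it by \emph{operation}; the two decompositions are equivalent. One minor difference worth noting: for the \remove{$i'$} in Line~\ref{promote:collectFixSelf} the paper argues that the executing process is a releaser (via Claim~\ref{cl:ifReleasinglockThenReleaser}), while you instead observe that \doPromote{$i'$} is executed only by process $i'$, so the affected entry is the caller's own. Your route here is arguably cleaner, since the time $\ptB{p}{promote:collectFixSelf}$ is not literally among the intervals enumerated in Claim~\ref{cl:ifReleasinglockThenReleaser} (though it is easily seen to precede $\ptB{p}{promote:FR12}$, which is listed).
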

\begin{proof}
The values that can be written to \PawnSet\ are in \Set{\pair{a}{b} | a \in \Set{0,1,2,3}, b \in \N}.
A process that can write values in  \Set{\pair{a}{b} | a \in \Set{0,1,2}, b \in \N} to any entry of \PawnSet\ is said to have registration-access to \PawnSet.
From \Claim{cl:variablesChangedByReleaserOnly} it follows that only a releaser of \L\ has registration-access to \PawnSet, therefore only a releaser of \L\ can write values in \Set{\pair{a}{b} | a \in \Set{0,1,2}, b \in \N} to the $i$-th entry of \PawnSet.
From \Claim{cl:basic:Collect}\refC{scl:CollectDeregister} the value \pair{\cAbort}{s} = \pair{3}{s}, where $s \in \N$, can be written to the $i$-th entry of \PawnSet\ only when a process executes a \remove{$i$}, \remove{$i$} or \PawnSet.\cUpdate{$i,s$} operation in line~\ref{promote:collectFixSelf},~\ref{hRelease:collectFixOther} or~\ref{abort:ifHead}, respectively.
From \Claim{cl:ifReleasinglockThenReleaser}\refC{scl:isReleaser}, it follows that a process executing lines~\ref{promote:collectFixSelf} and ~\ref{hRelease:collectFixOther} is a releaser of \L.
Since a \PawnSet.\cUpdate{$i,s$} operation in \Line{abort:ifHead} is executed only by process $i$, our claim follows.
\end{proof}

\begin{claim}
\label{cl:TChangedByReleaserOnly}
\LSync\ is changed to a non-$\bot$ value only by a releaser of \L\ (say $r$) in \Line{hRelease:setT} which triggers the cease-release event $\tau_r$.
\end{claim}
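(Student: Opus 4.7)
The plan is to isolate all writes to \LSync, show that only successful CAS operations at \Line{hRelease:setT} produce a non-$\bot$ value, and then invoke the already-established machinery to identify the writer as a releaser and the write as a $\tau_r$ event.

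First I would do a syntactic scan of the code to enumerate every line at which \LSync\ is modified. Inspection yields exactly two places: \Line{hRelease:setT}, which performs \LSync.\CAS{$\bot,i$}, and \Line{hRelease:resetT}, which performs \LSync.\CAS{$j,\bot$}. The second of these can only install the value $\bot$, so any transition of \LSync\ to a non-$\bot$ value must be effected by a process $r$ that successfully executes \LSync.\CAS{$\bot,r$} in \Line{hRelease:setT} (the argument $i$ being $r$'s own pseudo-ID in \helpRelease{r}).

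Next, from the definition of the cease-release events, a successful execution of \LSync.\CAS{$\bot,r$} in \Line{hRelease:setT} by $r$ is precisely the event $\tau_r$, so the ``triggers $\tau_r$'' part is immediate from the definitions. It remains to argue that $r$ is a releaser of \L\ at the moment of this write. For this I would invoke \Claim{cl:ifReleasinglockThenReleaser}\refC{scl:isReleaser}, which explicitly lists $\ptB{r}{hRelease:setT}$ among the times at which $r$ is guaranteed to be a releaser of \L.

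There is no real obstacle here: the argument is essentially a code-inspection plus one citation, and the only thing to be careful about is matching up the time instants so that the ``before the CAS'' guarantee of \Claim{cl:ifReleasinglockThenReleaser} applies exactly to the step that installs the non-$\bot$ value into \LSync. Combining the three observations completes the proof.
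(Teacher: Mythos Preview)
Your proposal is correct and essentially matches the paper's proof: both do a code inspection to isolate \Line{hRelease:setT} as the only place \LSync\ can become non-$\bot$, identify the successful CAS there as the event $\tau_r$ by definition, and then argue the writer is a releaser. The only cosmetic difference is that the paper cites \Claim{cl:variablesChangedByReleaserOnly} (the consolidated ``only releasers have write-access to \LSync'' statement) whereas you go one level deeper and cite \Claim{cl:ifReleasinglockThenReleaser}\refC{scl:isReleaser} directly; either citation works.
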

\begin{proof}
By definition, cease-release event $\tau_p$ occurs when $p$ executes a successful \LSync.\CAS{$\bot,p$} in \Line{hRelease:setT}.
From a code inspection, \LSync\ is changed to a non-$\bot$ value only when some process (say $r$) executes a successful \LSync.\CAS{$\bot,r$} operation in \Line{hRelease:setT}.
From \Claim{cl:variablesChangedByReleaserOnly} it follows that $\LSync$ is changed only by a releaser of \L.
Then $r$ is a releaser of \L\ when it changes \LSync\ to a non-$\bot$ value in \Line{hRelease:setT} and doing so triggers the cease-release event $\tau_r$.
\end{proof}

\begin{claim}
\label{cl:PromoteOpByReleaserOnly}
A \PawnSet.\promote{} operation is executed only by a releaser of \L\ (say $r$), and if the value returned is non-\pair{\bot}{\bot} the cease-release event $\pi_r$ is triggered.
\end{claim}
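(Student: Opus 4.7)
The plan is to chain together earlier claims about where \PawnSet.\promote{} can syntactically appear, who can invoke the surrounding private method, and when such a process must be a releaser. No new invariants are needed.

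First I would pin down the code location. By Claim~\ref{cl:basic:Collect}\refC{scl:CollectOps}, the only \PawnSet.\promote{} operation in the entire implementation is the one in line~\ref{promote:FR12}, which sits inside the body of method \doPromote{}. By Claim~\ref{cl:basic}\refC{scl:promote}, method \doPromote{r} is only ever called by process $r$ itself (from line~\ref{release:callPromote} or from line~\ref{hRelease:callPromote} inside a call to \helpRelease{r}). Hence, if any process executes a \PawnSet.\promote{} operation, that process is some $r$ executing its own call to \doPromote{r}, and the operation occurs at time $\pt{r}{promote:FR12}$.

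Next I would invoke Claim~\ref{cl:ifReleasinglockThenReleaser}\refC{scl:isReleaser}, whose hypothesis explicitly lists $\ptB{r}{promote:FR12}$ as one of the allowed times. This yields the first conclusion: $r$ is a releaser of \L\ at $\ptB{r}{promote:FR12}$, and therefore, since the \PawnSet.\promote{} step itself does not change this status before being evaluated, $r$ is a releaser of \L\ when the operation is executed.

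For the second part, the argument is immediate from the definition of the cease-release event $\pi_r$: $\pi_r$ is defined to occur exactly when $r$ executes a \PawnSet.\promote{} operation in line~\ref{promote:FR12} that returns a non-$\pair{\bot}{\bot}$ value. Thus if the \promote{} call in question returns a non-$\pair{\bot}{\bot}$ value, then by definition the event $\pi_r$ is triggered at that very step. I do not anticipate any real obstacle here; the claim is essentially a bookkeeping lemma that packages Claims~\ref{cl:basic:Collect}, \ref{cl:basic}, and \ref{cl:ifReleasinglockThenReleaser} together with the definition of $\pi_r$, and will be used later (together with Claim~\ref{cl:TChangedByReleaserOnly}) to reason that all transitions of the shared state between ``release-ownership'' phases are initiated by a current releaser.
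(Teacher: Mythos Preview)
Your proposal is correct and follows essentially the same approach as the paper's proof. The only cosmetic difference is that the paper cites Claim~\ref{cl:variablesChangedByReleaserOnly} (registration-access to \PawnSet\ is restricted to releasers) to conclude $r$ is a releaser, whereas you invoke Claim~\ref{cl:ifReleasinglockThenReleaser}\refC{scl:isReleaser} directly at $\ptB{r}{promote:FR12}$; since Claim~\ref{cl:variablesChangedByReleaserOnly} is itself proved via Claim~\ref{cl:ifReleasinglockThenReleaser} at exactly that point, your route is slightly more direct but otherwise identical.
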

\begin{proof}
By definition, cease-release event $\pi_p$ occurs only when $p$ executes a \PawnSet.\promote{} operation that returns a non-\pair{\bot}{\bot} value in \Line{promote:FR12}.
From a code inspection, a \PawnSet.\promote{} operation is executed only when some process (say $r$) executes \Line{hRelease:setT}.
From \Claim{cl:variablesChangedByReleaserOnly} it follows that $\PawnSet$ is changed only by a releaser of \L.
Then $r$ is a releaser of \L\ when it executes a \PawnSet.\promote{} operation, and if the operation returns a non-\pair{\bot}{\bot} value then the cease-release event $\pi_r$ is triggered.
\end{proof}

\begin{claim}
\label{cl:EventsPromote}
During an execution of \doPromote{p} exactly one of the events $\pi_{p}$ and $\theta_p$ occurs.
\end{claim}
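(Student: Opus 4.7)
The plan is to carry out a direct case analysis on the return value of the unique \PawnSet.\promote{} call that a process $p$ executes in \Line{promote:FR12} during \doPromote{p}. This is essentially a code-inspection argument; I would first observe that within the body of \doPromote{p}, \Line{promote:FR12} is the only line that performs a \PawnSet.\promote{} operation, and \Line{promote:ctr20} is the only line that performs a \ctr.\CAS{$2,0$} operation. By the definitions of the cease-release events, this means $\pi_p$ can be triggered within \doPromote{p} only at \Line{promote:FR12}, and $\theta_p$ can be triggered within \doPromote{p} only at \Line{promote:ctr20}.

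I would then split on the value $\pair{j}{seq}$ returned by the \PawnSet.\promote{} call at \Line{promote:FR12}. In the first case, $j \neq \bot$: the operation returns a non-$\pair{\bot}{\bot}$ value, which by definition of $\pi_p$ triggers event $\pi_p$ at that point. Since the if-condition of \Line{promote:ifNotPromoted} ($j = \bot$) fails, $p$ enters the else-branch and executes \Line{promote:ApplyWantOk} but not \Line{promote:ctr20}, so $\theta_p$ does not occur in this execution of \doPromote{p}. In the second case, $j = \bot$: the return value is $\pair{\bot}{\bot}$, so by definition $\pi_p$ is not triggered at \Line{promote:FR12}. The if-condition of \Line{promote:ifNotPromoted} is satisfied, so $p$ executes \Line{promote:resetBackpack} and then \Line{promote:ctr20}; the execution of the \ctr.\CAS{$2,0$} operation in \Line{promote:ctr20} triggers event $\theta_p$ by its definition.

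Combining the two cases, exactly one of $\pi_p$ and $\theta_p$ occurs during the execution of \doPromote{p}. There is no real obstacle here; the claim is essentially a restatement of the if/else structure in Method \doPromote{}. The only subtlety worth flagging in the write-up is that the definitions of $\pi_p$ and $\theta_p$ are stated as occurring at specific lines of \doPromote{}, so one must note up front that these are the unique sites inside \doPromote{} at which the respective operations are executed, after which the case split is immediate.
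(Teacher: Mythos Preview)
Your proposal is correct and follows essentially the same approach as the paper's own proof: a two-case split on whether the \PawnSet.\promote{} call in \Line{promote:FR12} returns a non-$\pair{\bot}{\bot}$ value, using the if/else structure of \doPromote{} to conclude that exactly one of $\pi_p$, $\theta_p$ is triggered.
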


\begin{proof}
By definition, cease-release event $\pi_p$ occurs when $p$ executes a \PawnSet.\promote{} operation in \Line{promote:FR12} that returns a non-\pair{\bot}{\bot} value, and cease-release event $\theta_p$ occurs when $p$ executes a \ctr.\CAS{$2,0$} operation in \Line{promote:ctr20} during \doPromote{p}.

\textbf{Case a - } the \PawnSet.\promote{} operation in \Line{promote:FR12} returns a non-\pair{\bot}{\bot} value, and thus cease-release event $\pi_p$ occurs:
Then $p$ fails the if-condition of \Line{promote:ifNotPromoted} and \Line{promote:ctr20} is not executed.
Therefore, cease-release event $\theta_p$ does not occur.

\textbf{Case b - } the \PawnSet.\promote{} operation in \Line{promote:FR12} returns \pair{\bot}{\bot}, and thus cease-release event $\pi_p$ does not occur:
Then $p$ satisfies the if-condition of \Line{promote:ifNotPromoted}, and executes a \ctr.\CAS{$2,0$} operation in \Line{promote:ctr20}.
Hence, cease-release event $\theta_p$ occurs.
\end{proof}

\begin{claim}
\label{cl:EventsHelpRelease}
During an execution of \helpRelease{p} exactly one of the events $\tau_{p},\pi_{p}$ and $\theta_p$ occurs.
\end{claim}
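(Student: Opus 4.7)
The plan is to prove this by direct case analysis on the outcome of the single shared-memory operation at \Line{hRelease:setT}, which is always the first step taken in a \helpRelease{p} call. This is the natural branch point since the code of \helpRelease{} either returns immediately (when the \CAS\ succeeds) or executes its entire if-block culminating in a call to \doPromote{p} (when the \CAS\ fails). Then I appeal to \Claim{cl:EventsPromote} to handle the nested \doPromote{p} call.

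First I would observe, by inspection of method \helpRelease{p}, that the only shared-memory operations executed are the \LSync.\CAS{$\bot,p$} at \Line{hRelease:setT}, the reads and \CAS\ operations on \X\ and \LSync\ in lines~\ref{hRelease:readX}--\ref{hRelease:resetT}, the \PawnSet.\remove{j} in \Line{hRelease:collectFixOther}, and the call to \doPromote{p} in \Line{hRelease:callPromote}. Of these, only the \LSync.\CAS{$\bot,p$} can trigger $\tau_p$ (by definition) and only the nested \doPromote{p} call can trigger $\pi_p$ or $\theta_p$ (by Claims~\ref{cl:basic:ReleaseEvents}\refC{scl:def:pi} and~\ref{cl:basic:ReleaseEvents}\refC{scl:def:theta}, since event $\pi_p$ requires a \PawnSet.\promote{} in \Line{promote:FR12} and $\theta_p$ requires a \ctr.\CAS{$2,0$} in \Line{promote:ctr20}, both of which occur only inside \doPromote{p}).

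Then I would split into two cases. In the first case, the \LSync.\CAS{$\bot,p$} at \Line{hRelease:setT} succeeds; by definition this is exactly the occurrence of $\tau_p$. Because the if-condition ``$\neg\LSync.\CAS{\bot,i}$'' then evaluates to \False, the body of the if-block is skipped entirely, and in particular \doPromote{p} is never invoked during this call. So neither $\pi_p$ nor $\theta_p$ occurs, and exactly $\tau_p$ occurs. In the second case, the \LSync.\CAS{$\bot,p$} fails, so $\tau_p$ does not occur at this line; and since \Line{hRelease:setT} is the unique site where such a \CAS\ appears in \helpRelease{p}, no $\tau_p$ event occurs during this call at all. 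The body of the if-block is then executed, and its final action is the call to \doPromote{p} in \Line{hRelease:callPromote}. By \Claim{cl:EventsPromote}, during this execution of \doPromote{p} exactly one of $\pi_p$ and $\theta_p$ occurs, and since no other line in \helpRelease{p} can trigger these events, exactly one of $\pi_p,\theta_p$ occurs during the overall \helpRelease{p} execution.

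Combining the two cases, during an execution of \helpRelease{p} exactly one of $\tau_p,\pi_p,\theta_p$ occurs. There is no real obstacle here: the argument is entirely a matter of code inspection plus a single appeal to \Claim{cl:EventsPromote}; the only subtlety is being careful to argue that the lines \ref{hRelease:readX}--\ref{hRelease:collectFixOther} executed between the failed \LSync.\CAS\ and the call to \doPromote{p} touch only \X, \LSync\ (via a \CAS{$j,\bot$} with $j\ne\bot$ in \Line{hRelease:resetT}, which cannot be a successful $\bot\to p$ transition) and a \PawnSet.\remove{} operation, none of which matches the defining shared-memory operation of $\tau_p$, $\pi_p$, or $\theta_p$.
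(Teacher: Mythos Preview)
Your proposal is correct and follows essentially the same approach as the paper: a case split on the success or failure of the \LSync.\CAS{$\bot,p$} at \Line{hRelease:setT}, with the failure case appealing directly to \Claim{cl:EventsPromote}. Your additional remarks about why the intermediate lines~\ref{hRelease:readX}--\ref{hRelease:collectFixOther} cannot trigger any of the three events are a bit more explicit than the paper's version, but the argument structure is identical.
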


\begin{proof}
By \Claim{cl:basic:ReleaseEvents}, events $\pi_p$ and $\theta_p$ can only occur during $p$'s call to \doPromote{p}, and cease-release event $\tau_p$ occurs when $p$ executes a successful \LSync.\CAS{$\bot,p$} operation in \Line{hRelease:setT}.

\textbf{Case a - } $p$ executes a successful \LSync.\CAS{$\bot,p$} operation in \Line{hRelease:setT}, and thus cease-release event $\tau_p$ occurs:
Then  $p$ fails the if-condition of \Line{hRelease:setT}, and returns immediately from its call to \helpRelease{i}.
Therefore, events $\pi_{p}$ and $\theta_p$ do not occur.

\textbf{Case b - } $p$ executes an unsuccessful \LSync.\CAS{$\bot,p$} operation in \Line{hRelease:setT}, and thus cease-release event $\tau_p$ does not occur.
Then  $p$ satisfies the if-condition of \Line{hRelease:setT}, and calls \doPromote{p} in \Line{hRelease:callPromote}.
From \Claim{cl:EventsPromote}, exactly one of the events $\pi_p$ and $\theta_p$ occurs during $p$'s call to \doPromote{p}.
\end{proof}

\begin{claim}
 The value of \ctr\ can change only when a \ctr.\inc{}, \ctr.\CAS{$2,0$} or \ctr.\CAS{$1,0$} operation is executed in lines~\ref{getLock:IncCounter},~\ref{promote:ctr20} or~\ref{release:ctr10}.
\label{cl:basic:ctr}
\end{claim}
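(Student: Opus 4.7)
The plan is to prove this claim by a straightforward code inspection, exploiting the fact that $\ctr$ is encapsulated as an instance of \RCASCounter{2} and hence its value changes only via the counter's state-modifying operations.

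First, I would recall from Lemma~\ref{thm:ctr:linearizable} (and the specification of type \CASCounter{k} in Figure~\ref{fig:counterSpecfication}) that only the \inc{} and \CAS{} methods of an \RCASCounter{k} object can change its value; \Read{} leaves the value unchanged. Thus it suffices to enumerate every invocation of $\ctr.\inc{}$ or $\ctr.\CAS{\cdot,\cdot}$ anywhere in Figures~\ref{fig:ARMEAlgorithm1} and \ref{fig:ARMEAlgorithm2}.

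Next, I would scan the code methodically. The only occurrences of $\ctr$ are: the \Read{} in Line~\ref{getLock:awaitAckOrCtrDecrease} (which cannot change $\ctr$), the $\ctr.\inc{}$ in Line~\ref{getLock:IncCounter}, the $\ctr.\CAS{1,0}$ in Line~\ref{release:ctr10}, and the $\ctr.\CAS{2,0}$ in Line~\ref{promote:ctr20}. Since the algorithm contains no other references to $\ctr$, these are the only lines whose execution can modify its value, which is exactly the claim.

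The main ``obstacle'' is really just confirming completeness of the enumeration; there is no nontrivial argument here, which is why the claim is stated as following immediately from an inspection of the code. I would present the proof as a single sentence pointing the reader to the three cited lines and invoking Lemma~\ref{thm:ctr:linearizable} to rule out $\ctr.\Read{}$.
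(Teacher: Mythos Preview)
Your proposal is correct and takes essentially the same approach as the paper: both arguments reduce to a code inspection, noting that the only operations on $\ctr$ besides \Read{} are the three listed, and that \Read{} cannot change the value. If anything, your enumeration is slightly more explicit than the paper's, which dispatches the claim in one line after recalling the semantics of \RCASCounter{2}.
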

\begin{proof}
From the semantics of the \RCASCounter{2} object, if \ctr\ is increased to value $i$ by a \ctr.\inc{} operation, then its value was $i-1$ immediately before the operation was executed.
Then all claims follow from an inspection of the code.
\end{proof}

\begin{claim} \label{cl:ctrValues}
If the value of \ctr\ changes, it either increases by $1$ or decreases to $0$.
Moreover its values are in \Set{0,1,2}.
\end{claim}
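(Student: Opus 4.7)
The plan is to appeal to Claim~\ref{cl:basic:ctr}, which already tells us that the only operations that can change \ctr\ are \ctr.\inc{}, \ctr.\CAS{$2,0$}, and \ctr.\CAS{$1,0$} (appearing in lines~\ref{getLock:IncCounter},~\ref{promote:ctr20}, and~\ref{release:ctr10} respectively). Given this, both parts of the claim follow from inspecting what each of these three operations can do to the value, using the specification of \RCASCounter{2} established in Lemma~\ref{thm:ctr:linearizable} (equivalently, the sequential specification of \CASCounter{2} in Figure~\ref{fig:counterSpecfication}).

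First I would handle the range assertion. Since \ctr\ is declared as \RCASCounter{2} with initial value $0$, its value lies in $\Set{0,1,2}$ initially. The only state-changing operations on \ctr\ in the code are the three listed above. Operation \ctr.\inc{} can only move the value from some $v \in \Set{0,1}$ to $v+1 \in \Set{1,2}$ (and returns $2$ without any state change when the current value is $2$), so it preserves the invariant. The two \CAS\ operations \ctr.\CAS{$2,0$} and \ctr.\CAS{$1,0$} only ever write $0$. Hence by a trivial induction on the prefix of the execution, the value of \ctr\ always lies in $\Set{0,1,2}$.

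Next I would prove the ``increases by $1$ or decreases to $0$'' part. Suppose some step changes the value of \ctr. By Claim~\ref{cl:basic:ctr} this step is one of the three operations above. If it is \ctr.\inc{}, then by the specification of \RCASCounter{2} a state change occurs only when the counter is incremented from $v$ to $v+1$ for $v \in \Set{0,1}$ (i.e., the \CAS{$v,v+1$} in line~\ref{RCAS:inc:CAS} of the implementation succeeds), so the value increases by exactly $1$. If the step is \ctr.\CAS{$2,0$}, then it causes a state change only when the pre-state is $2$ and the post-state is $0$, which is a decrease to $0$; analogously for \ctr.\CAS{$1,0$}. In all cases, a state change either increments the value by $1$ or sets it to $0$.

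No obstacle is expected here: the claim is a pure consequence of (i) Claim~\ref{cl:basic:ctr} restricting the set of modifying operations and (ii) the semantics of \RCASCounter{2} already established in Lemma~\ref{thm:ctr:linearizable}. The proof is essentially a short case analysis.
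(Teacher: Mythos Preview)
Your proposal is correct and follows essentially the same approach as the paper: both arguments invoke Claim~\ref{cl:basic:ctr} to restrict the modifying operations to \ctr.\inc{}, \ctr.\CAS{$2,0$}, and \ctr.\CAS{$1,0$}, and then appeal to the semantics of \RCASCounter{2} to conclude that each change either increments by one or resets to $0$, keeping the value in $\Set{0,1,2}$. Your explicit induction for the range invariant is a slight elaboration, but the substance is the same.
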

\begin{proof}
From the semantics of the \RCASCounter{2} object, a \ctr.\inc{} operation changes the value of \ctr\  from $i$ to $i+1$ only if $i \in \Set{0,1}$.
From Claims~\ref{cl:basic:ctr}, the value of \ctr\ can change only when a \ctr.\inc{}, \ctr.\CAS{$2,0$} or \ctr.\CAS{$1,0$} operation is executed (in lines~\ref{getLock:IncCounter},~\ref{promote:ctr20} or~\ref{release:ctr10}).
Then it follows that the values of \ctr\ are in\Set{0,1,2}.
It also follows that the value of \ctr\ either changes from $0$ to $1$ and back to $0$, or it changes from $0$ to $1$ to $2$ and back to $0$.
\end{proof}

\bparagraph{\ctr-Cycle Interval $T$}
\label{sec:IntervalT}
Let $T = [t_s,t_e)$ be a time interval where $t_s$ is a point when \ctr\ is $0$ and $t_e$ is the next point in time when \ctr\ is decreased to $0$.
For $i \in \Set{0,1,2}$ let \I{i} = $\{ t \in T | \ctr = i$ at $t \}$ and let time \Ib{i} = \minimum{\I{i}} and time \Ia{i} = \maximum{\I{i}}.
From \Claim{cl:ctrValues}, it follows immediately that during $T$ the set $\I{i}, i \in \Set{0,1,2}$, forms an interval $[\Ib{i},\Ia{i}]$, and $\I{2} = \varnothing$ if and only if \ctr\ is never increased to $2$ during $T$.
Moreover, $t_s =  \Ib{0}$ and \I{0} \emph{is immediately followed by} \I{1} (i.e., \minimum{\I{1}} = \maximum{\I{0}} + 1).
If $\I{2} \neq \varnothing$ then \I{2} follows immediately after \I{1}.
The \ctr-cycle interval $T$ ends either at time \Ia{1} if $\I{2} = \varnothing$, or at time \Ia{2} if $\I{2} \neq \varnothing$.
 

Then it also follows that exactly one process changes \ctr\ from $0$ to $1$ during $T$, and it does so at time \Ib{1}.
Let \K\ be the process that increases \ctr\ to $1$ at time \Ib{1}.
And if $\I{2} \neq \varnothing$ then exactly one process changes \ctr\ from $1$ to $2$ during $T$, and it does so at time \Ib{2}.
If $\I{2} \neq \varnothing$ let \Q\ be the process that increases \ctr\ to $2$ at time \Ib{2}.
Let $R(t)$ denote the set of processes that are the releasers of lock \L\ at time $t \in T$.

\begin{claim} \label{cl:proofProperties1}
If $R(\Ib{0}) = \varnothing$ and at $\Ib{0}$, $\X = \LSync = \bot$ and \PawnSet\ is candidate-empty, then the following holds:

\begin{enumerate}[(a)]
    \item $\forall_{t \in \I{0}}: R(t) = \varnothing$ and throughout \I{0}, $\X = \LSync = \bot$ and \PawnSet\ is candidate-empty.\label{I0}
    \item $R(\Ib{1}) = \Set{\K}$ and at time \Ib{1}, $\X = \LSync = \bot$ and \PawnSet\ is candidate-empty.\label{@I1+}
    \item \K\ executes  lines of code of \lock{\K} starting with \Line{getLock:ApplyBotWant} as depicted in Figure~\ref{fig:KsGetLock}.\label{KsGetLock} (A legend for the figure is given in Figure~\ref{fig:Legend}.)

\begin{figure}[!htb]
\centering
\includegraphics[scale=0.6]{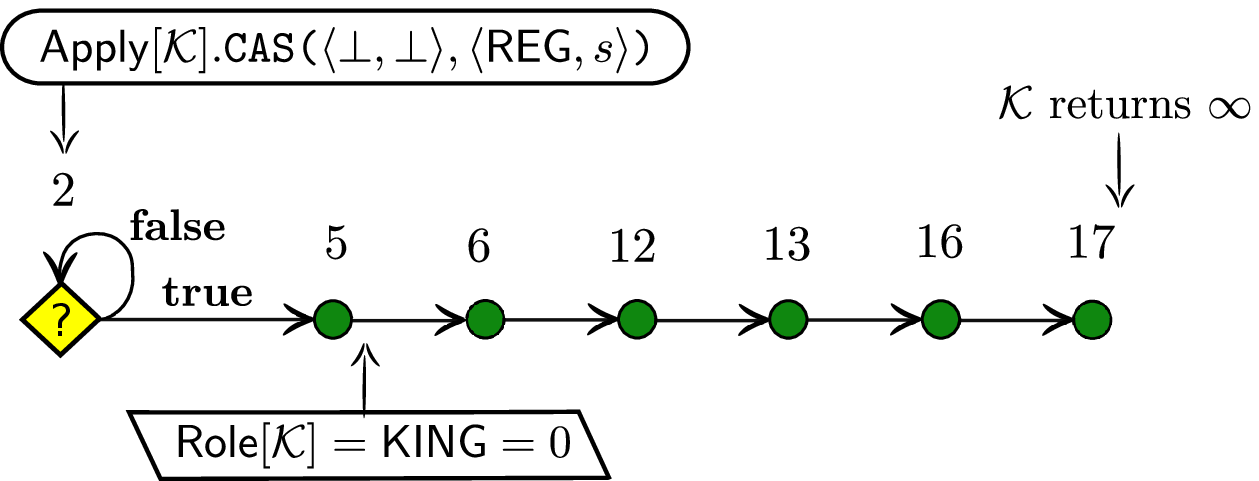}
\caption{\K's call to \lock{\K}}
\label{fig:KsGetLock}
\end{figure}

    \item \K's call to \lock{\K} returns $\infty$ and $\Role[\K] = \cKing$ throughout $[\pt{\K}{getLock:IncCounter},\pt{\K}{getLock:returninfty}]$. \label{KDoesNotStarve} \label{Role=King}
    \item \K\ executes a \ctr.\CAS{$1,0$} operation in \Line{release:ctr10} during $T$, and \K\ does not change $\X,\LSync$ or $\PawnSet$ throughout $[\Ib{1},\pt{\K}{release:ctr10}]$.
\label{aExecutesctr10}
    \item $\forall_{t \in \I{1}}: R(t) =\Set{\K}$. \label{R@I1}
    \item Throughout \I{1}, $\X = \LSync = \bot$ and \PawnSet\ is candidate-empty. \label{XT@I1}
\end{enumerate}
\end{claim}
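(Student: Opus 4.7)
The plan is to establish parts (a)--(g) sequentially, leveraging the fact that only releasers can modify the ``interesting'' shared state ($\X$, $\LSync$, and the candidate-relevant entries of $\PawnSet$), together with a close reading of the code paths in \lock{} and \release{}{}.

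For part (a), I would argue that during $\I{0}$ no new process can become a releaser. Membership via (R1) would require a successful $\ctr.\inc{}$ returning $0$ or $1$; returning $0$ changes $\ctr$ from $0$ to $1$, which by definition ends $\I{0}$, and returning $1$ requires $\ctr=1$, which contradicts $t \in \I{0}$. Membership via (R2) would require a \PawnSet.\promote{} operation, which by Claim~\ref{cl:PromoteOpByReleaserOnly} can only be issued by an existing releaser; since $R(\Ib{0}) = \varnothing$ and nobody new joins, this is impossible. Because $\X$, $\LSync$, and candidate-emptiness of $\PawnSet$ can only be disturbed by a releaser (Claims~\ref{cl:variablesChangedByReleaserOnly}, \ref{cl:iThCollectChangedByPAndReleaser}; note that a process can only write $\cAbort$ to its own $\PawnSet$ slot, which preserves candidate-emptiness), all three conditions persist throughout $\I{0}$. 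Part (b) then follows because the only step at $\Ib{1}$ is $\K$'s $\ctr.\inc{}$ that returns $0$, which touches only $\ctr$, makes $\K$ a releaser by (R1), and leaves $\X$, $\LSync$, $\PawnSet$ untouched.

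For parts (c) and (d), I would trace $\K$ through \lock{\K} starting at Line~\ref{getLock:IncCounter}. The returned value $0=\cKing$ is stored in $\Role[\K]$, so $\K$ fails the test at Line~\ref{getLock:ifSoldier}, exits the role-loop on its first iteration (condition at Line~\ref{getLock:EndInnerLoop} is satisfied), fails the queen check at Line~\ref{getLock:ifQueen}, executes the $\Apply[\K]$ CAS at Line~\ref{getLock:ApplyWantOk}, fails the test at Line~\ref{getLock:ifRolePQueen}, and returns $\infty$ at Line~\ref{getLock:returninfty}; this matches Figure~\ref{fig:KsGetLock}. Since $\Role[\K]$ is only altered by $\K$ itself (Claim~\ref{cl:basic:Role}\refC{scl:RoleOnlyp}) and $\K$ does not touch it after Line~\ref{getLock:IncCounter} until Line~\ref{getLock:returninfty}, $\Role[\K] = \cKing$ holds throughout the stated interval.

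For part (e), the safety conditions and the paper's labeling convention (following Claim~\ref{claim:releaseNotConcurrent}) guarantee that $\K$ eventually invokes \release{\K}{j}; since $\Role[\K] = \cKing$ is preserved through \release{}{} (Claim~\ref{cl:basic:Role}\refC{scl:RoleUnchanged}), $\K$ enters the \Line{release:ifKing} branch and reaches Line~\ref{release:ctr10}. Between $\Ib{1}$ and that point, the only shared-memory writes $\K$ makes are to $\Apply[\K]$ (inside \lock{}) and to the local variable $r$ (Line~\ref{release:setrNotCollected}), so $\X$, $\LSync$, $\PawnSet$ stay fixed. For parts (f) and (g) I would combine the above: $\K$ is a releaser from $\Ib{1}$ onward, and I must show no cease-release event $\phi_\K, \tau_\K, \pi_\K, \theta_\K$ occurs during $\I{1}$. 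Events $\tau_\K, \pi_\K, \theta_\K$ only occur inside \helpRelease{\K} or \doPromote{\K} (Claims~\ref{cl:EventsHelpRelease}, \ref{cl:EventsPromote}), neither of which $\K$ can enter without first failing its $\ctr.\CAS{1,0}$, i.e., without $\ctr$ already having been raised to $2$, which is outside $\I{1}$. Event $\phi_\K$ is a successful $\ctr.\CAS{1,0}$, which would itself drop $\ctr$ to $0$ and terminate $\I{1}$, so it cannot occur strictly inside. Meanwhile nobody else can join $R(t)$ during $\I{1}$: (R1) via $\ctr.\inc{}$ returning $1$ would push $\ctr$ to $2$ and end $\I{1}$, and (R2) promotions require a releaser calling \doPromote, which only $\K$ could do and only after leaving $\I{1}$. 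Hence $R(t) = \{\K\}$ throughout $\I{1}$, and since $\K$ makes no modifications to $\X$, $\LSync$, or candidate-entries of $\PawnSet$ before $\ptB{\K}{release:ctr10}$, parts (g) follows.

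The main obstacle I anticipate is the bookkeeping around the boundaries of $\I{1}$: carefully distinguishing ``strictly inside $\I{1}$'' from the transition point, and ensuring that the code-path argument cleanly rules out $\K$ reaching \helpRelease{} or \doPromote{} before $\ctr$ has been raised to $2$. The rest is a fairly mechanical combination of Claims~\ref{cl:basic:ReleaseEvents}--\ref{cl:EventsHelpRelease} with an inspection of the \lock{}/\release{}{} code.
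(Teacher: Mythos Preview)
Your proposal is correct and follows essentially the same approach as the paper: use Claim~\ref{cl:variablesChangedByReleaserOnly} to reduce everything to tracking the releaser set, rule out (R1)/(R2) transitions within $\I{0}$ and $\I{1}$, and trace $\K$'s code path through \lock{\K} and into \release{\K}{} to verify it touches none of $\X$, $\LSync$, $\PawnSet$ before Line~\ref{release:ctr10}. The paper organizes parts~(f) and~(g) as explicit proofs by contradiction (assume the invariant fails at some first $t'$, case-split on whether $\K$ ceased or someone joined) whereas you give a direct argument, but the content is the same; your anticipated ``boundary of $\I{1}$'' issue is exactly where the paper's contradiction argument does its work, and your observation that a successful $\ctr.\CAS{1,0}$ is itself the terminating event of $\I{1}$ while a failed one already places $\K$ outside $\I{1}$ handles it correctly.
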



\begin{proof}
\textbf{Proof of~\refC{I0}: }
Consider the claim $R(t) = \varnothing$ where $t \in \I{0}$.
Since $R(\Ib{0}) = \varnothing$ holds  by assumption, the claim holds at $t =  \Ib{0}$.
For the purpose of a contradiction assume the claim fails to hold for the first time at some point $t'$ during \I{0}.
Then some process $p$ becomes a releaser of lock \L\ at time $t'$.
Process $p$ cannot become a releaser of \L\ by $p$ increasing \ctr\ to $1$ or $2$ (condition (R1)) at time $t'$, since $\ctr = 0$ throughout \I{0}.
Therefore, assume it becomes a releaser of \L\ when some process $q$ promotes $p$ (condition (R2)) at $t'$.
By \Claim{cl:releaserAtCeaseReleaseEvents}, $q$ ceases to be a releaser of lock \L\ at $t'$.
This is a contradiction to our assumption that $p$ is the first process during \I{0} to become a releaser of \L.

By assumption the variables $\X, \LSync$ and \PawnSet\ are at their initial value at $\Ib{0}$.
Since the values of these variables are only changed by a releaser of lock \L\ (by \Claim{cl:variablesChangedByReleaserOnly}) and for all $t \in \I{0}$, $R(t) = \varnothing$, it follows that the variables are unchanged throughout \I{0}.

\textbf{Proof of~\refC{@I1+}: }
At time \Ib{1} \ctr\ is increased from $0$ to $1$, and thus the only operation executed is a \ctr.\inc{} operation  by process \K.
Then \K\ becomes a releaser of lock \L\ at time \Ib{1} by condition (R1).
Since for all $t \in \I{0}$, $R(t) = \varnothing$ (Part~\refC{I0}), it follows that $R(\Ib{1}) = \Set{\K}$.
Since $\X = \LSync = \bot$ and \PawnSet\ is candidate-empty throughout \I{0} (Part~\refC{I0}), and the only operation at time \Ib{1} is the \ctr.\inc{} operation, it follows that $\X = \LSync = \bot$ and \PawnSet\ is candidate-empty at time \Ib{1}  .

\textbf{Proof of~\refC{KsGetLock} and~\refC{KDoesNotStarve}: }
Since \K\ is the process that increased \ctr\ from $0$ to $1$ at time \Ib{1}, and since \K\ can increase \ctr\ only by executing a \ctr.\inc{} operation in \Line{getLock:IncCounter} (by Claim~\ref{cl:basic:ctr}) \K\ set $\Role[\K] =  0 = \cKing$  at \pt{\K}{getLock:IncCounter}.
Then from the code structure, \K\ does not execute lines~\ref{getLock:awaitAckOrCtrDecrease}-\ref{getLock:RolePPawn}, and does not repeat the role-loop, and does not busy-wait in the spin loop of \Line{getLock:awaitX}; instead \K\ proceeds to execute lines~\ref{getLock:ApplyWantOk} -~\ref{getLock:returninfty} and returns value $\infty$ in \Line{getLock:returninfty}.
Since \K\ does not change $\Role[\K]$  during $[\pt{\K}{getLock:IncCounter},\pt{\K}{getLock:returninfty}]$, $\Role[\K] = \cKing$ throughout $[\pt{\K}{getLock:IncCounter},\pt{\K}{getLock:returninfty}]$.

\textbf{Proof of~\refC{aExecutesctr10}: }
Since \K\ is the process that increased \ctr\ from $0$ to $1$ at time \Ib{1}, and since \K\ can increase \ctr\ only by executing a \ctr.\inc{} operation in \Line{getLock:IncCounter} (by Claim~\ref{cl:basic:ctr}) \K\ set $\Role[\K] =  0 = \cKing$  at \pt{\K}{getLock:IncCounter}.
From Part~\refC{KDoesNotStarve}, \K\ returns from \lock{\K} with value $\infty$ in \Line{getLock:returninfty}, and thus \K\ consequently calls \release{\K}{j} (follows from conditions~\refC{condition:ifLockThenRelease} and~\refC{condition:ifReleaseThenExistsLock}).
Note that \K\ has not executed any operations on $\X,\LSync$ and \PawnSet\ in the process.
Then $\Role[\K] = \cKing$  at \ptB{\K}{release:ifKing} and thus $p$ satisfies the if-condition of \Line{release:ifKing} and executes the \ctr.\CAS{$1,0$} operation in \Line{release:ctr10} during $T$ without having executed any operations on $\X,\LSync$ and \PawnSet\ in the process.
Thus \K\ did not change $\X,\LSync$ or $\PawnSet$ during $[\Ib{1},\pt{\K}{release:ctr10}]$.

\textbf{Proof of~\refC{R@I1}: }
Since $R(\Ib{1}) = \Set{\K}$ (Part~\refC{@I1+}), to prove our claim we need to show that during \I{1} \K\ does not cease to be a releaser and no process becomes a releaser.
Suppose not, i.e., the claim $R(t) = \Set{\K}$ fails to hold for the first time at some point $t'$ in \I{1}.

\textbf{Case a - } Process \K\ ceases to be a releaser of \L\ at $t'$:
By definition, cease-release event $\phi_{\K}$ occurs when $\K$ executes a successful \ctr.\CAS{$1,0$} operation in \Line{release:ctr10},
From Part~\refC{aExecutesctr10}, \K\ executes a \ctr.\CAS{$1,0$} operation in \Line{release:ctr10}.
If \K\ executes a successful \ctr.\CAS{$1,0$} operation in \Line{release:ctr10} then, by definition, cease-release event $\phi_{\K}$ occurs and by \Claim{cl:releaserAtCeaseReleaseEvents} \K\ ceases to be the releaser of \L.
Thus, $t' = \pt{\K}{release:ctr10}$ and \ctr\ changes to value of $0$ at $t'$.
But since $t' \in \I{1}$ and $\ctr = 1$ throughout \I{1}, we have a contradiction.
If \K\ executes an unsuccessful \ctr.\CAS{$1,0$} operation in \Line{release:ctr10}, then $\ctr \neq 1$ at \ptB{\K}{release:ctr10}.
Since $p$ did not cease to a releaser at \ptB{\K}{release:ctr10}, $\ptB{\K}{release:ctr10} < t'$.
Since $\Ib{1} = \pt{\K}{getLock:IncCounter} < \pt{\K}{release:ctr10} < t' <  \Ia{1}$ and $\ctr = 1$ throughout \I{1}, $\ctr = 1$ at \ptB{\K}{release:ctr10}, and thus we have a contradiction.


\textbf{Case b - } Some process $q$ becomes a releaser of \L\ at $t'$:
Since \ctr\ is not increased during \I{1}, it follows from conditions (R1) and (R2) that some process $r$ promoted $q$ at time $t'$.
Then by definition, event $\pi_{r}$ occurs at $t'$, and thus from \Claim{cl:releaserAtCeaseReleaseEvents} it follows that $r$ is a releaser of \L\ immediately before $t'$.
Since \K\ is the only releaser immediately before $t'$, $r=\K$.
Then cease-release event $\pi_{\K}$ occurred at $t'$ and \K\ ceases to be a releaser at $t'$.
As was shown in \textbf{Case a}, this leads to a contradiction.

\textbf{Proof of~\refC{XT@I1}: }
At time \Ib{1} the claim $\X = \LSync = \bot$ and \PawnSet\ is candidate-empty holds by Part~\refC{@I1+}.
Suppose some process $p$ changes \LSync\ or \X\ or \PawnSet\ for the first time at some point $t'$ during \I{1}.
From \Claim{cl:variablesChangedByReleaserOnly} it follows that $p$ is a releaser of lock \L\ at time $t'$.
Since for all $t \in \I{1}$, $R(t) = \Set{\K}$ (Part~\refC{R@I1}), it follows that $p = \K$.
From Part~\refC{aExecutesctr10}, \K\ does not change any of the variables before the point when it executes a \ctr.\CAS{$1,0$} operation in \Line{release:ctr10}, i.e., $\ptB{\K}{release:ctr10} < t'$.
If \K\ executes a successful \ctr.\CAS{$1,0$} operation in \Line{release:ctr10} then the interval \I{1} ends and clearly $t' \notin \I{1}$, hence a contradiction.
If \K\ executes an unsuccessful \ctr.\CAS{$1,0$} operation in \Line{release:ctr10} then $\ctr \neq 1$ at \ptB{\K}{release:ctr10}.
Since $\Ib{1} = \ptB{\K}{getLock:IncCounter} < \ptB{\K}{release:ctr10} < t' <  \Ia{1}$ and $\ctr = 1$ throughout \I{1}, we have a contradiction.
\end{proof}

\begin{claim} \label{cl:proofProperties2}
If $\I{2} \neq \varnothing$ and $R(\Ib{0}) = \varnothing$ and at $\Ib{0}$, $\X = \LSync = \bot$ and \PawnSet\ is candidate-empty, then the following claims hold:
\begin{enumerate}[(a)]
      \item $R(\Ib{2}) = \Set{\K,\Q}$ and at time \Ib{2}, $\X = \LSync = \bot$ and \PawnSet\ is candidate-empty. \label{@I2-}
      \item \K\ and \Q\ are the first two releasers of \L.  \label{firstTwoReleasers}
      \item During $(\Ib{2},\Ia{2}]$ a process can become a releaser of \L\ only if it gets promoted by a releaser of \L. \label{onlyPromotions}
      \item If \K\ takes enough steps, \K\ executes  lines of code of \release{\K}{} starting with \Line{release:safetyCheck}  as depicted in Figure~\ref{fig:KsRelease}. \label{KsRelease}

      \item  \K\ executes an unsuccessful \ctr.\CAS{$1,0$} operation in \Line{release:ctr10}, and calls \helpRelease{\K} in \Line{release:callhRelease:King} such that $\Ib{2} < \ptB{\K}{release:ctr10} < \ptB{\K}{release:callhRelease:King}$. \label{aExecutesFailedctr10} \label{aExecutesHelpRelease}

      \item If \K\ and \Q\ take enough steps, \Q\ finishes \lock{\Q} during $T$.
 \label{QDoesNotStarve}
      \item If \K\ and \Q\ take enough steps, \Q\ executes  lines of code of \lock{\Q} starting with \Line{getLock:ApplyBotWant}  as depicted in Figure~\ref{fig:QsGetLock}. \label{QsGetLock}

      \item If \Q\ calls \release{\Q}{}, it executes lines of code of \release{\Q}{} starting with \Line{release:safetyCheck}  as depicted in Figure~\ref{fig:QsRelease}. \label{QsRelease}

      \item \Q\ calls \helpRelease{\Q} either in \Line{abort:callhRelease} or in \Line{release:callhRelease:Queen}, after time \Ib{2}. \label{QExecutesHelpRelease}  

\end{enumerate}

\end{claim}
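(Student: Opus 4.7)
The plan is to leverage Claim~\ref{cl:proofProperties1} as the base covering $\I{0}$ and $\I{1}$, and push the analysis incrementally into $\I{2}$. I will handle parts (a)--(c) first, then the fate of \K\ in (d)--(e), then the fate of \Q\ in (f)--(g), and finally deduce (h)--(i).

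For (a), just before \Ib{2} the state is already pinned down by Claim~\ref{cl:proofProperties1}: $R = \Set{\K}$, $\X = \LSync = \bot$, and \PawnSet\ is candidate-empty throughout \I{1}. The unique step at \Ib{2} is \Q's \ctr.\inc{} (the only way \ctr\ can rise from $1$ to $2$), which leaves \X, \LSync, and \PawnSet\ untouched and, by condition (R1), adds \Q\ to the releaser set. Part (b) is then a reading-off of the timeline: no releasers in \I{0}, only \K\ throughout \I{1}, and \Q\ appearing at \Ib{2}. For (c), across $(\Ib{2},\Ia{2}]$ we have $\ctr = 2$, so no process can increment \ctr; condition (R1) is thus unavailable and any new releaser must arise from (R2), i.e., be promoted.

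For (d)--(e), Claim~\ref{cl:proofProperties1} tells us \K\ reaches \Line{release:ctr10} and executes \ctr.\CAS{$1,0$} during $T$ without having touched \X, \LSync, or \PawnSet\ yet. A successful CAS would trigger $\phi_{\K}$ and drop \ctr\ to $0$, closing $T$ with $\I{2} = \varnothing$, contradicting the hypothesis. Hence the CAS fails, and furthermore it cannot be attempted during \I{1} (where $\ctr = 1$ would make it succeed), so it is attempted while $\ctr = 2$, giving $\ptB{\K}{release:ctr10} > \Ib{2}$. Failing the if-condition, \K\ drops into \Line{release:setX}, possibly \Line{release:doCollect}, and finally \Line{release:callhRelease:King}, matching Figure~\ref{fig:KsRelease}.

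For (f)--(g), at \Ib{2} process \Q\ has set $\Role[\Q] = \cQueen$ and thus enters the spin loop at \Line{getLock:awaitX} after passing the if-condition of \Line{getLock:ifQueen}. The loop exits as soon as $\X \neq \bot$. I split on whether \Q\ aborts inside the loop: if not, then \K's subsequent \X.\CAS{$\bot, j$} in \Line{release:setX} succeeds (since \X\ is writable only by releasers by Claim~\ref{cl:variablesChangedByReleaserOnly}, and no other releaser has yet written \X), so \Q\ reads a non-$\bot$ value and returns from \lock{\Q}; if \Q\ does abort, it invokes \abort{\Q} and races \K\ on \X.\CAS{$\bot, \infty$} in \Line{abort:setX}, with either outcome ending \lock{\Q} within $T$ and tracing Figure~\ref{fig:QsGetLock}. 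For (h), if \Q\ calls \release{\Q}{} then $\Role[\Q] = \cQueen$ is preserved into that call by Claim~\ref{cl:basic:Role}, so only the \Line{release:ifQueen} branch fires and \Q\ reaches \helpRelease{\Q} at \Line{release:callhRelease:Queen}, matching Figure~\ref{fig:QsRelease}. Combining with the successful-abort branch gives (i). The main technical obstacle is the race on \X\ between \K\ and \Q: the argument must treat both outcomes of that race without presupposing which CAS wins, and must rule out spurious writes to \X\ from anyone else, which hinges on \X\ being writable only by releasers together with $R(\Ib{2}) = \Set{\K, \Q}$ and the no-new-releasers property (c).
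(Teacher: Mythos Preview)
Your overall decomposition matches the paper's: parts (a)--(c) are read off from Claim~\ref{cl:proofProperties1} plus the single \ctr.\inc{} step at \Ib{2}; parts (d)--(e) are the contradiction argument on \K's \ctr.\CAS{$1,0$}; and parts (h)--(i) follow from $\Role[\Q]=\cQueen$ being preserved. These pieces are fine.

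The gap is in part (f). You argue that \K's \X.\CAS{$\bot,j$} succeeds and hence \Q\ reads a non-$\bot$ value. But ``\K\ writes \X'' does not by itself imply ``\Q\ sees $\X\neq\bot$'': you must rule out \X\ being reset to $\bot$ between \K's write and \Q's read. Your closing remark appeals to ``the no-new-releasers property (c)'', but (c) does \emph{not} say there are no new releasers; it says new releasers can arise only via promotion. So two things remain to be shown, and your plan covers neither: (1) \K\ itself does not reset \X\ in \Line{hRelease:resetX}, and (2) no third process is promoted in time to reset \X. Both hinge on \LSync, which you never mention. The paper closes the argument as follows: since \Q\ is spinning and only releasers write \LSync\ (Claim~\ref{cl:variablesChangedByReleaserOnly}), $\LSync=\bot$ when \K\ reaches \Line{hRelease:setT}, so \K's \LSync.\CAS{$\bot,\K$} succeeds; hence \K\ skips \Line{hRelease:resetX} (handling (1)) and never calls \doPromote{\K} (handling (2), since \Q\ is still spinning and thus promotes no one either). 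Without this \LSync\ step your argument for (f) does not close.
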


\begin{figure*}[!htbp]
\centering
\includegraphics[scale=0.6]{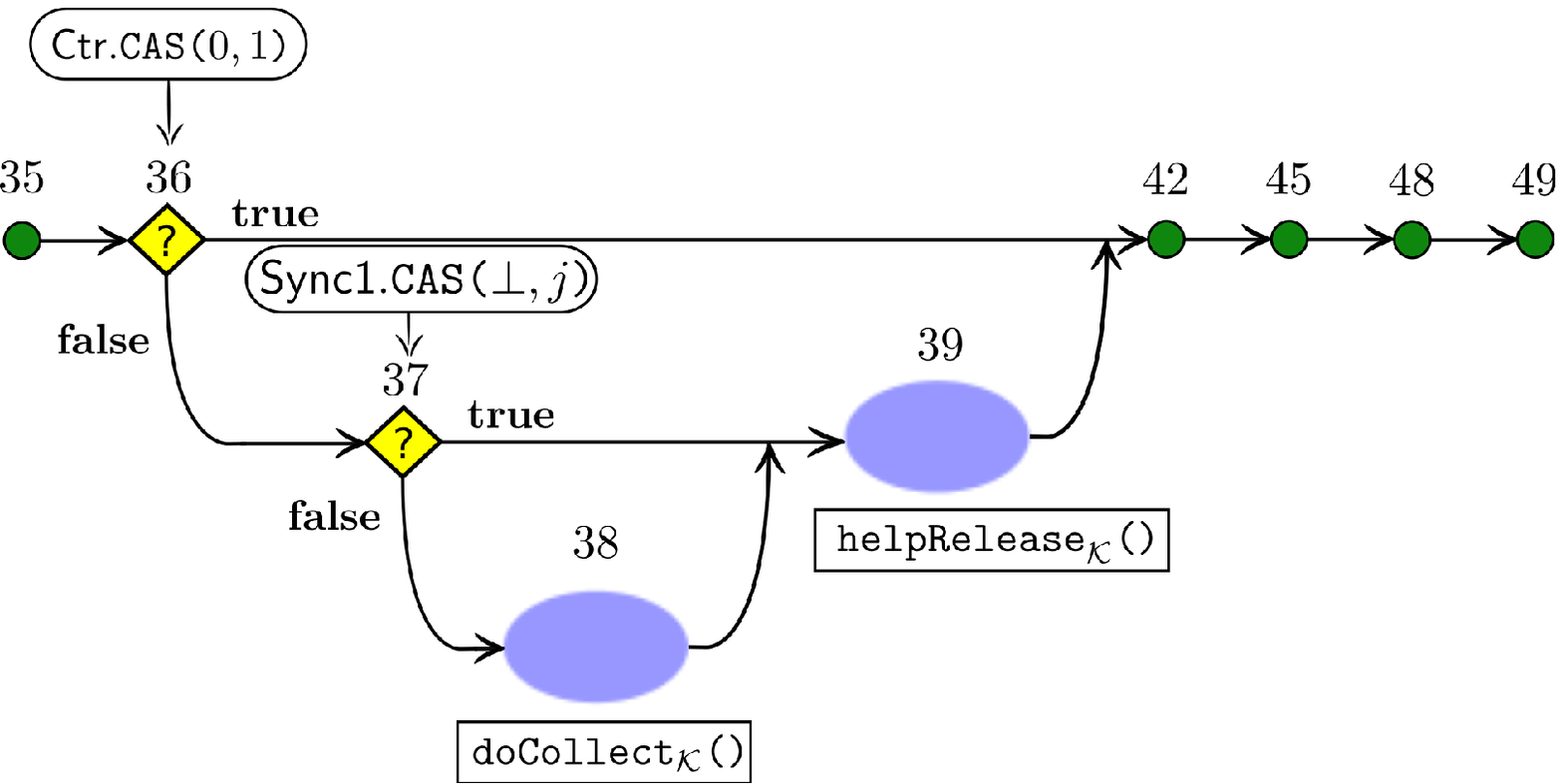}
\caption{\K's call to \release{\K}{j}}
\label{fig:KsRelease}
\end{figure*}

\begin{figure*}[!htbp]
\centering
\includegraphics[scale=0.6]{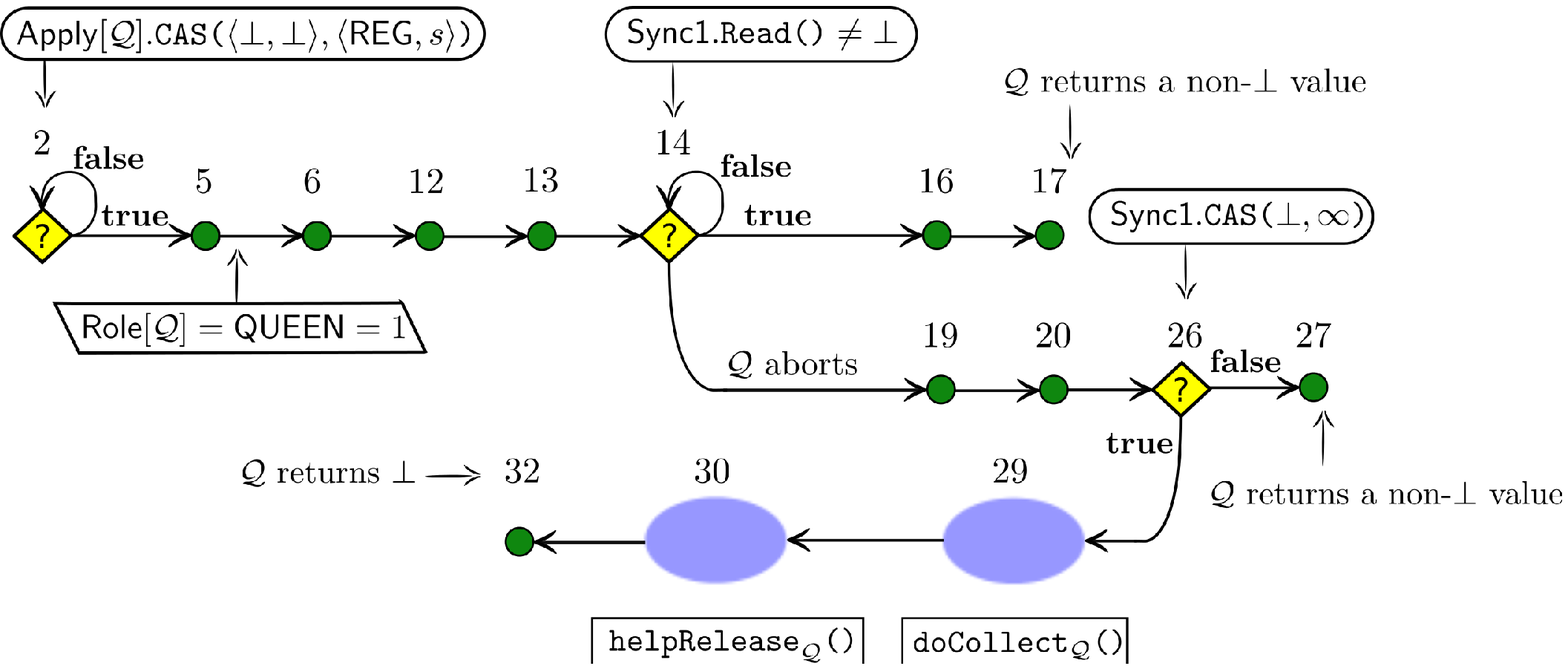}
\caption{\Q's call to \lock{\Q}}
\label{fig:QsGetLock}
\end{figure*}

\begin{figure}[!htbp]
\centering
\includegraphics[scale=0.6]{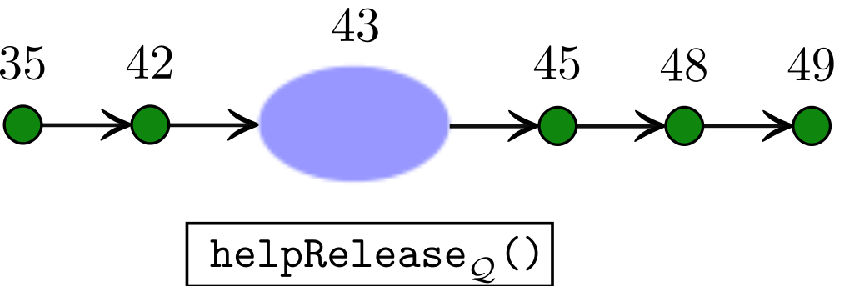}
\caption{\Q's call to \release{\Q}{}}
\label{fig:QsRelease}
\end{figure}

\begin{proof}

\textbf{Proof of~\refC{@I2-} and~\refC{firstTwoReleasers}: }
Since \Q\ is the process that increases \ctr\ from $1$ to $2$ at time \Ib{2}, and since \Q\ can increase \ctr\ only by executing a \ctr.\inc{} operation in \Line{getLock:IncCounter} (by Claim~\ref{cl:basic:ctr}) \Q\ becomes a releaser of lock \L\ by condition (R1) at $\Ib{2} = \pt{\Q}{getLock:IncCounter}$.
Since for all $t \in \I{1}$, $R(t) = \Set{K}$ (Claim~\ref{cl:proofProperties1}\refC{R@I1}), it follows that $R(\Ib{2}) = \Set{\K,\Q}$.
By claim~\ref{cl:proofProperties1}\refC{XT@I1}, throughout \I{1}, $\X = \LSync = \bot$ and \PawnSet\ is candidate-empty, and since the only operation executed at time \Ib{2} is \ctr.\inc{}, it follows that at time \Ib{2}, $\X = \LSync = \bot$ and \PawnSet\ is candidate-empty.
Hence Part~\refC{firstTwoReleasers} holds.
Clearly \K\ and \Q\ are the first two releasers of \L, hence Part~\refC{firstTwoReleasers} holds.

\textbf{Proof of~\refC{onlyPromotions}: }
From conditions (R1) and (R2), a process can become a releaser of \L\ either by increasing \ctr\ to $1$ or $2$ or by getting promoted.
Since \ctr\ is not increased during $(\Ib{2},\Ia{2}]$, it follows that during $(\Ib{2},\Ia{2}]$ a process becomes a releaser of \L\ only if it gets promoted.
By definition, a process can be promoted only when a \PawnSet.\promote{} operation is executed in \Line{promote:FR12} and from \Claim{cl:PromoteOpByReleaserOnly} only a releaser of \L\ can execute this operation.
Then during $(\Ib{2},\Ia{2}]$ a process becomes a releaser of \L\ only if it gets promoted by a releaser of \L.

\textbf{Proof of~\refC{KsRelease} and~\refC{aExecutesFailedctr10}: }
From \Claim{cl:proofProperties1}\refC{aExecutesctr10}, \K\  executes the \ctr.\CAS{$1,0$} operation in \Line{release:ctr10}  during $T$.
If \K's \ctr.\CAS{$1,0$} operation is successful then the value of \ctr\ decreases from $1$ to $0$ and the \ctr-cycle interval $T$ ends and thus $\I{2} = \varnothing$, which is a contradiction to our assumption that $\I{2} \neq \varnothing$.
Then \K's \ctr.\CAS{$1,0$} operation is unsuccessful.

Since \K\ executes an unsuccessful \ctr.\CAS{$1,0$} operation in \Line{release:ctr10}, \K\ satisfies the if-condition of \Line{release:ctr10}, executes lines~\ref{release:setX}-\ref{release:doCollect} and calls \helpRelease{\K} in \Line{release:callhRelease:King}, and then executes lines~\ref{release:ApplyOkBot}-\ref{release:return}.

Since \K\ executes an unsuccessful \ctr.\CAS{$1,0$} operation in \Line{release:ctr10}, it follows that \ctr\ was changed from $1$ to $2$ at time \Ib{2} (by definition), and thus $\Ib{2} < \pt{\K}{release:ctr10}$.
Since $\pt{\K}{release:ctr10} < \pt{\K}{release:callhRelease:King}$, it follows that $\Ib{2} < \pt{\K}{release:ctr10} < \pt{\K}{release:callhRelease:King}$.

\textbf{Proof of~\refC{QDoesNotStarve},~\refC{QsGetLock} and~\refC{QsRelease}: }
Since \Q\ is the process that increases \ctr\ from $1$ to $2$ at time \Ib{2}, and since \Q\ can increase \ctr\ only by executing a \ctr.\inc{} operation in \Line{getLock:IncCounter} (by Claim~\ref{cl:basic:ctr}) \Q\ set $\Role[\K] =  1 = \cQueen$  at $\pt{\Q}{getLock:IncCounter}=\Ib{2}$.
Then from the code structure, \Q\ does not execute lines~\ref{getLock:awaitAckOrCtrDecrease}-\ref{getLock:RolePPawn}, and does not repeat the role-loop, instead, it proceeds to \Line{getLock:ifQueen} and then proceeds to busy-wait in the spin loop of \Line{getLock:awaitX}.
Then \Q\ does not finish \lock{\Q} only if it spins indefinitely in \Line{getLock:awaitX} and does not receive a signal to abort.

For the purpose of a contradiction assume that \Q\ does not finish \lock{\Q}.
Then \Q\ reads the value $\bot$ from \X\ in \Line{getLock:awaitX} indefinitely.
From Part~\refC{aExecutesHelpRelease} it follows that \K\ executes a \X.\CAS{$\bot,j$} operation in \Line{release:setX} during $(\Ib{2},\Ia{2}]$.
Since $\X = \bot$ at time \Ib{2} (Part~\refC{@I2-}), and only a releaser can change \X\ (Claim~\ref{cl:variablesChangedByReleaserOnly}), and \Q\ is busy-waiting in \Line{getLock:awaitX}, it follows that the only other releaser, \K, executed a successful \X.\CAS{$\bot,j$} operation in \Line{release:setX} during $(\Ib{2},\Ia{2}]$ and changed \X\ to a non-$\bot$ value.
Then for \Q\ to read $\bot$ from \X\ in \Line{getLock:awaitX} indefinitely, some process must reset \X\ to $\bot$ before \Q\ reads \X\ again.

\textbf{Case a - } \K\ resets \X\ in \Line{hRelease:resetX} before \Q\ reads \X\ again:
For \K\ to reset \X\ in \Line{hRelease:resetX}, \K\ must satisfy the if-condition of \Line{hRelease:setT} and thus \K\ must execute an unsuccessful \LSync.\CAS{$\bot,\K$} operation in \Line{hRelease:setT}.
Since $\LSync = \bot$ at time \Ib{2} (Part~\refC{@I2-}), and only a releaser can change \LSync\ (Claim~\ref{cl:variablesChangedByReleaserOnly}), and \Q\ is busy-waiting in \Line{getLock:awaitX}, it follows that $\LSync = \bot$ at \ptB{\K}{hRelease:setT}.
Thus \K's \LSync.\CAS{$\bot,\K$} operation in \Line{hRelease:setT} is successful and we get a contradiction.

\textbf{Case b - } some other process becomes a releaser and resets \X\ before \Q\ reads \X\ again:
From Part~\refC{onlyPromotions} it follows that during $(\Ib{2},\Ia{2}]$ a process can become a releaser of \L\ only if it is promoted (by condition (R2)).
Since a process is promoted only by a releaser of \L\, and \K\ is the only other releaser of \L\ apart from \Q, it follows that \K\ promotes some process before \Q\ reads \X\ again.
As argued in \textbf{Case a}, \K\ executes a successful \LSync.\CAS{$\bot,\K$} operation in \Line{hRelease:setT}.
Then from the code structure, \K\ does not call \doPromote{\K} in \Line{hRelease:callPromote}, and thus \K\ does not promote any process.
Hence, we have a contradiction.

\textbf{Proof of~\refC{QExecutesHelpRelease}: }
Since \Q\ is the process that increases \ctr\ from $1$ to $2$ at time \Ib{2}, and since \Q\ can increase \ctr\ only by executing a \ctr.\inc{} operation in \Line{getLock:IncCounter} (by Claim~\ref{cl:basic:ctr}) \Q\ set $\Role[\K] =  1 = \cQueen$  at \pt{\Q}{getLock:IncCounter}.
Then from the code structure, \Q\ does not execute lines~\ref{getLock:awaitAckOrCtrDecrease}-\ref{getLock:RolePPawn}, and does not repeat the role-loopp; instead, it proceeds to \Line{getLock:ifQueen} and then proceeds to busy-wait in the spin loop of \Line{getLock:awaitX}.

\textbf{Case a - } \Q\ does not receive a signal to abort while busy-waiting in \Line{getLock:awaitX}:
From Part~\refC{QDoesNotStarve}, \Q\ does not busy-wait indefinitely in \Line{getLock:awaitX} and eventually breaks out.
Since \Q\ breaks out of the spin loop of \Line{getLock:awaitX} it reads non-$\bot$ from \X\ and then from the code structure it follows that \Q\ goes on to return that non-$\bot$ value in \Line{getLock:returninfty}.
Consequently \Q\ calls \release{\Q}{j} (follows from conditions~\ref{condition:ifLockThenRelease} and~\ref{condition:ifReleaseThenExistsLock}).
Consider \Q's call to \release{\Q}{j}.
Since \Q\ last changed $\Role[\Q]$ only in \Line{getLock:IncCounter}, $\Role[\Q] = \cQueen$ at \ptB{\Q}{release:safetyCheck}.
Since $\Role[\Q]$ is unchanged during \release{\Q}{} (\Claim{cl:basic:Role}\refC{scl:RoleUnchanged}), it follows that $\Role[\Q] = \cQueen$ throughout \release{\Q}{}.
Then from the code structure it follows that \Q\ executes only lines~\ref{release:safetyCheck}-\ref{release:ifKing},~\ref{release:ifQueen}-\ref{release:ifPPawn} and ~\ref{release:ApplyOkBot}-\ref{release:return}.
Then \Q\ calls \helpRelease{\Q} only in \Line{release:callhRelease:Queen}, and since $\Ib{2} = \pt{\Q}{getLock:IncCounter} < \pt{\Q}{release:callhRelease:Queen}$, our claim holds.

\textbf{Case b - } \Q\ receives a signal to abort while busy-waiting in \Line{getLock:awaitX}:
Then \Q\ calls \abort{\Q}, and from the code structure \Q\ executes lines~\ref{abort:ifFlag}-\ref{abort:ifPawn}, and then line~\ref{abort:setX}.
If \Q\ fails the \X.\CAS{$\bot,\infty$} operation of line~\ref{abort:setX}, then $\X \neq \bot$ at \ptB{\Q}{abort:setX}.
From Claim~\ref{cl:variablesChangedByReleaserOnly}, only a releasers of \L\ can change \X\ to a non-$\bot$ value, and since \K\ and \Q\ are the only releasers of \L, it follows that \K\ changed \X\ to a non-$\bot$ value.
Then \Q\ satisfies the if-condition of \Line{abort:setX} and returns the non-$\bot$ value written by \K\ to \X\ in \Line{abort:returnX}.
Consequently \Q\ calls \release{\Q}{j} (follows from conditions~\ref{condition:ifLockThenRelease} and~\ref{condition:ifReleaseThenExistsLock}), and as argued in \textbf{Case a}, \Q\ executes only lines~\ref{release:safetyCheck}-\ref{release:ifKing},~\ref{release:ifQueen}-\ref{release:ifPPawn} and~\ref{release:ApplyOkBot}-\ref{release:return}, and \Q\ calls \helpRelease{\Q} only in \Line{release:callhRelease:Queen}.
Since $\Ib{2} = \pt{\Q}{getLock:IncCounter} < \pt{\Q}{release:callhRelease:Queen}$, our claim holds.

If \Q's \X.\CAS{$\bot,\infty$} operation is successful, then \Q\ goes on to call \doCollect{\Q} in \Line{abort:doCollect}, calls \helpRelease{\Q} in \Line{abort:callhRelease}, then executes lines~\ref{abort:ApplyOkBot}-\ref{abort:returnr}, and finally returns $\bot$ in \Line{abort:returnr}.
Since $\Ib{2} = \pt{\Q}{getLock:IncCounter} < \pt{\Q}{abort:callhRelease}$, our claim holds.
\end{proof}

\label{notations:lambda,gamma}
Define $\lambda$ to be the first point in time when \LSync\ is changed to a non-$\bot$ value, and if \LSync\ is never changed to non-$\bot$ then $\lambda = \infty$.
Define $\gamma$ to be the first point in time when a \PawnSet.\promote{} operation is executed, and if a \PawnSet.\promote{} operation is never executed then $\gamma = \infty$.
From Claims~\ref{cl:proofProperties2}\refC{aExecutesHelpRelease} and~\ref{cl:proofProperties2}\refC{QExecutesHelpRelease}, both \K\ and \Q\ execute \helpRelease{\K} and \helpRelease{\Q}, respectively, after time \Ib{2}.
Let $\A \in \Set{\K,\Q}$ be the first process among them to execute \Line{hRelease:setT}, and let $\B \in \Set{\K,\Q} - \Set{\A}$ be the other process, i.e., $\pt{\A}{hRelease:setT} < \pt{\B}{hRelease:setT}$.

\begin{claim} \label{cl:proofProperties3}
If $\I{2} \neq \varnothing$ and $R(\Ib{0}) = \varnothing$ and at $\Ib{0}$, $\X = \LSync = \bot$ and \PawnSet\ is candidate-empty, then the following claims hold:
\begin{enumerate}[(a)]
      \item $\Ib{2} < \lambda = \pt{\A}{hRelease:setT}$ and for all $t \in [\Ib{2},\lambda)$, $R(t) = \Set{\K,\Q}$ and $\LSync = \bot$ throughout $[\Ib{2},\lambda)$, and cease-release event $\tau_{\A}$ occurs at $\lambda$. \label{I2-,L}
      \item If \K\ and \Q\ take enough steps, then \A\ executes lines of code of \helpRelease{\A} starting with \Line{hRelease:setT}  as depicted in Figure~\ref{fig:AsHelpRelease}. \label{AsHelpRelease}

\begin{figure}[!htbp]
\centering
\includegraphics[scale=0.6]{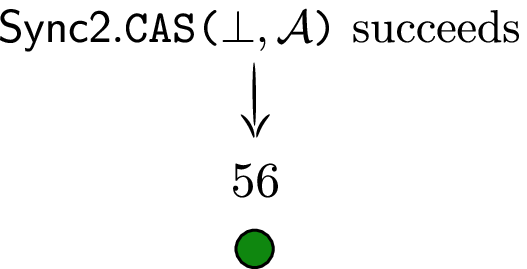}
\caption{\A's call to \helpRelease{\A}}
\label{fig:AsHelpRelease}
\end{figure}

      \item If \K\ and \Q\ take enough steps, then \B\ executes lines of code of \helpRelease{\B} and \doPromote{\B} as depicted in Figures~\ref{fig:BsHelpRelease} and~\ref{fig:BsPromote}, respectively. \label{ABhRelease} \label{BsHelpRelease} \label{BsPromote}

\begin{figure}[!htb]
\centering
\includegraphics[scale=0.6]{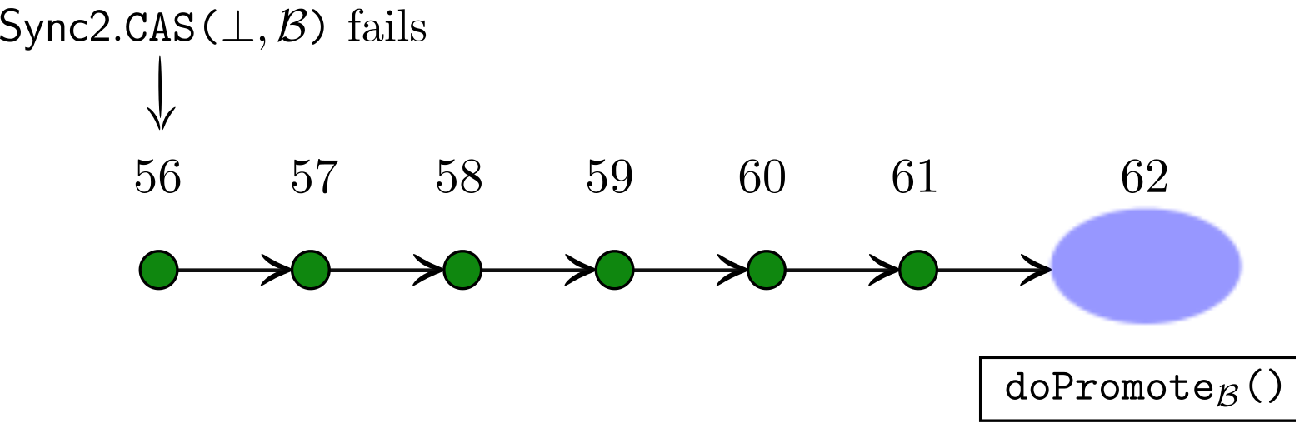}
\caption{\B's call to \helpRelease{\B}}
\label{fig:BsHelpRelease}
\end{figure}

\begin{figure}[!htb]
\centering
\includegraphics[scale=0.6]{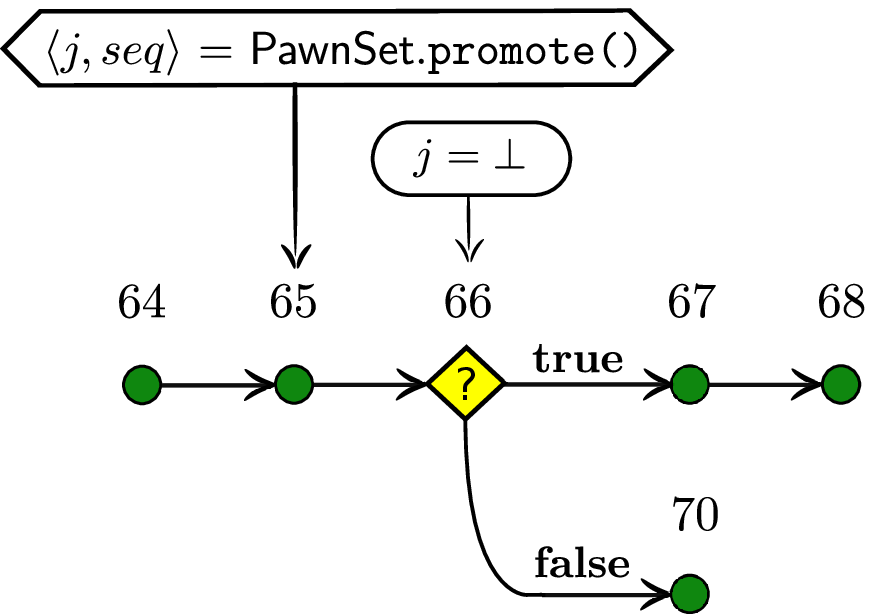}
\caption{\B's call to \doPromote{\B}}
\label{fig:BsPromote}
\end{figure}

      \item $\lambda < \gamma = \pt{\B}{promote:FR12}$. \label{GExists}
      \item $\forall_{t \in [\lambda,\gamma)}$, $R(t) = \Set{\B}$. \label{L,G}
      \item At time $\gamma$, $\X= \LSync = \bot$. \label{@G}
      \item No promotion event occurs at lock \L\ during $[\Ib{2},\gamma)$. \label{noPromotion@I0-,G}
      \item The \PawnSet.\promote{} operation at time $\gamma$ does not return a value in \Set{\pair{a}{b} | a \in \Set{\K,\Q}, b \in \N}.\label{abNotPromoted}
      \item If the \PawnSet.\promote{} operation at time $\gamma$ returns a non-\pair{\bot}{\bot} value  then \B's cease-release event $\pi_{\B}$ occurs at time $\gamma$.
\label{promotion@G}
      \item If the \PawnSet.\promote{} operation at time $\gamma$ returns value \pair{\bot}{\bot} then \B's cease-release event $\theta_{\B}$ occurs at $t' = \pt{\B}{promote:ctr20} \geq \gamma$, and throughout $[\gamma,t']$ no process is promoted, and $\forall_{t \in [\gamma,t')}$, $R(t) = \Set{\B}$. \label{noPromotion@G}
      \item Either \K\ or \Q\ calls \doCollect{}, specifically during $[\Ib{2},\gamma]$. \label{aOrbCollect}
\end{enumerate}

\end{claim}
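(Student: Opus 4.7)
The argument will proceed by establishing parts~(a)--(k) in the listed order, since each part feeds directly into the next. The strategy throughout is to track, interval by interval, which processes are releasers of $\L$ and the values of $\X$, $\LSync$, and $\PawnSet$, using Claims~\ref{cl:variablesChangedByReleaserOnly}--\ref{cl:PromoteOpByReleaserOnly} to rule out interference by non-releasers and Claim~\ref{cl:proofProperties2}(c) to rule out new releasers appearing except via $\PawnSet.\promote{}$.

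For part~(a), note $\LSync = \bot$ at $\Ib{2}$ by Claim~\ref{cl:proofProperties2}(a). By Claim~\ref{cl:TChangedByReleaserOnly} the only way to change $\LSync$ to non-$\bot$ is via a successful $\LSync.\CAS{\bot,\cdot}$ in \Line{hRelease:setT} by a releaser; since $\K$ and $\Q$ are the only releasers immediately after $\Ib{2}$ (Claim~\ref{cl:proofProperties2}(b)) and by Claim~\ref{cl:proofProperties2}(e)--(i) both eventually call \helpRelease{}, the first such change occurs at $\pt{\A}{hRelease:setT}$. To show $R(t) = \Set{\K,\Q}$ on $[\Ib{2},\lambda)$, suppose for contradiction that someone becomes or ceases to be a releaser first at some $t' < \lambda$; the only candidates for change are a $\PawnSet.\promote{}$ at $t'$ (ruled out because $\doPromote{}$ is reached only from \release{}{} by a promoted pawn, which $\K$ and $\Q$ are not, or from \helpRelease{} after a \emph{failed} $\LSync$-CAS, which requires $\LSync \neq \bot$ already) or a cease-release event $\phi_\K, \theta_\cdot$ by $\K$ or $\Q$ (impossible before $\lambda$ by inspection of Figures~\ref{fig:KsRelease} and~\ref{fig:QsRelease} together with Claim~\ref{cl:proofProperties2}(e)). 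Finally, by definition of $\tau_\A$, cease-release event $\tau_\A$ occurs at $\lambda$.

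Parts~(b) and~(c) follow by tracing the code of $\helpRelease{}$. For $\A$, the $\LSync.\CAS{\bot,\A}$ in \Line{hRelease:setT} succeeds (since $\LSync = \bot$ just before $\lambda$ by~(a)), so $\A$ falls through the \textbf{if}, giving Figure~\ref{fig:AsHelpRelease}. For $\B$ the same CAS fails (since $\LSync \neq \bot$ once $\A$ has written it), so $\B$ enters the \textbf{if}-branch, reads and resets $\X$, reads $\LSync$ obtaining $\A$ and resets $\LSync$, removes $\A$ from $\PawnSet$, and calls \doPromote{\B}; this yields Figures~\ref{fig:BsHelpRelease} and~\ref{fig:BsPromote}. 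Part~(d) then follows because $\PawnSet.\promote{}$ is executed only in \Line{promote:FR12} inside \doPromote{} (Claim~\ref{cl:PromoteOpByReleaserOnly}), and no releaser other than $\B$ reaches \doPromote{} before $\B$'s own call: $\A$ returns without promoting; $\K$ and $\Q$ do not enter the $\cPPawn$ branch of \release{}{}; no new releaser yet exists since $\gamma$ is \emph{by definition} the first $\PawnSet.\promote{}$. The bound $\lambda < \gamma$ is immediate because $\B$ executes its \helpRelease{} block before entering \doPromote{\B}. Part~(e) follows: $\A$ ceased at $\lambda$ by~(a) and $\B$ does not cease before $\gamma$ (the events $\pi_\B, \theta_\B$ both occur at or after $\gamma$, and $\tau_\B$ requires success of the $\LSync$-CAS which fails for $\B$); no new releaser appears because the first $\PawnSet.\promote{}$ is at $\gamma$. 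Part~(f) follows from~(e) combined with Claim~\ref{cl:variablesChangedByReleaserOnly}: only $\B$ can change $\X$ or $\LSync$ on $[\lambda,\gamma)$, and it resets both. Part~(g) is literally the definition of $\gamma$ on $[\Ib{2},\gamma)$.

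The subtlest piece is part~(h), which is the main obstacle and the reason we built the removal step of \B's \helpRelease. When \B\ executes \PawnSet.\promote{} at $\gamma$, we need every entry of $\PawnSet$ belonging to $\K$ or $\Q$ to already be in state $\pair{\cAbort}{\cdot}$. By part~(c), $\B$ has executed $\PawnSet.\remove{\B}$ in \Line{promote:collectFixSelf} and, before that, $\PawnSet.\remove{\A}$ in \Line{hRelease:collectFixOther}, overwriting both entries with $\pair{\cAbort}{\cdot}$ by the semantics of $\APArray{n}$; these entries cannot subsequently be changed to $\pair{\cReg}{\cdot}$ because the \collect{} operation of $\APArray{n}$ does not overwrite aborted entries, and by Claim~\ref{cl:iThCollectChangedByPAndReleaser} the only other way to change those entries is by $\K$ or $\Q$ themselves, which they do not do (by inspection of Figures~\ref{fig:KsRelease}--\ref{fig:QsRelease}, neither process executes an operation on $\PawnSet$ that writes a non-aborted value to its own entry after $\Ib{2}$). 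Hence $\PawnSet.\promote{}$ at $\gamma$ skips over entries $\K$ and $\Q$. Parts~(i) and~(j) are now short: (i) is just the definition of $\pi_\B$; for~(j) the \textbf{if} at \Line{promote:ifNotPromoted} is entered, $\B$ executes $\PawnSet.\reset{}$ and then $\ctr.\CAS{2,0}$, triggering $\theta_\B$ (Claim~\ref{cl:EventsPromote}), and by the same reasoning as in~(e), $R(t) = \Set{\B}$ throughout $[\gamma,t')$ and no process is promoted in this range. Finally, for part~(k), we argue by cases on whether $\X.\CAS{\bot,\infty}$ in $\abort{\Q}$ or $\X.\CAS{\bot,j}$ in $\release{\K}{}$ wins the race: by Claim~\ref{cl:proofProperties2}(d),(f)--(i) exactly one of these CAS operations succeeds before $\A$ reaches \Line{hRelease:setT} (since $\A$'s arrival at \Line{hRelease:setT} is downstream of the successful $\X$-write on either branch), and the winner calls \doCollect{} in \Line{release:doCollect} or \Line{abort:doCollect} respectively, both of which occur inside $[\Ib{2},\lambda) \subseteq [\Ib{2},\gamma]$.
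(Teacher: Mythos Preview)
Your proof plan is essentially correct and follows the same overall strategy as the paper's proof: trace the releaser set and the values of $\X$, $\LSync$, $\PawnSet$ interval by interval, using the claims that only releasers can write these objects and that new releasers can only appear via $\PawnSet.\promote{}$.

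There is one concrete error in part~(k). You claim the winner's call to \doCollect{} ``occurs inside $[\Ib{2},\lambda)$''. This is false. Consider the case where $\Q$ does not abort: $\K$'s $\X.\CAS{\bot,j}$ in \Line{release:setX} succeeds, $\Q$ reads $\X\neq\bot$ in \Line{getLock:awaitX}, returns from \lock{\Q}, calls \release{\Q}{}, and reaches \Line{hRelease:setT} via \Line{release:callhRelease:Queen} \emph{before} $\K$ has even begun its \doCollect{\K} in \Line{release:doCollect}. Then $\A=\Q$, so $\lambda=\pt{\Q}{hRelease:setT}$ precedes $\K$'s \doCollect{}. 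What does hold (and is all that the statement requires) is that the \doCollect{} call occurs in $[\Ib{2},\gamma]$: the collecting process's \doCollect{} finishes before its own call to \helpRelease{}, hence before its own \Line{hRelease:setT}, hence before $\gamma$ (since $\gamma=\pt{\B}{promote:FR12}$ and both $\pt{\K}{hRelease:setT}$ and $\pt{\Q}{hRelease:setT}$ are at most $\pt{\B}{hRelease:setT}<\gamma$). This is exactly the bound the paper proves; simply replace $\lambda$ by $\gamma$ in your last sentence and the argument goes through.

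A smaller point on the same part: your phrasing ``whether $\X.\CAS{\bot,\infty}$ in $\abort{\Q}$ or $\X.\CAS{\bot,j}$ in $\release{\K}{}$ wins the race'' implicitly assumes both are attempted. In fact $\Q$ attempts its $\X$-CAS only if $\Q$ aborts; if $\Q$ does not abort, only $\K$ attempts (and therefore succeeds). The paper handles this by observing that $\K$ always attempts the $\X$-CAS (from Figure~\ref{fig:KsRelease}), so some releaser's $\X$-CAS succeeds regardless.
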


\begin{proof}
\textbf{Proof of~\refC{I2-,L}: }
We first show that for all $t \in [\Ib{2},\ptB{\A}{hRelease:setT}]$, $R(t) = \Set{\K,\Q}$ and then show that $\lambda = \pt{\A}{hRelease:setT}$.
From Claims~\ref{cl:proofProperties2}\refC{aExecutesHelpRelease} \K\ calls \helpRelease{\K} in \Line{release:callhRelease:King} after time \Ib{2}.
From Claim~\ref{cl:proofProperties2}\refC{@I2-}, \K\ is a releaser of \L\ at time \Ib{2}.
From an inspection of Figures~\ref{fig:KsGetLock} and~\ref{fig:KsRelease}, throughout $[\I{1},\ptB{K}{release:callhRelease:King}]$ \K\ does not execute a call to \helpRelease{\K} or \doPromote{\K}.
Also from an inspection, \K\ fails to decrease $\ctr$ from $1$ to $0$ at \pt{\K}{release:ctr10}, thus \K's cease-release event $\phi_{\K}$ does not occur.
Since \K's cease-release events $\tau_{\K}, \pi_{\K}$ and $\theta_{\K}$ only occur during \helpRelease{\K} or \doPromote{\K} (Claims~\ref{cl:basic:ReleaseEvents}\refC{scl:tau} and~\ref{cl:basic:ReleaseEvents}\refC{scl:piOrTheta}), it follows that \K\ is a releaser of \L\ throughout $[\Ib{1},\ptB{\K}{hRelease:setT}]$.

From Claim~\ref{cl:proofProperties2}\refC{QExecutesHelpRelease}, \Q\ calls \helpRelease{\Q}, respectively either in \Line{abort:callhRelease} or \Line{release:callhRelease:Queen}, after time \Ib{2}.
From Claim~\ref{cl:proofProperties2}\refC{@I2-}, \Q\ is a releaser of \L\ at time \Ib{2}.
From an inspection of Figures~\ref{fig:QsGetLock} and~\ref{fig:QsRelease}, throughout $[\I{2},\ptB{Q}{hRelease:setT}]$ \Q\ does not execute a call to \helpRelease{\Q} or \doPromote{\Q}.
Also from an inspection, \Q\ does not execute a \ctr.\CAS{$1,0$} operation in \Line{release:ctr10}, and  thus \Q's cease release event $\phi_{\Q}$ does not occur.
Since \Q's cease-release events $\tau_{\Q}, \pi_{\Q}$ and $\theta_{\Q}$ only occur during \helpRelease{\Q} or \doPromote{\Q} (Claims~\ref{cl:basic:ReleaseEvents}\refC{scl:tau} and~\ref{cl:basic:ReleaseEvents}\refC{scl:piOrTheta}), it follows that \Q\ is a releaser of \L\ throughout $[\Ib{2},\ptB{\Q}{hRelease:setT}]$.

Then for all $t \in [\Ib{2},\ptB{\A}{hRelease:setT}]$, $\Set{\K,\Q} \subseteq R(t)$ since $\Ib{1} < \Ib{2}$ and $\ptB{\A}{hRelease:setT} = \minimum{\ptB{\K}{hRelease:setT}, \ptB{\Q}{hRelease:setT}}$.
From Claim~\ref{cl:proofProperties2}\refC{onlyPromotions}, it follows that a process can become a releaser during \I{2} only if it is promoted by a releaser of \L.
Then to show that for all $t \in [\Ib{2},\ptB{\A}{hRelease:setT}]$, $R(t) = \Set{\K,\Q}$, we need to show that no process is promoted by \K\ or \Q\ during $[\Ib{2},\ptB{\A}{hRelease:setT}]$.
If a process was promoted by \K\ or \Q\ during $[\Ib{2},\ptB{\A}{hRelease:setT}]$ then by definition cease-release events $\pi_{\K}$ or $\pi_{\Q}$ would have occurred during $[\Ib{2},\ptB{\A}{hRelease:setT}]$, but as shown above this does not happen.

From Claim~\ref{cl:proofProperties2}\refC{@I2-}, $\LSync = \bot$ at time \Ib{2}.
From a code inspection, \LSync\ is changed to a non-$\bot$ value only in \Line{hRelease:setT} (during \helpRelease{}), moreover only by a releaser of \L\ (from \Claim{cl:variablesChangedByReleaserOnly}).
Since for all $t \in [\Ib{2},\ptB{\A}{hRelease:setT}]$, $R(t) = \Set{\K,\Q}$ and $\ptB{\A}{hRelease:setT} = \minimum{\ptB{\K}{hRelease:setT}, \ptB{\Q}{hRelease:setT}}$, it follows then that $\LSync = \bot$ throughout $[\Ib{2},\ptB{\A}{hRelease:setT}]$ and $\A$ executes a successful \LSync.\CAS{$\bot,\A$} operation in \Line{hRelease:setT}.
Thus $\A$'s cease-release  event $\tau_{\A}$ occurs at $\pt{\A}{hRelease:setT}$.

Since $\LSync = \bot$ throughout $[\Ib{0},\Ia{1}]$ (Claims~\ref{cl:proofProperties1}\refC{I0} and~\ref{cl:proofProperties1}\refC{XT@I1}) and throughout $[\Ib{2},\ptB{\A}{hRelease:setT}]$, it follows that \LSync\ was changed to a non-$\bot$ value for the first time at \pt{\A}{hRelease:setT}, thus $\lambda = \pt{\A}{hRelease:setT}$.
Then it follows for all $t \in [\Ib{2},\lambda)$, $R(t) = \Set{\K,\Q}$, and $\LSync = \bot$ throughout $[\Ib{2},\lambda)$

\textbf{Proof of~\refC{AsHelpRelease}: }
From Part~\refC{I2-,L}, \A's cease-release event $\tau_{A}$ occurs at $\lambda = \pt{A}{hRelease:setT}$, and thus \A's \LSync.\CAS{$\bot,\A$} operation in \Line{hRelease:setT} succeeds.
Then from the code structure \A\ does not satisfy the if-condition on \Line{hRelease:setT} and returns from its call to \helpRelease{\A}.
Thus, \Figure{fig:AsHelpRelease} follows.

\textbf{Proof of~\refC{BsHelpRelease},~\refC{GExists},~\refC{L,G},~\refC{@G},~\refC{noPromotion@I0-,G},~\refC{abNotPromoted},~\refC{promotion@G} and~\refC{noPromotion@G}: }
From Part~\refC{I2-,L}, $\lambda = \pt{\A}{hRelease:setT}$ and for all $t \in [\Ib{2},\lambda)$, $R(t) = \Set{\K,Q}$ and $\LSync = \bot$ throughout $[\Ib{2},\lambda)$ and cease-release event $\tau_{\A}$ occurs at $\lambda$.
Then $\A$ ceases to be a releaser of \L\ at $\lambda$, and thus $R(\lambda) = \Set{\B}$ and $\LSync = \A \neq \bot$ at $\lambda$.
From \Claim{cl:proofProperties2}\refC{onlyPromotions} it follows that \B\ will continue to be the only releaser of \L\ until the point when \B\ ceases to be a releaser of \L\ or promotes another process.
Let $t > \lambda$ be the point in time when \B\ ceases to be a releaser of \L.
Since \B\ ceases to be a releaser of \L\ if it promotes another process (by definition of cease-release event $\pi_{\B}$), it follows that \B\ is the only releaser of \L\ throughout $[\lambda,t)$.
Then from Claim~\ref{cl:variablesChangedByReleaserOnly} it follows that \B\ has exclusive write-access to $\X,\LSync$ and exclusive registration-access to $\PawnSet$ throughout $[\lambda,t)$.

Now consider $\B$'s \helpRelease{\B} call.
Since $\lambda = \pt{\A}{hRelease:setT} < \pt{\B}{hRelease:setT}$ and $\LSync \neq \bot$ at $\lambda$ and \B\ has exclusive write-access to \LSync\ throughout $[\lambda,t)$, $\B$ fails the \LSync.\CAS{$\bot,\B$} operation at $\pt{\B}{hRelease:setT}$, and thus satisfies the if-condition of \Line{hRelease:setT}.
It then executes lines~\ref{hRelease:readX} -~\ref{hRelease:callPromote}, and calls \doPromote{\B} in \Line{hRelease:callPromote}.
Then Figures~\ref{fig:BsHelpRelease} and~\ref{fig:BsPromote} and Part~\refC{BsHelpRelease} follows immediately.

We now show that $\gamma = \pt{\B}{promote:FR12} \leq t$.
Since $\lambda = \pt{\A}{hRelease:setT}$ and $\pt{\A}{hRelease:setT} < \pt{\B}{hRelease:setT} < \pt{\B}{promote:FR12}$, it would follow that $\lambda < \gamma$, and hence we would have proved Part~\refC{GExists}.
And since \B\ is the only releaser of \L\ throughout $[\lambda,t)$, we would have proved Part~\refC{L,G} as well, i.e., \B\ is the only releaser throughout $[\lambda,\gamma)$.

During \doPromote{\B}, \B\ executes a \PawnSet.\promote{} operation in \Line{promote:FR12}.
Since \K\ and \Q\ are the first two releasers of \L\ during $T$ (Claim~\ref{cl:proofProperties2}\refC{firstTwoReleasers}), and only a releaser executes a \PawnSet.\promote{} operation (\Claim{cl:PromoteOpByReleaserOnly}), and $\A$ ceased to be a releaser at $\pt{\A}{hRelease:setT} < \pt{\B}{promote:FR12}$, it follows that \B's \PawnSet.\promote{} operation in \Line{promote:FR12} is the first \PawnSet.\promote{} operation, and thus $\gamma = \pt{\B}{promote:FR12}$.
Since  none of \B's cease-release events occur during $[\pt{\B}{hRelease:setT},\ptB{\B}{promote:FR12}]$, $t \geq \pt{\B}{promote:FR12}$.

During $[\pt{\B}{hRelease:setT},\pt{\B}{promote:FR12}]$, \B\ resets \X\ and \LSync\ in lines~\ref{hRelease:resetX} and~\ref{hRelease:resetT}, respectively, and since \B\ has exclusive write-access to \X\ and \LSync\ throughout $[\pt{\B}{hRelease:setT},\pt{\B}{promote:FR12}]$, at time $\gamma = \pt{\B}{promote:FR12}$, $\X = \LSync = \bot$.
Thus, Part~\refC{@G} follows.

By definition $\gamma$ is the point in time when the first \PawnSet.\promote{} operation occurs.
Since a promotion event occurs only when a \PawnSet.\promote{} operation returns a non-\pair{\bot}{\bot} value, it follows that no promotion event occurs during $[\Ib{0},\gamma)$.
Hence, Part~\refC{noPromotion@I0-,G} follows.

Since \B\ has exclusive write-access to \LSync\ throughout $[\lambda,\pt{\B}{promote:FR12}]$, and  $\LSync = \A$ at $\lambda > \pt{\B}{hRelease:setT}$, $\B$ reads the value $\A$ from \LSync\ in \Line{hRelease:readT} and executes a \PawnSet.\remove{$\A$} operation in \Line{hRelease:collectFixOther}.
Since $\B$ executes \PawnSet.\remove{$\A$} and \PawnSet.\remove{$\B$} in lines~\ref{hRelease:collectFixOther} and~\ref{promote:collectFixSelf} during $[\lambda,\gamma)$ and \B\ has exclusive registration-access to \PawnSet\ during $[\lambda,\gamma)$, it follows from the semantics of the \APArray{n}\ object that $\B$'s \PawnSet.\promote{} operation at time $\gamma$ does not return values in $\Set{\pair{a}{b} | a \in \Set{\A,\B} = \Set{\K,\Q}, b \in \N}$.
Hence, Part~\refC{abNotPromoted} follows.

\textbf{Case a - } $\B$'s \PawnSet.\promote{} operation returns a non-\pair{\bot}{\bot} value:
Then $\B$'s cease-release  event $\pi_{\B}$ occurs at $\pt{\B}{promote:FR12} = \gamma$ (\Claim{cl:PromoteOpByReleaserOnly}), and thus Part~\refC{promotion@G} holds.


\textbf{Case b - } $\B$'s \PawnSet.\promote{} operation in \Line{promote:FR12} returns \pair{\bot}{\bot}.
Then $\B$ did not find any process to promote, and thus cease-release event $\pi_{\B}$ did not occur.
From the code structure $\B$ goes on to execute a \ctr.\CAS{$2,0$} operation in \Line{promote:ctr20}.
Since $\ctr = 2$ throughout \I{2}, it follows that $\B$'s \ctr.\CAS{$2,0$} operation succeeds, and thus $\B$'s cease-release  event $\theta_{\B}$ occurs at $\pt{\B}{promote:ctr20}$ and the intervals \I{2} and $T$ end.
Therefore $t' = \pt{\B}{promote:ctr20} > \pt{\B}{promote:FR12} = \gamma$.
Clearly, \B\ does not promote any process in $[\pt{\B}{promote:FR12},\pt{\B}{promote:ctr20}]$ = $[\gamma,t']$, and thus Part~\refC{noPromotion@G} holds.


\textbf{Proof of~\refC{aOrbCollect}: }
From an inspection of \Figure{fig:KsRelease}, \K\ executes a \X.\CAS{$\bot,\K$} operation in \Line{release:setX}.
Since $\Ib{2} < \pt{\K}{release:ctr10} < \pt{\K}{release:setX} < \pt{\K}{hRelease:setT} < \gamma$ (from Parts~\refC{I2-,L} and~\refC{GExists}), it follows that $\pt{\K}{release:setX} \in [\Ib{2},\gamma]$.
From an inspection of Figures~\ref{fig:QsGetLock} and~\ref{fig:QsGetLock}, \Q\ may or may not execute a \X.\CAS{$\bot,\infty$} operation in \Line{abort:setX}.
If \Q\ executes a \X.\CAS{$\bot,\infty$} operation in \Line{abort:setX}, since $\Ib{2} < \pt{\Q}{abort:setX} < \pt{\Q}{hRelease:setT} < \gamma$, it follows that $\pt{\Q}{abort:setX} \in [\Ib{2},\gamma]$.

Since for all $t \in [\Ib{2},\gamma]$, $R(t) \subseteq \Set{\K,\Q}$ (from Parts~\refC{I2-,L} and~\refC{L,G}), and only releasers of \L\ have write-access to \X\ (Claim~\ref{cl:variablesChangedByReleaserOnly}), and $\X = \bot$ at $\Ib{2}$ (Claim~\ref{cl:proofProperties2}\refC{@I2-}), it follows that either \K\ or \Q\ executes a successful \CAS{} operation on \X.
Then from the code structure it follows that either \K\ or \Q\ executed a call to \doCollect{} in lines~\ref{release:doCollect} or~\ref{abort:doCollect}, respectively.
Since $\pt{\K}{release:doCollect} < \ptB{\K}{release:callhRelease:King} = \ptB{\K}{hRelease:setT} < \gamma$ and  $\pt{\Q}{abort:doCollect} < \ptB{\Q}{hRelease:setT} < \gamma$, \K\ or \Q\ executed a call to \doCollect{} during $[\Ib{2},\gamma]$.
\end{proof}

\begin{claim}
If a process $p$ is promoted at time $t' \in T$ and a \PawnSet.\reset{} has not been executed during $[\Ib{0},t']$, then $p$ did not execute a \PawnSet.\cUpdate{$p,s$} operation during $[\Ib{0},t']$, where $s \in \N$.
\label{cl:ifPromotedThenWaiting}
\end{claim}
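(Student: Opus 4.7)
My plan is to proceed by contradiction. Suppose $p$ executes a call \PawnSet.\cUpdate{$p,s$} at some time $u \in [\Ib{0}, t']$. By the sequential specification of \APArray{n} (Figure~\ref{fig:SArray:seq}), this call either (i) returns \True\ and installs the value $\pair{\cAbort}{s}$ in the $p$-th entry of \PawnSet, or (ii) returns \False, which by the same specification forces $O[p] = \pair{\cPro}{s'}$ for some $s' \in \N$ immediately before $u$. In both cases I would track the possible values of $O[p]$ on $(u, t']$ and show that $O[p]$ can never hold a value of the form $\pair{\cReg}{\cdot}$. Since the promotion of $p$ at $t'$ requires $O[p] = \pair{\cReg}{\cdot}$ immediately before $t'$, this yields the desired contradiction.

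Case (i) is the short one. After $u$ the entry equals $\pair{\cAbort}{s}$. By Claim~\ref{cl:basic:Collect} together with the sequential specification of \APArray{n}, the only operations that can modify $O[p]$ are \collect{}, \promote{}, \cUpdate{}, \remove{}, and \reset{}. Of these, \collect{$X$} explicitly leaves aborted entries unchanged, \promote{} only rewrites entries whose first component is \cReg, and any further \cUpdate{} or \remove{} either overwrites $O[p]$ with another aborted value or leaves it unchanged. The hypothesis excludes \reset{}. Hence $O[p]$ remains of the form $\pair{\cAbort}{\cdot}$ throughout $(u, t']$.

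Case (ii) is the main case. I would first argue that the $\pair{\cPro}{s'}$ value present at $u$ must have been installed by a \promote{} operation at some time $t'' \in [\Ib{0}, u]$ lying inside the current \ctr-cycle $T$. The reason is that every code path leading to \doPromote{} (from \Line{release:callPromote} or from \Line{hRelease:callPromote} inside \helpRelease{}) is only reachable in a cycle with $\I{2} \neq \varnothing$, and such a cycle always ends by executing \PawnSet.\reset{} at \Line{promote:resetBackpack} immediately before the closing \ctr.\CAS{$2,0$} at \Line{promote:ctr20}; so any earlier \cPro\ value in $O[p]$ would have been erased before $\Ib{0}$. After $t''$, the only way $O[p]$ can revert to a \cReg\ value is through another \PawnSet.\collect{} call, which is invoked only from \doCollect{} at \Line{release:doCollect} or \Line{abort:doCollect}. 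Both sites are guarded by a successful \X.\CAS{$\bot,\cdot$} at \Line{release:setX} or \Line{abort:setX}, and only the cycle's unique king and queen ever attempt that CAS; since \X\ starts at $\bot$ at $\Ib{0}$, this yields at most one \doCollect{} per cycle, which must precede $t''$. Hence no collect happens in $(u, t']$. Moreover, once \cUpdate{} returns \False, $p$ assumes role \cPPawn\ at \Line{abort:RolePPawn}, returns $\infty$ from \abort{p}, completes its Critical Section, and in its \release{p}{} call invokes \doPromote{p}, whose first shared step is \PawnSet.\remove{$p$} at \Line{promote:collectFixSelf}, setting $O[p]$ to $\pair{\cAbort}{\cdot}$. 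From that point the Case~(i) analysis applies. Therefore $O[p] \in \{\pair{\cPro}{s'}, \pair{\cAbort}{\cdot}\}$ throughout $[u, t']$, again contradicting promotion at $t'$.

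The main obstacle is Case~(ii), which depends on two structural facts. The first is the one-collect-per-cycle observation above, resting on \X\ starting at $\bot$ and being written to a non-$\bot$ value only by the king or queen. The second is that no process other than $p$ can call \PawnSet.\remove{$p$} while $p$ holds only a pawn role: the only non-self call site is \Line{hRelease:collectFixOther}, whose argument is read from \LSync, and by Claim~\ref{cl:TableOfRoles} \LSync\ is only ever written at \Line{hRelease:setT} by a process that then held the king or queen role, never a pawn. Together these confine $O[p]$ to the two states $\pair{\cPro}{s'}$ and $\pair{\cAbort}{\cdot}$ on $[u,t']$, completing the contradiction.
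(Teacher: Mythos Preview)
Your Case (i) is exactly the paper's argument: once the \cUpdate{} writes $\pair{\cAbort}{s}$, the sequential specification of \APArray{n} ensures that neither \collect{} nor \promote{} can overwrite an aborted entry, so without a \reset{} the $p$-th entry never returns to $\pair{\cReg}{\cdot}$ and the promotion at $t'$ is impossible.

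Where you diverge is in treating Case~(ii) as a case to be worked through. The paper instead dispatches it in one sentence: it asserts that $p$ has not been promoted before $t'$, so by Claim~\ref{cl:basic:Collect}\refC{scl:CollectPromote} the $p$-th entry was never $\pair{\cPro}{\cdot}$ during $[\Ib{0},t]$, hence the \cUpdate{} at $t$ must succeed, and we are back in Case~(i). The paper is admittedly terse here---it is reading the hypothesis as ``$t'$ is the time $p$ gets promoted,'' i.e., the first such time---but the point is that the whole proof stays purely at the level of \PawnSet\ semantics and Claim~\ref{cl:basic:Collect}.

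Your direct attack on Case~(ii), by contrast, has a genuine gap: it relies on two cycle-level structural facts, namely (a) every earlier cycle with $\I{2}\neq\varnothing$ ends with a \PawnSet.\reset{}, and (b) there is at most one \doCollect{} per cycle, which precedes any promotion. Fact~(b) is Claim~\ref{cl:proofProperties3}\refC{aOrbCollect} and fact~(a) is part of Claim~\ref{cl:proofProperties5}\refC{@I2+}; both are stated under the premise that $R(\Ib{0})=\varnothing$ and $\X=\LSync=\bot$ and \PawnSet\ is candidate-empty at $\Ib{0}$. That premise is supplied for an arbitrary cycle only by Claim~\ref{cl:@I0-}, which is proved by induction using Claim~\ref{cl:proofProperties5}, which in turn uses the present claim. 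So invoking (a) and (b) here is circular. Note also that your ``$\X$ starts at $\bot$ at $\Ib{0}$'' is itself part of that same cycle premise, not something the present claim assumes. The claim is deliberately stated without any hypothesis on the state at $\Ib{0}$ beyond ``no reset in $[\Ib{0},t']$,'' precisely so that it can be used inside the inductive proof of Claim~\ref{cl:@I0-}; that is why the paper's proof confines itself to Claim~\ref{cl:basic:Collect} and the \APArray{n} specification.
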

\begin{proof}
Suppose not, i.e., $p$ executed a \PawnSet.\cUpdate{$p,s$} operation at time $t < t'$.
Since $p$ has not been promoted before $t' > t$ it follows that a \PawnSet.\promote{} operation that returns \pair{p}{\cdot} has not been executed before $t$.
Then from \Claim{cl:basic:Collect}\refC{scl:CollectOps} and the semantics of \PawnSet, it follows that the $p$-th entry of \PawnSet\ is not at value \pair{\cPro}{s} = \pair{2}{s} throughout $[\Ib{0},t]$.
Then $p$'s \PawnSet.\cUpdate{$p,s$} operation at $t$ succeeds, and thus $p$ writes value \pair{\cAbort}{s} = \pair{3}{s} to the $p$-entry of \PawnSet.
Then for $p$ to be promoted at $t' > t$, it follows from the semantics of \PawnSet\ and \Claim{cl:basic:Collect}\refC{scl:CollectOps}, that during $[t,t')$ a \PawnSet.\reset{} operation and then a \PawnSet.\collect{$A$} operation where $A[p] = s$, must be executed, followed by a \PawnSet.\promote{} at $t'$ that returns \pair{p}{s}.
This is a contradiction to the assumption that a \PawnSet.\reset{} is not executed during $[\I{0},t']$.
\end{proof}

\label{notations:Omega,P}
Let $\ell$ be the number of times a promotion occurs during $T$.
For all $i \in \Set{1,\ldots,\ell}$, define $\Omega_{i}$ to be the $i$-th interval $[\Ob{i},\Oa{i}]$ that begins when the $i$-th promotion occurs during $T$ and ends when the promoted process ceases to be a releaser of \L.
Let $\P_i$ be the process promoted at \Ob{i}.

\begin{claim} \label{cl:proofProperties4}

If $\I{2} \neq \varnothing$ and $R(\Ib{0}) = \varnothing$ and at time \Ib{0}, $\X = \LSync = \bot$ and \PawnSet\ is candidate-empty, then the following claims hold for all $i \in \Set{1,\ldots,\ell}$:
\begin{enumerate}[(a)]
      \item If $\ell \geq 1$, then  $\gamma = \Ob{1}$ and $R(\Ob{1}) = \Set{\P_1}$, and $\X = \LSync  = \bot$ at \Ob{1}, and no \PawnSet.\reset{} operation has been executed during $[\Ib{0},\Ob{1}]$.  \label{@O1-}
      \item If  $R(\Ob{i}) = \Set{\P_{i}}$, then for all $t \in [\Ob{i},\Oa{i})$, $R(t) = \Set{\P_{i}}$. (i.e., $\P_i$ is the only releaser throughout $\Omega_i$) \label{help1}
      \item If $i \neq \ell$ and $R(\Ob{i}) = \Set{\P_{i}}$, then $\Oa{i} = \Ob{i+1}$ and $R(\Ob{i+1}) = \Set{\P_{i+1}}$. (i.e., $\P_{i+1}$ is the only releaser at $\Ob{i+1}$) \label{help2}
      \item If $i \neq \ell$, then $\Oa{i} = \Ob{i+1}$ and $R(\Ob{i+1}) = \Set{\P_{i+1}}$.  \label{Oi+=Oi+1-}
      \item For all $t \in [\Ob{i},\Oa{i})$, $R(t) = \Set{\P_i}$. (i.e., $\P_i$ is the only releaser throughout $\Omega_i$) \label{R@Oi-,Oi+}
\end{enumerate}

\end{claim}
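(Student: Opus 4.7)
The plan is to prove parts (a), (b), (c) as standalone auxiliary facts using the machinery developed in Claims~\ref{cl:proofProperties2} and~\ref{cl:proofProperties3}, and then derive parts (d) and (e) by a joint induction on $i$.

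For part (a) I would argue as follows. By the definition of $\Ob{1}$ as the first promotion during $T$, and since Claim~\ref{cl:proofProperties3}\refC{noPromotion@I0-,G} guarantees that no promotion occurs during $[\Ib{2},\gamma)$ while by definition $\gamma$ is the point of the first \PawnSet.\promote{} operation, we must have $\gamma = \Ob{1}$ whenever $\ell \geq 1$ (the \promote{} at $\gamma$ returns a non-$\bot$ pair, otherwise the cycle would end with no promotion by Claim~\ref{cl:proofProperties3}\refC{noPromotion@G}). Claim~\ref{cl:proofProperties3}\refC{abNotPromoted} rules out $\P_1\in\{\K,\Q\}$, and Claim~\ref{cl:proofProperties3}\refC{promotion@G} says $\pi_{\B}$ fires at $\gamma$; combined with Claim~\ref{cl:proofProperties3}\refC{L,G} ($R = \{\B\}$ just before $\gamma$), this gives $R(\Ob{1}) = \{\P_1\}$. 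The state $\X = \LSync = \bot$ at $\gamma$ is exactly Claim~\ref{cl:proofProperties3}\refC{@G}. Finally, \PawnSet.\reset{} is invoked only in \Line{promote:resetBackpack}, which is reached only \emph{after} a failing \PawnSet.\promote{} in \Line{promote:FR12}; since $\gamma$ is the first \promote{} operation ever, no \reset{} can have preceded $\Ob{1}$.

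For part (b), I would observe that throughout $\Omega_i\subseteq\I{2}$ the counter \ctr\ cannot be increased (it is already at 2), so by (R1) no process becomes a releaser via a counter bump. By (R2) a new releaser must be promoted, and by Claim~\ref{cl:PromoteOpByReleaserOnly} only a releaser can promote. Since $\P_i$ is the sole releaser at $\Ob{i}$, any further change to $R$ would require $\P_i$ itself to promote someone — but this would trigger $\pi_{\P_i}$, ending $\Omega_i$ by definition. Hence $R(t) = \{\P_i\}$ for every $t \in [\Ob{i},\Oa{i})$. For part (c), combine (b) with the fact that $\P_i$ was promoted, so $\Role[\P_i] = \cPPawn$ by Claim~\ref{cl:helpful:promoted}; when $\P_i$ subsequently executes \release{\P_i}{j} (guaranteed by Condition~\ref{cond:safety:ARLockArray}\refC{condition:ifLockThenRelease} and Lemma~\ref{cl:MethodsWaitfree}), it satisfies the if-condition at \Line{release:ifPPawn} and invokes \doPromote{\P_i}. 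By Claim~\ref{cl:EventsPromote} exactly one of $\pi_{\P_i}$ or $\theta_{\P_i}$ occurs; the second option would decrement \ctr\ to $0$, ending $T$ and thus making $i = \ell$, contradicting the hypothesis. So $\pi_{\P_i}$ fires, promoting some $q$; this is the $(i{+}1)$-th promotion, so $q = \P_{i+1}$, $\Oa{i} = \Ob{i+1}$, and $\P_i$ ceases while $\P_{i+1}$ becomes the unique new releaser, giving $R(\Ob{i+1}) = \{\P_{i+1}\}$.

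Parts (d) and (e) then follow by induction on $i$. Base case $i=1$: part (a) gives $R(\Ob{1}) = \{\P_1\}$, and part (b) applied at $i=1$ yields (e); if $1 \neq \ell$ then part (c) at $i=1$ yields (d). Inductive step: assuming (d) holds for $i$ so that $R(\Ob{i+1}) = \{\P_{i+1}\}$, part (b) applied at $i+1$ delivers (e) for $i+1$, and if $i+1 \neq \ell$ then part (c) at $i+1$ delivers (d) for $i+1$.

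The main obstacle will be the transition bookkeeping at the point $\Oa{i} = \Ob{i+1}$: one must argue that $\P_i$'s cease-release event $\pi_{\P_i}$ and $\P_{i+1}$'s becoming a releaser under (R2) are one and the same atomic step, so that there is no intermediate moment when $R$ is empty or has two elements. A second subtle point is showing that $\P_i$ in fact \emph{reaches} its \release{\P_i}{} call — this requires invoking Condition~\ref{cond:safety:ARLockArray}\refC{condition:ifLockThenRelease} together with the wait-freedom of \release{}{}, \helpRelease{}, \doCollect{}, and \doPromote{} established in Lemma~\ref{cl:MethodsWaitfree}, and ruling out the abort path (which a promoted pawn, not currently spinning on any of lines \ref{getLock:ApplyBotWant},~\ref{getLock:awaitAckOrCtrDecrease}, or~\ref{getLock:awaitX}, cannot take).
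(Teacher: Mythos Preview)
Your proofs of parts (a), (b), (d), and (e) match the paper's approach almost exactly, including the induction structure for (d) and (e).

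Your argument for part (c), however, diverges from the paper and carries a real gap. The paper argues \emph{existentially}: since $i<\ell$, the $(i{+}1)$-th promotion is known to occur; by Claim~\ref{cl:proofProperties2}\refC{onlyPromotions} only a releaser can promote, and since $R(\Ob{i})=\{\P_i\}$ and (by part~(b)) no one else becomes a releaser before $\P_i$ ceases, it must be $\P_i$ who promotes $\P_{i+1}$. This never requires tracing $\P_i$'s code path.

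Your argument is \emph{operational}: you want to follow $\P_i$ through \lock{} into \release{} and then into \doPromote{}. Two problems arise. First, your citation of Claim~\ref{cl:helpful:promoted} is in the wrong direction: that claim says ``if $\Role[\P_i]=\cPPawn$ then $\P_i$ was promoted,'' not the converse you need. Being promoted is an action by another process on \PawnSet; $\Role[\P_i]$ becomes \cPPawn\ only after $\P_i$ itself detects the promotion and executes \Line{getLock:RolePPawn} or \Line{abort:RolePPawn}. Second, to invoke Condition~\ref{cond:safety:ARLockArray}\refC{condition:ifLockThenRelease} you must first show that $\P_i$'s \lock{} call succeeds. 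Your sketch says a promoted pawn is ``not currently spinning,'' but in fact $\P_i$ is typically still busy-waiting in \Line{getLock:awaitAckOrCtrDecrease} at the moment $\Ob{i}$; you would need to show it is eventually notified (via \Line{promote:ApplyWantOk}) or that its abort attempt fails at \Line{abort:ifHead}. That is precisely the content of Claim~\ref{cl:proofProperties5}\refC{PiDoesNotStarve}, whose proof in turn relies on part~(d) of the present claim via Claim~\ref{cl:proofProperties5}\refC{noReset@I0-,Oi-}---so importing it here would be circular. The paper's existential argument sidesteps all of this.
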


\begin{proof}

\textbf{Proof of~\refC{@O1-}: }
If the \PawnSet.\promote{} operation at time $\gamma$ returns value \pair{\bot}{\bot}, then from Claims~\ref{cl:proofProperties3}\refC{noPromotion@I0-,G} and~\ref{cl:proofProperties3}\refC{noPromotion@G} it follows that no promotion occurs during $T$, which is a contradiction to $\ell \geq 1$.
Thus, the \PawnSet.\promote{} operation at time $\gamma$ returns a non-\pair{\bot}{\bot} value.
By definition $\gamma$ is the point when the first \PawnSet.\promote{} operation occurs, and $\Ob{1}$ is the point when the first promotion occurs and $\P_1$ is the process promoted at \Ob{1}.
Then $\gamma = \Ob{1}$, and $\P_1$ is the first promoted process.
From Claim~\ref{cl:proofProperties3}\refC{L,G}, \B\ is the only releaser of \L\ at the point in time immediately before time $\gamma$.
Then from Claim~\ref{cl:proofProperties3}\refC{promotion@G} it follows that \B\ promotes $\P_1$ at time $\gamma = \Ob{1}$, and \B\ ceases to be a releaser of \L\ at $\gamma$, therefore $R(\gamma) = \Set{\P_1}$.
From Claim~\ref{cl:proofProperties3}\refC{@G} it follows that $\X = \LSync  = \bot$ at \Ob{1}.

From an inspection of the code, a \PawnSet.\reset{} is executed only in \Line{promote:resetBackpack}, and it can be executed only after a \PawnSet.\promote{} is executed in \Line{promote:FR12}.
Since $\gamma$ is the first point when a \PawnSet.\promote{} is executed, it follows that no \PawnSet.\reset{} operation was executed during $[\Ib{0},\gamma]$.

\textbf{Proof of~\refC{help1}: }
Since $R(\Ob{i}) = \Set{\P_{i}}$, and \Oa{i} is the point when $\P_i$ ceases to be a releaser of \L, for all $t \in [\Ob{i},\Oa{i})$, $\Set{\P_i} \subseteq R(t)$.
To show that for all $t \in [\Ob{i},\Oa{i})$, $R(t) = \Set{\P_i}$, we need to show that no other process becomes a releaser of \L, during $[\Ob{i},\Oa{i})$.
Suppose some process $q \neq \P_i$ becomes a releaser of \L\ some time during that interval.
Since $\Ob{i} > \Ob{1} = \gamma > \Ib{2}$, from Claim~\ref{cl:proofProperties2}\refC{onlyPromotions} it follows that $\P_i$ promotes $q$ during $[\Ob{i},\Oa{i})$.
Then from \Claim{cl:PromoteOpByReleaserOnly}, $\P_i$'s cease-release event $\pi_{\P_i}$ occurs during $[\Ob{i},\Oa{i})$, and thus $\P_i$ ceases to be a releaser of \L\ during $[\Ob{i},\Oa{i})$.
Hence a contradiction.

\textbf{Proof of~\refC{help2}: }
Since $i < \ell$, it follows that there exists a process $\P_{i+1}$ that becomes a releaser of \L\ during $T$.
By definition, $\P_i$ and $\P_{i+1}$ are the $i$-th and $(i+1)$-th promoted processes during $T$, respectively.
Since $\Ob{i+1} > \Ob{i} > \Ob{1} = \gamma > \Ib{2}$, from \Claim{cl:proofProperties2}\refC{onlyPromotions} it follows that no other process becomes a releaser after $\P_i$ became a releaser and before $\P_{i+1}$ becomes a releaser, i.e., during $[\Ob{i},\Ob{i+1}]$.
Moreover, since $R(\Ob{i}) = \Set{\P_{i}}$, it follows that the next process to be promoted, i.e., $\P_{i+1}$, is promoted by the only releaser of \L, $\P_{i}$.
Then from \Claim{cl:PromoteOpByReleaserOnly}, it follows that $\P_i$ promotes $\P_{i+1}$ by executing a \PawnSet.\promote{} in \Line{promote:FR12} that returns \pair{\P_{i+1}}{s}, where $s \in \N$, and event $\pi_{\P_i}$ occurs at \pt{\P_i}{promote:FR12}.
Then $\P_i$ ceases to be a releaser of \L\ at \pt{\P_i}{promote:FR12} and thus $\Oa{i} = \pt{\P_i}{promote:FR12}$.
Since \Ob{i+1} is the point when $\P_{i+1}$ becomes a releaser of \L, it follows that $\Oa{i} = \Ob{i+1}$, and thus $R(\Ob{i+1}) = \Set{\P_{i+1}}$.

\textbf{Proof of~\refC{Oi+=Oi+1-}: }
We prove by induction that for all $k < \ell$, $R(\Ob{k+1}) = \Set{\P_{k+1}}$ and $\Oa{k} = \Ob{k+1}$.

\textbf{Basis ($k = 1$)}
From Part~\refC{@O1-}, $\P_1$ is the only releaser of \L\ at \Ob{1}, and clearly $\ell > k = 1$.
Then from Part~\refC{help2}, $\Oa{1} = \Ob{2}$ and $R(\Ob{2}) = \Set{\P_2}$.

\textbf{Induction step ($k > 1$)}
By definition $\P_{k}$ is the promoted process at \Ob{k}, and since $|R(\Oa{k-1})| = 1$ and $\Oa{k-1} = \Ob{k}$ (by the induction hypothesis), it follows that $\P_{k}$ is the only releaser of \L\ at \Ob{k}.
Then from Part~\refC{help2}, $\Oa{k} = \Ob{k+1}$ and $R(\Ob{k+1}) = \Set{\P_{k+1}}$.


\textbf{Proof of~\refC{R@Oi-,Oi+}: }
From Part~\refC{@O1-}, $R(\Ob{1}) = \Set{\P_1}$, and thus from Part~\refC{help1},  for all $t \in [\Ob{1},\Oa{1})$, $R(t) = \Set{\P_{1}}$.
From Part~\refC{Oi+=Oi+1-}, for all $i >1$, $R(\Ob{i}) = \Set{\P_i}$, and thus from Part~\refC{help1}, for all $t \in [\Ob{i},\Oa{i})$, $R(t) = \Set{\P_{i}}$.
Hence, our claim follows.
\end{proof}

\begin{claim} \label{cl:proofProperties5}

If $\I{2} \neq \varnothing$ and $R(\Ib{0}) = \varnothing$ and at time \Ib{0}, $\X = \LSync = \bot$ and \PawnSet\ is candidate-empty, then the following claims hold for all $i \in \Set{1,\ldots,\ell}$:
\begin{enumerate}[(a)]
      \item A \PawnSet.\reset{} operation is not executed during $[\Ib{0},\Ob{i}]$. \label{noReset@I0-,Oi-}
      \item $\P_i$ executes lines of code of \lock{\P_i} starting with \Line{getLock:ApplyBotWant} as depicted in Figure~\ref{fig:PisGetLock}.\label{PisGetLock}

\begin{figure*}[!htb]
\centering
\includegraphics[scale=0.6]{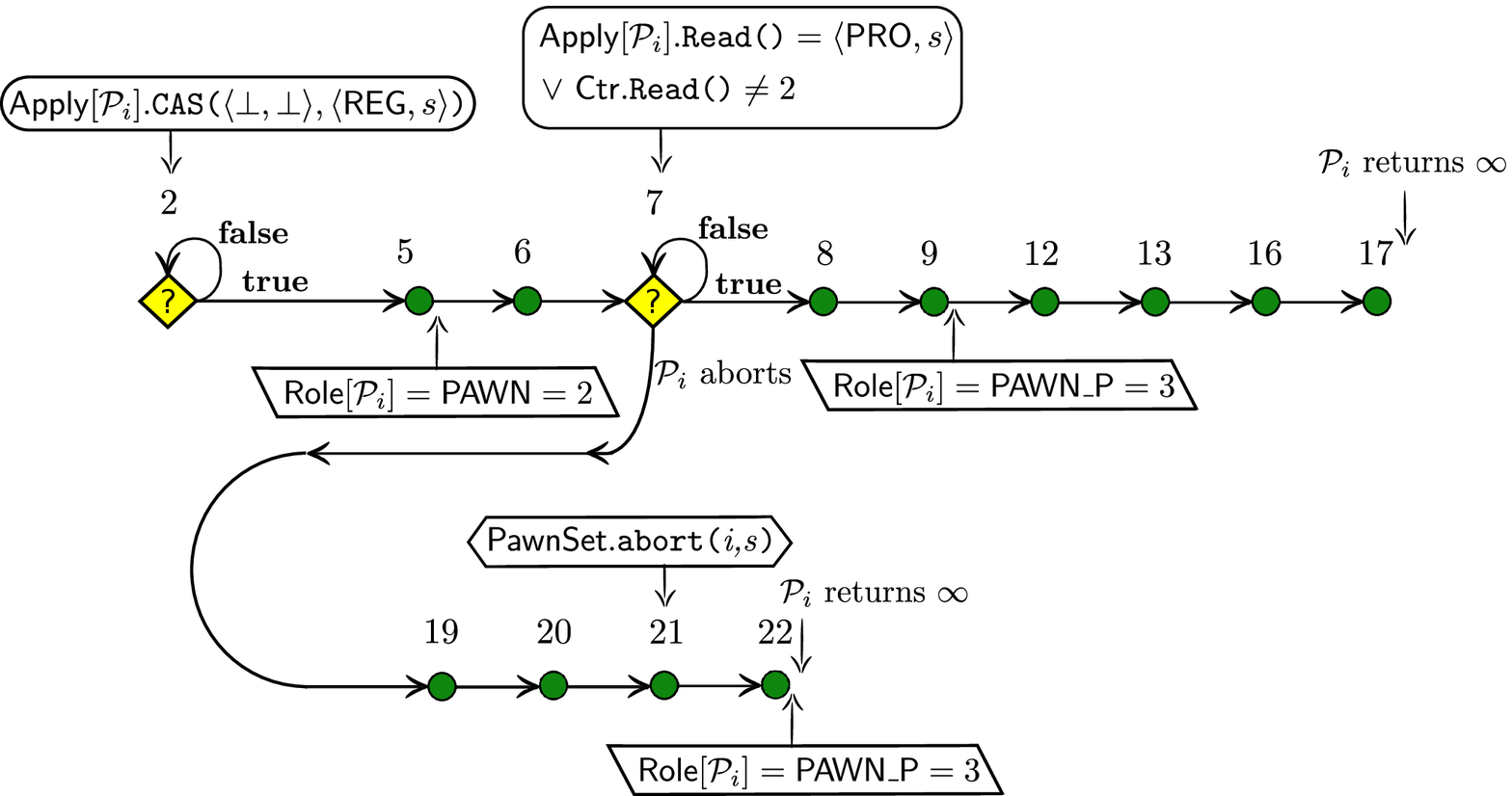}
\caption{$\P_i$'s call to \lock{\P_i}}
\label{fig:PisGetLock}
\end{figure*}

      \item $\P_i$'s call to \lock{\P_i} returns $\infty$, and $\P_i$ finishes \lock{\P_i} during $T$, and $\Role[\P_i] = \cPPawn$ when $\P_i$'s call to \lock{\P_i} returns.\label{PiDoesNotStarve}\label{Role=PPawn}

      \item Exactly one cease-release event among $\pi_{\P_i}$ and $\theta_{\P_i}$ occurs during $\P_i$'s call to \doPromote{\P_i}.\label{PisCeaseReleaseEvent}

      \item $\P_i$ executes lines of code of \release{\P_i}{} starting with \Line{release:safetyCheck} as depicted in Figure~\ref{fig:PisRelease}.\label{PisRelease}

\begin{figure}[!htb]
\centering
\includegraphics[scale=0.6]{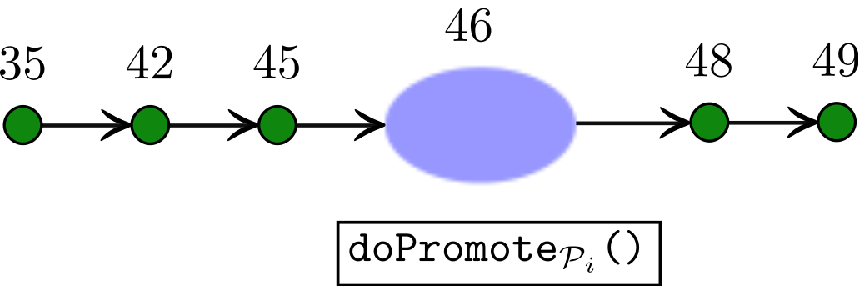}
\caption{$\P_i$'s call to \release{\P_i}{}}
\label{fig:PisRelease}
\end{figure}

      \item $\P_i$ does not write to \X\ or \LSync\ during $[\Ob{i},\Oa{i}]$. \label{Oi-,Oi+helper}

      \item $\pt{\P_i}{getLock:ApplyBotWant} < \Ob{i} < \ptB{\P_i}{release:safetyCheck} < \Oa{i} < \pt{\P_i}{release:ApplyOkBot}$ and $\Oa{i} \leq \Ia{2}$.\label{whenOioccurs}

      \item  If $i \neq \ell$, then a \PawnSet.\reset{} operation is not executed during $[\Ib{0},\Oa{i}]$. \label{Oi-,Oi+helper2}
      \item Throughout $[\gamma,\Oa{\ell}]$, $\X = \LSync = \bot$. \label{XTC@Oi-,Oi+}
      \item If $\ell > 1$, $\Ia{2} = \Oa{\ell} = \pt{\P_{\ell}}{promote:ctr20}$. \label{I2+=Ol+}
      \item For all $t \in [\gamma,\Ia{2})$, $|R(t)| = 1$. \label{[G,I2+]}
      \item $R(\Ia{2}) = \varnothing$ and at \Ia{2}, $\X= \LSync = \bot$ and \PawnSet\ is candidate-empty. \label{@I2+}
\end{enumerate}

\end{claim}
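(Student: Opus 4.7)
The plan is to prove parts (b)--(g) first for a generic $\P_i$, then bootstrap to the global statements (a), (h)--(l) by induction on $i$ and by tracking cease-release events across the intervals $\Omega_1,\ldots,\Omega_\ell$.

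For (b)--(e), start from the fact that $\P_i$ was promoted at $\Ob{i}$: by \Claim{cl:helpful:promoted}, some releaser swapped $\pair{\cOk}{s}$ into $\Apply[\P_i]$ in Line~\ref{promote:ApplyWantOk} via a \PawnSet.\promote{} call that returned $\pair{\P_i}{s}$. By Condition~\ref{cond:safety:ARLockArray}(a) this occurs inside $\P_i$'s matching \lock{\P_i} call, and by \Claim{cl:basic:Apply}\refC{scl:ApplyRegister} $\P_i$ must have already executed Line~\ref{getLock:ApplyBotWant}; since neither $\cKing$ nor $\cQueen$ ever busy-waits on $\Apply[\P_i]$, the only way for the $\cOk$-write to matter is if $\Role[\P_i]=\cPawn$ and $\P_i$ is blocked in the spin of Line~\ref{getLock:awaitAckOrCtrDecrease}. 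The code then forces $\P_i$ through Line~\ref{getLock:ifBackpacked}, sets $\Role[\P_i]\la\cPPawn$ in Line~\ref{getLock:RolePPawn}, exits the role-loop, and (since $\Role[\P_i]\neq\cQueen$) returns $\infty$ in Line~\ref{getLock:returninfty}, giving (b) and (c). Part (d) is just \Claim{cl:EventsPromote}. For (e), $\Role[\P_i]$ is preserved across \release{} by \Claim{cl:basic:Role}\refC{scl:RoleUnchanged}, so in the subsequent \release{\P_i}{} (which must occur, by Condition~\ref{cond:safety:ARLockArray}(b)) $\P_i$ takes the $\cPPawn$-branch at Line~\ref{release:ifPPawn} and calls \doPromote{\P_i}.

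Next, prove (a) and (h) together by induction on $i$. The base case $i=1$ is \Claim{cl:proofProperties4}\refC{@O1-}. For the step, \PawnSet.\reset{} is executed only in Line~\ref{promote:resetBackpack}, which requires the preceding \PawnSet.\promote{} to return $\pair{\bot}{\bot}$. During each $[\Ob{j},\Oa{j})$ with $j<i$, \Claim{cl:proofProperties4}\refC{R@Oi-,Oi+} gives $\P_j$ as the unique releaser, and by \Claim{cl:PromoteOpByReleaserOnly} only $\P_j$ can execute \PawnSet.\promote{} there; by \Claim{cl:proofProperties4}\refC{Oi+=Oi+1-}, $\P_j$'s cease-release is $\pi_{\P_j}$ (the promotion of $\P_{j+1}$), so its \promote\ succeeds and no reset is triggered. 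This yields (a), and the same argument (taken up to $\Oa{i}$ for $i<\ell$) yields (h). Part (f) can be read off the diagram established in (e): $\P_i$'s \release\ body touches only \PawnSet, \ctr, and $\Apply[\P_i]$. Part (g) is an assembly of timing facts: $\P_i$ registered via Line~\ref{getLock:ApplyBotWant} strictly before being promoted; the \release{\P_i}{} call is forced by Condition~\ref{cond:safety:ARLockArray}(b); the cease-release event of $\P_i$ by (d) lies inside \doPromote{\P_i}, which is invoked at Line~\ref{release:callPromote}; and $\Oa{i}\leq\Ia{2}$ because both candidate cease-release events $\pi_{\P_i}$ and $\theta_{\P_i}$ take place while $\ctr=2$, and $\theta_{\P_i}$ is exactly the operation that ends \I{2}.

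Finally, (i)--(l) describe the tail of the cycle. For (i), at $\gamma=\Ob{1}$ we have $\X=\LSync=\bot$ by \Claim{cl:proofProperties4}\refC{@O1-}; by \Claim{cl:variablesChangedByReleaserOnly} only releasers can alter $\X$ or $\LSync$; the intervals $\Omega_j$ tile $[\gamma,\Oa{\ell}]$ (\Claim{cl:proofProperties4}\refC{Oi+=Oi+1-}) with unique releaser $\P_j$ on each, and (f) says $\P_j$ never writes to these variables. For (j), $\P_\ell$ is by definition the last promoted process, so by \Claim{cl:proofProperties4}\refC{Oi+=Oi+1-} its cease-release cannot be a promotion; by (d) it must then be $\theta_{\P_\ell}$, triggered by the successful $\ctr.\CAS(2,0)$ in Line~\ref{promote:ctr20}, which is exactly the event that moves \ctr\ out of the value $2$; hence $\Oa{\ell}=\pt{\P_\ell}{promote:ctr20}=\Ia{2}$. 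Part (k) follows immediately from the tiling and \Claim{cl:proofProperties4}\refC{R@Oi-,Oi+}. For (l), $R(\Ia{2})=\varnothing$ follows from (j) because $\P_\ell$ ceases at $\Ia{2}$ and no new releaser can appear (\Claim{cl:proofProperties2}\refC{onlyPromotions}); $\X=\LSync=\bot$ at $\Ia{2}$ is part of (i). The main obstacle is verifying that \PawnSet\ is candidate-empty at $\Ia{2}$: in $\P_\ell$'s \doPromote\ the call $\PawnSet.\reset{}$ in Line~\ref{promote:resetBackpack} clears every slot immediately before $\ctr.\CAS(2,0)$ in Line~\ref{promote:ctr20}, and between these two steps (and indeed throughout $(\pt{\P_\ell}{promote:resetBackpack},\Ia{2}]$) only concurrent \PawnSet.\cUpdate{i,s} calls from aborting pawns in Line~\ref{abort:ifHead} can touch \PawnSet; by \Claim{cl:basic:Collect}\refC{scl:CollectDeregister} such operations write only $\pair{\cAbort}{\cdot}$ values and therefore cannot introduce any $\pair{\cReg}{\cdot}$ or $\pair{\cPro}{\cdot}$ entry, preserving candidate-emptiness. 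Pinning down this concurrency window, and in particular confirming via \Claim{cl:variablesChangedByReleaserOnly} that no other registration-access to \PawnSet\ occurs there, is the delicate part of the argument.
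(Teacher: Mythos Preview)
Your decomposition breaks down at the proof of (b)/(c). You propose to establish (b)--(e) for a generic $\P_i$ \emph{before} establishing (a), but (b)/(c) actually depend on (a). You assert that once $\P_i$ is promoted, ``$\P_i$ is blocked in the spin of Line~\ref{getLock:awaitAckOrCtrDecrease}'' and the code then forces $\P_i$ through Line~\ref{getLock:ifBackpacked} and Line~\ref{getLock:RolePPawn}. But nothing prevents $\P_i$ from receiving an abort signal while spinning; in that case $\P_i$ jumps to \abort{\P_i} and executes $\PawnSet.\cUpdate{\P_i,s}$ in Line~\ref{abort:ifHead}. To conclude that this \cUpdate\ call returns \False\ (so that $\P_i$ still sets $\Role[\P_i]\la\cPPawn$ in Line~\ref{abort:RolePPawn} and returns $\infty$), you need the $\P_i$-th entry of \PawnSet\ to still hold $\pair{\cPro}{s}$ at $\pt{\P_i}{abort:ifHead}$. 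That requires knowing that no $\PawnSet.\reset{}$ has occurred on $[\Ib{0},\Ob{i}]$ (so that, via \Claim{cl:ifPromotedThenWaiting}, $\P_i$ cannot have executed \cUpdate\ before $\Ob{i}$) and that none occurs on $[\Ob{i},\pt{\P_i}{abort:ifHead}]$ --- i.e., precisely part (a) together with the exclusive registration-access of $\P_i$ on $[\Ob{i},\Oa{i})$. The paper therefore structures the induction differently: prove (a) for $i=1$ from \Claim{cl:proofProperties4}\refC{@O1-}; then show that (a) for $i$ implies (b)--(h) for $i$; then show (a)--(h) for $i$ yield (a) for $i+1$ via $\Oa{i}=\Ob{i+1}$.

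A second, smaller gap: your arguments for (k) and (l) implicitly assume $\ell\geq 1$ (you tile $[\gamma,\Oa{\ell}]$ by the $\Omega_j$ and invoke part (j)). When $\ell=0$, the \PawnSet.\promote{} at $\gamma$ returns $\pair{\bot}{\bot}$ and it is $\B\in\{\K,\Q\}$, not any $\P_j$, that executes $\PawnSet.\reset{}$ and the $\ctr.\CAS(2,0)$ ending $\I{2}$; this case must be handled separately via \Claim{cl:proofProperties3}\refC{noPromotion@G}, as the paper does. (Also, your invocation of \Claim{cl:helpful:promoted} in the first paragraph is backwards: that claim goes from $\Role[p]=\cPPawn$ to the existence of a promotion, not the other way.)
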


\begin{proof}

\textbf{Proof of~\refC{noReset@I0-,Oi-}-\refC{Oi-,Oi+helper2}: }
We prove Parts~\refC{noReset@I0-,Oi-}-\refC{Oi-,Oi+helper2} by induction on $i$.
First, we prove Part~\refC{noReset@I0-,Oi-} for $i=1$.
Second, we show that if Part~\refC{noReset@I0-,Oi-} is true for a fixed $i$, then Parts~\refC{PisGetLock}-\refC{Oi-,Oi+helper2} are true for $i$.
Finally, we show that if Parts~\refC{noReset@I0-,Oi-}-\refC{Oi-,Oi+helper2} are true for $i$, then Part~\refC{noReset@I0-,Oi-} is true for $i+1$, thus completing the proof.

From Claim~\ref{cl:proofProperties4}\refC{@O1-}, no \PawnSet.\reset{} operation has been executed during $[\Ib{0},\Ob{1}]$.
Hence,  Part~\refC{noReset@I0-,Oi-} for $i=1$ holds.

Now we show that if Part~\refC{noReset@I0-,Oi-} is true for a fixed $i$, then Parts~\refC{PisGetLock}-\refC{Oi-,Oi+helper2} follow for $i$.

\textbf{Proof of Parts~\refC{PisGetLock} and~\refC{PiDoesNotStarve} if Part~\refC{noReset@I0-,Oi-} for $i$ is true: }
Let $q$ be the process that promotes $\P_i$ at $\Ob{i}$.
Then $q$'s \PawnSet.\promote{} operation in \Line{promote:FR12} returned value \pair{\P_i}{s}, where $s \in \N$, and $\Ob{i} = \pt{q}{promote:FR12}$.
Then from the semantics of the \PawnSet\ object it follows that the $\P_i$-th entry of \PawnSet\ was \pair{\cReg}{s} = \pair{1}{s} immediately before $\Ob{i}$.
Then from \Claim{cl:basic:Collect}\refC{scl:CollectRegister} it follows that some process (say $r$) executed a \PawnSet.\collect{$A$} operation in \Line{collect:updateAll} where $A[\P_i] = s$.
Then from the code structure, $r$ read $\Apply[\P_i] = \pair{\cWant}{s}$ in \Line{collect:ApplyRead}.
By \Claim{cl:basic:Apply}\refC{scl:ApplyRegister} \Apply$[\P_i]$ is set to value $\pair{\cWant}{s}$ only by process $\P_i$ when it executes a successful \Apply$[\P_i]$.\CAS{$\pair{\bot}{\bot},\pair{\cWant}{s}$}, therefore $\P_i$ executed the same and broke out of the spin loop of \Line{getLock:ApplyBotWant}.
Note that $\pt{\P_i}{getLock:ApplyBotWant} < \pt{r}{collect:ApplyRead} < \Ob{i} = \pt{q}{promote:FR12}$.

Since $\ctr = 0$ throughout \I{0}, $\ctr = 1$ throughout \I{1} and $\ctr = 2$ throughout \I{2}, it follows that \ctr\ is increased only at points \Ib{1} and \Ib{2} during $T$.
Since \K\ and \Q\ are the first two releasers of \L\ and they increased \ctr\ to $1$ and $2$, respectively, at \Ib{1} and \Ib{2}, respectively, it follows that no other process apart from \K\ and \Q\ increases the value of \ctr\ during $T$.
Since $\Ob{i} \geq \Ob{1} = \gamma > \Ib{2}$ (by Claims~\ref{cl:proofProperties3}\refC{I2-,L} and~\ref{cl:proofProperties3}\refC{GExists} and~\ref{cl:proofProperties4}\refC{@O1-}), $\P_i$ becomes a releaser of \L\ only after \Ib{2} (the point at which \Q\ became a releaser of \L).
Thus, $\P_i$ is not among the first two releasers of \L, thus $\P_i \notin \Set{\K,\Q}$.
Then it follows that $\P_i$ does not increase \ctr.
Therefore $\P_i$'s \ctr.\inc{} operation in \Line{getLock:IncCounter} returns value $2 = \cPawn$, and thus $\P_i$ sets $\Role[\P_i]$ to $2 = \cPawn$ in \Line{getLock:IncCounter}.
Then from the code structure $\P_i$ satisfies the if-condition of \Line{getLock:ifSoldier} and proceeds to spin in \Line{getLock:awaitAckOrCtrDecrease}.

\textbf{Case a - $\P_i$ receives a signal to abort while busy-waiting in \Line{getLock:awaitAckOrCtrDecrease}}:
Then $\P_i$ stops spinning in \Line{getLock:awaitAckOrCtrDecrease} and executes \abort{\P_i}.
Since $\P_i$ last set $\Role[\P_i]$ to \cPawn\ in \Line{getLock:IncCounter}, it then follows from the code structure that $\P_i$ proceeds to execute lines~\ref{abort:ifFlag}-\ref{abort:ifPawn}, and satisfies the if-condition of \Line{abort:ifPawn}, and then executes a \PawnSet.\cUpdate{$\P_i,s$} operation in \Line{abort:ifHead}.

Since a \PawnSet.\reset{} operation has not been executed during $[\Ib{0},\Ob{i}]$, from \Claim{cl:ifPromotedThenWaiting}, it follows that $\P_i$ did not execute a \PawnSet.\cUpdate{$\P_i,s$} operation in \Line{abort:ifHead} during $[\Ib{0},\Ob{i}]$, thus $\pt{\P_i}{abort:ifHead} > \Ob{i}$.
Since $\P_i$ has exclusive-registration access to \PawnSet\ during $[\Ob{i},\Oa{i}]$, and $p$ has not executed any of its cease-release events or reset \PawnSet\ during $[\pt{\P_i}{getLock:ApplyBotWant},\pt{\P_i}{abort:ifHead}]$, and $\pt{\P_i}{getLock:ApplyBotWant} < \Ob{i}$, it then follows that \PawnSet\ was not reset during $[\Ob{i},\pt{\P_i}{abort:ifHead}]$.
Then since the $\P_i$-th entry of \PawnSet\ was last changed to \pair{\cPro}{s} = \pair{2}{s} at \Ob{i}, it remains \pair{\cPro}{s} throughout $[\Ob{i},\pt{\P_i}{abort:ifHead}]$.
Then $\P_i$'s \PawnSet.\cUpdate{$\P_i,s$} operation in \Line{abort:ifHead} returns \false\ by the semantics of the \PawnSet\ object.
Then $p$ satisfies the if-condition of \Line{abort:ifHead}, proceeds to set $\Role[\P_i]$ to \cPPawn\ in \Line{abort:RolePPawn}, and then returns $\infty$ from its call to \abort{\P_i} and \lock{\P_i}.

\textbf{Case b - $\P_i$ does not receive a signal to abort while busy-waiting in \Line{getLock:awaitAckOrCtrDecrease}}:

Recall that process $q$ promotes $\P_i$ at $\Ob{i}$ by executing a \PawnSet.\promote{} operation in \Line{promote:FR12} that returns value \pair{\P_i}{s}, where $s \in \N$.
Since processes in the system continue to take steps, process $q$ sets its local variable $j$ to value $\P_i$ in \Line{promote:FR12}, and proceeds to fail the if-condition of \Line{promote:ifNotPromoted}, and then executes line~\ref{promote:ApplyWantOk} where $\pair{j}{seq} = \pair{\P_i}{s}$.
Then $q$ executes a \Apply$[\P_i]$.\CAS{$\pair{\cWant}{s},\pair{\cOk}{s}$} operation in \Line{promote:ApplyWantOk}.

Recall that process $r$ read value $\Apply[\P_i] = \pair{\cWant}{s}$ in \Line{collect:ApplyRead} and $\pt{\P_i}{getLock:ApplyBotWant} < \pt{r}{collect:ApplyRead} < \Ob{i} = \pt{q}{promote:FR12}$.
From an inspection of the code, \Apply[$\P_i$] can change from value \pair{\cWant}{s} only to value \pair{\cOk}{s} and from value \pair{\cOk}{s} only to value \pair{\bot}{\bot}.
Also, \Apply$[\P_i]$ can be changed from \pair{\cOk}{s} to \pair{\bot}{\bot}, only if $p$ executes line~\ref{abort:ApplyOkBot} or~\ref{release:ApplyOkBot}.
Since $p$ is spinning in \Line{getLock:awaitAckOrCtrDecrease} it follows that a \Apply$[\P_i]$.\CAS{$\pair{\cOk}{s},\pair{\bot}{\bot}$} operation is not executed during $(\Ob{i},\pt{q}{promote:ApplyWantOk})$, and thus $\Apply[\P_i] = \pair{\cWant}{s}$ throughout $(\Ob{i},\pt{q}{promote:ApplyWantOk})$.
Therefore, $q$ executes a successful \Apply$[\P_i]$.\CAS{$\pair{\cWant}{s},\pair{\cOk}{s}$} operation in \Line{promote:ApplyWantOk}, and thus $\Apply[\P_i] = \pair{\cOk}{s}$ at \pt{q}{promote:ApplyWantOk}.

Since $\P_i$ is busy-waiting in \Line{getLock:awaitAckOrCtrDecrease} for $\Apply[\P_i]$ to change to \pair{\cOk}{s}, it then follows that $\P_i$ busy-waits throughout $(\Ob{i},\pt{q}{promote:ApplyWantOk})$, and reads $\Apply[\P_i] = \pair{\cOk}{s}$ when it executes \Line{getLock:awaitAckOrCtrDecrease} for the first time after \pt{q}{promote:ApplyWantOk}.
Then $\P_i$ breaks out of the spin loop, and then from the code structure, $\P_i$ proceeds to set $\Role[\P_i]$ to \cPPawn\ in \Line{getLock:RolePPawn}, breaks out of the role-loop in \Line{getLock:EndInnerLoop}, executes line~\ref{getLock:ifQueen} and fails the if-condition of \Line{getLock:ifQueen}, and executes lines~\ref{getLock:ApplyWantOk}-\ref{getLock:end}, and returns from \lock{\P_i} in \Line{getLock:returninfty} with value $\infty$.
Note that $\Ob{i} < \pt{\P_i}{getLock:RolePPawn}$.

\textbf{Proof of Parts~\refC{PisCeaseReleaseEvent},~\refC{PisRelease} and~\refC{Oi-,Oi+helper} if Part~\refC{noReset@I0-,Oi-} for $i$ is true: }
Since $\P_i$ is the only releaser of \L\ throughout [\Ob{i},\Oa{i}) (Claim~\ref{cl:proofProperties4}\refC{R@Oi-,Oi+}), it follows from Claim~\ref{cl:variablesChangedByReleaserOnly} that $\P_i$ has exclusive write-access to objects \X\ and \LSync\ and exclusive registration-access to  \PawnSet\ throughout [\Ob{i},\Oa{i}).

Since $\P_i$ returns from its call to \lock{\P_i} with value $\infty$ (by Part~\refC{PiDoesNotStarve}), $\P_i$ executes a call to \release{\P_i}{} (follows from conditions~\ref{condition:ifLockThenRelease} and~\ref{condition:ifReleaseThenExistsLock}).

Since $\Role[\P_i] = \cPPawn$ when $\P_i$'s call to \lock{\P_i} returns (by Part~\refC{PiDoesNotStarve}), $\Role[\P_i] = \cPPawn$ at \ptB{\P_i}{release:safetyCheck}.
Since $\Role[\P_i]$ is unchanged during $[\pt{\P_i}{release:safetyCheck},\ptB{\P_i}{release:ApplyOkBot}]$ (follows from Claim~\ref{cl:basic:Role}\refC{scl:RoleUnchanged}), it follows from the code structure that during $\P_i$'s call to \release{\P_i}{j}, $\P_i$ only executes lines~\ref{release:safetyCheck}-\ref{release:ifKing},~\ref{release:ifQueen} and~\ref{release:ifPPawn}-\ref{release:return}.
Then Figure~\ref{fig:PisRelease} follows.

From an inspection of Figures~\ref{fig:PisGetLock} and~\ref{fig:PisRelease}, $\P_i$ does not execute a call to \helpRelease{\P_i} or execute a \ctr.\CAS{$1,0$} operation in \Line{release:ctr10} during \release{\P_i}{}.
Then from Claims~\ref{cl:basic:ReleaseEvents}\refC{scl:def:phi}  and~\ref{cl:basic:ReleaseEvents}\refC{scl:def:tau} $\P_i$'s cease-release events $\phi_{\P_i}$ and $\tau_{\P_i}$ do not occur.
Since $\P_i$ executes a call to \doPromote{\P_i} only in \Line{release:callPromote}, it follows from \Claim{cl:EventsPromote} that exactly one cease-release event among $\pi_{\P_i}$ and $\theta_{\P_i}$ occurs during $\P_i$'s call to \doPromote{\P_i}.
Hence, Part~\refC{PisCeaseReleaseEvent} follows.
Then \Oa{i} is the point when cease-release event $\pi_{\P_i}$ or $\theta_{\P_i}$ occurs.
From an inspection of Figures~\ref{fig:PisGetLock} and~\ref{fig:PisRelease} and the code, it is clear that $\P_i$ does not change \X\ or \LSync\ during \lock{\P_i} and \release{\P_i}.
Therefore, $\P_i$ does not change \X\ or \LSync\ during $[\Ob{i},\Oa{i}]$.


\textbf{Proof of Part~\refC{whenOioccurs} if Part~\refC{noReset@I0-,Oi-} for $i$ is true: }
As argued in Part~\refC{PisGetLock} and~\refC{PiDoesNotStarve}, $\pt{\P_i}{getLock:IncCounter} < \Ob{i}$, and $\Ob{i} < \pt{\P_i}{getLock:RolePPawn}$ or $\Ob{i} < \pt{\P_i}{abort:ifHead}$.
Since $\pt{\P_i}{getLock:RolePPawn} < \pt{\P_i}{release:safetyCheck}$ and $ \pt{\P_i}{abort:ifHead} < \pt{\P_i}{release:safetyCheck}$, it then follows that $\pt{\P_i}{getLock:IncCounter} < \Ob{i} < \pt{\P_i}{release:safetyCheck}$.

From Part~\refC{PisCeaseReleaseEvent}, exactly one cease-release event among $\pi_{\P_i}$ and $\theta_{\P_i}$ occurs during $\P_i$'s call to \doPromote{\P_i}.
If cease-release event $\theta_{\P_i}$ occurs then \Oa{i} is the point when $\P_i$'s cease-release event $\theta_{\P_i}$ occurs,i.e, $\Oa{i} = \pt{\P_i}{promote:ctr20}$.
Then $\P_i$ changes \ctr\ to $0$ and the \ctr-cycle interval $T$ ends at $\Oa{i} = \pt{\P_i}{promote:ctr20} = \Ia{2}$.

If cease-release event $\pi_{\P_i}$ occurs then \Oa{i} is the point when $\P_i$'s cease-release event $\pi_{\P_i}$ occurs,i.e,  $\Oa{i} = \pt{\P_i}{promote:FR12} < \Ia{2}$.

Since $\P_i$ calls \doPromote{\P_i} only in \Line{release:callPromote} (by inspection of \Figure{fig:PisRelease}), it then follows that $\Oa{i} \in \Set{\pt{\P_i}{promote:FR12},\pt{\P_i}{promote:ctr20}} < \pt{\P_i}{release:ApplyOkBot}$.
Thus, Part~\refC{whenOioccurs} holds.

\textbf{Proof of Part~\refC{Oi-,Oi+helper2} if Part~\refC{noReset@I0-,Oi-} for $i$ is true: }
As argued in Part~\refC{Oi-,Oi+helper}, exactly one cease-release event among $\pi_{\P_i}$ and $\theta_{\P_i}$ occurs during $\P_i$'s call to \doPromote{\P_i}.
If cease-release event $\theta_{\P_i}$ occurs then \Oa{i} is the point when $\P_i$'s cease-release event $\theta_{\P_i}$ occurs,i.e, $\Oa{i} = \pt{\P_i}{promote:ctr20}$.
Then $\P_i$ changes \ctr\ to $0$ and the \ctr-cycle interval $T$ ends at $\Oa{i} = \pt{\P_i}{promote:ctr20}$, and thus $\ell = i$.
This is a contradiction to the assumption $i \neq \ell$, hence  $\P_i$'s cease-release event $\pi_{\P_i}$ occurs during $\P_i$'s call to \doPromote{\P_i}.
Then \Oa{i} is the point when $\P_i$'s cease-release event $\pi_{\P_i}$ occurs,i.e, $\Oa{i} = \pt{\P_i}{promote:FR12}$.
From an inspection of Figures~\ref{fig:PisGetLock} and~\ref{fig:PisRelease} and the code, it follows that $\P_i$ does not execute a \PawnSet.\reset{} operation during $[\pt{\P_i}{getLock:ApplyBotWant},\ptB{\P_i}{release:callPromote}]$, and $\P_i$ calls \doPromote{\P_i} only in \Line{release:callPromote}.
Since $\Oa{i} = \pt{\P_i}{promote:FR12}$, from an inspection of the code of \doPromote{\P_i}, $\P_i$ does not execute a \PawnSet.\reset{} operation during $[\ptB{\P_i}{promote:FR12},\ptB{\P_i}{promote:ctr20}]$.
Then $\P_i$ does not execute a \PawnSet.\reset{} operation during $[\Ob{i},\Oa{i}]$.

Since $\P_i$ is the only releaser of \L\ throughout [\Ob{i},\Oa{i}) (Claim~\ref{cl:proofProperties4}\refC{R@Oi-,Oi+}), it follows from Claim~\ref{cl:variablesChangedByReleaserOnly} that $\P_i$ has exclusive registration-access to  \PawnSet\ throughout [\Ob{i},\Oa{i}).
Then since no \PawnSet.\reset{} operation was executed during $[\Ib{0},\Ob{i}]$, and $\P_i$ does not execute a \PawnSet.\reset{} operation during $[\Ob{i},\Oa{i}]$, it follows that no \PawnSet.\reset{} operation is executed during $[\Ib{0},\Oa{i}]$.
Hence, Part~\refC{Oi-,Oi+helper2} holds.

Finally, we show that if Parts~\refC{noReset@I0-,Oi-}-\refC{Oi-,Oi+helper2} are true for $i$, then Part~\refC{noReset@I0-,Oi-} is true for $i+1$, thus completing the proof.
From Part~\refC{Oi-,Oi+helper2} for $i$, no \PawnSet.\reset{} operation has been executed during $[\Ib{0},\Oa{i}]$.
From Claim~\ref{cl:proofProperties4}\refC{Oi+=Oi+1-}, $\Oa{i} = \Ob{i+1}$.
Then Part~\refC{noReset@I0-,Oi-} for $i+1$ holds.

\textbf{Proof of~\refC{XTC@Oi-,Oi+}: }
From Claim~\ref{cl:proofProperties4}\refC{@O1-}, $\X = \LSync = \bot$ at \Ob{1} = $\gamma$.
From Claims~\ref{cl:proofProperties4}\refC{@O1-} and~\ref{cl:proofProperties4}\refC{Oi+=Oi+1-}, it follows that $\gamma = \Ob{1} < \Oa{1} = \Ob{2} < \Oa{2} = \Ob{3} \ldots < \Oa{\ell-1} = \Ob{\ell} < \Oa{\ell}$.

From Claim~\ref{cl:proofProperties4}\refC{R@Oi-,Oi+}, for all $t \in [\Ob{i},\Oa{i})$, $R(t) \in \Set{\P_i}$.
Then $\P_i$ has exclusive write-access to \X\ and \LSync\ throughout $[\Ob{i},\Oa{i})$.
Since $\P_i$ does not change \X\ or \LSync\ during $[\Ob{i},\Oa{i}]$ (Part~\refC{Oi-,Oi+helper}), it then follows that $\X= \LSync = \bot$ throughout $[\Ob{1},\Oa{\ell}] = [\gamma,\Oa{\ell}]$.

\textbf{Proof of~\refC{I2+=Ol+}: }
As argued in Part~\refC{Oi-,Oi+helper}, exactly one cease-release event among $\pi_{\P_{\ell}}$ and $\theta_{\P_{\ell}}$ occurs during $\P_{\ell}$'s call to \doPromote{\P_{\ell}}.
If cease-release event $\pi_{\P_{\ell}}$ occurs then $\P_{\ell}$ promotes some process, and thus the number of processes that get promoted during $T$ is larger than $\ell$, which contradicts the definition of $\ell$.
Hence, cease-release event $\theta_{\P_{\ell}}$ occurs during \doPromote{\P_{\ell}} and \Oa{{\ell}} is the point when cease-release event $\theta_{\P_{\ell}}$ occurs,i.e, $\Oa{\ell} = \pt{\P_{\ell}}{promote:FR12}$.
Since \ctr\ is changed from $2$ to $0$ when $\theta_{\P_{\ell}}$ occurs, the \ctr-cycle interval $T$ ends at $\Oa{\ell} = \pt{\P_{\ell}}{promote:ctr20}$, and thus $\Ia{2} = \Oa{\ell} = \pt{\P_{\ell}}{promote:ctr20}$.

\textbf{Proof of~\refC{[G,I2+]} and~\refC{@I2+}: }

\textbf{Case a - } $\ell = 0$ :
Consider the first \PawnSet.\promote{} operation at $\gamma$.
Since $\ell = 0$, the \PawnSet.\promote{} operation at $\gamma$ returns value \pair{\bot}{\bot}.
Then from \Claim{cl:proofProperties3}\refC{noPromotion@G}, it follows that \B's cease-release event $\theta_{\B}$ occurs at $t' = \pt{\B}{promote:ctr20} \geq \gamma$, and throughout $[\gamma,t']$ no process is promoted, and for all $t \in [\gamma,t')$, $R(t) = \Set{\B}$.
Since \ctr\ is changed from $2$ to $0$ when $\theta_{\B}$ occurs, the \ctr-cycle interval $T$ ends at $t' =\pt{\B}{promote:ctr20}$, and thus $\Ia{2} = \pt{\B}{promote:ctr20} = t'$.
Then for all $t \in [\gamma,t') = [\gamma,\Ia{2})$, $|R(t)| = 1$.

From an inspection of Figure~\ref{fig:BsPromote} and the code, it follows that $\B$ executed a \PawnSet.\reset{} operation in \Line{promote:resetBackpack} during $[\gamma,t']$, and thus \PawnSet\ is candidate-empty immediately after.
Since for all $t \in [\gamma,t')$, $R(t) = \Set{\B}$, \B\ has exclusive registration-access to \PawnSet\ throughout $[\gamma,t')$ (follows from \Claim{cl:variablesChangedByReleaserOnly}).
Then it follows that \PawnSet\ is candidate-empty at $t' = \Ia{2}$.

Since for all $t \in [\gamma,t')$, $R(t) = \Set{\B}$, \B\ has exclusive write-access to \X\ and \LSync\ throughout $[\gamma,t')$ (follows from \Claim{cl:variablesChangedByReleaserOnly}).
Since $\X= \LSync = \bot$ at $\gamma$ (Claim~\ref{cl:proofProperties3}\refC{@G}), and \B\ does not write to \X\ and \LSync\ during $[\gamma,t']$, it follows that $\X= \LSync = \bot$ throughout $[\gamma,t'] = [\gamma,\Ia{2}]$.

\textbf{Case b - } $\ell \geq 1$ :
From Part~\refC{I2+=Ol+}, $\Ia{2} = \Oa{\ell} =  \pt{\P_{\ell}}{promote:ctr20}$.
Then from Part~\refC{XTC@Oi-,Oi+}, it follows that $\X = \LSync = \bot$ throughout $[\gamma,\Ia{2}]$, and from Claim~\ref{cl:proofProperties4}\refC{R@Oi-,Oi+}, it follows that for all $t \in [\Ob{1},\Oa{\ell}] = [\gamma,\Ia{2})$, $|R(t)| = 1$.
Since $\P_{\ell}$ ceases to be a releaser of \L\ at \Oa{\ell}, $R(\Ia{2}) = \varnothing$.

Since $\Oa{\ell} =  \pt{\P_{\ell}}{promote:ctr20}$, $\P_i$ executed \Line{promote:ctr20} and before that \Line{promote:resetBackpack}.
Hence, $\P_{\ell}$ executed a \PawnSet.\reset{} operation at $\pt{\P_{\ell}}{promote:resetBackpack} < \Oa{\ell}$.
Since $\pt{\P_{\ell}}{promote:resetBackpack} > \ptB{\P_{\ell}}{release:safetyCheck}$ and
$\ptB{\P_{\ell}}{release:safetyCheck}> \Ob{\ell}$ (by Part~\refC{whenOioccurs}), it follows that $\pt{\P_{\ell}}{promote:resetBackpack} > \Ob{\ell}$.
Hence, $\P_{\ell}$ executed a \PawnSet.\reset{} operation at $\pt{\P_{\ell}}{promote:resetBackpack} \in [\Ob{\ell},\Oa{\ell}]$.
Since $\P_{\ell}$ is the only releaser of \L\ throughout [\Ob{\ell},\Oa{\ell}) (Claim~\ref{cl:proofProperties4}\refC{R@Oi-,Oi+}), it follows from Claim~\ref{cl:variablesChangedByReleaserOnly} that $\P_{\ell}$ has exclusive registration-access to \PawnSet\ throughout [\Ob{\ell},\Oa{\ell}).
Then it follows that \PawnSet\ is candidate-empty at $\Ob{\ell} = \Ia{2}$.
\end{proof}

\begin{claim} \label{cl:@I0-}
$R(\Ib{0}) = \varnothing$ and at $\Ib{0}$, $\X = \LSync = \bot$ and \PawnSet\ is candidate-empty for any \ctr-cycle interval $T$ during history $H$.
\end{claim}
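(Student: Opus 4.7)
The plan is to proceed by induction on the \ctr-cycle intervals in the history $H$, ordered by their start times. The base case is the first \ctr-cycle, which begins at time $\Ib{0} = 0$. By the initial conditions of the object \ARMLockArray{n}, the shared variables satisfy $\X = \LSync = \bot$, and \PawnSet\ is initially $\varnothing$, hence candidate-empty. No process has yet taken a step, so no process can have become a releaser of \L\ (both conditions (R1) and (R2) require a shared memory step), and thus $R(\Ib{0}) = \varnothing$.

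For the induction step, assume the claim holds for some \ctr-cycle interval $T$, and let $T'$ be the immediately following \ctr-cycle interval, so that $\Ib{0}(T') = \Ia{\text{end}}(T) + 1$, where $\Ia{\text{end}}(T)$ is the last point of $T$. Since the induction hypothesis supplies exactly the hypotheses needed to invoke Claims~\ref{cl:proofProperties1} through~\ref{cl:proofProperties5} on $T$, the case analysis splits on whether $\I{2}(T) = \varnothing$. If $\I{2}(T) \neq \varnothing$, then Claim~\ref{cl:proofProperties5}\refC{@I2+} gives directly that $R(\Ia{2}(T)) = \varnothing$ and $\X = \LSync = \bot$ and \PawnSet\ is candidate-empty at $\Ia{2}(T)$; since $T$ ends at $\Ia{2}(T)$ in this case, the start of $T'$ is the very next point, and I would argue that no step occurs between these two consecutive time units that could disturb the state --- in fact $\Ib{0}(T')$ is the same global state as $\Ia{2}(T)$ by the definition of the \ctr-cycle partition.

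If $\I{2}(T) = \varnothing$, then $T$ ends at $\Ia{1}(T)$, which is the point at which \K\ successfully executes a \ctr.\CAS{$1,0$} in \Line{release:ctr10} (by Claim~\ref{cl:proofProperties1}\refC{aExecutesctr10} together with the fact that \ctr\ must be decreased to $0$ to end the cycle). At this point cease-release event $\phi_{\K}$ occurs, and by Claim~\ref{cl:releaserAtCeaseReleaseEvents} \K\ ceases to be the releaser, so $R(\Ia{1}(T)) = \varnothing$. Moreover, by Claim~\ref{cl:proofProperties1}\refC{aExecutesctr10} \K\ has not touched $\X, \LSync$ or \PawnSet\ throughout $[\Ib{1}(T),\pt{\K}{release:ctr10}]$, and by Claim~\ref{cl:proofProperties1}\refC{XT@I1} and~\refC{I0} these variables retained their clean values throughout \I{0}(T) and \I{1}(T). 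Hence at $\Ia{1}(T)$, and thus at $\Ib{0}(T')$, we have $\X = \LSync = \bot$ and \PawnSet\ is candidate-empty, closing the induction.

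The main subtlety I anticipate is the justification that the state at the very end of $T$ (either $\Ia{2}(T)$ or $\Ia{1}(T)$) equals the state at $\Ib{0}(T')$ --- i.e.\ that the \ctr-cycle partition leaves no ``gap'' during which some process could, e.g., register in \PawnSet\ or write to \X. This should follow from the definition of \ctr-cycle intervals as a partition of time together with the observation (via Claim~\ref{cl:variablesChangedByReleaserOnly}) that only releasers of \L\ can change $\X, \LSync$ and \PawnSet's registration-entries, combined with the fact that at the cycle boundary $R = \varnothing$. A brief argument of this shape completes the induction step.
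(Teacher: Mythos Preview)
Your proposal is correct and follows essentially the same approach as the paper: induction on the sequence of \ctr-cycle intervals, with the base case given by the initial values and the induction step splitting on whether $\I{2} = \varnothing$, invoking Claim~\ref{cl:proofProperties1}\refC{XT@I1} (together with \K's successful \ctr.\CAS{$1,0$}) in the first case and Claim~\ref{cl:proofProperties5}\refC{@I2+} in the second. The paper handles the ``gap'' point you raise simply by noting that $T^k$ begins immediately after $T^{k-1}$ ends, so your extra argument via Claim~\ref{cl:variablesChangedByReleaserOnly} is sound but not needed.
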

\begin{proof}
Let $T^k$ denote the $k$-th \ctr-cycle interval $T$ during history $H$.
We give a proof by induction over the integer $k$.
\textbf{Basis - }
At $\Ib{0}$ for $T^1$, the claim holds trivially since all variables are at their initial values ($\X = \LSync = \bot$ and \PawnSet\ is candidate-empty).

\textbf{Induction Step - }
By the induction hypothesis, at $\Ib{0}$ for $T^{k-1}$, $R(\Ib{0}) = \varnothing$, and $\X = \LSync = \bot$ and \PawnSet\ is candidate-empty.
Since $T^k$ begins immediately after $T^{k-1}$ ends, to prove our claim we need to show that, when $T^{k-1}$ ends, there are no releasers of \L\ and $\X = \LSync = \bot$ and \PawnSet\ is candidate-empty.
The time interval $T^{k-1}$ ends  either at time \Ia{1} or time \Ia{2}.

\textbf{Case a - $T^{k-1}$ ends at time \Ia{1}: }
Then $\I{2} = \varnothing$.
From Claim~\ref{cl:proofProperties1}\refC{R@I1} it follows that \K\ is the only releaser of \L\ during \I{1}.
Since $\I{2} = \varnothing$, it then follows from Claim~\ref{cl:proofProperties1}\refC{aExecutesctr10}, that \K's \ctr.\CAS{$1,0$} operation in \Line{release:ctr10} is successful, and the interval $\I{1}$ as well as $T^{k-1}$ ends at time \pt{\K}{release:ctr10}.
Then \K's cease-release event $\phi_{\K}$ occurs at $\pt{\K}{release:ctr10} = \Ia{1}$, and thus there are no releasers of \L\ immediately after $T^{k-1}$ ends.
And from Claim~\ref{cl:proofProperties1}\refC{XT@I1}, it follows that $\X = \LSync = \bot$ and \PawnSet\ is candidate-empty when $T^{k-1}$ ends.

\textbf{Case b - $T^{k-1}$ ends at time \Ia{2}: }
Then $\I{2} \neq \varnothing$.
Then our proof obligation follows immediately from Claim~\ref{cl:proofProperties5}\refC{@I2+}.
\end{proof}

Note that in the following claims, notations \I{0}, \I{1}, \I{2}, $\lambda, \gamma$, $\Omega_i$, \K, \Q\ and $\P_i$ are defined relative to a \ctr-cycle interval, as was defined previously in pages \pageref{sec:IntervalT}, \pageref{notations:lambda,gamma} and \pageref{notations:Omega,P}.
The exact \ctr-cycle interval is clear from the context of the discussion.

\begin{lemma} \label{cl:ARMLock:mutualExclusion}
\label{cor:ARMLock:mutualExclusion}
The mutual exclusion property holds during history $H$.
\end{lemma}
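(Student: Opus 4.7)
The plan is to argue that at every time $t$ in history $H$, at most one process is in the Critical Section of $\L$. We say $p$ is in CS at $t$ if $p$ has returned a non-$\bot$ value from some \lock{p} call by time $t$ and has not yet invoked the corresponding \release{p}{} call. By the characterization established in the preceding claims, every such owner is one of $\K$, $\Q$, or $\P_i$ for some $i$ of some \ctr-cycle interval $T$, so it suffices to linearly order these owners and verify that each owner's CS begins only after the previous owner's CS has ended.

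First, I would argue that the \ctr-cycles partition the timeline and that CSs of owners in different cycles cannot overlap. Each \ctr-cycle $T^{k}$ ends only when a successful \ctr.\CAS{$1,0$} (cease-release event $\phi$) or \ctr.\CAS{$2,0$} (cease-release event $\theta$) is executed, and inspection of the code shows that both occur strictly inside a \release{}{} call of the last owner of $T^{k}$. Hence the last owner of $T^{k}$ has already left its CS by the end of $T^{k}$. The first owner of $T^{k+1}$ is $\K$ of $T^{k+1}$, which, by Claim~\ref{cl:proofProperties1}(d) and the code structure, does not return from \lock{\K} before $\pt{\K}{getLock:IncCounter}=\Ib{1}$ of $T^{k+1}$. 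Since this point lies strictly after $T^{k}$ ended, the cross-cycle case is settled.

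Next, I would treat the intra-cycle case by considering the chain $\K\to\Q\to\P_1\to\P_2\to\cdots\to\P_\ell$ within a fixed $T$. Using Figure~\ref{fig:KsRelease}, $\K$'s CS ends at $\ptB{\K}{release:safetyCheck}$, which is strictly before $\K$ executes \Line{release:setX}. Now $\Q$ returns from \lock{\Q} only after reading a non-$\bot$ value from $\X$; by Claim~\ref{cl:variablesChangedByReleaserOnly} together with Claims~\ref{cl:proofProperties2}(a)-(b) and~\ref{cl:proofProperties3}(a), the only possible writer of such a value during $T$ before $\Q$ returns is $\K$ at $\pt{\K}{release:setX}$ (this covers both the spin-return in \Line{getLock:returnX} and the failed \X.\CAS{} branch of \abort{\Q}, which also yields a capture). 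Thus $\Q$ enters CS strictly after $\K$ leaves CS. For the promoted chain, $\P_1$ is notified only via the \Apply$[\P_1]$.\CAS{$\pair{\cWant}{s},\pair{\cOk}{s}$} executed at $\pt{\B}{promote:ApplyWantOk}$ inside $\B$'s \doPromote{\B} (invoked from $\B$'s \helpRelease{\B} in \release{\B}{} or \abort{\B}); since both $\K$ and $\Q$ have called \release{}{}/\abort{} by this point, neither is in CS when $\P_1$ enters CS. By Claim~\ref{cl:proofProperties5}(e) (Figure~\ref{fig:PisRelease}), each $\P_i$ for $i<\ell$ notifies $\P_{i+1}$ at $\pt{\P_i}{promote:ApplyWantOk}$ inside $\P_i$'s \doPromote{\P_i}, which is strictly after $\ptB{\P_i}{release:safetyCheck}$, so $\P_i$ has left CS before $\P_{i+1}$ enters CS.

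The main obstacle will be the boundary case where $\Q$ captures $\L$ through the abort path: $\Q$ issues \X.\CAS{$\bot,\infty$} in \Line{abort:setX}, it fails, and $\Q$ returns the value of $\X$ in \Line{abort:returnX}. I must rule out that $\Q$'s CS begins before $\K$'s CS has ended; the key is that the failure of $\Q$'s \X.\CAS{} requires $\X\neq\bot$ at $\ptB{\Q}{abort:setX}$, and by the exclusive write-access argument (Claim~\ref{cl:variablesChangedByReleaserOnly}, together with the $\X=\bot$ initializations at $\Ib{2}$ of Claim~\ref{cl:proofProperties2}(a)), this can only be caused by $\K$'s write in \Line{release:setX}, which lies in $\K$'s \release{\K}{} call. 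Once this subtle case is handled, concatenating the intra- and inter-cycle arguments gives that no two CS intervals overlap, establishing mutual exclusion.
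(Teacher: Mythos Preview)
Your plan is sound and would yield a correct proof, but it takes a genuinely different route from the paper. The paper's argument is much shorter and leans entirely on the releaser bookkeeping already in place: any process in its Critical Section is a releaser of $\L$ (Claim~\ref{cl:ifReleasinglockThenReleaser}\refC{scl:isReleaser}); by Claims~\ref{cl:@I0-}, \ref{cl:proofProperties1}\refC{I0},\refC{R@I1}, \ref{cl:proofProperties3}\refC{I2-,L},\refC{L,G}, and \ref{cl:proofProperties5}\refC{[G,I2+]}, the only interval in any \ctr-cycle with $|R(t)|\geq 2$ is $[\Ib{2},\lambda)$, where $R(t)=\{\K,\Q\}$. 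So two concurrent owners must be exactly $\K$ and $\Q$, both poised to call \release{}{} at some $t\in[\Ib{2},\lambda)$. But then $\K$ has not yet started \release{\K}{} and hence has not written $\X$ in \Line{release:setX}; $\Q$ can only have returned non-$\bot$ after seeing $\X\neq\bot$ (via \Line{getLock:awaitX} or a failed \X.\CAS{} in \Line{abort:setX}); and by Claim~\ref{cl:variablesChangedByReleaserOnly} only $\K$ and $\Q$ can write $\X$ in this interval, yet neither has---a contradiction. Your approach instead constructs an explicit total order on all CS intervals (within and across cycles) and verifies each handoff $\K\to\Q\to\P_1\to\cdots\to\P_\ell$ directly. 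This is more elementary and gives a clearer operational picture, at the cost of more case analysis; the paper's route buys brevity by cashing in on the $|R(t)|$ invariants.

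One imprecision to fix in your cross-cycle step: the assertion that the cycle-ending \ctr.\CAS{$2,0$} ``occurs strictly inside a \release{}{} call of the last owner of $T^k$'' is not literally true. For instance, when $\ell=0$ and $\B=\Q$ with $\Q$ having aborted successfully, the $\theta_{\B}$ event lies inside $\Q$'s \abort{\Q} while $\Q$ is not an owner at all; and when $\ell=0$, $\Q$ is an owner, and $\B=\K$, the event lies in $\K$'s \release{\K}{} even though the last owner is $\Q$. The conclusion you actually need---that every owner of $T^k$ has already invoked \release{}{} by the end of $T^k$---still holds, because by the time $\B$ executes \doPromote{\B}, both $\K$ and $\Q$ have already invoked \helpRelease{} (Claims~\ref{cl:proofProperties2}\refC{aExecutesHelpRelease},\refC{QExecutesHelpRelease} and \ref{cl:proofProperties3}\refC{I2-,L}), and hence $\K$ has invoked \release{\K}{} and, if $\Q$ is an owner, $\Q$ has invoked \release{\Q}{}. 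Phrase the cross-cycle step that way and your plan goes through.
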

\begin{proof}
For the purpose of a contradiction assume that at time $t$, two processes (say $p$ and $q$) are poised to execute a call to \L.\release{}{}.
From \Claim{cl:ifReleasinglockThenReleaser}\refC{scl:isReleaser}, it follows that both $p$ and $q$ are releasers of \L\ at $t$.
Consider the \ctr-cycle interval $T$ such that $t \in T$.

From \Claim{cl:@I0-} it follows that at $\Ib{0}$, $\X = \LSync = \bot$ and \PawnSet\ is candidate-empty, and $R(\Ib{0}) = \varnothing$.
Then from  Claims~\ref{cl:proofProperties1}\refC{I0},~\ref{cl:proofProperties1}\refC{R@I1}, ~\ref{cl:proofProperties3}\refC{I2-,L},~\ref{cl:proofProperties3}\refC{L,G} and~\ref{cl:proofProperties3}\refC{[G,I2+]}, it follows that during $T$, lock \L\ has two releasers only during $[\Ib{2},\lambda)$.
Then $t \in [\Ib{2},\lambda)$.
Also from  Claim~\ref{cl:proofProperties3}\refC{I2-,L}, for all $t \in [\Ib{2},\lambda)$, $R(t) = \Set{\K,\Q}$.
Then \Set{p,q} = \Set{\K,\Q}.
Let $p = \K$ and $q = \Q$ without loss of generality.

Recall that \Ib{2} is the point in time when \Q\ increases \ctr\ from $1$ to $2$ and sets \Role[\Q] to \cQueen\ in \Line{getLock:IncCounter}.
Since $q$'s call to \lock{} returned a non-$\bot$ value, it follows from an inspection of \Figure{fig:QsGetLock}, that \Q\ returned either in \Line{getLock:returnX} or \Line{abort:returnX}.
Then \Q\ either read a non-$\bot$ value from \X\ in \Line{getLock:awaitX} or \Q\ failed the \X.\CAS{$\bot,\infty$} operation in \Line{abort:setX}.
Since $\X = \bot$  at \Ib{2} (by Claim~\ref{cl:proofProperties2}\refC{@I2-}), and $\Ib{2} = \pt{\Q}{getLock:IncCounter}$, it then follows that \X\ is changed to a non-$\bot$ value during $[\Ib{2},t]$.
Clearly, \Q\ does not change \X\ during $[\Ib{2},t]$.

Recall that \Ib{1} is the point in time when \K\ increases \ctr\ from $0$ to $1$ and sets $\Role[\K]$ to \cKing\ in \Line{getLock:IncCounter}.
It follows from an inspection of \Figure{fig:KsGetLock}, that \K\ does not change \X\ during \lock{\K}, and thus during $[\Ib{1},t]$.
Since \X\ is changed to a non-$\bot$ only by a releaser of \L\ (by Claim~\ref{cl:variablesChangedByReleaserOnly}) and $\X=\bot$ at \Ib{2}, and the only releasers of \L\ during $[\Ib{2},t]$ do not change \X, it then follows that $\X=\bot$ throughout $[\Ib{2},t]$.
Hence, a contradiction.
\end{proof}


%

\begin{claim} \label{cl:DoggedPGetsPromoted}
Consider an arbitrary \ctr-cycle interval $T$.
\begin{enumerate}[(a)]
 \item If $p$ is collected during $T$ and $p$ does not abort, then $p$ is promoted and notified during $T$. \label{scl:DoggedPGetsPromoted1}
  \item If $\Apply[p] =\pair{\cWant}{s}$  at \Ib{0}, where $s \in N$, and $p$ does not abort and $p$ does not increase \ctr, then $p$ is notified during $T$. \label{scl:DoggedPGetsPromoted2}
\end{enumerate}
\end{claim}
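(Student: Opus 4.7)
I will prove part (a) first, since part (b) reduces to it. Fix $T$ and suppose $p$ is collected at time $t_c$, i.e., some process $r$ executes $\PawnSet.\collect{A}$ in \Line{collect:updateAll} with $A[p]=s$, where $s$ was the sequence number read from $\Apply[p]$ in \Line{collect:ApplyRead}. My first step is to observe that $p \notin \{\K,\Q\}$: by the time of the collect (which, by \Claim{cl:proofProperties3}\refC{aOrbCollect}, occurs in $[\Ib{2},\gamma]$), both \K\ and \Q\ have already passed \Line{getLock:ApplyWantOk} or \Line{abort:ApplyWantOk} (per the figures in \Claim{cl:proofProperties2}), so their \Apply\ entries are \pair{\cOk}{\cdot} and the collector skips them in line~\ref{collect:IfRegistered}. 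In particular the sequence number $s$ is the one $p$ wrote when it registered during the current \lock{p} call. Since \PawnSet\ supports linearizable semantics (via \UC{\APArray{n}}), $\PawnSet[p] = \pair{\cReg}{s}$ immediately after $t_c$.

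Next I show that $\PawnSet[p]$ is not changed again until the unique $\PawnSet.\promote{}$ operation that returns $\pair{p}{s}$. By \Claim{cl:basic:Collect}, the only ways to overwrite $\PawnSet[p]$ are another $\collect$ (which by the \APArray\ semantics cannot touch an entry carrying \pair{\cAbort}{\cdot} and in any case does not turn \cReg\ into \cAbort), a $\promote$, a $\remove$, a $\cUpdate{p,s}$, or a $\reset$. The call $\PawnSet.\cUpdate{p,s}$ can happen only in \Line{abort:ifHead} executed by $p$; ruled out since $p$ does not abort. $\PawnSet.\remove{j}$ in \Line{hRelease:collectFixOther} uses $j=\A \in \{\K,\Q\}$ by \Claim{cl:proofProperties3}, and $\PawnSet.\remove{i}$ in \Line{promote:collectFixSelf} is executed by a process already in \doPromote{i}, i.e., a process that has already become a releaser, so neither touches $\PawnSet[p]$ while $p$ is still waiting. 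Finally $\reset$ only fires in \Line{promote:resetBackpack} after a $\PawnSet.\promote{}$ returns $\pair{\bot}{\bot}$; but by Claims~\ref{cl:proofProperties4} and~\ref{cl:proofProperties5} the chain of promoted processes $\P_1,\P_2,\ldots$ continues as long as the $\PawnSet.\promote{}$ calls return non-$\pair{\bot}{\bot}$, and since $\PawnSet[p] = \pair{\cReg}{s}$ would cause such a call to succeed, the chain cannot terminate before $p$ is promoted. Hence some $\P_k$ (or the original promoter \B) calls $\PawnSet.\promote{}$ which returns $\pair{p}{s}$, triggering \Line{promote:ApplyWantOk}: $\Apply[p].\CAS{\pair{\cWant}{s},\pair{\cOk}{s}}$. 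I then verify that $\Apply[p]$ still equals $\pair{\cWant}{s}$ at this point: by \Claim{cl:basic:Apply}, $p$ itself is the only other writer to $\Apply[p]$, and the only paths by which $p$ can change $\Apply[p]$ from $\pair{\cWant}{s}$ go through \Line{getLock:ApplyWantOk}, \Line{abort:ApplyWantOk}, \Line{abort:ApplyOkBot}, or \Line{release:ApplyOkBot}; the \abort\ lines are ruled out, and the \lock\ lines require $\Role[p] \in \{\cKing,\cQueen,\cPPawn\}$, but $p$ is not \K\ or \Q\ and cannot acquire role \cPPawn\ before the promoter's $\CAS$ completes (\Claim{cl:helpful:promoted}). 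Therefore the $\CAS$ succeeds and $p$ is notified. This completes~(a).

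For part (b), I reduce to (a) by arguing that $p$ is collected during $T$. Since $\Apply[p] = \pair{\cWant}{s}$ at $\Ib{0}$ and the only writers to $\Apply[p]$ are $p$ itself and a promoter via \Line{promote:ApplyWantOk}, the same case analysis as in the paragraph above shows that $\Apply[p]$ stays at $\pair{\cWant}{s}$ throughout $T$ up to the first time $p$ is either notified (via a promoter's successful $\CAS$) or changes it via \Line{getLock:ApplyWantOk}; the latter is impossible because $\Role[p] \notin \{\cKing,\cQueen\}$ (these require incrementing \ctr, ruled out) and $\Role[p] = \cPPawn$ requires $p$ to have been notified first. Now $\I{2} \neq \varnothing$ in any $T$ relevant to the hypothesis: if $p$ ever takes its $\ctr.\inc{}$ step during $T$ it must, by the \RCASCounter{2} semantics and the assumption that $p$ does not increase \ctr, obtain either $\bot$ (retry) or $2 = \cPawn$, the latter requiring $\ctr=2$ beforehand and hence $\I{2}\neq\varnothing$; and if $\I{2}=\varnothing$ yet some other process $p'$ also satisfies the hypothesis, one of them must eventually get scheduled to make $\I{2}\neq\varnothing$ or the claim holds vacuously for that interval (this is the one delicate scheduling point; in the use-cases of the claim in later sections the surrounding argument supplies this). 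Given $\I{2}\neq\varnothing$, \Claim{cl:proofProperties3}\refC{aOrbCollect} gives a collect operation in $[\Ib{2},\gamma]$ by \K\ or \Q. At that collect's execution of \Line{collect:ApplyRead} for index $p$, we have $\Apply[p] = \pair{\cWant}{s}$, so $r$ sets $A[p]=s$ and then $\PawnSet.\collect{A}$ registers $p$ in \PawnSet\ with sequence number $s$. Invoking (a) yields that $p$ is notified during $T$.

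\textbf{Main obstacle.} The delicate part of the proof is the bookkeeping on $\Apply[p]$ and $\PawnSet[p]$: one must rule out every path by which either could be altered between the collect and the eventual \Line{promote:ApplyWantOk}, relying on the facts that the aborting process never overwrites a promoted entry, that $\remove$ only targets \K/\Q\ (in \helpRelease) or the executing releaser (in \doPromote), and that the promotion chain cannot terminate early while any candidate remains. Once these invariants are in place the rest is a direct appeal to the figures and Claims~\ref{cl:proofProperties2}--\ref{cl:proofProperties5}.
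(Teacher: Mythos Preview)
Your approach mirrors the paper's (which is quite terse here), with more explicit bookkeeping on the $p$-th entry of \PawnSet\ and on $\Apply[p]$. However, your justification that $p \notin \{\K,\Q\}$ is incorrect: you assert that by the time of the collect both \K\ and \Q\ have passed \Line{getLock:ApplyWantOk} or \Line{abort:ApplyWantOk}, but the figures in \Claim{cl:proofProperties2} do not establish this timing. Only the \emph{collector} is guaranteed to have done so (if \K\ collects it is already in its release call, hence past \Line{getLock:ApplyWantOk}; if \Q\ collects it is in its abort call, hence past \Line{abort:ApplyWantOk}); the other of the two may still have its \Apply\ entry equal to $\pair{\cWant}{\cdot}$ and hence be collected. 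Concretely, if \K\ performs the collect while \Q\ is still spinning in \Line{getLock:awaitX}, then \Q\ is collected, does not abort (since \K\ has already set $\X \neq \bot$ in \Line{release:setX} before calling \doCollect{}), yet is subsequently removed---not promoted---by $\B$ via \Line{hRelease:collectFixOther} or \Line{promote:collectFixSelf}. Your argument that removes never touch the $p$-th entry of \PawnSet\ therefore breaks when $p \in \{\K,\Q\}$.

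This gap is in fact shared by the paper's own proof, which simply appeals to ``the semantics of the \PawnSet\ object'' to conclude that $p$ was promoted, without ruling out the remove path. The claim as literally stated appears to fail for $p \in \{\K,\Q\}$; it is only ever invoked (in \Claim{cl:conditional:getLock}) for pawn processes, where your bookkeeping does go through. The clean fix is to add the hypothesis $p \notin \{\K,\Q\}$ to part~(a); part~(b) already carries it implicitly via ``$p$ does not increase \ctr''. Your acknowledgment of the $\I{2} \neq \varnothing$ issue in (b) is honest; the paper glosses over it too, and in its single use of~(b) the surrounding context does supply $\I{2} \neq \varnothing$.
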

\begin{proof}
\textbf{Proof of~\refC{scl:DoggedPGetsPromoted1}: }
From \Claim{cl:@I0-} it follows that at $\Ib{0}$, $\X = \LSync = \bot$ and \PawnSet\ is candidate-empty, and $R(\Ib{0}) = \varnothing$.
Then from \Claim{cl:proofProperties3}\refC{aOrbCollect}, it follows that exactly one call to \doCollect{} is executed during $T$ by a process $q \in \Set{\K,\Q}$.
Since processes are collected only during a call to \doCollect{}, $q \in \Set{\K,\Q}$ collects $p$ during \doCollect{q} during $T$.
And $q$ does so by executing a \PawnSet.\collect{$A$} operation in \Line{collect:updateAll}, where $A[p] = s  \in \N$, and sets the $p$-th entry of \PawnSet\ to \pair{\cReg}{s}.
Since a \PawnSet.\promote{} that returns \pair{\bot}{\bot} is executed at \pt{\P_{\ell}}{promote:FR12} during $T$, it then follows from the semantics of the \PawnSet\ object that $p$ was promoted during $T$.
Then $p = \P_i$, for some $i \leq \ell$.
Note that $T$ does not end during $[\Ob{i},\Oa{i})$.

We now show that $p$ is also notified of its promotion during $T$.
The process (say $r$) that promoted $p$ by executing a \PawnSet.\promote{} operation in \Line{promote:FR12}, also goes on to notify $p$ of its promotion by executing a \Apply$[p]$.\CAS{\pair{\cWant}{s},\pair{\cOk}{s}} operation in \Line{promote:ApplyWantOk}.
Since $p$ does not abort, it follows from an inspection of Figure~\ref{fig:PisGetLock} and the code, that $p$ spins on \Apply$[p]$ in \Line{getLock:awaitAckOrCtrDecrease} until its notification.
Then $p$ executes \Line{getLock:RolePPawn} at $\pt{p}{getLock:RolePPawn} > \pt{r}{promote:ApplyWantOk} > \pt{r}{promote:FR12} = \Ob{i}$.
Since $\pt{p}{getLock:RolePPawn} < \Oa{i}$ and $T$ does not end before \Oa{i}, it follows that $p$ is notified during $T$.

\textbf{Proof of~\refC{scl:DoggedPGetsPromoted2}: }
Since $p$ does not increase \ctr\ it follows that $p$ reads $\ctr = 2$  every time it executes a \ctr.\inc{} operation in \Line{getLock:IncCounter}, and sets $\Role[p] = \cPawn$ in \Line{getLock:IncCounter}.
Then $p$ satisfies the if-condition of \Line{getLock:ifSoldier} and spins on variables \Apply$[p]$ and \ctr\ in line~\ref{getLock:awaitAckOrCtrDecrease}.
Since \ctr\ is only changed to $0$ at the end of $T$, it follows that $\ctr=2$ throughout $[\pt{p}{getLock:IncCounter},\Ia{2})$.
Then $p$ busy-waits in the spin loop of \Line{getLock:awaitAckOrCtrDecrease} until the end of $T$, or if it reads value \pair{\cOk}{s}, for some $s \in \N$, from \Apply$[p]$ in \Line{getLock:awaitAckOrCtrDecrease} during $T$.
Now, \Apply$[p]$ is changed to value \pair{\cOk}{s} by some process other than $p$, only if that process notifies $p$, i.e., executes a successful \Apply$[p]$.\CAS{\pair{\cWant}{s},\pair{\cOk}{s}} operation in \Line{promote:ApplyWantOk}.
We now show that $p$ is notified during $T$.

From \Claim{cl:@I0-} it follows that at $\Ib{0}$, $\X = \LSync = \bot$ and \PawnSet\ is candidate-empty, and $R(\Ib{0}) = \varnothing$.
Then from \Claim{cl:proofProperties3}\refC{aOrbCollect}, it follows that exactly one call to \doCollect{} is executed during $T$ by a process $q \in \Set{\K,\Q}$.
Consider the point when $q$ reads $\Apply[p]$ in \Line{collect:ApplyRead}.
If $q$ reads a value different from $\pair{\cWant}{s}$, then some process must have notified $p$ during $[\pt{p}{getLock:ApplyBotWant},\pt{q}{collect:ApplyRead}]$, and since $\Ib{0} < \pt{p}{getLock:ApplyBotWant}$ and $\pt{q}{collect:ApplyRead} \in T$, our claim holds.
If $q$ reads the value $\pair{\cWant}{s}$ from \Apply[p], then $q$ collects $p$ during $T$ by executing a \PawnSet.\collect{$A$} operation, where $A[p] = s$, in \Line{collect:updateAll} during $T$.
Thus, our claim follows from Part~\refC{scl:DoggedPGetsPromoted1}.
%
%
%
%
%
%
\end{proof}

%

\begin{claim} \label{cl:conditional:getLock}
If $p$ registered itself in \Line{getLock:ApplyBotWant}, and incurred \Order{1} RMRs in the process, and $p$ does not abort, and all processes in the system continue to take steps, then
\begin{enumerate}[(a)]
 \item $p$ finishes its call to \lock{p} and returns a non-$\bot$ value.  \label{scl:conditional:pDoesNotStarve}
 \item $p$ incurs \Order{1} RMRs in expectation during its call to \lock{p}.
 \label{scl:conditional:eRMRcomplexity}
\end{enumerate}
\end{claim}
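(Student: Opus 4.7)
The plan is to analyze $p$'s execution after registration, bound the expected cost of a single iteration of the role-loop (\Line{getLock:BeginInnerLoop}--\Line{getLock:EndInnerLoop}), and show that $p$'s call spans at most two $\ctr$-cycles. By Lemma~\ref{thm:ctr:linearizable} the operation $\ctr.\inc{}$ against the weak adversary fails with probability $2/3$ and costs $O(1)$ steps, so in expectation $O(1)$ invocations suffice before $\Role[p]$ is set to one of $\cKing$, $\cQueen$, or $\cPawn$. A case split on the resulting role handles termination: if $\Role[p]=\cKing$, $p$ leaves the role-loop immediately and returns $\infty$ at \Line{getLock:returninfty}; if $\Role[p]=\cQueen$, $p$ spins at \Line{getLock:awaitX}, and by \Claim{cl:proofProperties2}\refC{aExecutesHelpRelease} the king eventually writes a non-$\bot$ value to $\X$ via \Line{release:setX} which $p$ then reads and returns at \Line{getLock:returnX}; if $\Role[p]=\cPawn$, $p$ spins at \Line{getLock:awaitAckOrCtrDecrease} until either $\Apply[p]=\pair{\cOk}{s}$ (whereupon $p$ sets $\Role[p]=\cPPawn$ and returns $\infty$) or $\ctr$ changes away from $2$ (a ``miss''), triggering another iteration.

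The crux is to show that $p$ can be missed in at most one $\ctr$-cycle. Suppose $p$ is missed in cycle $T$. By \Claim{cl:basic:Apply}, $\Apply[p]$ can leave the value $\pair{\cWant}{s}$ only via $p$'s deregistration, $p$'s abort, or a notification, none of which occurs in our scenario during $T$; hence $\Apply[p]=\pair{\cWant}{s}$ throughout $T$ and at $\Ib{0}$ of the subsequent cycle $T'$. In $T'$, $p$ re-enters the role-loop; if $p$ becomes king or queen, the earlier cases apply, so assume $p$ becomes a pawn in $T'$. By \Claim{cl:proofProperties3}\refC{aOrbCollect} exactly one $\doCollect{}$ executes in $T'$, and when that operation reads $\Apply[p]$ the entry is still $\pair{\cWant}{s}$; therefore $p$ is added to $\PawnSet$ with value $\pair{\cReg}{s}$, and \Claim{cl:DoggedPGetsPromoted}\refC{scl:DoggedPGetsPromoted1} yields that $p$ is subsequently promoted and notified during $T'$. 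Thus $p$ exits the role-loop within two $\ctr$-cycles, establishing part~(a).

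For part~(b), registration contributes $O(1)$ RMRs by hypothesis; each cycle contributes $O(1)$ expected RMRs from $\ctr.\inc{}$ calls, and $O(1)$ further RMRs for the spin loops in the CC model: the queen's spin on $\X$ sees at most one remote write (the king's successful $\X.\CAS{\bot,j}$), and the pawn's spin on $\Apply[p]$ and $\ctr$ sees at most one remote modification of $\Apply[p]$ (at notification) together with a bounded number of modifications of $\ctr$ (the only modification to $\ctr$ while $\ctr=2$ is the eventual $\ctr.\CAS{2,0}$ that ends the cycle). Summing across at most two cycles yields $O(1)$ expected RMRs. The principal obstacle is the at-most-one-miss argument; the rest reduces to the expected step complexity of $\ctr.\inc{}$ from Lemma~\ref{thm:ctr:linearizable} together with standard CC-model accounting for local-spin loops.
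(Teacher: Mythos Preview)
Your approach matches the paper's: a case split on the role returned by $\ctr.\inc{}$ (with expected $O(1)$ cost from Lemma~\ref{thm:ctr:linearizable}), termination for the queen via the king's write to $\X$, and the at-most-one-miss argument for pawns via Claim~\ref{cl:DoggedPGetsPromoted}. The paper is slightly more careful in the post-miss analysis, distinguishing whether $p$ remains in the spin loop of \Line{getLock:awaitAckOrCtrDecrease} (because $\ctr$ returns to $2$ before $p$ re-reads it) or breaks out and re-enters the role-loop, and it invokes part~(b) of Claim~\ref{cl:DoggedPGetsPromoted} directly rather than first arguing collection and then applying part~(a); but your key invariant, that $\Apply[p]=\pair{\cWant}{s}$ at $\Ib{0}$ of whichever subsequent cycle is relevant, covers all of these sub-cases.
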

\begin{proof}
\textbf{Proof of~\refC{scl:pDoesNotStarve} and~\refC{scl:eRMRcomplexity}: }
From an inspection of the code of \lock{p}, $p$ incurs a constant number of RMRs while executing all other lines of \lock{p} except while busy-waiting in lines~\ref{getLock:ApplyBotWant},~\ref{getLock:awaitAckOrCtrDecrease} and~\ref{getLock:awaitX}.

Consider $p$'s call to \lock{p}.
By assumption of the claim, $p$ registered itself in \Line{getLock:ApplyBotWant} by executing a successful \Apply$[p]$.\CAS{\pair{\bot}{\bot},\pair{\cWant}{s}} operation in \Line{getLock:ApplyBotWant}, and incurred \Order{1} RMRs in the process.
Then $p$ proceeds to execute a \ctr.\inc{} operation in \Line{getLock:IncCounter}, and stores the returned value in  $\Role[p]$.
A \ctr.\inc{} operation returns values in \Set{\cKing,\cQueen,\cPawn,\bot}.
If it returns $\bot$, $p$ repeats the role-loop, and executes another \ctr.\inc{} operation in \Line{getLock:IncCounter}.
From Claim~\ref{claim:CASCounterFailureProbability}, it follows that $p$ repeats the role-loop only a constant number of times before its \ctr.\inc{} operation returns a non-$\bot$ value.

\textbf{Case a - } $p$ executes a \ctr.\inc{} operation in \Line{getLock:IncCounter} that returns \cKing.
Then $p$ sets $\Role[p] = \cKing$ in \Line{getLock:IncCounter}.
Then from the code structure $p$ does not busy-wait on any variables, and proceeds to return $\infty$ in \Line{getLock:returninfty}, and thus incurs only \Order{1} RMRs.
Hence,~\refC{scl:pDoesNotStarve} and~\refC{scl:eRMRcomplexity} hold.

\textbf{Case b - } $p$ executes a \ctr.\inc{} operation in \Line{getLock:IncCounter} that returns \cQueen.
Then $p$ increments $\ctr$ from $1$ to $2$ in \Line{getLock:IncCounter} and sets $\Role[p] = \cQueen$ in \Line{getLock:IncCounter}.
Then from the code structure $p$ proceeds to busy-wait on \X\ in \Line{getLock:awaitX}.
Since $p$ increased $\ctr$ from $1$ to $2$, $\pt{p}{getLock:awaitX} = \Ib{2}$ for some \ctr-cycle interval $T$.
From \Claim{cl:@I0-} it follows that at $\Ib{0}$, $\X = \LSync = \bot$ and \PawnSet\ is candidate-empty, and $R(\Ib{0}) = \varnothing$.
Then from Claim~\ref{cl:proofProperties2}\refC{QsGetLock} it follows that $p$ does not starve in \Line{getLock:awaitX}.
Since $p$ does not abort, it follows from an inspection of \Figure{fig:QsGetLock} and the code, that $p$ returns a non-$\bot$ value in \Line{getLock:returnX}, and $p$ does not change \X.
Hence, we have shown that Part~\refC{scl:pDoesNotStarve} holds.
Apart from $p$, the only releasers of \L\ during $T$ are \Set{\K,\P_1, \ldots, \P_{\ell}}, where $\ell$ is the number of promotions during $T$.
From an inspection of Figures~\ref{fig:KsGetLock},~\ref{fig:KsRelease},~\ref{fig:PisGetLock},~\ref{fig:PisRelease} and the code, it follows that only $\K$ possibly writes a non-$\bot$ value to \X\ during $T$ in \Line{release:setX}.
Since \X\ is written to only be a releaser of \L, and $\pt{p}{getLock:awaitX} \in T$, it then follows that \X\ is changed to a non-$\bot$ value at most once during $T$.
Then $p$ incurs at most one RMR while busy-waiting on \X.
Hence, we have shown that Part~\refC{scl:eRMRcomplexity} holds.

\textbf{Case c - } $p$ executes a \ctr.\inc{} operation in \Line{getLock:IncCounter} that returns \cPawn.
Then $p$ found $\ctr$ to be $2$ in \Line{getLock:IncCounter} and set $\Role[p] = \cPawn$ in \Line{getLock:IncCounter}.
Then from the code structure $p$ proceeds to busy-wait on \Apply$[p]$ and \ctr\ in \Line{getLock:awaitAckOrCtrDecrease}.

We now show that $p$ does not starve while busy-waiting in \Line{getLock:awaitAckOrCtrDecrease}.
Since $\ctr = 2$ at $\pt{p}{getLock:IncCounter}$, it follows that $\pt{p}{getLock:IncCounter} \in T$ for some \ctr-cycle interval $T$.

\textbf{Subcase (i) - } $\Apply[p] =\pair{\cWant}{s}$  at \Ib{0} during $T$, for some $s \in \N$.
Then from \Claim{cl:DoggedPGetsPromoted}\refC{scl:DoggedPGetsPromoted2}, $p$ is notified during $T$.
Since $p$ is notified during $T$ and $p$ does not abort, it follows that $p$ does not change \Apply$[p]$, and thus \Apply$[p]$ is changed from \pair{\cWant}{s} to \pair{\cOk}{s} when $p$ is notified.
Since \Apply$[p]$ is changed from \pair{\cOk}{s} to some other value only by $p$, it then follows that \Apply$[p]$ remains \pair{\cOk}{s} when $p$ reads \Apply$[p]$ for the first time after $p$ was notified.
Then $p$ incurs one RMR when it reads \Apply$[p]$ in \Line{getLock:awaitAckOrCtrDecrease} after its notification, breaks out of the spin loop of \Line{getLock:awaitAckOrCtrDecrease}, proceeds to satisfy the if-condition of \Line{getLock:ifBackpacked}, and sets $\Role[p] = \cPPawn$ in \Line{getLock:RolePPawn}, and proceeds to return $\infty$ in \Line{getLock:returninfty}.
Then we have shown Parts~\refC{scl:pDoesNotStarve} and~\refC{scl:eRMRcomplexity} hold.

\textbf{Subcase (ii) - } $\Apply[p] \neq \pair{\cWant}{s}$  at \Ib{0} during $T$, for some $s \in \N$.
Consider the only call to \doCollect{} during $T$ by $q \in \Set{\K,\Q}$.
If $p$ registered itself (i.e., executed its \Apply$[p]$.\CAS{\pair{\bot}{\bot},\pair{\cWant}{s}} operation in \Line{getLock:ApplyBotWant}) before $q$ reads \Apply$[p]$ in \Line{collect:ApplyRead} during \doCollect{q}), then $q$ collects $p$ during $T$.
Then from \Claim{cl:DoggedPGetsPromoted}\refC{scl:DoggedPGetsPromoted1}, $p$ is collected and promoted during $T$, and eventually notified.
Then Parts~\refC{scl:pDoesNotStarve} and~\refC{scl:eRMRcomplexity} hold as argued in \textbf{Subcase (i)}.

If $p$ registers itself after $q$ attempts to acknowledge $p$ during $T$, then no process changes \Apply$[p]$ during $T$.
Then $p$ continues to busy-wait in \Line{getLock:awaitAckOrCtrDecrease}, until the \ctr-cycle interval $T$ ends and \ctr\ is reset to $0$.

If \ctr\ is increased to $2$ before $p$ reads \ctr\ again in \Line{getLock:awaitAckOrCtrDecrease}, then let $T'$ be the \ctr-cycle interval that starts when \ctr\ was reset to $0$ at the end of $T$.
Since $\Apply[p]$ was changed to a non-\pair{\cWant}{s} value before the start of $T'$, it follows that $\Apply[p] = \pair{\cWant}{s}$ at the start of $T'$.
Then from \Claim{cl:DoggedPGetsPromoted}\refC{scl:DoggedPGetsPromoted2}, $p$ is acknowledged, collected, promoted during $T'$, and eventually notified.
Then Parts~\refC{scl:pDoesNotStarve} and~\refC{scl:eRMRcomplexity} hold as argued in \textbf{Subcase (i)}.

If $\ctr \neq 2$ when $p$ reads \ctr\ again in \Line{getLock:awaitAckOrCtrDecrease}, then $p$ incurs one RMR in \Line{getLock:awaitAckOrCtrDecrease}, breaks out of the spin loop, and proceeds to execute \Line{getLock:ifBackpacked}.
If $p$ satisfies the if-condition of \Line{getLock:ifBackpacked}, then $p$ has been acknowledged during some \ctr-cycle interval $T''$.
Then from \Claim{cl:DoggedPGetsPromoted}\refC{scl:DoggedPGetsPromoted1}, $p$ is collected, promoted during $T''$, and eventually notified.
Then Parts~\refC{scl:pDoesNotStarve} and~\refC{scl:eRMRcomplexity} hold as argued in \textbf{Subcase (i)}.
If $p$ fails the if-condition of \Line{getLock:ifBackpacked}, then $p$ proceeds to repeat the role-loop.
Consider $p$'s second iteration of the role-loop.
If $p$ sets $\Role[p] = \Set{\cKing,\cQueen}$ in \Line{getLock:IncCounter}, then Parts~\refC{scl:pDoesNotStarve} and~\refC{scl:eRMRcomplexity} hold as argued in \textbf{Case a} and \textbf{Case b}.
If $p$ sets $\Role[p] = \cPawn$ in \Line{getLock:IncCounter}, then it follows that $\pt{p}{getLock:IncCounter} \in T'''$, for some \ctr-cycle interval $T'''$, such that $\Apply[p] = \pair{\cWant}{s}$ at \Ib{0} for $T'''$.
Parts~\refC{scl:pDoesNotStarve} and~\refC{scl:eRMRcomplexity} hold as argued in \textbf{Case c(i)}.
\end{proof}

\begin{lemma} \label{cl:getLock}
If all processes in the system continue to take steps and $p$ does not abort, then
\begin{enumerate}[(a)]
 \item $p$ finishes its call to \lock{p} and returns a non-$\bot$ value.  \label{scl:pDoesNotStarve}
 \item $p$ incurs \Order{1} RMRs in expectation during its call to \lock{p}.
 \label{scl:eRMRcomplexity}
\end{enumerate}
\end{lemma}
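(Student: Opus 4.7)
The plan is to reduce Lemma~\ref{cl:getLock} to the already-established Claim~\ref{cl:conditional:getLock}, whose hypotheses ask precisely that $p$ have registered itself at \Line{getLock:ApplyBotWant} while spending only $O(1)$ RMRs. All that remains to prove is therefore that $p$'s spin at \Line{getLock:ApplyBotWant} terminates and does so within $O(1)$ RMRs, whenever all processes take enough steps and $p$ itself does not abort.

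First I would inventory every process that can write \Apply$[p]$ between the start of $p$'s current \lock{p} call and the moment $p$'s \Apply$[p]$.\CAS{\pair{\bot}{\bot},\pair{\cWant}{s}} succeeds. By Claim~\ref{cl:basic:Apply} and inspection of the algorithm, the writers of \Apply$[p]$ are only: (i)~$p$ itself at its own \CAS es and resets; (ii)~some process executing \Apply$[p]$.\CAS{\pair{\cWant}{seq},\pair{\cOk}{seq}} at \Line{promote:ApplyWantOk} to notify $p$ of a promotion; and (iii)~the terminal \Apply$[p]$.\CAS{\pair{\cOk}{s'},\pair{\bot}{\bot}} of an outstanding \release{p}{} call at \Line{release:ApplyOkBot}. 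By Condition~\ref{cond:safety:ARLockArray}(a) no earlier \lock{p} is still in progress, and by Conditions~\ref{cond:safety:ARLockArray}(b)--(c) together with Claim~\ref{claim:releaseNotConcurrent} at most one \release{p}{} call is pending. A stale promoter from a previous \ctr-cycle of $p$ carries a sequence number $seq$ strictly smaller than $p$'s current $s$ (since \getSequenceNo{} returns strictly increasing values on successive \lock{p} invocations), and at the start of $p$'s current \lock{p} the slot \Apply$[p]$ stores either \pair{\bot}{\bot} or \pair{\cOk}{s'}: by inspection of the code, the previous \lock{p} completed and either aborted before registering (leaving \Apply$[p]=\pair{\bot}{\bot}$ by Claim~\ref{cl:basic:Apply}) or wrote \pair{\cOk}{s'} at \Line{getLock:ApplyWantOk} or \Line{abort:ApplyWantOk} on its way out. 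In particular \Apply$[p]$ is never of the form \pair{\cWant}{seq} at this point, so every stale promote \CAS fails without mutating \Apply$[p]$. Hence the only write to \Apply$[p]$ that $p$ can observe before its own \CAS succeeds is the single reset to \pair{\bot}{\bot} performed by the outstanding \release{p}{}, if any.

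Now termination and the RMR count fall out easily. The outstanding \release{p}{} is wait-free (Claim~\ref{cl:MethodsWaitfree}), hence completes its terminal \CAS within a bounded number of its own steps; afterwards \Apply$[p]=\pair{\bot}{\bot}$ and, by the argument above, stays so until $p$ writes \pair{\cWant}{s}, at which point $p$'s next \CAS at \Line{getLock:ApplyBotWant} succeeds. In the CC model a cached copy of \Apply$[p]$ is invalidated only on a write to \Apply$[p]$, so during the busy-wait $p$ incurs at most one RMR caused by the release's terminal reset, plus $O(1)$ RMRs for the constantly-many of its own \CAS attempts triggered by that single invalidation---a total of $O(1)$ RMRs. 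With registration thus accomplished in $O(1)$ RMRs, the hypothesis of Claim~\ref{cl:conditional:getLock} is met and its two conclusions deliver parts~(a) and~(b) of the lemma verbatim.

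The main obstacle I anticipate is the case analysis around \Apply$[p]$ in the first step: one must rule out every conceivable writer during the wait, in particular any \Line{promote:ApplyWantOk} \CAS inherited from an older \ctr-cycle whose arguments could otherwise look plausible. The sequence-number discipline enforced by \getSequenceNo{}, combined with the per-pseudo-ID serialization established by Condition~\ref{cond:safety:ARLockArray}(a) and Claim~\ref{claim:releaseNotConcurrent}, are precisely the ingredients needed to close that case and keep the spin's RMR cost at $O(1)$.
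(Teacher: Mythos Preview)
Your proposal is correct and follows essentially the same approach as the paper: both reduce to Claim~\ref{cl:conditional:getLock} by arguing that the spin at \Line{getLock:ApplyBotWant} terminates after $O(1)$ RMRs, because the only write to \Apply$[p]$ that can occur before $p$'s \CAS\ succeeds is the single reset \Apply$[p]$.\CAS{\pair{\cOk}{s'},\pair{\bot}{\bot}} performed by the one outstanding \release{p}{} (which is wait-free). The paper structures this as an explicit case split on whether \Apply$[p]=\pair{\bot}{\bot}$ initially and, if not, on whether the preceding \lock{p} returned $\bot$ (shown contradictory) or a non-$\bot$ value; your writer-inventory argument compresses the same reasoning, and your additional remark that stale \Line{promote:ApplyWantOk} \CAS es fail because \Apply$[p]$ is never of the form \pair{\cWant}{\cdot} during the wait makes explicit a point the paper leaves to the observation that \Apply$[p]$ can change from \pair{\cOk}{s'} only to \pair{\bot}{\bot}.
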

\begin{proof}
From an inspection of \lock{p}, $p$ incurs a constant number of RMRs while executing all other lines of \lock{p} except while busy-waiting in lines~\ref{getLock:ApplyBotWant}, ~\ref{getLock:awaitAckOrCtrDecrease} and~\ref{getLock:awaitX}.

Consider $p$'s call to \lock{p}.
Process $p$ first attempts to register itself in \Line{getLock:ApplyBotWant}, by attempting to execute an \Apply$[p]$.\CAS{\pair{\bot}{\bot},\pair{\cWant}{s}} operation.
Now, \Apply$[p]$ is changed from \pair{\bot}{\bot} to a non-\pair{\bot}{\bot} value only by $p$ (Claim~\ref{cl:basic:Apply}\refC{scl:ApplyRegister}).
If $\Apply[p] = \pair{\bot}{\bot}$ at \ptB{p}{getLock:ApplyBotWant}, then $p$ executes a successful \Apply$[p]$.\CAS{$\pair{\bot}{\bot},\pair{\cWant}{s}$} operation in \Line{getLock:ApplyBotWant} and incurs only one RMR.
Then our claims follow immediately from Claims~\ref{cl:conditional:getLock}\refC{scl:conditional:pDoesNotStarve} and~\ref{cl:conditional:getLock}\refC{scl:conditional:eRMRcomplexity}.

If $\Apply[p] \neq \pair{\bot}{\bot}$ at \ptB{p}{getLock:ApplyBotWant}, it follows that some process $p'$ executed a successful \Apply$[p]$.\CAS{$\pair{\bot}{\bot},\pair{\cWant}{s'}$} in \Line{getLock:ApplyBotWant} during \lock{p}, and $\Apply[p] \neq \pair{\bot}{\bot}$ throughout $[\pt{p'}{getLock:ApplyBotWant},\ptB{p}{getLock:ApplyBotWant}]$.
Since calls to \lock{p} are not executed concurrently, it follows that $p'$ has completed its call to \lock{p} during $[\pt{p'}{getLock:ApplyBotWant},\ptB{p}{getLock:ApplyBotWant}]$.

\textbf{Case 1 - } $p'$'s call to \lock{p} returned $\bot$.
Then it follows from the code structure that $p'$ executed a call to \abort{p} and returned from line~\ref{abort:returnbot} or~\ref{abort:returnr}.
Since $p$ executed a successful \Apply$[p]$.\CAS{$\pair{\bot}{\bot},\pair{\cWant}{s'}$} in \Line{getLock:ApplyBotWant}, $p'$ could not have aborted while busy-waiting on \Line{getLock:ApplyBotWant}, and thus $p'$ aborted while busy-waiting in line~\ref{getLock:awaitAckOrCtrDecrease} or~\ref{getLock:awaitX}.
Then $p'$ executed \Line{getLock:setFlag}, and set its local variable $p'.flag$ to \True, and thus $p$ could not have returned $\bot$ from \Line{abort:returnbot} during \abort{p}.
Then $p'$ returned $\bot$ in \Line{abort:returnr}, and thus $p'$ executed operations \Apply$[p]$.\CAS{\pair{\cWant}{s'},\pair{\cOk}{s'}} (in \Line{abort:ApplyWantOk}), and \Apply$[p]$.\CAS{\pair{\cOk}{s'},\pair{\bot}{\bot}} (in \Line{abort:ApplyOkBot}).
Since, \Apply$[p]$ can be changed from \pair{\cWant}{s'} only to \pair{\cOk}{s'}, and from \pair{\cOk}{s'} only to \pair{\bot}{\bot}, it then follows that $p'$ executes a successful \Apply$[p]$.\CAS{\pair{\cOk}{s'},\pair{\bot}{\bot}} (in \Line{abort:ApplyOkBot}).
Then $p'$ eventually resets \Apply$[p]$ during its \lock{p} call.
Since $\Apply[p] \neq \pair{\bot}{\bot}$ throughout $[\pt{p'}{getLock:ApplyBotWant},\ptB{p}{getLock:ApplyBotWant}]$ and $p'$ completed its call to \lock{p} during $[\pt{p'}{getLock:ApplyBotWant},\ptB{p}{getLock:ApplyBotWant}]$, we have a contradiction.

\textbf{Case 2 - } $p'$'s call to \lock{p} returned a non-$\bot$ value.
Then from the code structure $p'$ executed operations \Apply$[p]$.\CAS{\pair{\cWant}{s'},\pair{\cOk}{s'}} (in \Line{getLock:ApplyWantOk} or \Line{abort:ApplyWantOk})  before returning from its call to \lock{p}.
Since \Apply$[p]$ can be changed from \pair{\cWant}{s'} only to \pair{\cOk}{s'}, and from \pair{\cOk}{s'} only to \pair{\bot}{\bot} and only by a process with pseudo-ID $p$, it then follows that $\Apply[p] = \pair{\cOk}{s'}$ when $p'$'s \lock{p} returns.
Then it also follows that $\Apply[p] = \pair{\cOk}{s'}$ until a process with pseudo-ID $p$ executes an \Apply$[p]$.\CAS{\pair{\cOk}{s'},\pair{\bot}{\bot}} operation.

Since $p'$ won the lock \L, it follows that some process, say $r$, eventually executes a call to \release{p}{j}, for some integer $j$.
Since a call to \release{p}{j} is wait-free and all processes continue to take steps, it follows that eventually $r$ executes lines~\ref{release:ApplyRead} and~\ref{release:ApplyOkBot} where it reads value \pair{\cOk}{s'} from \Apply$[p]$ in \Line{release:ApplyRead} and resets \Apply$[p]$ with a \Apply$[p]$.\CAS{\pair{\cOk}{s'},\pair{\bot}{\bot}} operation in \Line{release:ApplyOkBot}.
Since $p$ does not abort, and no other process calls \lock{p} concurrently, it then follows that eventually $p$ executes a successful \Apply$[p]$.\CAS{\pair{\bot}{\bot},\pair{\cWant}{s}} operation in \Line{getLock:ApplyBotWant}.
Since \Apply$[p]$ changed only once from \pair{\cOk}{s'} to \pair{\bot}{\bot} while $p$ busy-waited in \Line{getLock:ApplyBotWant}, it follows that $p$ incurs \Order{1} RMRs during the entire process.
Then our claims follow immediately from Claims~\ref{cl:conditional:getLock}\refC{scl:conditional:pDoesNotStarve} and~\ref{cl:conditional:getLock}\refC{scl:conditional:eRMRcomplexity}.
\end{proof}


\begin{lemma} \label{cl:ARMLock:exitway:waitfree}
The abort-way is wait- free.
\end{lemma}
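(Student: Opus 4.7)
The plan is to show that whenever a process $p$ receives a signal to abort during a \lock{p} call, $p$ completes its passage (either successfully capturing \L\ and then calling \release{p}{}, or returning $\bot$) in a bounded number of its own steps. The argument is a case analysis on where $p$ is poised to execute when it receives the abort signal, combined with the wait-freedom of all private methods invoked along the abort-way.

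The first step is to recall the handling rule stated right before the \abort{} method description: while executing \lock{p}, if $p$ receives an abort signal it stops only at the spin loops in \Line{getLock:ApplyBotWant}, \Line{getLock:awaitAckOrCtrDecrease}, or \Line{getLock:awaitX} and then calls \abort{p}; otherwise $p$ continues straight-line execution of \lock{p} until either (i) it reaches one of those three spin loops, where it then invokes \abort{p}, or (ii) it completes its \lock{p} call normally. Since every path between two consecutive spin-loop tests in \lock{p} contains only a constant number of shared-memory operations, including at most a constant expected number of role-loop iterations (each ending either in a role assignment or falling back into a spin loop), the number of $p$'s own steps from the moment the abort signal is received until either \lock{p} returns or \abort{p} is invoked is bounded.

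Next I would handle the two terminal cases. If \lock{p} returns normally, then $p$ captured the lock and will call \release{p}{}; by Lemma~\ref{cl:MethodsWaitfree}, \release{p}{} is wait-free, so $p$'s passage ends in a bounded number of its own steps. Otherwise $p$ invokes \abort{p}; by Lemma~\ref{cl:MethodsWaitfree}, \abort{p} is wait-free, so it returns in a bounded number of $p$'s own steps. If \abort{p} returns $\bot$, the passage ends immediately. If \abort{p} returns $\infty$ in \Line{abort:returninfty} (because $p$ discovered it was already promoted) or the value of \X\ in \Line{abort:returnX} (because $p$ lost the \X.\CAS{$\bot,\infty$} race to \king{\L}), then $p$ has captured \L\ and must still execute \release{p}{}, which is again wait-free by Lemma~\ref{cl:MethodsWaitfree}.

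The main obstacle is simply verifying the straight-line step bound between spin loops inside \lock{p}: the role-loop of lines~\ref{getLock:BeginInnerLoop}--\ref{getLock:EndInnerLoop} could in principle iterate many times because \ctr.\inc{} may fail. However, a failed \ctr.\inc{} call consumes a constant number of $p$'s own steps and immediately causes $p$ to re-enter the role-loop, while a successful \ctr.\inc{} either terminates the loop with \Role$[p] \in \Set{\cKing,\cQueen}$ (leading to at most one further spin-loop entry on \X) or leaves $p$ spinning in \Line{getLock:awaitAckOrCtrDecrease}. In all cases, $p$ reaches a spin-loop test after a constant number of its own steps, so the abort-handling rule applies, and wait-freedom of the abort-way follows.
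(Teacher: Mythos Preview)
Your argument is more elaborate than the paper's, but it contains a real gap in the last paragraph. You correctly identify the role-loop (lines~\ref{getLock:BeginInnerLoop}--\ref{getLock:EndInnerLoop}) as the only nontrivial obstacle, yet your resolution does not hold: a failed \ctr.\inc{} returns $\bot$, so $\Role[p]\notin\{\cKing,\cQueen,\cPawn,\cPPawn\}$, the if-condition of \Line{getLock:ifSoldier} is not satisfied, and the repeat--until loop immediately re-executes \Line{getLock:IncCounter} without ever touching any of the three spin lines~\ref{getLock:ApplyBotWant}, \ref{getLock:awaitAckOrCtrDecrease}, \ref{getLock:awaitX}. Hence a run of failed \ctr.\inc{} calls never reaches a spin-loop test, and your claim that ``in all cases, $p$ reaches a spin-loop test after a constant number of its own steps'' is false in the worst case. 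You even wrote ``constant \emph{expected} number of role-loop iterations'' a few lines earlier, so the conclusion ``bounded'' does not follow from what you have argued.

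The paper sidesteps this entirely by taking a narrower view in its proof: it treats the abort-way as the steps taken \emph{after} $p$ breaks out of one of the three busy-wait loops, at which point $p$ immediately calls \abort{p}. From there the proof is a two-line appeal to Lemma~\ref{cl:MethodsWaitfree}: \abort{p} is wait-free, and if it returns a non-$\bot$ value then the subsequent \release{p}{} is wait-free as well. Under this reading the role-loop is simply not part of the abort-way, so no worst-case bound on failed \ctr.\inc{} calls is needed. If you want to keep your broader reading (abort-way starting at the signal), you would need a separate argument that the randomized role-loop is not counted against bounded abort, or else weaken the claim to wait-freedom with probability~1.
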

\begin{proof}
The abort-way is defined to be all steps taken by a process (say $p$) after it receives a signal to abort and breaks out of one of the busy-wait cycles of lines~\ref{getLock:ApplyBotWant},~\ref{getLock:awaitAckOrCtrDecrease} or~\ref{getLock:awaitX}.
After $p$ breaks out of one of the busy-wait cycles of lines~\ref{getLock:ApplyBotWant},~\ref{getLock:awaitAckOrCtrDecrease} or~\ref{getLock:awaitX} $p$ executes a call to \abort{p}.
If $p$'s call to \abort{p} returns $\bot$, then $p$'s passage ends, or else $p$'s \lock{p} returns non-$\bot$ value and $p$ calls \release{p}{} and $p$'s passage ends when the \release{p}{} method returns.
Since \abort{p} and \release{p}{} are both wait-free (by Lemma~\ref{cl:MethodsWaitfree}), our claim follows.
\end{proof}

\begin{lemma} \label{cl:ARMLock:starvationfree}
The starvation freedom property holds during history $H$.
\end{lemma}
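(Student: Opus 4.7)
The plan is to combine the two preceding lemmas, which together cover the two possible fates of any \lock{p} invocation. Fix a process $p$ executing \lock{p} and assume all processes in the system continue to take steps. I split into two cases based on whether $p$ receives an abort signal during the call.

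First, in the case that $p$ does not abort, I would invoke Lemma~\ref{cl:getLock}\refC{scl:pDoesNotStarve} directly: under the hypothesis that all processes take enough steps and $p$ does not abort, $p$ finishes its \lock{p} call and returns a non-$\bot$ value. Nothing further is needed in this branch.

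Second, in the case that $p$ does receive an abort signal at some point during \lock{p}, I would observe from the code structure of \lock{p} that $p$ can only be in one of finitely many situations: either (i) $p$ is poised to execute a non-busy-waiting line, in which case $p$ finishes \lock{p} in \Order{1} of its own steps and the claim is immediate, or (ii) $p$ is busy-waiting in one of lines~\ref{getLock:ApplyBotWant}, \ref{getLock:awaitAckOrCtrDecrease}, or~\ref{getLock:awaitX}, in which case by the specification of the abort semantics $p$ breaks out and executes \abort{p}. In the latter situation I appeal to Lemma~\ref{cl:ARMLock:exitway:waitfree}: the abort-way is wait-free, so \abort{p} returns in a bounded number of $p$'s own steps. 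If \abort{p} returns $\bot$ then $p$ also returns $\bot$ from \lock{p}; if instead \abort{p} returns a non-$\bot$ value (meaning $p$ was already a \king, \queen, or promoted pawn by the time it tried to abort), then $p$ returns that non-$\bot$ value from \lock{p}. In both sub-cases \lock{p} returns within a bounded number of $p$'s steps after the abort signal.

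Since the two cases are exhaustive and in each case $p$'s call to \lock{p} eventually returns, every process returns from its \lock{} call, which is precisely starvation freedom. I do not anticipate a main obstacle: the hard work has already been done in establishing Lemmas~\ref{cl:getLock} and~\ref{cl:ARMLock:exitway:waitfree}, and the present lemma is essentially a case-analysis wrapper around them.
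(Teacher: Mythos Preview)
Your proposal is correct and follows essentially the same approach as the paper: a case split on whether $p$ receives an abort signal, invoking Lemma~\ref{cl:getLock}\refC{scl:pDoesNotStarve} in the non-abort case and Lemma~\ref{cl:ARMLock:exitway:waitfree} in the abort case. The paper's proof is slightly terser (it does not unpack the busy-waiting versus non-busy-waiting sub-cases) and additionally remarks that the subsequent \release{p}{} is wait-free so the whole passage completes, but this extra step is not needed for starvation freedom as defined here, which only concerns termination of \lock{}.
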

\begin{proof}
Consider a process $p$ that begins to execute its passage.
From Lemma~\ref{cl:getLock}\refC{scl:pDoesNotStarve}, it follows that if $p$ does not abort during \lock{p} and all processes continue to take steps then $p$ eventually returns from \lock{p} with a non-$\bot$ value.
Then $p$ eventually calls \release{p}{}, and since \release{p}{} is wait-free, $p$ eventually completes its passage.
If $p$ receives a signal to abort during \lock{p}, then $p$ executes its abort-way.
Since the abort-way is wait-free (by Lemma~\ref{cl:ARMLock:exitway:waitfree}), $p$ eventually completes its passage.
\end{proof}

\begin{lemma}
If a call to \release{p}{j} returns \True, then there exists a concurrent call to \lock{} that eventually returns $j$.
\label{lemma:ARMLockTree:transferable}
\end{lemma}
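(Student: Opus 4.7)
The plan is to trace the unique code path on which \release{p}{j} can set its return flag to \True, use the structural claims already proven for a \ctr-cycle interval to identify a specific concurrent participant, and verify that this participant's \lock{} call returns the value $j$.

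First, I would observe that the local variable $r$ in \release{p}{j} is initialized to \False\ in \Line{release:setrNotCollected} and subsequently modified only in \Line{release:setX}, where it is assigned the result of $\X.\CAS{\bot,j}$. Hence \release{p}{j} returns \True\ at \Line{release:return} iff $p$ executed a successful $\X.\CAS{\bot,j}$ at \Line{release:setX}. By the guards in Lines~\ref{release:ifKing} and~\ref{release:ctr10}, this forces $\Role[p] = \cKing$ at entry and the preceding $\ctr.\CAS{1,0}$ to have failed. Let $T$ be the \ctr-cycle containing $t^{\star} := \pt{p}{release:setX}$. Using \Claim{cl:TableOfRoles}, \Claim{cl:basic:Role}\refC{scl:RoleIncCounter} and~\refC{scl:RoleUnchanged}, together with \Claim{cl:proofProperties1}\refC{KDoesNotStarve}, $p$ plays the role of the unique king \K\ of $T$; the failure of $\ctr.\CAS{1,0}$ forces $\I{2} \neq \varnothing$, so a queen \Q\ exists in $T$ and increased \ctr\ from $1$ to $2$ at $\Ib{2} < t^\star$ inside an ongoing call to \lock{\Q}.

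Second, I would show that \Q's \lock{\Q} call overlaps $p$'s \release{p}{j}. The invocation endpoint is immediate: \Q\ invoked \lock{\Q} before executing \Line{getLock:IncCounter} at \Ib{2}, which in turn precedes $t^\star$. For the response endpoint, I need that \Q's \lock{\Q} has not returned by $t^\star$. Inspecting the control flow of \lock{\Q} and \abort{\Q} captured by \Claim{cl:proofProperties2}\refC{QsGetLock}, the only way \lock{\Q} could have returned before a non-$\bot$ value is written to \X\ is via a successful $\X.\CAS{\bot,\infty}$ in the \cQueen\ branch at \Line{abort:setX}; such a success would leave $\X = \infty \neq \bot$ at $t^\star$, causing $p$'s $\X.\CAS{\bot,j}$ to fail --- contradicting our hypothesis. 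Since by \Claim{cl:proofProperties2}\refC{@I2-} and \Claim{cl:variablesChangedByReleaserOnly} only \K\ and \Q\ may write to \X\ in this window, and \Q\ has not done so successfully, \X\ remains $\bot$ just before $t^\star$ and \Q\ is still inside \lock{\Q}.

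Finally, I would show that \Q's \lock{\Q} returns $j$. In the non-abort subcase, \Q\ eventually exits the spin at \Line{getLock:awaitX} and returns the current value of \X\ at \Line{getLock:returnX}. In the abort subcase, \Q\ invokes \abort{\Q} and, by the concurrency argument of the second step, her $\X.\CAS{\bot,\infty}$ at \Line{abort:setX} must fail; she falls through to \Line{abort:returnX} and returns the current value of \X, which is propagated as the return value of \lock{\Q}. The main delicate point --- the real obstacle --- is to rule out an intervening reset at \Line{hRelease:resetX} that could flip \X\ back to $\bot$ before \Q's read. I would resolve this via \Claim{cl:proofProperties3}\refC{I2-,L} together with the observation that the reset at \Line{hRelease:resetX} is contingent on a prior successful $\LSync.\CAS{\bot,\cdot}$: in both subcases \Q\ never performs $\LSync.\CAS{\bot,\Q}$ (she either returns from \lock{\Q} before reaching \helpRelease{\Q} or falls through \abort{\Q} at \Line{abort:returnX}), so \K's $\LSync.\CAS{\bot,\K}$ inside \helpRelease{\K} succeeds (as $\LSync = \bot$ by \Claim{cl:proofProperties2}\refC{@I2-}) and returns without executing \Line{hRelease:resetX}. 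Hence the value \Q\ reads is precisely the $j$ written by \K, completing the argument.
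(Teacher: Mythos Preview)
Your proposal is correct and follows essentially the same route as the paper's proof: identify $p=\K$ via the unique code path that sets $r\leftarrow\True$, use the failed $\ctr.\CAS{1,0}$ to obtain $\I{2}\neq\varnothing$ and hence a queen \Q, establish that \Q's \lock{\Q} overlaps the \release{\K}{j} call, and then argue (via the $\X$/$\LSync$ synchronization) that \Q\ must read and return the value $j$ written by \K. Your treatment of the ``\X\ not reset before \Q\ reads it'' obstacle is slightly more explicit than the paper's (which bounds the reset by $\lambda$ using $\pt{\K}{hRelease:resetX},\pt{\Q}{hRelease:resetX}>\lambda$), but the two arguments are equivalent; the only place you are a bit terse is the assertion in step~2 that a successful $\X.\CAS{\bot,\infty}$ by \Q\ would leave $\X=\infty$ at $t^\star$---this needs the same no-reset reasoning you spell out in step~3, but it goes through for the same reason.
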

\begin{proof}

The only operations that write a value to \X\ are \X.\CAS{$\bot,\infty$} in \Line{abort:setX}, and \X.\CAS{$\bot,j$} in \Line{release:setX}.
From Claim~\ref{cl:variablesChangedByReleaserOnly}, \X\ is written to only by a releaser of \L.
From \Claim{cl:@I0-} it follows that at $\Ib{0}$, $\X = \LSync = \bot$ and \PawnSet\ is candidate-empty, and $R(\Ib{0}) = \varnothing$.
Then from Claims~\ref{cl:proofProperties1}\refC{I0},~\ref{cl:proofProperties1}\refC{R@I1},~\ref{cl:proofProperties3}\refC{I2-,L},~\ref{cl:proofProperties3}\refC{L,G},~\ref{cl:proofProperties5}\refC{[G,I2+]}, and ~\ref{cl:proofProperties5}\refC{@I2+}, the only releasers of \L\ during a \ctr-cycle interval $T$,
 are \Set{\K,\Q,\P_1,\ldots,\P_{\ell}}.
Then from an inspection of Figures~\ref{fig:KsGetLock},~\ref{fig:KsGetLock},~\ref{fig:QsGetLock},~\ref{fig:QsRelease},~\ref{fig:PisGetLock} and~\ref{fig:PisRelease}, it follows that only \K\ and \Q\ can write to \X\ during \ctr-cycle interval $T$.

Since $p$ returns \True, it then follows from an inspection of the code that $p$ executed a successful \X.\CAS{$\bot,j$} operation in \Line{release:setX}, and thus failed the \ctr.\CAS{$1,0$} operation in \Line{release:ctr10} and $\Role[p] = \cKing$ at \pt{p}{release:ctr10}.
Then $p = \K$ for some \ctr-cycle interval $T$.
Since \K\ failed the \ctr.\CAS{$1,0$} operation in \Line{release:ctr10}, it then follows that \ctr\ was increased to $1$ by process \Q\ during $T$, and $\Ib{2} = \pt{\Q}{getLock:IncCounter} < \pt{\K}{release:ctr10}$.
Since $\Ib{1} = \pt{\K}{getLock:IncCounter}$ and $\Ib{1} < \Ib{2}$, it then follows that \Q's \lock{\Q} call is concurrent to \K's \release{\K}{j} call.

From \Claim{cl:@I0-} it follows that at $\Ib{0}$, $\X = \LSync = \bot$ and \PawnSet\ is candidate-empty, and $R(\Ib{0}) = \varnothing$.
Then from Claim~\ref{cl:proofProperties3}\refC{@I2-}, $\X = \bot$ at \Ib{2}, and \K\ and \Q\ are the only two releasers of \L\ during $[\Ib{2},\lambda)$, where $\lambda$ is the first point in time when \T\ is changed to a non-$\bot$ value, and $\lambda = \minimum{\pt{\K}{hRelease:setT},\pt{\Q}{hRelease:setT}}$.

Now, \X\ is reset only in \Line{hRelease:resetX}, and since $\pt{\K}{hRelease:resetX} > \pt{\K}{hRelease:setT} \geq \lambda$ and $\pt{\Q}{hRelease:resetX} > \pt{\Q}{hRelease:setT} \geq \lambda$, it then follows that \K\ and \Q\ do not reset \X\ during $[\Ib{2},\lambda]$.
Since  \K\ and \Q\ are the only processes with write-access to \X, \X\ is not reset during $[\Ib{2},\lambda]$.

Consider \Q's \lock{} call (see Figure~\ref{fig:QsGetLock}).
Since \K\ executed a successful \X.\CAS{$\bot,j$} operation and \X\ is not reset during $[\Ib{2},\lambda]$, it then follows that if \Q\ executes the \X.\CAS{$\bot,\infty$} operation in \Line{abort:setX}, then the operation fails.
From an inspection of Figure~\ref{fig:QsGetLock}, \Q\ either returns from its \lock{} call in \Line{getLock:returnX} or \Line{abort:returnX}.
In both these lines, \Q\ returns the non-$\bot$ value stored in \X.
Since \K\ is the only process apart from \Q\ that can write to \X\, \Q\ returns the value $j$ that \K\ wrote during its \release{\K}{j} call.
\end{proof}


Now consider an implementation of object \ARMLockArray{N}, where instance \PawnSet\ is implemented using object \UC{\APArray{n}}, and the operations in lines~\ref{collect:updateAll}, ~\ref{hRelease:collectFixOther},~\ref{promote:FR12},~\ref{promote:collectFixSelf}, and~\ref{promote:resetBackpack} are executed using the \performFast{} method, while the operation in \Line{abort:ifHead} is executed using the \performSlow{}.

\begin{claim} \label{scl:opsNotConcurrent}
 Lines~\ref{promote:collectFixSelf},\ref{promote:FR12},~\ref{promote:resetBackpack} of \doPromote{}, all lines of \doCollect{}, and lines~\ref{hRelease:readX}-\ref{hRelease:callPromote} are not executed concurrently.
\end{claim}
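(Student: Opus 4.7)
\textbf{Proof plan for Claim~\ref{scl:opsNotConcurrent}.}
The plan is to argue by contradiction: suppose two distinct processes $p$ and $q$ execute lines from the specified set concurrently at some time $t$. Each of the specified lines either writes to $\PawnSet$ (via $\collect$, $\promote$, $\reset$, or $\remove$), writes to $\X$ or $\LSync$, or lies in a stretch of code that occurs only after such a write. Using \Claim{cl:basic:Collect}\refC{scl:CollectOps}, \Claim{cl:variablesChangedByReleaserOnly}, and (where needed) \Claim{cl:ifReleasinglockThenReleaser}\refC{scl:isReleaser}, I will show that both $p$ and $q$ must be releasers of $\L$ at $t$. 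Let $T$ be the $\ctr$-cycle interval containing $t$; by \Claim{cl:@I0-} the precondition for Claims~\ref{cl:proofProperties1}--\ref{cl:proofProperties5} is in place at $\Ib{0}$.

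Next, I will combine \Claim{cl:proofProperties1}\refC{I0}--\refC{R@I1}, \Claim{cl:proofProperties3}\refC{I2-,L},\refC{L,G}, \Claim{cl:proofProperties4}\refC{R@Oi-,Oi+}, \Claim{cl:proofProperties5}\refC{[G,I2+]},\refC{@I2+} into a single timeline for $R(\cdot)$ on $T$: $R(t) = \varnothing$ on $\I{0}$; $R(t) = \{\K\}$ on $\I{1}$; $R(t) = \{\K,\Q\}$ on $[\Ib{2},\lambda)$; $R(t) = \{\B\}$ on $[\lambda,\gamma)$; and $R(t) = \{\P_i\}$ on each pairwise-disjoint $\Omega_i$. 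The only subinterval in which $|R(t)| > 1$ is $[\Ib{2},\lambda)$, so for any other subinterval concurrency would require two processes executing the specified lines while at most one of them is a releaser, contradicting the paragraph above.

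The heart of the proof is therefore the subinterval $[\Ib{2},\lambda)$. Here I will use \Claim{cl:proofProperties3}\refC{aOrbCollect}, which says that exactly one of $\K,\Q$ calls $\doCollect{}$ during $[\Ib{2},\gamma]$, and an inspection of Figures~\ref{fig:KsRelease}, \ref{fig:QsGetLock}, and \ref{fig:QsRelease} to verify that, before reaching $\Line{hRelease:setT}$, neither $\K$ nor $\Q$ executes any line in the specified set other than those inside $\doCollect{}$; in particular, calls to $\doPromote{}$ by $\K$ or $\Q$ go through \Line{hRelease:callPromote}, which occurs strictly after $\lambda$. Since only one of $\K,\Q$ performs $\doCollect{}$, and the $\doCollect{}$ call is executed sequentially by that single process, no two distinct processes can execute any specified line concurrently during $[\Ib{2},\lambda)$.

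Finally, I will close the remaining subintervals: on $[\lambda,\gamma)$, only $\B$ is a releaser and $\B$ alone executes lines~\ref{hRelease:readX}--\ref{hRelease:callPromote} and (if reached) the $\doPromote{\B}$ lines; on each $\Omega_i$, only $\P_i$ is a releaser and the only specified lines it may execute are inside $\doPromote{\P_i}$ called from \Line{release:callPromote} (by Figure~\ref{fig:PisRelease}). These time ranges are pairwise disjoint, and distinct $\ctr$-cycles are disjoint by construction, yielding the contradiction. The main obstacle will be the $[\Ib{2},\lambda)$ analysis: I will need to be careful that the non-collector in $\{\K,\Q\}$ truly has no way to touch any specified line before its own $\Line{hRelease:setT}$, which is handled by appealing to the code structure captured in Figures~\ref{fig:KsGetLock}--\ref{fig:QsRelease} rather than by redoing the case split from scratch.
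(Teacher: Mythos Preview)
Your proposal is correct and follows essentially the same route as the paper: show via \Claim{cl:ifReleasinglockThenReleaser}\refC{scl:isReleaser} that only releasers execute the specified lines, invoke \Claim{cl:@I0-} and the $R(\cdot)$ timeline from Claims~\ref{cl:proofProperties1}, \ref{cl:proofProperties3}, \ref{cl:proofProperties5} to reduce to the interval $[\Ib{2},\lambda)$, then use \Claim{cl:proofProperties3}\refC{aOrbCollect} for \doCollect{} and the code structure in Figures~\ref{fig:KsGetLock}--\ref{fig:QsRelease} to check that neither $\K$ nor $\Q$ reaches \doPromote{} or lines~\ref{hRelease:readX}--\ref{hRelease:callPromote} before its own \Line{hRelease:setT}. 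Your explicit treatment of the post-$\lambda$ intervals is more detailed than the paper's, which simply observes that those intervals have at most one releaser, but the substance is the same.
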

\begin{proof}
From \Claim{cl:ifReleasinglockThenReleaser}\refC{scl:isReleaser}, it follows that only a releaser of \L\ can execute any of these lines.
From \Claim{cl:@I0-} it follows that at $\Ib{0}$, $\X = \LSync = \bot$ and \PawnSet\ is candidate-empty, and $R(\Ib{0}) = \varnothing$.
Then from Claims~\ref{cl:proofProperties1}\refC{I0},~\ref{cl:proofProperties1}\refC{R@I1},~\ref{cl:proofProperties3}\refC{I2-,L},~\ref{cl:proofProperties3}\refC{L,G},~\ref{cl:proofProperties5}\refC{[G,I2+]}, and ~\ref{cl:proofProperties5}\refC{@I2+} it follows that \L\ has more than one releaser only during $[\Ib{2},\lambda)$ for some \ctr-cycle interval $T$.
More specifically, there are two releasers of \L\ only during $[\Ib{2},\lambda)$, and the releasers are \K\ and \Q.
From \Claim{cl:proofProperties3}\refC{aOrbCollect} it follows that a \doCollect{} is executed only by \K\ or \Q\ but not both.
Then it follows immediately that lines of \doCollect{} are not executed concurrently.
Since $\lambda = \minimum{\pt{\K}{hRelease:setT}, \pt{\Q}{hRelease:setT}}$, it follows from an inspection of Figures~\ref{fig:KsGetLock},~\ref{fig:KsRelease},~\ref{fig:QsGetLock},~\ref{fig:QsRelease} and the code, that processes \K\ and \Q\ have not executed a call to \doPromote{} or lines~\ref{hRelease:readX}-\ref{hRelease:callPromote} of \helpRelease{}, before \pt{\K}{hRelease:setT} and \pt{\Q}{hRelease:setT} respectively.
Then none of the lines chosen in the claim are executed concurrently, and thus our claim holds.
\end{proof}

\begin{lemma} \label{lemma:ARMLockTree:complexity}
\begin{enumerate}[(a)]
 \item Both \helpRelease{p} and \doPromote{p} have \Order{1} RMR complexity. \label{scl:hReleaseAndPromote:complexity}
 \item \doCollect{p} has \Order{n} RMR complexity. \label{scl:doCollect:complexity}
 \item \abort{p} has \Order{n} RMR complexity. \label{scl:abort:complexity}
 \item If a call to \release{p}{j} returns \True, then $p$ incurs \Order{n} RMRs during \release{p}{j}.\label{scl:releaseFast:complexity}
 \item If a call to \release{p}{j} returns \False, then $p$ incurs \Order{1} RMRs during \release{p}{j}.\label{scl:releaseSlow:complexity}
\end{enumerate}
\end{lemma}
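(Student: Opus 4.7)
The plan is to walk through each method and account separately for (i) the shared read/write/\CAS{} operations that process $p$ performs directly and (ii) the subroutine calls to the \PawnSet\ object, which is implemented via \UC{\APArray{n}}. For the latter, recall that \collect{}, \promote{}, \remove{}, and \reset{} are invoked through \performFast{}, contributing \Order{1} RMRs each, while \cUpdate{} goes through \performSlow{}, contributing \Order{n} RMRs. To use the \Order{1} bound for \performFast{}, I first need to check that none of these \performFast{} invocations overlap. This is precisely \Claim{scl:opsNotConcurrent}, which already tells us that the lines containing \collect{}, \promote{}, \remove{}, and \reset{} operations on \PawnSet\ are executed sequentially: only releasers execute them, there are at most two concurrent releasers (during $[\Ib{2},\lambda)$) and during that interval both releasers are busy with the synchronization in \Line{hRelease:setT} rather than with any of these operations.

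For Part~\refC{scl:hReleaseAndPromote:complexity}, I inspect \helpRelease{p} and \doPromote{p}: each consists of a bounded number of local computations, \CAS{}/\Read{} operations on shared \CAS\ objects (\X, \LSync, \ctr), and calls to \PawnSet.\remove{}, \PawnSet.\promote{}, \PawnSet.\reset{}, and \PawnSet.\collect{} via \performFast{}. Summing \Order{1} RMRs for each gives the total \Order{1}. For Part~\refC{scl:doCollect:complexity}, \doCollect{p} iterates $n$ times over the \Apply\ array, costing $n$ RMRs, and finishes with a single \Order{1} call to \PawnSet.\collect{} via \performFast{}, for a total of \Order{n}.

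For Part~\refC{scl:abort:complexity}, I case on the control flow of \abort{p}. If $p$ returns at \Line{abort:returnbot} or after the \cUpdate{} call in \Line{abort:ifHead}, it incurs \Order{1} RMRs plus the cost of \PawnSet.\cUpdate{} via \performSlow{}, which is \Order{n}. Otherwise $p$ is the queen: it then executes a constant number of direct operations, a call to \doCollect{} (\Order{n} by Part~\refC{scl:doCollect:complexity}), and a call to \helpRelease{} (\Order{1} by Part~\refC{scl:hReleaseAndPromote:complexity}), giving \Order{n}. For Parts~\refC{scl:releaseFast:complexity} and~\refC{scl:releaseSlow:complexity}, I do the same case analysis on \release{p}{j}. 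The local $r$ is initialized to \False\ and is only flipped to \True\ when $p$ is king and succeeds at $\X.\CAS{\bot,j}$ in \Line{release:setX}; in that branch $p$ additionally calls \doCollect{i} (\Order{n}) and \helpRelease{i} (\Order{1}), yielding \Order{n} total, which proves Part~\refC{scl:releaseFast:complexity}. In all other branches ($p$ is king and either succeeds $\ctr.\CAS{1,0}$ or fails the \X.\CAS{}; $p$ is queen and calls \helpRelease{}; $p$ is a promoted pawn and calls \doPromote{}) the method performs only \Order{1} direct operations and invokes at most one of \helpRelease{} or \doPromote{}, each \Order{1}, so the total is \Order{1}, proving Part~\refC{scl:releaseSlow:complexity}.

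The main obstacle in this proof is not the bookkeeping of RMRs on \Apply, \ctr, \X, and \LSync, but rather the justification that every \performFast{} invocation on \PawnSet\ is indeed sequential; this is exactly what \Claim{scl:opsNotConcurrent} is set up to supply, and any \performFast{} invocation we rely on for \Order{1} cost must be shown to occur at a point where the caller is the unique releaser of \L\ (or is \K\ or \Q\ prior to $\lambda$ but not executing these lines). Once that hygiene is in place, the rest is a routine code inspection guided by Figures~\ref{fig:KsRelease}, \ref{fig:QsRelease}, \ref{fig:AsHelpRelease}, \ref{fig:BsHelpRelease}, \ref{fig:BsPromote}, and \ref{fig:PisRelease}, which already enumerate the finite set of code paths that a king, queen, aborting queen, releaser, or promoted pawn can take.
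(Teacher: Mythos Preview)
Your proposal is correct and follows essentially the same approach as the paper: invoke Claim~\ref{scl:opsNotConcurrent} to justify that all \performFast{} calls on \PawnSet\ are non-concurrent and hence \Order{1}, note that the single \performSlow{} call in \Line{abort:ifHead} is \Order{n}, and then derive each part by straightforward code inspection and composition. The paper's proof is terser (it does not spell out the branch-by-branch case analysis for \abort{} and \release{}{}), but the structure and the key dependency on Claim~\ref{scl:opsNotConcurrent} and Lemma~\ref{theorem:UC} are identical.
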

\begin{proof}
\textbf{Proof of~\refC{scl:hReleaseAndPromote:complexity} and~\refC{scl:doCollect:complexity}: }
As per the properties of object \UC{\APArray{n}} (Lemma~\ref{theorem:UC}), an operation performed using the \performFast{} method has \Order{1} RMR complexity, as long as it is not executed concurrently with another \performFast{} method call.
Since \PawnSet\ is an instance of object \UC{\APArray{n}}, where operations in lines~\ref{collect:updateAll}, ~\ref{hRelease:collectFixOther},~\ref{promote:FR12},~\ref{promote:collectFixSelf}, and~\ref{promote:resetBackpack} are executed using the \performFast{} method, and each of these operations are not executed concurrently (by Claim~\refC{scl:opsNotConcurrent}), it then follows that all of these operations have \Order{1} RMR complexity.
Then Part~\refC{scl:hReleaseAndPromote:complexity} follows immediately from an inspection of methods \helpRelease{} and \doPromote{}.
Since method \doCollect{} has a loop of size $n$ that incurs a constant number of RMRs in each iteration, Part~\refC{scl:doCollect:complexity} follows.

\textbf{Proof of~\refC{scl:abort:complexity},~\refC{scl:releaseFast:complexity} and~\refC{scl:releaseSlow:complexity}: }
As per the properties of object \UC{\APArray{n}} (Lemma~\ref{theorem:UC}), an operation performed using the \performSlow{} method has \Order{n} RMR complexity, where $n$ is the maximum number of processes that can access the object concurrently.
Since the operation in \Line{abort:ifHead} is executed using the \performSlow{} method, the operation has \Order{n} RMR complexity.
Since \helpRelease{} and \doPromote{} have an RMR complexity of \Order{1} (by Part~\refC{scl:hReleaseAndPromote:complexity}), and \doCollect{} has an RMR complexity of \Order{n}  (by Part~\refC{scl:doCollect:complexity}), it then follows from an inspection of \abort{}, that a call to \abort{} has an RMR complexity of \Order{n}.
Thus Part~\refC{scl:doCollect:complexity} follows.

If a call to \release{p}{j} returns \True, then $p$ does execute a call to \doCollect{p} in \Line{release:doCollect}, else it does not.
Then from an inspection of \release{p}{j}, Parts~\refC{scl:releaseFast:complexity} and~\refC{scl:releaseSlow:complexity} follow immediately.
\end{proof}

Lemma~\ref{theorem:ARMLockArray} follows from Lemmas~\ref{cl:MethodsWaitfree}, \ref{cor:ARMLock:mutualExclusion}, \ref{cl:getLock}, \ref{cl:ARMLock:exitway:waitfree}, \ref{cl:ARMLock:starvationfree}, \ref{lemma:ARMLockTree:transferable}, and \ref{lemma:ARMLockTree:complexity}.



\section{The Tree Based Randomized Abortable Lock}
\label{sec:appendix:ARLockTree}

\subsection{Implementation / Low Level Description}\label{sec:ARLockTree:Implementation}

We assume that the tree structure \tree\ provides a function \nodeOnPath{}, such that, for a leaf node \myleaf{} and integer $\ell$, the function \nodeOnPath{$\myleaf{},\ell$} returns a pair \pair{u}{i}, where $u$ is the $\ell$-th node on the path from \myleaf{} to the root node, and $i$ is the index of the child node of $u$ that lies on the path.

We now describe the implementation of the abortable lock (see Figure~\ref{fig:ARMEAlgorithm}).

\begin{classfigure}[!htbp]
\setcounter{AlgoLine}{0}
\begin{algo}{Implementation of the abortable lock} \label{Algo:AbortRandomMutEx}
  \Define \Node: struct \{                              \label{Node_Struct_Begin}
     $\L$: \ARMLockArray{\Delta}
    \}                                          \label{Node_Struct_End}\;
  \shared
    \qquad \tree: complete $\Delta$-ary tree of height $\Delta$ and node type \Node\ \;
  \local                                      \label{Local_Types_Begin}
    \qquad $v$: \Node\ \Init $\bot$; 
    \qquad $i,\ell,k$: \Int\ \Init $0$;
    \qquad $abort\_signal$: \Bool\ \Init $\False$;\;\;
    \Define function \tree.\nodeOnPath{\Node\ $\myleaf{},\Int\ \ell$}: returns a pair \pair{u}{i}, where $u$ is the $\ell$-th node on the path from \myleaf{} to the root node of \tree, and $i$ is the index of the child node of $u$ that lies on the path.\;
\end{algo}
 \begin{minipage}{0.57\textwidth}
 \LinesNumbered
 \begin{algorithm}[H]\Method{lock$_p$()}
 %
  \label{Func:lock}
     \While{$\ell < \tree.height $}{     \label{lock:BeginLoop}
        \tuple{v,i} \la\ \tree.\nodeOnPath{$\leaf_p,\ell+1$}\; \label{lock:getNode}
        $val$ \la\ $v$.\L.\lock{i}\;                      \label{lock:callGetLock}
        \lIf {$val = \infty$}  {                           \label{lock:ifCaptured}
               $\ell \la\ \ell +1$\;                   \label{lock:incrementL}
        }
        \lIf {$val \notin \Set{\bot,\infty}$}  {           \label{lock:ifTransfered}
          $\ell$ \la\ $val$ \;          \label{lock:receiveTransfer}
        }
        \If{$abort\_signal$ = \True} { \label{lock:ifAbortSignal}
              \release{p}{} \;                            \label{lock:callRelease}
              \return $\bot$ \;                                 \label{lock:returnbot}
        }
    }                                                           \label{lock:EndLoop}
   \return $\infty$ \;                                       \label{lock:returninfty}
 \end{algorithm}
 \end{minipage}
 \begin{minipage}{0.42\textwidth}
 \LinesNumbered
 \begin{algorithm}[H]\Method{release$_p$()}
 \label{Func:release}
     \While {$k \leq \ell$} {                   \label{release:BeginLoop}
         \tuple{v,i} \la\ \tree.\nodeOnPath{$\leaf_{p},k$} \; \label{release:getNode}
         \lIf{$v$.\L.\release{i}{$\ell$}} {                   \label{release:callRelease}
           \breakLoop \;                                        \label{release:ifTransfered}
         }
         $k$ \la\ $k+1$ \;                                 \label{release:incrementK}
       }                                       \label{release:EndLoop}
 \end{algorithm}
 \end{minipage}
\caption{Implementation of the abortable lock}\label{fig:ARMEAlgorithm}
\end{classfigure}

\bparagraph{Description of the \lock{p} method}
Suppose process $p$ executes a call to \lock{p}.
With every iteration of the while-loop, process $p$ captures at least one node on its path from \myleaf{p} to \tree.\root.
Suppose $p$ executes an iteration of while-loop (lines~\jref{lock:BeginLoop}-\jref{lock:EndLoop}) and $\ell_p = k$ at \Line{lock:BeginLoop} for some arbitrary integer $k$.
In \Line{lock:getNode}, process $p$ determines the $k$-th node (say $u$) on \mypath{p}  and the index (say $r$) of $u$'s child node that lies on \mypath{p}, and stores them in local variables $v_p$ and $i_p$.
The variables $v_p$ and $i_p$ are unchanged during the rest of the iteration.
In \Line{lock:callGetLock}, process $p$ attempts to capture $u.\L$, and thus node $u$ by executing a call to $u$.\L.\lock{} with pseudo-ID $r$.
If $p$'s $u$.\L.\lock{r} returns an integer value (say $j$) then $p$ has been transferred all nodes on its path up to height $j$
(we ensure $j \geq \h{u}$).
If $p$'s $u$.\L.\lock{} returns $\infty$ then $p$ has captured lock $u$.\L.
In lines~\ref{lock:incrementL} and~\ref{lock:receiveTransfer}, $p$ stores the height of the highest captured node in its local variable $\ell_p$.
In line~\ref{lock:ifAbortSignal}, $p$ checks whether it has received a signal to abort.
In this case $p$ releases all its captured nodes by executing a call to \release{p}{} in line~\ref{lock:callRelease} and then returns from its call to \lock{p} in line~\ref{lock:returnbot} with value $\bot$.
Otherwise $p$ continues its while-loop.
On completing its while-loop, $p$ owns the root node, and thus returns with value $\infty$ in \Line{lock:returninfty} to indicate a successful \lock{} call.

\bparagraph{Description of the \release{p}{} method}
Suppose process $p$ executes a call to \release{p}{}.
Let $s$ be the highest node $p$ owns at the beginning of \release{p}{}.
We later prove that $\h{s} = \ell_p$.
During an iteration of the while-loop (lines~\jref{release:BeginLoop}-\jref{release:EndLoop}), process $p$ either releases a node on its path from \myleaf{p} to $s$, or $p$ hands over all remaining nodes that it owns to some process.

Consider the execution of an iteration of the while-loop where $k_p = t$ at \Line{release:BeginLoop} for some integer $t \leq \h{s}$.
In \Line{release:getNode}, process $p$ determines the $t$-th node (say $u$) on \mypath{p} and the index (say $r$) of $u$'s child node that lies on \mypath{p}, and stores them in local variables $v_p$ and $i_p$.
In \Line{release:callRelease}, process $p$ releases $u.\L$, and thus node $u$, by executing a call to $u$.\L.\release{}{\h{s}} with pseudo-ID $r$.
If $p$'s $u$.\L.\release{r}{\h{s}} returns \False\ then $p$ has successfully released lock $u$.\L, and thus node $u$.
If $p$'s $u$.\L.\release{r}{\h{s}} returns \True\ then $p$ has successfully handed over all nodes from $u$ to $s$ on \mypath{p} to some process that is executing a concurrent call to $u$.\L.\lock{}.
If $p$ has handed over all its nodes, then $p$ breaks out of the while-loop in \Line{release:ifTransfered}, and returns from its call to \release{p}{}.
If $p$ has not handed over all its nodes then $p$ increases $k_p$ in \Line{release:incrementK} and continues its while-loop.

Notice that our strategy to release node locks is to climb up the tree until all node locks are released or a hand over of remaining locks is made.
Climbing up the tree is necessary (as opposed to climbing down) in order to hand over node locks to a process, say $q$, such that the handed over nodes lie on $\mypath{q}$.
There is however a side effect of this strategy which is as follows:
Suppose $p$ owns nodes $v$ and $u$ on \mypath{p} such that $\pair{u}{i} = \nodeOnPath{\myleaf{p},\h{u}}$ and $v$ is the $i$-th child on node $u$.
Now suppose $p$ releases lock $v.\L$ at node $v$.
Since the lock at node $v$ is now released, some process $r \neq p$ may now capture lock $v.\L$ and then proceed to call $u.\L.\lock{i}$.
If process $p$ has not yet released $u$.\L\ by completing its call to $u$.\L.\release{i}{}, then we have a situation where a call to $u.\L.\lock{i}$ is made before a call to $u$.\L.\release{i}{} is completed.
Since there can be at most one owner of lock $v$.\L\ there can be at most one such call to $u.\L.\lock{i}$ concurrent to $u$.\L.\release{i}{}.
This is precisely the reason why we designed object \ARMLockArray{n} to be accessed by at most $n+1$ processes concurrently.

\subsection{Analysis and Proofs of Correctness}\label{sec:ARMLockStateless:Analysis}
In this section, we formally prove all properties of our abortable lock for the CC model.
We first, establish the safety conditions on the usage of the object.
\begin{condition}
\label{cond:safety:ARMLockTree}
\begin{enumerate}[(a)]
  \item If process $p$ executes a successful \lock{p} call, then process $p$ eventually executes a \release{p}{} call.
  \item A process calls method \release{}{} if and only if its last access of the lock object was a successful \lock{} call.
  \item Methods \lock{p} and \release{p}{} are called only by process $p$, where $p \in \Set{0,\ldots,N-1}$.
  \item For every \release{p}{} call, there must exist a unique successful \lock{p} call that has been executed. \label{condition:ifReleaseThenExistsLock}
\end{enumerate}
\end{condition}

\bparagraph{Notations and Definitions}
Let $H$ be an arbitrary history of an algorithm that accesses an instance $\L$ of our abortable lock where Condition~\ref{cond:safety:ARMLockTree} is satisfied.
Consider an arbitrary node $u$ on the tree \tree.
Let $\h{u}$ denote the height of node $u$.

A node $u$ is said to be \emph{handed over} from process $p$ to process $q$, when $p$ executes a $v$.\L.\release{}{j} call that returns \True, where $j \geq \h{u} > \h{v}$ and $q$ executes a concurrent $v$.\L.\lock{} call that returns $j$.
Process $p$ is said to start to \textit{own} node $u$ when $p$ captures $u$.\L\ or when it is handed over node $u$ from the previous owner of node $u$.
Process $p$ \emph{ceases} to own node $u$ when $p$ releases $u$.\L, or when $p$ hands over node $u$ to some other process.


\begin{claim}
\label{claim:helpful:basic}
Consider an arbitrary process $p$ and some node $u$ on \mypath{p}.
\begin{enumerate}[(a)]
  \item If $p$ executes a $u$.\L.\lock{} operation that returns value $j \notin \Set{\bot,\infty}$, then $j \geq \h{u}$. \label{scl:TransferredLockOfGreaterNodesOnly}
  \item The value of $\ell_p$ is increased every time $p$ writes to it. \label{scl:LpIncreaseEveryTimeWrittenTo}
  \item If $\ell_p = k$, then process $p$ owns all nodes on \mypath{p} up to height $k$. \label{scl:OwnerOfAllNodesUptoLp}
\end{enumerate}

\end{claim}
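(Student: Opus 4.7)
I plan to prove the three parts in the stated order, since (b) will use (a) and (c) will use both. For Part (a), suppose $p$'s call to $u.\L.\lock{i}$ returns $j \notin \{\bot,\infty\}$. By the specification of \release{}{} in \ARMLockArray{\Delta} stated in Section~\ref{sec:ARLockTree} (the direction that if a \lock{} eventually returns $j$, then the matching concurrent \release{}{$j$} returns \True), such a return value must be produced by some process $q$ that executed a concurrent $u.\L.\release{i'}{j}$. That invocation was issued from inside $q$'s own tree-level \release{q}{} at Line~\ref{release:callRelease}, where $j$ equals $\ell_q$, $u$ is the $k$-th node on $\mypath{q}$ (so $\h{u}=k$), and the while-loop enforces $k \le \ell_q$. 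Hence $j = \ell_q \ge k = \h{u}$.

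Part (b) is a direct inspection of \lock{p}: $\ell_p$ is written only in Lines~\ref{lock:incrementL} and~\ref{lock:receiveTransfer}. Line~\ref{lock:incrementL} is trivially strictly increasing. Line~\ref{lock:receiveTransfer} sets $\ell_p \gets val$ only after the if-test $val \notin \{\bot,\infty\}$; and since $val$ was just returned by $u.\L.\lock{i}$ for the node $u$ at height $\ell_p+1$ (obtained at Line~\ref{lock:getNode} from $\tree.\nodeOnPath{\leaf_p,\ell_p+1}$), Part (a) gives $val \ge \h{u} = \ell_p + 1 > \ell_p$.

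Part (c) follows by induction on successive writes to $\ell_p$ during $p$'s passage. The underlying monotonicity observation is that a node is relinquished only inside \release{p}{} (either by releasing its lock or by handing it over), so $p$'s owned set on $\mypath{p}$ grows monotonically throughout \lock{p}; this is what lets the induction carry previously-owned nodes forward. The base case $\ell_p = 0$ is vacuous. For the inductive step, assume a write changes $\ell_p = k$ to $\ell_p = k' > k$ (by Part (b)). If the write was at Line~\ref{lock:incrementL}, then the preceding $u.\L.\lock{i}$ returned $\infty$, so $p$ captured $u.\L$ for $u$ at height $k+1 = k'$. If it was at Line~\ref{lock:receiveTransfer}, then that \lock{} returned $k'$, and by the definition of ``handed over'' in Section~\ref{sec:ARLockTree:Implementation}, $p$ receives every node from $u$ (height $k+1$) up to height $k'$. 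Combining either case with the inductive hypothesis gives ownership of every node on $\mypath{p}$ up to height $k'$. The main obstacle is Part (a): we must tie the value returned by a child \lock{} back to a specific concurrent \release{}{} on the same object and read off its $j$-argument from the tree-level release code; once Part (a) is in hand, Parts (b) and (c) are essentially code walks.
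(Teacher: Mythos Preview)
Your proposal is correct and follows essentially the same approach as the paper: Part~(a) via the \ARMLockArray{\Delta} specification to trace the returned $j$ back to a concurrent $u.\L.\release{}{j}$ issued at Line~\ref{release:callRelease} with $j=\ell_q\geq k_q=\h{u}$; Part~(b) by code inspection using Part~(a); and Part~(c) by induction on the writes to $\ell_p$, splitting on Lines~\ref{lock:incrementL} and~\ref{lock:receiveTransfer}. One small sharpening: in the \ref{lock:receiveTransfer} case of Part~(c), the paper distinguishes that $p$ \emph{captures} $u$ itself (since the \lock{} call returned a non-$\bot$ value) and is \emph{handed over} the nodes at heights $\h{u}+1,\ldots,k'$ (per the definition of ``handed over,'' which applies only to nodes strictly above the node where the \release{}{} was issued); your phrasing ``receives every node from $u$ \ldots'' blurs this slightly but is otherwise fine.
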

\begin{proof}
%

\textbf{Proof of~\refC{scl:TransferredLockOfGreaterNodesOnly} : }
Then from the properties of object \ARMLockArray{\Delta} (Lemma~\ref{theorem:ARMLockArray}), it follows that some process (say $q$) executed a concurrent $u$.\L.\release{}{j} operation.
Then from the code structure, $q$ executed a $u$.\L.\release{}{j} in \Line{release:callRelease}, where $\ell_q = j$.
Then $q$ also executed a \tree.\nodeOnPath{$\myleaf{q},k$} operation in \Line{release:getNode} that returned \pair{u}{i}, for some $i$, such that $\h{u} = k_q$ (from the semantics of the \nodeOnPath{} method).
Since $j = \ell_q \geq k_q = \h{u}$, our claim follows.

\textbf{Proof of~\refC{scl:LpIncreaseEveryTimeWrittenTo}: }
Process $p$ writes to its local variable $\ell_p$ only in lines~\ref{lock:incrementL} and~\ref{lock:receiveTransfer}.
Clearly, $p$ increases $\ell_p$ every time it executes \Line{lock:incrementL}.
Now, suppose $p$ executes line~\ref{lock:receiveTransfer} where it writes the value of $val_p$ to $\ell_p$, where $v_p = u$, for some node $u$.
Since $p$ satisfies the if-condition of \Line{lock:receiveTransfer} and the \ARMLockArray{\Delta}\   method \lock{} only returns a value in $\Set{\bot,\infty} \cup \N$, it follows that $p$'s call to $u$.\L.\lock{} returned a non-$\Set{\bot,\infty}$ value.
Then from Part~\refC{scl:TransferredLockOfGreaterNodesOnly}, $val_p \geq \h{u}$.
Since $p$ also executed a \tree.\nodeOnPath{$\myleaf{p},b$} operation in \Line{lock:getNode}, where $b = \ell_p+1$ that returned \pair{u}{i}, for some $i$, such that $\h{u} = b$ (from the semantics of the \nodeOnPath{} method), it follows that $val_p \geq \h{u} = \ell_p + 1$.
Then, $p$ increases $\ell_p$ when $p$ writes $val_p$ to $\ell_p$ in \Line{lock:receiveTransfer}.

\textbf{Proof of~\refC{scl:OwnerOfAllNodesUptoLp}: }
Let $t^i$ be the point in time such that $p$ writes to its local variable $\ell_p$ for the $i$-th time.
We prove our claim by induction over $i$

\textbf{Basis ($i = 0$): }
Since the initial value of $\ell_p$ is $0$ and $\ell_p$ is written to for the first time only at $t^1 > t^0$, the claim holds.

\textbf{Induction step ($i > 0$): }
Let the value of $\ell_p$ be $j$ after the $(i-1)$-th write to it.
Then from the induction hypothesis, $p$ owns all nodes on \mypath{p} up to height $j$.
Consider the iteration of the while-loop during which $p$ writes to $\ell_p$ for the $i$-th time, and specifically the  \tree.\nodeOnPath{$\myleaf{p},\ell+1$} operation in \Line{lock:getNode}.
Since $\ell_p = j$, at the beginning of this while-loop iteration, it follows from the semantics of the \nodeOnPath{} operation, that the operation returned the pair \pair{u}{i}, for some $i$, where $\h{u} = j+1$.
Now, process $p$ writes to its local variable $\ell_p$ only in lines~\ref{lock:incrementL} and~\ref{lock:receiveTransfer}.

\textbf{Case a - } $p$ writes to $\ell_p$ in line~\ref{lock:incrementL}.
Then $p$ increased $\ell_p$ from $j$ to $j+1$ in \Line{lock:incrementL}.
Then, to prove our claim we need to show that $p$ owns the node with height $j+1$ on \mypath{p}.
Since $p$ satisfies the if-condition of \Line{lock:incrementL}, it follows from the code structure that $p$'s $u$.\L.\lock{} method in \Line{lock:callGetLock} returned the special value $\infty$, where $v_p = u$.
Since $\h{u} = j+1$, and $p$ successfully captured lock $u$.\L, it follows that $p$ owns the $j+1$-th node on \mypath{p}.

\textbf{Case b - } $p$ writes to $\ell_p$ in line~\ref{lock:receiveTransfer}.
Let $val_p = x$ when $p$ writes to $\ell_p$ in line~\ref{lock:receiveTransfer}.
From Part~\refC{scl:LpIncreaseEveryTimeWrittenTo}, it follows that $\ell_p$ is increased every time it is written to, and therefore $val_p = x > \ell_p$ when $p$ writes to $\ell_p$ in line~\ref{lock:receiveTransfer}.
Thus, to prove our claim we need to show that $p$ owns all nodes on \mypath{p} with heights in the range $\Set{j,\ldots,x}$.
Since $p$ satisfies the if-condition of \Line{lock:receiveTransfer} and the \ARMLockArray{\Delta}\   method \lock{} only returns a value in $\Set{\bot,\infty} \cup \N$, it follows that $p$'s call to $u$.\L.\lock{} returned a non-$\Set{\bot,\infty}$ value.
Thus, $p$ has captured $u$.\L\ and now owns node $u$.
It also follows that $p$ has been handed over all nodes on \mypath{p} with heights in the range $\Set{\h{u}+1,\ldots,x}$.
Since $\h{u} = j$, our claim follows.
\end{proof}

A process is said to \emph{attempt to capture node} $u$ if it executes a $u.\L$.\lock{} method in line~\ref{lock:callGetLock}.

\begin{claim}\label{clm:registered_at_same_node}
\begin{enumerate}[(a)]
 \item If two distinct processes $p$ and $q$ attempt to capture node $v$, then their local variables $i$ have different values. \label{scl:safetyForARMLockArray}
 \item A node has at most one owner at any point in time. \label{scl:NodeMutex}
\end{enumerate}
\end{claim}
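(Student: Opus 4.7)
The plan is to prove parts (a) and (b) together by strong induction on the height $h$ of node $v$. Part (a) at height $h$ will be deduced from part (b) at height $h-1$, and part (b) at height $h$ will follow from part (a) at height $h$ together with the mutual exclusion guarantee of Lemma~\ref{theorem:ARMLockArray}(a) applied to $v.\L$.

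The base case $h=1$ is direct: a height-1 node $v$ has $\Delta$ children, each a leaf uniquely assigned to some process, and a process $p$ with $\leaf_p$ a child of $v$ attempts to capture $v$ with local variable $i$ equal to the index of $\leaf_p$ among $v$'s children (by the semantics of \nodeOnPath{} invoked in \Line{lock:getNode} with $\ell_p=0$). Distinct processes have distinct leaves, hence distinct $i$. This gives part (a); part (b) then follows from the mutex property of $v.\L$, once I verify that the safety Conditions~\ref{cond:safety:ARLockArray}(a)--(c) on $v.\L$ hold, which is a short check using part (a) at height 1 and Condition~\ref{cond:safety:ARMLockTree}.

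For the inductive step at height $h>1$, I first establish part (a). Suppose $p$ and $q$ both have pending calls to $v.\L.\lock{i}$ with the same local value of $i$. When $p$ is about to execute \Line{lock:callGetLock}, the loop structure and Claim~\ref{claim:helpful:basic}(b) imply $\ell_p = \h{v}-1$, so the pair \tuple{v,i} obtained in \Line{lock:getNode} has $i$ equal to the index, as a child of $v$, of the $(\h{v}-1)$-th node on $\mypath{p}$. By Claim~\ref{claim:helpful:basic}(c), at that moment $p$ owns every node on $\mypath{p}$ up to height $\h{v}-1$, in particular the child of $v$ on $\mypath{p}$, call it $w_p$. Moreover, $p$ does not release or hand over any node while its $v.\L.\lock{i}$ call is pending (release work only happens in \release{p}{} or along the abort path, both of which begin after the \lock{} call returns), so $p$ continues to own $w_p$ throughout the pending call. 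The same holds for $q$ and its child $w_q$. If $p$ and $q$ share the same $i$, then $w_p=w_q$, so both own a common node at height $h-1$ during the overlap of their \lock{i} calls, contradicting part (b) at height $h-1$.

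With part (a) at height $h$ in hand, the safety precondition of \ARMLockArray{\Delta} is satisfied at $v.\L$: Condition~\ref{cond:safety:ARLockArray}(a) is exactly part (a), while (b) and (c) follow from Condition~\ref{cond:safety:ARMLockTree} and the code structure of \lock{p}/\release{p}{}, where a \release{}{} at $v.\L$ is always preceded by a successful \lock{} at $v.\L$ by the owner. Hence Lemma~\ref{theorem:ARMLockArray}(a) applies and gives mutual exclusion at $v.\L$. To transfer this into part (b) I need to identify ownership of $v$ with being in the critical section at $v.\L$: direct captures map to \lock{} calls returning $\infty$, handovers correspond to a concurrent \lock{} returning a non-$\bot$ integer (guaranteed by Lemma~\ref{lemma:ARMLockTree:transferable}), and both cessations map to the completion of a $v.\L.\release{}{}$ call (either releasing or handing over). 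Under this correspondence, at most one owner exists at a time, giving part (b) for height $h$ and completing the induction.

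The main obstacle is the last step: nailing down precisely that ``owns $v$'' is in bijective correspondence with ``currently in the critical section of $v.\L$'' across both the direct-capture and handover cases, since the handover blurs the $\lock{}$/$\release{}{}$ boundary between the old and new owner. A careful case analysis using Lemma~\ref{lemma:ARMLockTree:transferable} to pair up each hand-over on the releaser's side with a concurrent successful \lock{} returning the matching height on the recipient's side, together with Lemma~\ref{theorem:ARMLockArray}(a), will be what makes this identification rigorous.
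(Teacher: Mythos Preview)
Your proposal is correct and follows essentially the same approach as the paper: simultaneous induction on the height $h$, deriving part~(a) at height $h$ from part~(b) at height $h-1$ via Claim~\ref{claim:helpful:basic}\refC{scl:OwnerOfAllNodesUptoLp}, and part~(b) at height $h$ from part~(a) together with the mutual exclusion property of \ARMLockArray{\Delta}. The paper's treatment of your ``main obstacle'' is somewhat terser than yours---it simply observes that handovers are count-neutral (the old owner ceases exactly when the new owner begins), so the number of owners can increase only via a capture of $u.\L$, which by mutual exclusion keeps it bounded by one---but this is the same argument you are sketching via the ownership/critical-section correspondence.
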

\begin{proof}
 We prove our claims for all nodes of height at most $h$, by induction over integer $h$.

\textbf{Basis ($h = 1$)}
Consider an arbitrary node $u$ of height $1$, such that two distinct processes $p$ and $q$ attempt to capture node $u$.
Then processes $p$ and $q$ executed a \nodeOnPath{\pair{\myleaf{p}}{1}} and \nodeOnPath{\pair{\myleaf{q}}{1}} in \Line{lock:getNode}, and received pairs \pair{u}{i} and \pair{u}{j}, and set their local variables $i_p$ and $i_q$ to $i$ and $j$ respectively.
Since $p$ and $q$ are distinct, \myleaf{p} and \myleaf{q} are distinct leaf nodes of tree \T, and thus from the semantics of the \nodeOnPath{} method it follows that $i \neq j$, and thus Part~\refC{scl:safetyForARMLockArray} follows.


Consider an arbitrary node $u$ of height $1$.
From Part~\refC{scl:safetyForARMLockArray}, it follows that no two processes execute a concurrent call to $u$.\L.\lock{i} for the same $i$, and thus it follows from the mutual exclusion property of object \ARMLockArray{\Delta}, that at most one process captures $u$.\L.
By definition, a process can become an owner of node $u$ only if it captures $u$.\L\ or if it is handed over node $u$ from some other process $q$.
If a node $u$ is handed over from some other process $q$, then $q$ also ceases to be the owner of node $u$ at that point, and thus the number of owners of $u$ does not increase upon a hand over.
Thus it follows that node $u$ has at most one owner at any point in time, and thus Part~\refC{scl:NodeMutex} follows.

\textbf{Induction Step ($h > 1$)}
Consider an arbitrary node $u$ of height $h$, such that two distinct processes $p$ and $q$ attempt to capture node $u$.
Then processes $p$ and $q$ executed a \nodeOnPath{\pair{\myleaf{p}}{h}} and \nodeOnPath{\pair{\myleaf{q}}{h}} in \Line{lock:getNode}, and received pairs \pair{u}{i} and \pair{u}{j}, and set their local variables $i_p$ and $i_q$ to $i$ and $j$, respectively.
For the purpose of a contradiction, assume $i=j$.
From the semantics of \nodeOnPath{} method, $i = j$ only if the $(h-1)$-th nodes on \mypath{p} and \mypath{q} are the same (say $w$).
From the induction hypothesis of Part~\refC{scl:NodeMutex} for $h-1$, $w$ has at most one owner at any point in time.
Since $\ell_p = \ell_q = h -1$ when $p$ and $q$ attempt to capture node $u$, it follows from Claim~\ref{claim:helpful:basic}\refC{scl:OwnerOfAllNodesUptoLp}, that $p$ and $q$ own all nodes up to height $h-1$ on their individual paths \mypath{p} and \mypath{q}.
Then $p$ and $q$ are both the owners of $w$ -- a contradiction.
Thus, Part~\refC{scl:safetyForARMLockArray} follows.

Since Part~\refC{scl:safetyForARMLockArray} holds for $h$, Part~\refC{scl:NodeMutex} holds for $h$, as argued in the \textbf{Basis} case.
\end{proof}

\begin{lemma}
The mutual exclusion property is satisfied during history $H$.
\label{cl:ARMLockTree:mutualexclusion}
\end{lemma}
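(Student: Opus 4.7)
The plan is to reduce mutual exclusion on the tree-based lock to mutual exclusion at the root node, using the owner-uniqueness fact already proved in Claim~\ref{clm:registered_at_same_node}\refC{scl:NodeMutex}. Being ``in the Critical Section'' for the tree lock means (by the definitions in Section~\ref{sec:Problem}) that a process $p$ has completed a \lock{p} call with a non-$\bot$ return value and has not yet started its subsequent \release{p}{} call. So the core obligation is: whenever $p$ is in this state, $p$ owns the root of \tree, and at most one process can own the root at any moment.

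First, I would show that a successful \lock{p} call forces $p$ to own the root at the moment \lock{p} returns. Inspection of \lock{p} shows the only non-$\bot$ return is at Line~\ref{lock:returninfty}, which is reached only after the while-loop terminates with $\ell_p \ge \tree.\text{height} = \Delta$. By Claim~\ref{claim:helpful:basic}\refC{scl:LpIncreaseEveryTimeWrittenTo}, writes to $\ell_p$ only increase its value, so when the loop exits we have $\ell_p = \Delta$. By Claim~\ref{claim:helpful:basic}\refC{scl:OwnerOfAllNodesUptoLp}, $p$ owns every node on \mypath{p} up to height $\Delta$, and in particular \tree.\root.

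Second, I would argue $p$ continues to own \root\ throughout its Critical Section. By the definitions, a process ceases to own a node $u$ only by invoking $u.\L.\release{}{}$ successfully or by handing $u$ off via a \release{} call on some ancestor; both require $p$ to have started its \release{p}{} method. Since the Critical Section, by definition, ends strictly before \release{p}{} begins, $p$ retains ownership of \root\ from the moment \lock{p} returns until it exits the Critical Section.

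Finally, applying Claim~\ref{clm:registered_at_same_node}\refC{scl:NodeMutex} to $u = \tree.\root$ gives that at every point in time \root\ has at most one owner, so at most one process can be in its Critical Section at any point during $H$. The main obstacle (really just a bookkeeping point) is verifying that Condition~\ref{cond:safety:ARMLockTree} supplies the safety premises required by the underlying \ARMLockArray{\Delta} lemma invoked inside Claim~\ref{clm:registered_at_same_node}; this reduces to checking that distinct processes attempting to capture the same node use distinct pseudo-IDs, which follows from the uniqueness of the index returned by \nodeOnPath{} along with the induction on $h$ already performed in Claim~\ref{clm:registered_at_same_node}\refC{scl:safetyForARMLockArray}.
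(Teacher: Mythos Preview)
Your proposal is correct and follows essentially the same approach as the paper: show that any process in its Critical Section must own \tree.\root\ (via Claim~\ref{claim:helpful:basic}\refC{scl:OwnerOfAllNodesUptoLp} with $\ell_p=\Delta$), then invoke Claim~\ref{clm:registered_at_same_node}\refC{scl:NodeMutex} on the root to derive a contradiction. Your write-up is in fact more careful than the paper's, which omits the persistence-of-ownership step you make explicit; one minor slip is that a hand-over of node $u$ happens via a \release{} call on a node $v$ with $\h{v}<\h{u}$, i.e.\ a \emph{descendant} of $u$ on the path, not an ancestor, but this does not affect your argument since the only point you need is that ceasing ownership requires $p$ to have started \release{p}{}.
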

\begin{proof}
Assume two processes $p$ and $q$ are in their Critical Section at the same time, i.e., both processes returned a non-$\bot$ value from their last \lock{} call.
Then both processes executed \Line{lock:returninfty} and thus $\ell_p = \ell_q = \tree.height$ holds.
Then from Claim~\ref{claim:helpful:basic}\refC{scl:OwnerOfAllNodesUptoLp} it follows that both $p$ and $q$ own node \tree.\root.
But from Claim~\ref{clm:registered_at_same_node}\refC{scl:NodeMutex}, at most one process may own \tree.\root\ at any point in time -- a contradiction.
\end{proof}

\begin{claim}
Process $p$ repeats the while-loop in \lock{} at most $\Delta$ times. \label{cl:lockLoopRepeatedAtMostDeltaTimes}
\end{claim}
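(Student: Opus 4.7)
The plan is to show that every non-terminating iteration of the while-loop strictly increases the local counter $\ell_p$ by at least one, so that after at most $\tree.height = \Delta$ iterations the loop condition $\ell_p < \tree.height$ fails and the loop exits.

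First, I would fix an arbitrary iteration of the while-loop executed by $p$ and examine the value $val_p$ returned by the call to $v_p.\L.\lock{i_p}$ in \Line{lock:callGetLock}. By the specification of \ARMLockArray{\Delta} (Lemma~\ref{theorem:ARMLockArray}), $val_p \in \N \cup \Set{\infty,\bot}$, and $val_p = \bot$ only when $p$'s call aborted, which in turn happens only if $p$ received an abort signal before or during that call; thus at that point $abort\_signal = \True$ and $p$ exits the loop by returning $\bot$ in \Line{lock:returnbot}. So in any iteration that does not cause $p$ to return, $val_p \in \N \cup \Set{\infty}$, and $p$ writes to $\ell_p$ in exactly one of lines~\ref{lock:incrementL} or~\ref{lock:receiveTransfer}.

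Next, I would invoke Claim~\ref{claim:helpful:basic}\refC{scl:LpIncreaseEveryTimeWrittenTo}, which states that $\ell_p$ strictly increases every time it is written to. Combining this with the previous paragraph, every iteration of the while-loop either terminates the method call (by returning $\bot$ in \Line{lock:returnbot}) or strictly increases $\ell_p$ by at least one.

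Finally, since $\ell_p$ is initialized to $0$ and the loop condition in \Line{lock:BeginLoop} is $\ell_p < \tree.height = \Delta$, after at most $\Delta$ non-terminating iterations we have $\ell_p \geq \Delta$, so the loop test fails and $p$ proceeds to \Line{lock:returninfty}. Hence $p$ executes the body of the while-loop at most $\Delta$ times, as required. The only subtle point is justifying that $val_p = \bot$ forces $abort\_signal = \True$ in the same iteration; this is immediate from Lemma~\ref{theorem:ARMLockArray} (a \lock{} call can fail only by aborting) together with the semantics that the abort signal is persistent once delivered, so no separate argument is needed beyond citing the lemma.
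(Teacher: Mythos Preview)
Your proposal is correct and follows essentially the same approach as the paper: both argue that any iteration that does not terminate the method writes to $\ell_p$ (in \Line{lock:incrementL} or \Line{lock:receiveTransfer}), invoke Claim~\ref{claim:helpful:basic}\refC{scl:LpIncreaseEveryTimeWrittenTo} to conclude $\ell_p$ strictly increases, and then bound the number of iterations by $\tree.height = \Delta$. Your handling of the $val_p = \bot$ case is more explicit than the paper's, which simply appeals to ``the code structure,'' but the underlying reasoning is the same.
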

\begin{proof}
Consider an arbitrary process $p$ that calls \lock{}.
From the code structure of \lock{}, it follows that if $p$ repeats an iteration of the while-loop then $p$ either executed \Line{lock:incrementL} or \Line{lock:receiveTransfer} in its previous iteration.
Then it follows from \Claim{claim:helpful:basic}\refC{scl:LpIncreaseEveryTimeWrittenTo} that $p$ increases $\ell_p$ every time it repeats an iteration of the while-loop.
Since the height of the \tree\ is $\Delta$, our claim follows.
\end{proof}

\begin{lemma}
No process starves in history $H$.
\label{cl:ARMLockTree:starvationfree}
\end{lemma}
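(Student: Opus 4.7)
The plan is to derive a contradiction from the assumption that some process $p$ is stuck forever inside its call to \lock{p}. By Claim~\ref{cl:lockLoopRepeatedAtMostDeltaTimes}, the outer loop of \lock{p} executes at most $\Delta$ times, and each iteration performs a bounded number of local operations plus exactly one call to $v$.\L.\lock{i_p} at a node $v$ on \mypath{p}. Hence $p$ can only starve inside one of these inner $v$.\L.\lock{} calls. Since Lemma~\ref{theorem:ARMLockArray}(a) ensures $v$.\L is itself starvation-free whenever every process accessing $v$.\L takes enough of its own steps, the whole argument reduces to showing that at every node on every process's path each process operating on that node's lock eventually makes enough progress.

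I would establish this by reverse induction on the height $h$ of the node, with the induction hypothesis that for every node $w$ of height $>h$ and every process $q$ accessing $w$.\L, $q$'s current pending call on $w$.\L completes in finitely many of $q$'s own steps. For the base case $h=\Delta$, Lemma~\ref{cl:ARMLockTree:mutualexclusion} guarantees that $\tree.\root.\L$ has at most one owner, who has already returned from $\tree.\root.\L.\lock{}$; this owner finishes its Critical Section in finite time and then executes \release{p}{}, whose body is a loop of at most $\Delta$ iterations each invoking $u$.\L.\release{i}{j}, which is wait-free by Lemma~\ref{theorem:ARMLockArray}(a),(d). Hence the root's owner either releases or hands over $\tree.\root.\L$ in finite time, so by Lemma~\ref{theorem:ARMLockArray}(a) every pending $\tree.\root.\L.\lock{}$ call returns.

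For the inductive step at height $h<\Delta$, every process $q$ accessing a node $u$ of height $h$ is executing $u$.\L.\lock{}, $u$.\L.\release{i}{j}, or the \abort{} branch inside $u$.\L.\lock{}. The latter two are bounded wait-free by Lemma~\ref{theorem:ARMLockArray}(a),(b),(d). The current owner of $u$.\L has returned from $u$.\L.\lock{} and proceeds either to call $u'$.\L.\lock{} at $u$'s parent $u'$ of height $h+1$ (which by IH terminates), continuing up the path and eventually either capturing the root (after which it runs CS and \release{p}{}), or receiving a signal to abort (after which it runs \release{p}{}). In every case the owner eventually invokes $u$.\L.\release{i_p}{\ell_p} in bounded steps, so the starvation-freedom premise of Lemma~\ref{theorem:ARMLockArray}(a) holds at $u$, giving termination of every pending $u$.\L.\lock{} call at height $h$. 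Applied at $h=1$ this contradicts the assumption that $p$ is stuck.

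The main obstacle will be justifying the hand-over step cleanly: a $u$.\L.\release{i}{j} call that returns \True\ transfers ownership to a concurrent $u$.\L.\lock{} call rather than releasing to a clean state, and (as noted after the \lock{p} description) $u$.\L.\release{}{} and a subsequent $u$.\L.\lock{} at the same node can genuinely overlap, which is exactly why \ARMLockArray{\Delta} is designed to admit $\Delta+1$ concurrent participants. I need to confirm that the inductive argument survives this overlap: since \release{}{} is itself wait-free and a successful hand-over atomically installs a new owner whose subsequent behaviour is already covered by the induction hypothesis for $u$'s parent, the stream of progress at $u$.\L is unbroken and the premise of Lemma~\ref{theorem:ARMLockArray}(a) remains satisfied throughout.
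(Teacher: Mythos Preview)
Your argument is correct but takes a considerably more elaborate route than the paper's. The paper dispatches the lemma in essentially three lines: it cites Claim~\ref{clm:registered_at_same_node}(a) to guarantee that at each node no two processes invoke the node lock's \lock{i} method concurrently with the same pseudo-ID~$i$, then applies the starvation-freedom clause of Lemma~\ref{theorem:ARMLockArray} directly at every node, and finishes with Claim~\ref{cl:lockLoopRepeatedAtMostDeltaTimes} to bound the outer while-loop. No induction on height is carried out; the paper simply treats the usage preconditions of the node locks---in particular Condition~\ref{cond:safety:ARLockArray}(\ref{condition:ifLockThenRelease}), that every successful node-lock acquisition is eventually matched by a release---as evidently satisfied by the tree algorithm. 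Your reverse induction on height makes this circular dependency (progress at height~$h$ waits on progress at height~$h{+}1$) explicit and is therefore the more rigorous of the two arguments, and your closing paragraph on hand-over addresses exactly the point the paper glosses over. Conversely, you never invoke Claim~\ref{clm:registered_at_same_node}(a), which is the one precondition the paper does spell out, and your base-case appeal to Lemma~\ref{cl:ARMLockTree:mutualexclusion} would be more direct as an appeal to Claim~\ref{clm:registered_at_same_node}(b).
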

\begin{proof}
Since no two processes execute a concurrent call to $u$.\L.\lock{i} for the same $i$ (from Claim~\ref{clm:registered_at_same_node}~\refC{scl:safetyForARMLockArray}), it follows from the starvation-freedom property of object \ARMLockArray{\Delta}, that a process does not starve during a call to $u$.\L.\lock{} for some node $u$ on its path.

Consider an arbitrary process $p$ that calls \lock{}.
Since $p$ repeats the while-loop in \lock{} at most $\Delta$ times before returning from \Line{lock:returninfty} (follows from \Claim{cl:lockLoopRepeatedAtMostDeltaTimes}), it follows that $p$ starves only if $p$ starves during a call to $u$.\L.\lock{} in \Line{lock:callGetLock} for some node $u$.
As already argued, this cannot happen, and thus our claim follows.
\end{proof}

\begin{lemma}
Process $p$ incurs \Order{\Delta} RMRs during \release{p}{}. \label{cl:ARMLockTree:release:complexity}
\end{lemma}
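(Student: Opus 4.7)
The plan is to bound the number of iterations of the while-loop in \release{p}{} and then apply Lemma~\ref{theorem:ARMLockArray} to bound the RMR cost per iteration.

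First I would observe that the loop variable $k_p$ is initialized to $0$ and strictly increased by one in \Line{release:incrementK} at the end of every iteration that does not break. Since the loop condition is $k_p \leq \ell_p$ and $\ell_p \leq \tree.height = \Delta$ by Claim~\ref{claim:helpful:basic}\refC{scl:OwnerOfAllNodesUptoLp} and the code of \lock{p}, the number of iterations is at most $\Delta + 1 = \Order{\Delta}$. Each iteration performs a constant number of local computations, a call to \tree.\nodeOnPath{}, which is local, and exactly one invocation of $v$.\L.\release{i}{\ell_p}.

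Next I would split iterations according to the return value of $v$.\L.\release{i}{\ell_p}. By Lemma~\ref{theorem:ARMLockArray}(d), every call that returns \False\ contributes only \Order{1} RMRs, and every call that returns \True\ contributes \Order{\Delta} RMRs. By inspection of \Line{release:ifTransfered}, as soon as a call returns \True, $p$ breaks out of the while-loop, so at most one call in the whole execution of \release{p}{} returns \True.

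Summing over all iterations gives at most $\Delta$ calls returning \False, each costing \Order{1} RMRs, plus at most one call returning \True\ and costing \Order{\Delta} RMRs, for a total of $\Order{\Delta}\cdot\Order{1}+\Order{\Delta}=\Order{\Delta}$ RMRs. No step in this argument is subtle; the only point that requires care is verifying that the preconditions of Lemma~\ref{theorem:ARMLockArray}(d) are met at each node $v$, which follows from Claim~\ref{clm:registered_at_same_node}\refC{scl:safetyForARMLockArray} ensuring that no two processes use the same pseudo-ID $i$ at $v.\L$, so the safety conditions in Condition~\ref{cond:safety:ARLockArray} hold at each node lock and Lemma~\ref{theorem:ARMLockArray} applies.
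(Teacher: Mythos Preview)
Your proposal is correct and follows essentially the same approach as the paper's own proof: bound the number of while-loop iterations by $\ell_p+1 \le \Delta+1$, observe that at most one call to $v.\L.\release{i}{\ell}$ can return \True\ (because of the \textbf{break}), and then apply Lemma~\ref{theorem:ARMLockArray}(d) to get $\Order{\Delta}\cdot\Order{1}+\Order{\Delta}=\Order{\Delta}$ total RMRs. The paper's proof is slightly terser and does not explicitly revisit the safety conditions, but the structure is identical; your extra remark about Condition~\ref{cond:safety:ARLockArray} being satisfied via Claim~\ref{clm:registered_at_same_node}\refC{scl:safetyForARMLockArray} is a harmless (and arguably helpful) addition. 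One minor quibble: Claim~\ref{claim:helpful:basic}\refC{scl:OwnerOfAllNodesUptoLp} is not the right citation for $\ell_p\le\Delta$; that bound comes directly from the loop guard in \lock{p}, as you also note.
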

\begin{proof}
Consider $p$'s call to \release{}{}.
Since $\ell_p \leq \tree.height = \Delta$, it follows from an inspection of the code that during \release{}{}, $p$ executes at most $\Delta$ calls to \L.\release{}{} (in \Line{release:callRelease}), and at most one of the \L.\release{}{} calls returns \True.
As per the properties of object \ARMLockArray{\Delta} (Lemma~\ref{theorem:ARMLockArray}), a process incurs \Order{\Delta} RMRs during a call to \L.\release{}{}, if the call returns \True, otherwise \Order{1} RMRs.
Then our claim follows immediately.
\end{proof}

\begin{lemma}
Process $p$ incurs \Order{\Delta} RMRs in expectation during \lock{p}. \label{cl:ARMLockTree:lock:complexity}
\end{lemma}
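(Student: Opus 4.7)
The plan is to bound the expected RMRs of \lock{p} by summing the contributions of the at-most $\Delta$ iterations of the while-loop in lines~\ref{lock:BeginLoop}-\ref{lock:EndLoop}, using the RMR bounds for \ARMLockArray{\Delta} from Lemma~\ref{theorem:ARMLockArray}, together with the fact that an abort signal can only affect one inner invocation of the node lock.

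First, by Claim~\ref{cl:lockLoopRepeatedAtMostDeltaTimes}, $p$ executes at most $\Delta$ iterations of the while-loop. Outside of the embedded calls to $v.\L.\lock{i}$ and \release{p}{}, every line in \lock{} (the \nodeOnPath{} evaluation, the conditionals on lines~\ref{lock:ifCaptured},~\ref{lock:ifTransfered},~\ref{lock:ifAbortSignal}, the updates to $\ell$, and the return in \Line{lock:returninfty}) contributes $\Order{1}$ RMRs per iteration, so these together account for $\Order{\Delta}$ RMRs in the worst case.

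Next I would split the analysis into two cases. If $p$ never receives an abort signal during \lock{p}, then every inner call $v.\L.\lock{i}$ in \Line{lock:callGetLock} completes without $p$ aborting. By Lemma~\ref{theorem:ARMLockArray}(c), each such call incurs $\Order{1}$ RMRs in expectation, and since at most $\Delta$ such calls are made, linearity of expectation gives $\Order{\Delta}$ total. If $p$ does receive an abort signal, I would observe that after $p$ first detects the signal (either inside some $v.\L.\lock{i}$ call or when it reaches \Line{lock:ifAbortSignal} afterwards), $p$ calls \release{p}{} and returns $\bot$, and therefore invokes no further $v.\L.\lock{i}$ call. Hence at most one of the inner \lock{} invocations is aborted; by Lemma~\ref{theorem:ARMLockArray}(c), that single aborted call costs $\Order{\Delta}$ RMRs in expectation, while the preceding (at most $\Delta-1$) non-aborted calls each cost $\Order{1}$ in expectation, again giving $\Order{\Delta}$ overall. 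The subsequent call to \release{p}{} in \Line{lock:callRelease} contributes another $\Order{\Delta}$ RMRs by Lemma~\ref{cl:ARMLockTree:release:complexity}.

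Adding the contributions yields $\Order{\Delta}$ expected RMRs in both cases. The main subtlety, and what I would be most careful about, is justifying that the abort signal can only ``hit'' a single inner \ARMLockArray{\Delta}.\lock{} invocation: this relies on the fact that after the first iteration in which the signal is observed, control exits the loop via lines~\ref{lock:callRelease}-\ref{lock:returnbot}, so no new node-lock attempt is initiated. Once this is pinned down, all other bounds are direct applications of Lemma~\ref{theorem:ARMLockArray} and Lemma~\ref{cl:ARMLockTree:release:complexity}, and the $\Order{\Delta}$ expected bound follows by linearity of expectation.
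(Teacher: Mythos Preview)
Your proposal is correct and follows essentially the same approach as the paper: split into the abort/no-abort cases, use Claim~\ref{cl:lockLoopRepeatedAtMostDeltaTimes} to bound the number of iterations by $\Delta$, and apply Lemma~\ref{theorem:ARMLockArray}(c) to each inner node-lock call (with at most one aborted call costing $\Order{\Delta}$). You are in fact slightly more thorough than the paper, since you also explicitly account for the $\Order{\Delta}$ cost of the \release{p}{} call in \Line{lock:callRelease} via Lemma~\ref{cl:ARMLockTree:release:complexity}.
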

\begin{proof}
A process may or may not receive a signal to abort during \lock{p}.

\textbf{Case a - } $p$ does not receive a signal to abort during \lock{p}.
As per the properties of object \ARMLockArray{\Delta} (Lemma~\ref{theorem:ARMLockArray}), if a process does not receive a signal to abort during a call to \L.\lock{}, then the process incurs \Order{1} RMRs in expectation during the call.
Since $p$ repeats the while-loop in \lock{} at most $\Delta$ times (by \Claim{cl:lockLoopRepeatedAtMostDeltaTimes}), and $p$ does not receive a signal to abort during \lock{p}, it follows that $p$ incurs \Order{\Delta} RMRs in expectation during \lock{p}.

\textbf{Case b - } $p$ receives a signal to abort during \lock{p}.
As per the properties of object \ARMLockArray{\Delta} (Theorem~\ref{theorem:ARMLockArray}), if a process aborts during a call to \L.\lock{}, then the process incurs \Order{\Delta} RMRs in expectation during the call.
Since $p$ repeats the while-loop in \lock{} at most $\Delta$ times (by \Claim{cl:lockLoopRepeatedAtMostDeltaTimes}), and $p$ executes at most one call to $u$.\L.\lock{} after having received an abort signal, it follows that $p$ incurs \Order{\Delta} RMRs in expectation during \lock{p}.
\end{proof}

\begin{lemma}
Method \release{}{} is wait-free. \label{cl:ARMLockTree:release:waitfree}
\end{lemma}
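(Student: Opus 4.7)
The plan is to show that \release{p}{} terminates within a bounded number of $p$'s own steps, which establishes wait-freedom. The argument proceeds in two short steps: bounding the number of loop iterations, and showing each iteration is itself wait-free.

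First, I would observe that the while-loop of lines~\ref{release:BeginLoop}--\ref{release:EndLoop} is controlled by the condition $k \leq \ell$. The local variable $\ell$ is not modified inside \release{p}{} (it was set during the preceding \lock{p}), and we have $\ell \leq \tree.height = \Delta$. The variable $k$ starts at $0$ and is incremented by one in line~\ref{release:incrementK} in every iteration that does not break out in line~\ref{release:ifTransfered}. Hence the loop executes at most $\Delta + 1$ iterations before either falling through the while-condition or hitting the break.

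Second, I would argue that each iteration completes in a bounded number of $p$'s own steps. Within one iteration, $p$ executes \tree.\nodeOnPath{\myleaf{p},k} (a purely local computation, since \tree\ has a fixed deterministic structure), followed by a single call to $v.\L.\release{i}{\ell}$ on an \ARMLockArray{\Delta} object. By Lemma~\ref{theorem:ARMLockArray}(a), object \ARMLockArray{\Delta} satisfies bounded exit, so $p$'s call to $v.\L.\release{i}{\ell}$ returns in a bounded number of $p$'s own steps regardless of the actions of other processes. The remaining lines in the iteration (comparison, branch, increment) contribute only $O(1)$ additional steps.

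Combining the two bounds, \release{p}{} completes in a bounded number of $p$'s own steps independently of the scheduling of other processes, which is precisely the definition of wait-freedom. I do not anticipate any real obstacle here; the proof is essentially a direct appeal to bounded exit of the underlying \ARMLockArray{\Delta} object together with a one-line accounting of the loop bound via $\ell \leq \Delta$.
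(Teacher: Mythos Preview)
Your proposal is correct and follows essentially the same approach as the paper: the paper's proof simply invokes the bounded exit property of \ARMLockArray{\Delta} (Lemma~\ref{theorem:ARMLockArray}(a)) and says the claim follows from an inspection of the code. You have merely made explicit the loop-bound accounting ($k$ increments each iteration with $\ell \leq \Delta$) that the paper leaves as ``inspection.''
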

\begin{proof}
As per the bounded exit property of object \ARMLockArray{\Delta}, method \release{}{} of the object is wait-free.
Then our claim follows immediately from an inspection of the code of \release{}{}.
\end{proof}

\begin{lemma}
The abort-way is wait-free and has \Order{\Delta} RMR complexity.
\label{cl:ARMLockTree:exitway}
\end{lemma}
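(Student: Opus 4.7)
The plan is to show that the abort-way consists of at most three clearly identifiable pieces whose costs are already bounded by earlier lemmas.

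First, I would identify the shape of the abort-way in the tree-based algorithm. Suppose process $p$ receives the abort signal at some point during its \lock{p} call. From an inspection of method \lock{p} (lines~\ref{lock:BeginLoop}--\ref{lock:EndLoop}) there are only two possibilities at the moment the signal is received: either $p$ is inside an invocation $v.\L.\lock{i_p}$ at \Line{lock:callGetLock}, or $p$ is executing one of the constantly many surrounding lines of the current loop iteration. In the former case, the bounded-abort property guaranteed by Lemma~\ref{theorem:ARMLockArray}(a) ensures that the inner $v.\L.\lock{i_p}$ invocation returns (with some value $val$) after a bounded number of $p$'s own steps. In either case $p$ then reaches \Line{lock:ifAbortSignal}, the check succeeds, $p$ invokes \release{p}{} at \Line{lock:callRelease}, and finally returns $\bot$ at \Line{lock:returnbot}. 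In particular $p$ does \emph{not} start a new iteration of the while-loop, so the abort-way includes at most one further inner \ARMLockArray{\Delta}-\lock{} call and exactly one outer \release{p}{} call.

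Wait-freedom then follows immediately: the (possibly interrupted) $v.\L.\lock{i_p}$ call terminates in a bounded number of $p$'s steps by bounded abort (Lemma~\ref{theorem:ARMLockArray}(a)), and \release{p}{} is wait-free by Lemma~\ref{cl:ARMLockTree:release:waitfree}; everything else between them is a constant amount of local code. Hence the abort-way finishes in a bounded number of $p$'s own steps.

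For the RMR bound I would split the RMRs incurred after the signal as follows: (i) the remaining portion of the inner $v.\L.\lock{i_p}$ call, which by Lemma~\ref{theorem:ARMLockArray}(b) is the abort-way of an \ARMLockArray{\Delta} object and therefore incurs $O(\Delta)$ RMRs; (ii) an $O(1)$ number of RMRs for the steps in \lock{p} between returning from the inner call and invoking \release{p}{}; and (iii) the call to \release{p}{}, which by Lemma~\ref{cl:ARMLockTree:release:complexity} incurs $O(\Delta)$ RMRs. Summing these yields $O(\Delta)$ RMRs, establishing the complexity bound.

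The only real subtlety to watch out for is making sure that after the abort signal is delivered, $p$ does not trigger further inner \ARMLockArray{\Delta}-\lock{} calls at other nodes; this is guaranteed by the placement of the $abort\_signal$ check in \Line{lock:ifAbortSignal} \emph{within} the while-loop body but \emph{before} the next iteration begins, so at most one inner \lock{} call is executed after the signal arrives.
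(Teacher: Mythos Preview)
Your proposal is correct and follows essentially the same approach as the paper: decompose the abort-way into (i) at most one remaining inner \ARMLockArray{\Delta}\textnormal{-}\lock{} call, bounded by Lemma~\ref{theorem:ARMLockArray}(a)--(b), plus (ii) a single call to \release{p}{}, bounded by Lemmas~\ref{cl:ARMLockTree:release:waitfree} and~\ref{cl:ARMLockTree:release:complexity}, plus (iii) $O(1)$ glue code, and then observe that the placement of the $abort\_signal$ test in \Line{lock:ifAbortSignal} prevents any further inner \lock{} calls. The paper's proof additionally invokes Claim~\ref{cl:lockLoopRepeatedAtMostDeltaTimes}, but as your argument shows this is not actually needed once the single-inner-call observation is made.
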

\begin{proof}
The abort-way of a process $p$ consists of the steps executed by the process after receiving a signal to abort and before completing its passage.
From Lemma~\ref{cl:ARMLockTree:release:waitfree} and~\ref{cl:ARMLockTree:release:complexity}, method \release{p}{} is wait-free, and has \Order{\Delta} RMR complexity.
From \Claim{cl:lockLoopRepeatedAtMostDeltaTimes}, a process repeats the while-loop in \lock{p} at most $\Delta$ times.
Then from an inspection of the code it follows that a process executes all steps during its passage in a wait-free manner, except the call to $u$.\L.\lock{} in \Line{lock:callGetLock}, and that a process incurs at most \Order{\Delta} RMRs during all these steps.

To complete our proof we now show that if a process has received a signal to abort and it executes a call to $u$.\L.\lock{} in \Line{lock:callGetLock}, for some node $u$, then the process executes $u$.\L.\lock{} in a wait-free manner and incurs \Order{\Delta} RMR during the call, and does not call $v$.\L.\lock{} for any other node $v$.

Suppose that $p$ has received a signal to abort, and $p$ executes a call to $u$.\L.\lock{} call in \Line{lock:callGetLock}.
Since $p$ has received a signal to abort, it follows that $p$ executes the abort-way of the node lock $u$.\L.
As per the properties of object \ARMLockArray{\Delta} (Lemma~\ref{theorem:ARMLockArray}), its abort-way is wait-free and has \Order{\Delta} RMR complexity.
Then $p$ executes the $u$.\L.\lock{} call in \Line{lock:callGetLock} in a wait-free manner and incurs \Order{\Delta} RMR complexity.
It then goes on to satisfy the if-condition of \Line{lock:ifAbortSignal}, and executes a call to \release{}{} in \Line{lock:callRelease} and returns $\bot$ in \Line{lock:returnbot}, thereby completing its abort-way.
Thus, our claim holds.
\end{proof}


Theorem \ref{thm:main} follows from Lemmas~\ref{cl:ARMLockTree:mutualexclusion}, \ref{cl:ARMLockTree:starvationfree}, \ref{cl:ARMLockTree:release:complexity}, \ref{cl:ARMLockTree:lock:complexity}, \ref{cl:ARMLockTree:release:waitfree} and \ref{cl:ARMLockTree:exitway}.

\section{Remaining Proofs of Properties of \ARMLockArray{n}}
\label{sec:Appendix:remainingProofs}
\begin{claim}
Suppose a process $p$ executes a call to \lock{p} during a passage.
The value of $\Role[p]$ at various times is as follows.\

\begin{tabular}{ l | l}
    \hline
    Points in time & Value of $\Role[p]$  \\ \hline
    $\pt{p}{getLock:IncCounter}$ & \Set{\infty,\cKing,\cQueen,\cPawn} \\ \hline
    $[\pt{p}{getLock:awaitAckOrCtrDecrease},\pt{p}{getLock:ifBackpacked}]$ &  \cPawn \\ \hline
    $\pt{p}{getLock:RolePPawn}$ &  \cPPawn\ \\ \hline
    $\ptB{p}{getLock:ifQueen}$ & \Set{\cKing,\cQueen,\cPPawn}  \\ \hline
    $\pt{p}{getLock:awaitX}$ & \cQueen \\ \hline
    $[\pt{p}{getLock:ApplyWantOk},\pt{p}{getLock:ifRolePQueen}]$ & \Set{\cKing,\cQueen,\cPPawn} \\ \hline
\end{tabular}
\label{cl:TableOfRoles:getLock}
\end{claim}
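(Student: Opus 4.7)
\textbf{Proof plan for Claim~\ref{cl:TableOfRoles:getLock}.}
The plan is a pure code-inspection argument, leveraging \Claim{cl:basic:Role}: only process $p$ writes to $\Role[p]$ (part~\refC{scl:RoleOnlyp}); within $p$'s own code the only writes to $\Role[p]$ occur at lines \lref{getLock:IncCounter} (setting it to the result of $\ctr.\inc{}$, which by the specification of $\RCASCounter{2}$ is in $\Set{\bot,\cKing,\cQueen,\cPawn}$), \lref{getLock:RolePPawn} (setting it to $\cPPawn$), and \lref{abort:RolePPawn} (in \abort{}, irrelevant for this claim). Consequently, to compute $\Role[p]$ at any point during \lock{p} it suffices to locate the most recent preceding execution of \lref{getLock:IncCounter} or \lref{getLock:RolePPawn} by $p$ and read off the assigned value, possibly refined by any program condition that had to hold for $p$ to reach the given line.

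I would then walk through the six rows of the table in order. For $\pt{p}{getLock:IncCounter}$ the value is exactly the return of $\ctr.\inc{}$, so $\Role[p] \in \Set{\bot,\cKing,\cQueen,\cPawn}$ (I read the ``$\infty$'' in the first row as a typographical stand-in for the failure value $\bot$ returned by a failed \inc{}). For $[\pt{p}{getLock:awaitAckOrCtrDecrease},\pt{p}{getLock:ifBackpacked}]$ the if-condition of \lref{getLock:ifSoldier} forces $\Role[p]=\cPawn$ on entry, and no intervening line writes to $\Role[p]$, so it remains $\cPawn$ throughout. At $\pt{p}{getLock:RolePPawn}$ the assignment on that line immediately gives $\Role[p]=\cPPawn$.

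For $\ptB{p}{getLock:ifQueen}$ I invoke the role-loop exit condition of lines \lref{getLock:BeginInnerLoop}-\lref{getLock:EndInnerLoop}: $p$ exits only when $\Role[p]\in\Set{\cKing,\cQueen,\cPPawn\}$, and no write occurs between the loop exit and \lref{getLock:ifQueen}. At $\pt{p}{getLock:awaitX}$ the if-condition of \lref{getLock:ifQueen} that led $p$ there pins $\Role[p]=\cQueen$, and again no subsequent line writes $\Role[p]$. Finally, for $[\pt{p}{getLock:ApplyWantOk},\pt{p}{getLock:ifRolePQueen}]$ I observe that none of the lines in this range modify $\Role[p]$, so the value is unchanged from $\ptB{p}{getLock:ifQueen}$, namely $\Set{\cKing,\cQueen,\cPPawn}$.

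I do not anticipate a substantive obstacle: the entire claim is a bookkeeping exercise justified by (i) exclusive writer-access by $p$ to $\Role[p]$ and (ii) enumeration of the (very few) writes to $\Role[p]$ inside \lock{p}, together with the local control-flow conditions needed to reach each point. The only mild care required is the first row, where one must remember that $\ctr.\inc{}$ may fail and return $\bot$ (so the row really lists the union $\Set{\bot,\cKing,\cQueen,\cPawn}$), and the verification that on each interval no shared-memory step of another process can alter $\Role[p]$, which is immediate from \Claim{cl:basic:Role}\refC{scl:RoleOnlyp}.
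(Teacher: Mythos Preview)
Your proposal is correct and follows essentially the same code-inspection approach as the paper: both walk through the rows of the table, using the return set of $\ctr.\inc{}$ for the first row, the if-conditions at lines~\ref{getLock:ifSoldier} and~\ref{getLock:ifQueen} to pin down $\cPawn$ and $\cQueen$ respectively, the role-loop exit condition at line~\ref{getLock:EndInnerLoop} for $\Set{\cKing,\cQueen,\cPPawn}$, and the absence of further writes to $\Role[p]$ to propagate values across intervals. Your explicit appeal to \Claim{cl:basic:Role}\refC{scl:RoleOnlyp} and your observation that the ``$\infty$'' in the first row stands for the failure value $\bot$ of \inc{} (consistent with the statement of \Claim{cl:TableOfRoles}) are both appropriate refinements of what the paper leaves implicit.
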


\begin{proof}
Since the values returned by a \ctr.\inc{} operation are in \Set{\infty,0,1,2} =  \Set{\infty,\cKing,\cQueen,\cPawn}, $\Role[p]$ is set to one of these values in \Line{getLock:IncCounter}.
Hence, $\Role[p] \in \Set{\infty,\cKing,\cQueen,\cPawn}$ at \pt{p}{getLock:IncCounter}.
If $p$ satisfies the if-condition of \Line{getLock:ifSoldier}, then $\Role[p]$ = \cPawn, and $p$ changes $\Role[p]$ next only in \Line{getLock:RolePPawn}.
Hence, $\Role[p] = \cPawn$ during $[\pt{p}{getLock:awaitAckOrCtrDecrease},\pt{p}{getLock:ifBackpacked}]$.
In \Line{getLock:RolePPawn} $p$ changes $\Role[p]$ to \cPPawn\ and does not change $\Role[p]$ thereafter.
Hence, $\Role[p] = \cPPawn$ at $\pt{p}{getLock:RolePPawn}$.

Process $p$ does not change $\Role[p]$ after \Line{getLock:RolePPawn}.
To break out of the getLock loop, $\Role[p] \in \Set{\cKing,\cQueen,\cPPawn}$ must be satisfied when $p$ executes \Line{getLock:EndInnerLoop}.
Hence, $\Role[p] = \Set{\cKing,\cQueen,\cPPawn}$ during $[\pt{p}{getLock:ApplyWantOk},\pt{p}{getLock:ifRolePQueen}]$.
Since $p$ executes \Line{getLock:ifQueen} only after breaking out of the getLock loop, $\Role[p] \in \Set{\cKing,\cQueen,\cPPawn}$ at \ptB{p}{getLock:ifQueen}.
If $p$ satisfies the if-condition of \Line{getLock:ifQueen}, then $\Role[p]$ = \cQueen, and since $p$ does not change $\Role[p]$ thereafter, $\Role[p] = \cQueen$ at \pt{p}{getLock:awaitX}.
\end{proof}

\begin{claim}
Suppose a process $p$ executes a call to \abort{p}.
The value of $\Role[p]$ at various points in time is as follows.\

\begin{tabular}{ l | l}
    \hline
    Points in time & Value of $\Role[p]$  \\ \hline
    $[\pt{p}{abort:ApplyWantOk},\ptB{p}{abort:ifPawn}]$ & \Set{\cQueen,\cPawn}   \\ \hline
    $\pt{p}{abort:ifHead}$ & \cPawn  \\ \hline
    $[\pt{p}{abort:RolePPawn},\pt{p}{abort:returninfty}]$ & \cPPawn\ \\ \hline
    $[\ptB{p}{abort:setX},\pt{p}{abort:callhRelease}]$ &  \cQueen \\ \hline
\end{tabular}
\label{cl:TableOfRoles:abort}
\end{claim}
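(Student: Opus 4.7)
The plan is to leverage two facts proved earlier: (i) only process $p$ itself can change $\Role[p]$ (Claim~\ref{cl:basic:Role}\refC{scl:RoleOnlyp}), and (ii) the values of $\Role[p]$ at the exit points of the three busy-wait loops of \lock{p}, which are recorded in Claim~\ref{cl:TableOfRoles:getLock}. The key structural observation I will use is that \abort{p} can only be invoked after $p$ breaks out of one of the busy-wait loops in lines~\ref{getLock:ApplyBotWant},~\ref{getLock:awaitAckOrCtrDecrease}, or~\ref{getLock:awaitX}. Moreover, if $p$ broke out of \Line{getLock:ApplyBotWant} without actually registering, then $p$'s local variable $flag$ is still $\False$, so $p$ satisfies the if-condition of \Line{abort:ifFlag} and returns in \Line{abort:returnbot} without reaching \Line{abort:ApplyWantOk}.

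Hence, whenever $p$ reaches \Line{abort:ApplyWantOk}, $p$ must have broken out of either \Line{getLock:awaitAckOrCtrDecrease} (in which case $\Role[p] = \cPawn$ by Claim~\ref{cl:TableOfRoles:getLock}) or \Line{getLock:awaitX} (in which case $\Role[p] = \cQueen$ by Claim~\ref{cl:TableOfRoles:getLock}). Between the break point and \ptB{p}{abort:ApplyWantOk}, and then further through \ptB{p}{abort:ifPawn}, $p$ does not write to $\Role[p]$, so combining this with Claim~\ref{cl:basic:Role}\refC{scl:RoleOnlyp} yields $\Role[p] \in \Set{\cQueen,\cPawn}$ throughout $[\pt{p}{abort:ApplyWantOk},\ptB{p}{abort:ifPawn}]$, which establishes the first row.

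For the remaining rows, I will trace the two branches of the if-statement at \Line{abort:ifPawn}. If $p$ satisfies the if-condition, then $\Role[p] = \cPawn$ at \ptB{p}{abort:ifPawn} and since $p$ does not modify $\Role[p]$ before executing \Line{abort:ifHead}, we get $\Role[p] = \cPawn$ at \pt{p}{abort:ifHead}. If $p$ proceeds further inside this branch (i.e., \PawnSet.\cUpdate{} returned \False), $p$ explicitly writes $\cPPawn$ into $\Role[p]$ in \Line{abort:RolePPawn} and makes no further changes before \Line{abort:returninfty}, giving the third row. If instead $p$ enters the else-branch at \Line{abort:setX}, then $\Role[p] \neq \cPawn$ at \ptB{p}{abort:setX}; combined with the earlier fact that $\Role[p] \in \Set{\cQueen,\cPawn}$ just before \Line{abort:ifPawn} and the no-change invariant, this forces $\Role[p] = \cQueen$, and again $p$ does not alter $\Role[p]$ through the calls to \doCollect{p} and up to \pt{p}{abort:callhRelease}, giving the fourth row.

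The only delicate point — and the one I would highlight as the main obstacle — is justifying the claim that the only ways for $p$ to reach \Line{abort:ApplyWantOk} are via breaking out of \Line{getLock:awaitAckOrCtrDecrease} or \Line{getLock:awaitX}. This requires appealing to the informal protocol described in the paragraph preceding Section~\ref{sec:ARMLockStateless} that only allows an abort signal to be acted upon at those specific busy-wait points, together with the \tcp{} annotation in Figure~\ref{fig:ARMEAlgorithm1} stating precisely this. Once that modeling assumption is pinned down, every other row is an immediate consequence of Claim~\ref{cl:basic:Role}\refC{scl:RoleOnlyp} and a straightforward walk through the code.
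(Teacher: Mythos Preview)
Your proposal is correct and follows essentially the same approach as the paper's proof: both argue that reaching \Line{abort:ApplyWantOk} requires having broken out of the busy-wait loop at \Line{getLock:awaitAckOrCtrDecrease} or \Line{getLock:awaitX} (using the $flag$ variable to rule out \Line{getLock:ApplyBotWant}), pull the corresponding $\Role[p]$ values from Claim~\ref{cl:TableOfRoles:getLock}, and then trace the two branches of the if-statement at \Line{abort:ifPawn}. Your write-up is in fact slightly more explicit than the paper's in invoking Claim~\ref{cl:basic:Role}\refC{scl:RoleOnlyp} to justify that $\Role[p]$ is unchanged between steps, but the logical structure is identical.
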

\begin{proof}
Process $p$ calls \abort{p} only if $p$ has received a signal to abort and $p$ is busy waiting in one of lines~\ref{getLock:ApplyBotWant},~\ref{getLock:awaitAckOrCtrDecrease},  or~\ref{getLock:awaitX}.
Then, the last line executed by $p$ before calling \abort{p} is line~\ref{getLock:ApplyBotWant},~\ref{getLock:awaitAckOrCtrDecrease}, or line~\ref{getLock:awaitX}.
From \Claim{cl:TableOfRoles:getLock}, it follows that $\Role[p]$ = \cPawn\ at \pt{p}{getLock:awaitAckOrCtrDecrease}, and $\Role[p]$ = \cQueen\ at \pt{p}{getLock:awaitX}.

Now, $p$'s local variable $flag$ is set to value \true\ for the first time in \Line{getLock:setFlag}.
If $p$ fails the if-condition of \Line{abort:ifNotRegistered}, then $p$ must have executed \Line{getLock:setFlag}, and thus $p$ broke out of the busy-wait loop of \Line{getLock:ApplyBotWant}.
Then, $p$ last executed line~\ref{getLock:awaitAckOrCtrDecrease} or line~\ref{getLock:awaitX} before calling \abort{p}.
Hence, $\Role[p] \in \Set{\cPawn,\cQueen}$ in $[\pt{p}{abort:ApplyWantOk},\pt{p}{abort:ifPawn}]$, since $p$ changes $\Role[p]$ next only in \Line{abort:RolePPawn}.

If $p$ satisfies the if-condition of \Line{abort:ifPawn}, then $\Role[p]$ = \cPawn, and $p$ changes $\Role[p]$ next only in \Line{abort:RolePPawn}.
Hence, $\Role[p] = \cPawn$ at $\pt{p}{abort:ifHead}$.
In \Line{abort:RolePPawn} $p$ changes $\Role[p]$ to \cPPawn\ and $p$ does not change $\Role[p]$ after that.
Hence, $\Role[p] = \cPPawn$ during $[\pt{p}{abort:RolePPawn},\pt{p}{abort:returninfty}]$.
If $p$ does not satisfy the if-condition of \Line{abort:ifPawn}, then $\Role[p] = \cQueen$ at  $[\ptB{p}{abort:setX},\pt{p}{abort:callhRelease}]$ follows.
\end{proof}

\begin{claim}
Suppose a process $p$ executes a call to \release{p}{j} during a passage.
The value of $\Role[p]$ at various points in time is as follows.\

\begin{tabular}{ l | l}
    \hline
    Points in time & Value of $\Role[p]$  \\ \hline
    $[\ptB{p}{release:safetyCheck},\ptB{p}{release:ifKing}]$ & \Set{\cKing,\cQueen,\cPPawn} \\ \hline
    $[\ptB{p}{release:ctr10},\pt{p}{release:callhRelease:King}]$ & \cKing\ \\ \hline
    $\ptB{p}{release:callhRelease:Queen}$ & \cQueen\ \\ \hline
    $\ptB{p}{release:callPromote}$ & \cPPawn\ \\ \hline
    $[\ptB{p}{release:ApplyOkBot},\pt{p}{release:return}]$ &  \Set{\cKing,\cQueen,\cPPawn} \\ \hline
\end{tabular}
\label{cl:TableOfRoles:release}
\end{claim}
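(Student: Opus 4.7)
The plan is to reduce everything to two facts: first, the value of $\Role[p]$ when a successful \lock{p} call returns, and second, the invariance of $\Role[p]$ during \release{p}{j}. By Condition~\ref{cond:safety:ARLockArray}\refC{condition:ifReleaseThenExistsLock}, any \release{p}{j} call is preceded by a successful \lock{p} call, so at \ptB{p}{release:safetyCheck} the process has just returned a non-$\bot$ value from some \lock{p} invocation. That call either returned from \Line{getLock:returnX} or \Line{getLock:returninfty}, in which case Claim~\ref{cl:TableOfRoles:getLock} gives $\Role[p] \in \{\cKing,\cQueen,\cPPawn\}$ at \pt{p}{getLock:ifRolePQueen}, or it returned $\infty$ from \Line{abort:returninfty}, in which case Claim~\ref{cl:TableOfRoles:abort} gives $\Role[p] = \cPPawn$. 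In both cases, $\Role[p] \in \{\cKing,\cQueen,\cPPawn\}$ immediately before \release{p}{j} is invoked.

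Next I would invoke Claim~\ref{cl:basic:Role}\refC{scl:RoleUnchanged}, which states that $\Role[p]$ is unchanged during \release{p}{j}. Combining this with the previous paragraph, $\Role[p]$ has some fixed value in $\{\cKing,\cQueen,\cPPawn\}$ throughout the entire \release{p}{j} call. This already proves rows~1 and~5 of the table (for $[\ptB{p}{release:safetyCheck},\ptB{p}{release:ifKing}]$ and $[\ptB{p}{release:ApplyOkBot},\pt{p}{release:return}]$).

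For rows~2--4, I would inspect the if-conditions of lines~\ref{release:ifKing},~\ref{release:ifQueen} and~\ref{release:ifPPawn}. Process $p$ reaches \ptB{p}{release:ctr10} only if it satisfied the if-condition of \Line{release:ifKing}, hence $\Role[p] = \cKing$ at \ptB{p}{release:ifKing}; by the invariance of $\Role[p]$ during \release{}{}, $\Role[p] = \cKing$ throughout $[\ptB{p}{release:ctr10},\pt{p}{release:callhRelease:King}]$. The same argument with the if-conditions of lines~\ref{release:ifQueen} and~\ref{release:ifPPawn} yields $\Role[p] = \cQueen$ at \ptB{p}{release:callhRelease:Queen} and $\Role[p] = \cPPawn$ at \ptB{p}{release:callPromote}, respectively.

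There is no real obstacle here; the claim is essentially a bookkeeping corollary of Claims~\ref{cl:basic:Role}\refC{scl:RoleUnchanged},~\ref{cl:TableOfRoles:getLock} and~\ref{cl:TableOfRoles:abort}, together with the safety condition on when \release{p}{j} may be invoked. The only thing one has to be mildly careful about is to enumerate all the possible exit lines of \lock{p} that can return a non-$\bot$ value (namely \Line{getLock:returnX}, \Line{getLock:returninfty}, and \Line{abort:returninfty}), and to verify that in each case the preceding table entry lies in $\{\cKing,\cQueen,\cPPawn\}$, which it does.
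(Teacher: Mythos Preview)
Your approach is essentially the same as the paper's: establish $\Role[p]\in\{\cKing,\cQueen,\cPPawn\}$ at the start of \release{p}{j} by case analysis on where the preceding successful \lock{p} returned, then invoke Claim~\ref{cl:basic:Role}\refC{scl:RoleUnchanged} and read off the if-conditions. However, your enumeration of non-$\bot$ exit points of \lock{p} is incomplete. You list \Line{getLock:returnX}, \Line{getLock:returninfty}, and \Line{abort:returninfty}, but you omit \Line{abort:returnX}: when $p$ aborts as a queen and its \X.\CAS{$\bot,\infty$} in \Line{abort:setX} fails, it returns the (non-$\bot$) value of \X\ from \Line{abort:returnX}. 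The paper's proof explicitly includes this case and uses Claim~\ref{cl:TableOfRoles:abort} to conclude $\Role[p]=\cQueen$ there. The fix is immediate---$\cQueen$ is already in your target set---so the argument still goes through, but your claim that those three lines are ``all the possible exit lines of \lock{p} that can return a non-$\bot$ value'' is false as stated and should be corrected.
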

\begin{proof}
Suppose the point in time \ptB{p}{release:safetyCheck}.
Then, $p$ is is executing a call to \release{p}{j}, and $p$ last executed a call to \lock{p} that returned a non-$\bot$ value.
Then, $p$'s call to \lock{p} either returned from \Line{getLock:end} in \lock{p} or from \Line{abort:returninfty} or \Line{abort:returnX} in \abort{p}.
From \Claim{cl:TableOfRoles:getLock}, $\Role[p] \in \Set{\cKing,\cQueen,\cPPawn}$ at time \ptB{p}{getLock:end} and from \Claim{cl:TableOfRoles:abort}, $\Role[p]  = \cPPawn$ at \ptB{p}{abort:returninfty} and $\Role[p] = \cQueen$ at \ptB{p}{abort:returnX}.
Therefore, $\Role[p] \in \Set{\cKing,\cQueen,\cPPawn}$ at time \ptB{p}{release:safetyCheck}.

From \Claim{cl:basic:Role}\refC{scl:RoleUnchanged}, $\Role[p]$ is unchanged during \release{p}{}.
Therefore, $\Role[p] \in \Set{\cKing,\cQueen,\cPPawn}$ during $[\ptB{p}{release:safetyCheck},\ptB{p}{release:ifKing}]$ and $[\ptB{p}{release:ApplyOkBot},\pt{p}{release:return}]$.
Then, from the if-conditions of lines~\ref{release:ifKing},~\ref{release:ifQueen} and~\ref{release:ifPPawn}, it follows immediately that $\Role[p] = \cKing$ during $[\ptB{p}{release:ctr10},\pt{p}{release:callhRelease:King}]$, and $\Role[p]$ = \cQueen\ at \ptB{p}{release:callhRelease:Queen}, and $\Role[p]$ = \cPPawn\ at \ptB{p}{release:callPromote}.

\end{proof}

\begin{claim}
Suppose a process $p$ executes a call to \doCollect{p}, \helpRelease{p} or \doPromote{p} during a passage.
The value of $\Role[p]$ at various points in time is as follows.\

\begin{tabular}{ l | l}
    \hline
    Points in time & Value of $\Role[p]$  \\ \hline
    $[\ptB{p}{collect:collectLoop},\pt{p}{collect:updateAll}]$ &  \Set{\cKing,\cQueen} \\ \hline
    $[\ptB{p}{hRelease:setT},\pt{p}{hRelease:end}]$ &  \Set{\cKing,\cQueen} \\ \hline
    $[\ptB{p}{promote:FR12},\pt{p}{promote:end}]$ & \Set{\cKing,\cQueen,\cPPawn}  \\ \hline
\end{tabular}
\label{cl:TableOfRoles:rest}
\end{claim}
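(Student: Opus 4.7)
\textbf{Proof Proposal for Claim~\ref{cl:TableOfRoles:rest}.}

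The plan is to reduce this claim to the three previously established tables of role values (Claims~\ref{cl:TableOfRoles:getLock}, \ref{cl:TableOfRoles:abort}, and \ref{cl:TableOfRoles:release}) by tracing each private method back to its call sites. The key observation to invoke at the start is that $\Role[p]$ is written only by process $p$ itself (Claim~\ref{cl:basic:Role}\refC{scl:RoleOnlyp}), and by direct inspection of the code of \doCollect{p}, \helpRelease{p}, and \doPromote{p}, $p$ performs no write to $\Role[p]$ inside any of these three methods. Therefore the value of $\Role[p]$ throughout each private method equals its value at the moment the method was invoked, and it suffices to enumerate the call sites.

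First, for \doCollect{p}: by Claim~\ref{cl:basic}\refC{scl:doCollect} the only call sites are \Line{abort:doCollect} in \abort{p} and \Line{release:doCollect} in \release{p}{j}. From Claim~\ref{cl:TableOfRoles:abort}, $\Role[p] = \cQueen$ at \ptB{p}{abort:doCollect} (since reaching \Line{abort:doCollect} requires passing through the else branch of \Line{abort:ifPawn}), and from Claim~\ref{cl:TableOfRoles:release}, $\Role[p] = \cKing$ at \ptB{p}{release:doCollect} (since \Line{release:doCollect} sits inside the \Line{release:ifKing} branch). Taking the union over call sites yields $\Role[p] \in \Set{\cKing,\cQueen}$ throughout $[\ptB{p}{collect:collectLoop},\pt{p}{collect:updateAll}]$.

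Second, for \helpRelease{p}: by Claim~\ref{cl:basic}\refC{scl:hRelease} the call sites are \Line{release:callhRelease:King}, \Line{release:callhRelease:Queen}, and \Line{abort:callhRelease}. Claim~\ref{cl:TableOfRoles:release} gives $\Role[p] = \cKing$ at \ptB{p}{release:callhRelease:King} and $\Role[p] = \cQueen$ at \ptB{p}{release:callhRelease:Queen}, while Claim~\ref{cl:TableOfRoles:abort} gives $\Role[p] = \cQueen$ at \ptB{p}{abort:callhRelease} (the else branch of \Line{abort:ifPawn}). Hence $\Role[p] \in \Set{\cKing,\cQueen}$ throughout the call, as required.

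Finally, for \doPromote{p}: by Claim~\ref{cl:basic}\refC{scl:promote} the call sites are \Line{release:callPromote} and \Line{hRelease:callPromote} (which itself lies inside \helpRelease{p}). Claim~\ref{cl:TableOfRoles:release} gives $\Role[p] = \cPPawn$ at \ptB{p}{release:callPromote}; for the nested call, the previous paragraph shows $\Role[p] \in \Set{\cKing,\cQueen}$ throughout \helpRelease{p}, hence also at \ptB{p}{hRelease:callPromote}. Combined, $\Role[p] \in \Set{\cKing,\cQueen,\cPPawn}$ throughout $[\ptB{p}{promote:FR12},\pt{p}{promote:end}]$. There is no real obstacle here; the only point that deserves explicit verification is that no line in the three private methods writes to $\Role[p]$, which is immediate from inspecting Figures~\ref{fig:ARMEAlgorithm1} and \ref{fig:ARMEAlgorithm2}.
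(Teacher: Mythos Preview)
Your proposal is correct and follows essentially the same approach as the paper: observe that $\Role[p]$ is not written inside any of the three private methods, enumerate the call sites of each method (the paper says ``from a code inspection'' where you cite Claim~\ref{cl:basic}), read off the value of $\Role[p]$ at each call site from Claims~\ref{cl:TableOfRoles:abort} and~\ref{cl:TableOfRoles:release}, and for \doPromote{p} bootstrap from the just-established result for \helpRelease{p}. The only cosmetic difference is that you explicitly invoke Claim~\ref{cl:basic:Role}\refC{scl:RoleOnlyp} and Claim~\ref{cl:basic} where the paper appeals directly to code inspection.
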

\begin{proof}
From the code structure, $p$ does not change $\Role[p]$ during \doPromote{}, \doCollect{p} and \helpRelease{}.

From a code inspection, \doCollect{p} is called by $p$ only in lines~\ref{abort:doCollect}, and~\ref{release:doCollect}.
From Claim~\ref{cl:TableOfRoles:abort}, $\Role[p] = \cQueen$ at \ptB{p}{abort:doCollect} and from \Claim{cl:TableOfRoles:release}, $\Role[p] = \cKing$ at \ptB{p}{release:doCollect}.
Since $\Role[p]$ is unchanged during \doCollect{p}, it follows that $\Role[p] \in \Set{\cKing,\cQueen}$ during $[\ptB{p}{collect:collectLoop},\pt{p}{collect:updateAll}]$.

Now, suppose $p$ executes a call \helpRelease{p}.
From a code inspection, \helpRelease{p} is called by $p$ only in lines~\ref{abort:callhRelease},~\ref{release:callhRelease:King} and~\ref{release:callhRelease:Queen}.
From Claim~\ref{cl:TableOfRoles:abort}, $\Role[p] = \cQueen$ at \ptB{p}{abort:callhRelease} and from \Claim{cl:TableOfRoles:release}, $\Role[p] = \cKing$ at \ptB{p}{release:callhRelease:King} and $\Role[p] = \cQueen$ at \ptB{p}{release:callhRelease:Queen} .
Since $\Role[p]$ is unchanged during \helpRelease{}, it follows that $\Role[p] \in \Set{\cKing,\cQueen}$ during $[\ptB{p}{hRelease:setT},\pt{p}{hRelease:end}]$.

Now, suppose $p$ executes a call \doPromote{p}.
From a code inspection, \doPromote{p} is called by $p$ only in lines~\ref{release:callPromote} and~\ref{hRelease:callPromote}.
From Claim~\ref{cl:TableOfRoles:release}, $\Role[p] = \cPPawn$ at \ptB{p}{release:callPromote} and from earlier in this claim, $\Role[p] \in \Set{\cKing,\cQueen}$ at \ptB{p}{hRelease:callPromote}.
Since $\Role[p]$ is unchanged during \doPromote{}, it follows that $\Role[p] \in \Set{\cKing,\cQueen,\cPPawn}$ during $[\ptB{p}{promote:FR12},\pt{p}{promote:end}]$.
\end{proof}


\end{document}